\documentclass[11pt,a4paper]{article}

\usepackage[T1]{fontenc}
\usepackage[utf8]{inputenc}
\usepackage{mathptmx}
\usepackage{courier}
\usepackage[margin=2.4cm]{geometry}
\usepackage{amsmath,amssymb,amsthm,mathtools}
\usepackage{graphicx}
\usepackage{booktabs}
\usepackage[font=small,labelfont=bf]{caption}
\usepackage{authblk}
\usepackage{microtype}
\usepackage{xcolor}
\usepackage[numbers,sort&compress]{natbib}
\PassOptionsToPackage{hyphens}{url}
\usepackage[colorlinks=true,allcolors=blue!55!black]{hyperref}
\hypersetup{
  pdftitle={Reciprocity can halve what a mechanical network can learn},
  pdfauthor={Thai-Son Vu; Hoang-Giang Nguyen; Quoc-Bao Nguyen;
             Sengaloun Keoalounxay; Bao-Viet Tran},
  pdfcreator={LaTeX with hyperref}
}

\newtheorem{theorem}{Theorem}
\newtheorem{proposition}[theorem]{Proposition}
\newtheorem{corollary}[theorem]{Corollary}
\theoremstyle{definition}
\newtheorem{remark}{Remark}   

\DeclareMathOperator{\rank}{rank}
\DeclareMathOperator{\Sym}{Sym}
\DeclareMathOperator{\codim}{codim}
\DeclareMathOperator{\spn}{span}
\DeclareMathOperator{\spec}{spec}
\newcommand{\T}{{\mathsf{T}}}
\newcommand{\R}{\mathbb{R}}
\newcommand{\zenodoi}[1]{\href{https://doi.org/#1}{\texttt{#1}}}
\newcommand{\bq}{\mathbf{q}}
\newcommand{\bs}{\mathbf{s}}
\newcommand{\bw}{\mathbf{w}}
\newcommand{\bu}{\mathbf{u}}
\newcommand{\bv}{\mathbf{v}}
\newcommand{\by}{\mathbf{y}}
\newcommand{\bx}{\mathbf{x}}
\newcommand{\bz}{\mathbf{z}}

\title{\bfseries Reciprocity can halve what a mechanical network can learn}

\author[1]{Thai-Son Vu\thanks{Corresponding author:
  \texttt{sonvt2@huce.edu.vn}\ \ ORCID 0009-0005-2491-6018}}
\author[1]{Hoang-Giang Nguyen\thanks{\texttt{giangnh@huce.edu.vn}\ \
  ORCID 0000-0001-7505-2233}}
\author[1]{Quoc-Bao Nguyen\thanks{\texttt{baonq@huce.edu.vn}\ \
  ORCID 0000-0003-4406-2476}}
\author[2]{\mbox{Sengaloun Keoalounxay}\thanks{\texttt{s.keoalounxay@nuol.edu.la}\ \
  ORCID 0009-0006-1995-9694}}
\author[3]{\mbox{Bao-Viet Tran}\thanks{\texttt{viettb@utc.edu.vn}\ \
  ORCID 0000-0001-9709-5699}}

\affil[1]{Hanoi University of Civil Engineering,
  55 Giai Phong Street, Bach Mai Ward, Hanoi, Viet Nam}
\affil[2]{National University of Laos,
  Dongdok Campus, Vientiane, Lao PDR}
\affil[3]{University of Transport and Communications,
  No.~3 Cau Giay Street, Lang Ward, Hanoi, Viet Nam}
\date{}

\begin{document}
\maketitle

\begin{abstract}
\noindent
A passive, reciprocal, linear network driven by forces or currents and read
at points it drives cannot be trained below an error floor computed from the target
beforehand. The usual parameter count for such networks does not charge for
Maxwell--Betti reciprocity. We prove
that when such an elastic, resistor or flow network has a symmetric response
operator and $p$ driven degrees of freedom are also read, its reachable response
blocks lie in a subspace of codimension $p(p-1)/2$, whatever its size and
topology, and the floor is the distance to it. On targets with a reachable
symmetric part, a second-order optimiser ends within $0.1\%$ of the floor in
$142$ of $144$ simulated runs, and a contrastive rule with a bond-local update direction,
within its step budget, in $22$ of $24$. Reciprocity charges a list of single drive--read tasks only for the pairs it
instruments in both directions, and a rank test on the passive network says in
advance whether, and on which bonds, odd couplings locally restore the lost
directions, at a cost in non-reciprocity that the target bounds from below. Under an
imposed-displacement drive, the one most physical-learning hardware uses,
reciprocity survives as a weaker law, an exact balance between forward and
reverse transmissions once each is weighted by the driving-point compliance at
its input, and an inequality on their product, under which no
symmetric positive-definite chain meets both targets of a published robotic
metamaterial. All nineteen published layouts we tabulated pay nothing; a
force-driven layout that asks one pair to respond differently in its two
directions will.
\end{abstract}

\section{Introduction}

Geometry, more than composition, gives a flexible mechanical metamaterial its
function~\citep{Bertoldi2017}, and a growing body of work asks such materials
to acquire that function themselves. Networks trained by directed
aging~\citep{Pashine2019}, by coupled or contrastive local
rules~\citep{ScellierBengio2017,Stern2021,Dillavou2022,%
Dillavou2024}, by in situ backpropagation~\citep{Li2024}, or by explicit
design~\citep{Rocks2017,Pashine2023} adjust internal
parameters (stiffnesses in a spring network, conductances in an electrical one)
until a prescribed input--output map is realised; the approaches are surveyed
by \citet{SternMurugan2023} and \citet{Momeni2025}. Several now exist in
hardware: lattices of tunable beams~\citep{Lee2022}, networks trained in place
by coupled learning~\citep{Altman2024}, robotic metamaterials that learn to
change shape~\citep{Du2026}.

The governing question for such a network is capacity under its own drive:
when it is driven by forces or currents and read at some of the points it
drives, which targets stay out of reach, however many parameters it has and
however long it trains? The standard first estimate answers with a count
alone, blind to how the network is driven and read. A network with
$n_\theta$ tunable parameters, asked to realise a response block with $m^2$
entries ($m$ degrees of freedom driven and $m$ read), can reach at most $\min(n_\theta,m^2)$ independent
dimensions~\citep{Stern2021}. \citet{Stern2024} halve the \emph{parameter}
side of that count for a symmetric Hessian, but the target side is still $m^2$
free entries. \citet{Zu2025} report that in disordered solids the rank of the
parameter-to-feature Jacobian does reach $\min(n_\theta,n_y)$, with $n_y$ the
number of features, across a wide range of targets, parameters, sizes and
dimensions, failing only for ordered lattices. When the features are instead
the entries of a \emph{response block} read on degrees of freedom that are
also driven, the count fails for a reason unrelated to rank deficiency: the
target space itself carries a redundancy, so the count overstates the
reachable dimension even when the Jacobian has full rank on the directions
reciprocity leaves open.

Counting arguments of this kind tacitly assume that the entries of the
response block are independent coordinates on the reachable set. For a
passive, linear elastic network they are not. Maxwell--Betti
reciprocity~\citep{Maxwell1864,Betti1872} forces the compliance operator to be
symmetric. Each \emph{pair} of degrees of freedom that are both driven and read
therefore occupies two mirrored positions of the response block, and the two
entries must be equal. With $p$ shared degrees of freedom this gives
$p(p-1)/2$ independent equalities; a single shared degree of freedom gives
none. The symmetry has been noticed in this setting without being priced:
\citet{Stern2021} remark that the input--output relations of their networks
are symmetric in linear response, and \citet{Li2025} that reciprocity couples
the forward and backward transformations of a surface-plasmonic network, a
constraint they lift with magnetically biased ferrite. The supplementary
material of \citet{Du2026} names this omission from the count,
writing that ``the above evaluation ignores the constraints according to the
Maxwell--Betti theorem'', and leaves the size of the correction open. That
reciprocity halves a count of independent measurements is itself classical,
on record in electrical impedance tomography and in network theory
(Sec.~\ref{sec:priorwork} surveys it); what these equalities cost a fixed
trainable network whose driven and read degrees of freedom overlap, fully or
in part, has not been worked out.

We derive the exact codimension that reciprocity removes at partial overlap, a
lower bound on the capacity deficit, and a target-specific reciprocity floor
for learning on a fixed tunable network. For a force-driven
block a subspace of codimension exactly $p(p-1)/2$ contains the set a fixed
reciprocal network reaches by varying its stiffnesses, at every overlap
$p\le\min(m_T,m_S)$, partial or full, with $m_S$ degrees of freedom driven and
$m_T$ read, an elementary observation
(Theorem~\ref{thm:main}, Remark~\ref{rem:global}); the dimension of that set
falls short of $m_Tm_S$ by at least that much, and by exactly that much
wherever the symmetry branch of the bound is attained.
At full overlap of a square block, $p=m_T=m_S=m$, that codimension is a
fraction $(m-1)/(2m)$ of the target space, approaching one half. For the imposed-angle
task of \citet{Du2026} the correction is a spectral inequality rather than a dimension count
(Theorem~\ref{thm:multiclamp}), which their degree-of-freedom count cannot
see. The relation to prior work at the end of this section separates the
classical ingredients from ours.

Three results go beyond counting the codimension of a symmetric response
matrix. First, the distance from the target to that subspace is an error
floor computed before training (Theorem~\ref{thm:floor}, sharpened by
passivity in Theorem~\ref{thm:floorpd}), and two learning rules, run in
simulation, end on it in nearly every run when the target's symmetric part is
reachable, the contrastive one within its step budget
(Sec.~\ref{sec:learning}). Second, reciprocity charges a list of single
drive--read tasks only for the pairs it instruments in both directions, so a
shared terminal on its own costs no dimension (Sec.~\ref{sec:tasklist},
Corollary~\ref{cor:digon}). Third, a rank test on the passive network decides,
before any odd coupling is built, whether, and on which bonds, odd couplings
restore, near the passive network, what reciprocity removes
(Theorem~\ref{thm:recovery}, Sec.~\ref{sec:recovery}).

A concrete instance fixes the scale. On the sixteen-node network of
Fig.~\ref{fig:setup}a, adding one reversed task to a three-task list sets a
floor of $9.9\%$ of the target norm; on a target built so that this floor is
attainable, training stops on it, and one odd bond chosen in advance removes it
(Sec.~\ref{sec:consequences}, Fig.~\ref{fig:example}).

Grouped by what each rests on, the same material makes four contributions. The
first rests on the containment of Theorem~\ref{thm:main}, the second and third on
the rank of the Jacobian (Sec.~\ref{sec:jacobian}), and the fourth on the form the
containment takes for a list of tasks. The first is an error
floor that is a \emph{number computed from the target in advance of
training}: a reciprocity term (Theorem~\ref{thm:floor}) and an orthogonal
passivity term that binds on targets the first cannot see. The second is
structural: a test of whether a given graph attains the ceiling on reachable
dimension, a rank ceiling for imposed displacements, and odd couplings on
$p(p-1)/2$ bonds that restore the lost directions near the passive network wherever its wedges (the antisymmetric directions an odd coupling on each bond would add to the shared block) span them
(Secs.~\ref{sec:dirichlet}--\ref{sec:certificates} and~\ref{sec:recovery}). The price is a
non-reciprocity the target bounds from below (Proposition~\ref{prop:ratio}).
The third is the layout law (Proposition~\ref{prop:budget}). At a fixed number
of accessed degrees of freedom, each of which can be both driven and read, and
where the bond count does not bind, full overlap becomes the best layout, by a
factor approaching two. At a fixed count of sensors plus actuators, full
overlap is not the best layout. The fourth replaces the block by the finite list of demands a
laboratory actually imposes (Sec.~\ref{sec:tasklist}). Such a list carries a
deficit of its own, which forces a floor and, on coordinate tasks, counts the
pairs of terminals instrumented in both directions, so a shared terminal on
its own is free and only a reversed pair is charged.

An experimenter can therefore compute, before building or training, what no
learning rule on a passive network can beat and which layout or odd bonds
avoid it. At all nineteen published layouts we tabulated, $p=0$ and the
deficit is zero (Table~\ref{tab:layouts}). This paper corrects no capacity
number now in print; it prices the overlapping layout, and we found no
in-place learning layout driven by forces or currents that pays. Two
published protocols come close. The in situ backpropagation of
\citet{Li2024} carries one Maxwell--Betti relation between its forward and
adjoint solves and uses it to obtain an exact gradient, setting no target on
it; \citet{Du2026} exchange the driven and read sets between two targets under
an imposed drive.

A second drive, the one most physical-learning hardware
uses~\citep{Du2026,Altman2024,Dillavou2022}, holds the inputs at prescribed
\emph{displacements} instead of driving them with forces. Its measured object is not a block
of $C$, and Theorem~\ref{thm:main} does not constrain it; the companion law of
Theorem~\ref{thm:dirichlet} does. There reciprocity survives as an exact
balance between the two transmissions of a pair and their two driving-point
compliances, four scalars measurable at the terminals, so it can be tested on
a sample without knowing its network. For $p\ge2$ the same theorem caps the
rank at $\min\bigl(n_b,\tfrac12(p-1)(p+2)\bigr)$, at least $(p-1)(p-2)/2$ below the $p(p-1)$
off-diagonal entries, a smaller deficit bound of a different kind. For the imposed-angle chain of \citet{Du2026} we
find in Sec.~\ref{sec:consequences} that, in the linear model as deposited,
no symmetric positive-definite chain can meet its two targets.

Sec.~\ref{sec:setup} formulates the reachable set, and
Table~\ref{tab:notation} (Sec.~\ref{sec:notation}) collects the symbols.
Sec.~\ref{sec:bound} proves the ceiling and the two floors
(Theorems~\ref{thm:main}--\ref{thm:floorpd}). Sec.~\ref{sec:universal} extends
the ceiling and the reciprocity floor to any symmetric invertible operator
perturbed symmetrically by its parameters, complex-symmetric ones included
(Theorem~\ref{thm:universal}): resistor and flow networks, three dimensions
and bending, and the tangent response about a finitely deformed state.
Sec.~\ref{sec:tasklist} gives their form for a list of demands
(Proposition~\ref{prop:tasklist}), with the list's own deficit, its floor
(Theorem~\ref{thm:taskfloor}) and, on coordinate tasks, a count of reversed
pairs (Corollary~\ref{cor:digon}), and recovers the block law as the list that asks
for everything. Sec.~\ref{sec:dirichlet} gives the imposed-displacement laws
(Theorems~\ref{thm:dirichlet} and~\ref{thm:multiclamp}).
Sec.~\ref{sec:attainment} reduces attainment on a given graph to the
transversality of two subspaces (Theorem~\ref{thm:duality}), and
Sec.~\ref{sec:certificates} turns one
exactly certified configuration into attainment at almost every configuration
of that graph (Proposition~\ref{prop:generic}). Sec.~\ref{sec:recovery} states what
non-reciprocal couplings withdraw (Proposition~\ref{prop:odd}) and proves that
an odd coupling restores the lost directions near the passive network wherever its wedges span them, exactly in projection rather than to first order (Theorem~\ref{thm:recovery},
Corollary~\ref{cor:submersion}); Sec.~\ref{sec:price} prices it.
Sec.~\ref{sec:numerics} verifies the bounds and the recovery numerically,
Sec.~\ref{sec:learning} shows that a second-order optimiser and a
contrastive rule with a bond-local update direction reach the predicted floors
in all but a few runs, and Sec.~\ref{sec:consequences}
derives the consequences for sensor--actuator layouts and imposed-displacement
learning, including an analysis of the data of \citet{Du2026}. Limitations (Sec.~\ref{sec:limitations}),
the Conclusion and the appendices (\ref{app:layouts}--\ref{app:repro})
follow.

\subsection{Relation to prior work}
\label{sec:priorwork}

This subsection supports Theorem~\ref{thm:main} and the results built on it
and proves nothing of its own: it separates the classical from the new.
The symmetry itself is classical, and on record in the form we use:
\citet{MiltonSeppecher2008} observe that the response matrix is symmetric,
being the Schur complement of the ``clearly symmetric'' all-terminal response,
and use it for \emph{synthesis}. For networks free to carry effective negative
stiffnesses and masses at a fixed frequency, symmetry is the \emph{only}
obstruction~\citep{MiltonSeppecher2008}; for non-negative springs, a class
neighbouring the strictly positive stiffnesses used here, positive
semidefiniteness and balance conditions join it~\citep{GuevaraVasquez2011},
and for planar resistor networks so does a sign condition on the circular
minors~\citep{Curtis1998}. These constructions add internal nodes, choose the
topology freely, and drive and read the same set; a material already built has
none of that freedom. Such realisability theorems return a verdict on a
target, where the floor of Theorem~\ref{thm:floor} returns a distance.

The reduction that the symmetry imposes on independent measurements is on
record too, in electrical impedance
tomography. \citet{Cheney1999} observe, for an $L$-electrode system, that
``some of this data is redundant, because the current-voltage map is
symmetric,'' leaving the degrees of freedom of a symmetric $(L-1)\times(L-1)$
matrix, $L(L-1)/2$ of them in place of the $(L-1)^{2}$ entries of an
unconstrained map, a ratio $L/(2(L-1))$ that tends to one half for many
electrodes. This is the closest classical statement of the effect priced
here. A count of the same kind is proved, and routinely used, in electrode
arrays more broadly: on $L$ electrodes only $L(L-3)/2$ four-electrode
configurations are linearly independent and synthesise all the
rest~\citep{Lehmann1995}, a theorem of DC resistivity probing. It is
\citet{AdlerLionheart2006} who name reciprocity as the reason half of an
adjacent-drive protocol's readings are redundant: at sixteen electrodes its
$208$ readings carry $104$ independent ones, and electrical resistivity
tomography spends the surplus of normal--reciprocal pairs on error estimation
rather than discarding it~\citep{LaBrecque1996,Tso2017}. \citet{Cheney1999} count how much independent data a
measurement can supply to an inverse problem, at full overlap, on a body whose
conductivity is the unknown; we count which response blocks a \emph{fixed}
network can reach as its stiffnesses vary, at partial as well as full
overlap.

The target-side halving is already present in the feature count of
\citet{Zu2025}, six independent elastic constants in two dimensions and
twenty-one in three, which has the form $n(n+1)/2$ with $n$ the number of Voigt
components. There the redundancy priced here was removed by hand, at the level
of the elasticity tensor rather than of a response block, so their Jacobian is
full-rank on what remains. Other reasons for the count $\min(n_\theta,m^2)$
to be optimistic are known: frustration of the constraint-satisfaction
problem~\citep{Rocks2019}, and the self-stresses that turn Maxwell's counting
inequality~\citep{Maxwell1864} into Calladine's index
relation~\citep{Calladine1978,Lubensky2015}; the constraint studied here is
of another kind.

The product-of-solutions form of the Jacobian is not new either.
\citet{Draper2020} derive it for matrix-valued inverse problems on graphs and
specialise it, in their Sec.~6, to the model used here, a passive spring
network of known geometry with unknown bond stiffnesses, their data being the
displacement-to-force map of Sec.~\ref{sec:dirichlet}. Their step from
injectivity of the linearisation at one stiffness vector to injectivity at
almost every other generalises \citet{Boyer2016}, who prove, for the discrete
conductivity and Schr\"odinger problems on a graph, that a linearised problem
solvable at a single admittance is solvable at almost every admittance, the
exceptional set being the zero set of a determinantal minor.
Proposition~\ref{prop:generic} runs that argument, but moves the node
positions as well as the stiffnesses and concludes about the rank of the
Jacobian, that is, about its image. Both papers ask when $k$ can be recovered
\emph{from} the response, a condition on the kernel; we ask which response
blocks are reachable \emph{by varying} $k$, a condition on the image, the side
reciprocity acts on.
Reciprocity does not enter their question. The range condition that
\citet{Boyer2016} do write is on the parameter space, the transpose reading of
injectivity, and \citet{Draper2020} carry the response throughout as an
$n^2$-vector, so neither paper sees the symmetry that costs the dimensions
counted here. Neither splits its boundary set into driven and read-out parts or
counts a codimension, an achievable set or a floor.

That a passive reciprocal $m$-port has a symmetric positive-real immittance
matrix, with $m(m+1)/2$ independent parameters at full overlap, is textbook
network theory: \citet{Belevitch1968} records the real constant case, a
symmetric positive-semidefinite matrix realised, when nondegenerate, with $m$
resistances and $m(m-1)/2$ transformer ratios, and
\citet{AndersonVongpanitlerd1973} give the positive-real and reciprocal
synthesis theory in state-space form.

That reciprocity can put part of a response space out of reach is on record in
optics as well. \citet{GuoFan2022} determine the reflection coefficients
attainable by $n$-port scattering matrices of prescribed singular values, with
and without reciprocity, and show that for $n\ge3$ the reciprocal set is a
proper subset of the non-reciprocal one, cut out by a single inequality, so
that a lossless three-port with all-zero reflections must be non-reciprocal.
There reciprocity removes a full-dimensional region from the set of all
lossless devices; here it removes a linear subspace from the responses of one
network already built, and the codimension of that subspace fixes the floor.

The error floor of Theorem~\ref{thm:floor} rests on a symmetrisation that is
not new as mathematics: \citet{FanHoffman1955} prove that a nearest Hermitian
matrix to any $A$, in every unitarily invariant norm, is $\tfrac12(A+A^{*})$,
and for real $A$ the minimiser is itself real symmetric, so the real case
follows at once. In the Frobenius norm, the one used here, it is the orthogonal
splitting performed in the proof of Theorem~\ref{thm:floor}. What is
ours is the identification: the matrix to be symmetrised is the shared block of
the \emph{target}, the whole reachable set of a fixed passive network lies on
the symmetric side of that splitting, and the distance is therefore a number
available before any network is built.

Other ingredients are not new either. The containment behind the ceiling
needs no derivative (Remark~\ref{rem:global}). The floor of
Theorem~\ref{thm:floorpd} is the distance of \citet{Higham1988} (his
Theorem~2.1) from the target's shared block to the positive-semidefinite
matrices, which applies because the realised shared block is a principal
submatrix of a positive-definite matrix, a textbook fact; its passivity term
makes exact an observation already stated in words by \citet{Stern2021}, that
non-negativity of the tunable elements ``excludes many conceivable linear
mappings''. Of the imposed-displacement laws only the rank clauses are ours,
\eqref{eq:dbound} and its extension to families of tasks,
Proposition~\ref{prop:multirank}; the spectral clause of
Theorem~\ref{thm:multiclamp} is classical and we claim only its reading.
None of the classical record just surveyed is claimed here, the halved
measurement count and the $m(m+1)/2$ parameters of a reciprocal $m$-port
included, nor are the matrix-nearness facts the floors rest on: the nearest
symmetric matrix, the distance to the positive-semidefinite cone, whose
formula is the right-hand side of Theorem~\ref{thm:floorpd} applied to the
shared block, and the numerical-range algebra of
Proposition~\ref{prop:ratio}~\citep{London1981,LiSze2014,Lin2015}. What is new is what the symmetry costs a fixed network, at partial or
full overlap, as listed above. The ceiling is an upper
bound: a network may fall further short for want of bonds or of genericity
(Sec.~\ref{sec:numerics}), but no extra tuning repairs the deficit reciprocity
contributes.

Several recent papers ask what the physics constrains in the learning process;
none derives from reciprocity a codimension of the set a reciprocal network
can reach or an error floor computable from the target. \citet{Bosch2026} show that $K^\T=K$, or an
intertwining condition some non-reciprocal systems meet, lets the same
hardware generate the adjoint field of an exact gradient, and
\citet{TuxburyLin2026} train in situ with a stochastic adjoint that trades
reciprocity for nondegenerate diffusion, on a lattice whose couplings stay
symmetric while nonlinearity and modulation break reciprocity.
\citet{Ezraty2026} train dissipative networks through boundary conditions alone
and ask which local evolution rules descend. \citet{Niu2026}, posted before
our first version, show that within their matched response class the
closed-loop learning response under non-negative spectral feedback is
symmetric positive-semidefinite and confines learning to a preconditioned
gradient flow. An effective active boundary controller adds an antisymmetric
part to that response and turns the learning path. Their reciprocity is that
of the closed-loop learning response; the edges of their forward transport
need not be bidirectional. Like them, we find that an antisymmetric part
supplies what a symmetric one cannot; in the present paper that part sits in
bonds inside the network and acts on the reachable set
(Sec.~\ref{sec:recovery}).

\citet{McGinnis2026} show that equilibrium propagation and coupled learning
conserve a conductance mass, and \citet{Dangol2026} that untrainable elements
set what these rules remember; \citet{Dangol2026b} counts how many quantities
such rules can conserve in resistive networks with a symmetric Laplacian and
no odd couplings. \citet{Stuhlmuller2026} show that a linear network computes
only linear maps and a passive one cannot amplify. McGinnis et al.\ and both
papers by Dangol drive and read disjoint node sets, as does the linear example
of Stuhlm\"uller and Dijkstra; there $p=0$, our bound predicts no obstruction,
and their results neither confirm nor contradict it.

Two of the present authors study the same Jacobian of the boundary response
and bound the blind dimension of a learning network above by a maximum-weight
spanning-forest count over its hidden components~\citep{NguyenVu2026}; that
deficit moves with the graph and with the electrode placement, whereas the
codimension $p(p-1)/2$ counted here does not depend on the graph at all and
depends on the placement only through the number $p$ of shared terminals. \citet{Zhong2021} measure how
much a many-body system has learned of a drive it did not choose, with a
variational autoencoder; the bound here limits how many independent responses
a fixed network can be trained to produce at ports the trainer selects.

Breaking the symmetry can remove the deficit. \citet{Du2026} give the
qualitative direction, noting that a non-reciprocal system ``eludes the
Maxwell--Betti theorem''; we supply the size of the effect: the set reachable by varying the
stiffnesses lies in a subspace of codimension exactly $p(p-1)/2$, so its
dimension falls short of $m_Tm_S$ by at least that much, and by exactly that
much wherever the symmetry branch of the bound of Theorem~\ref{thm:main} is
attained. For an odd,
non-reciprocal bond coupling of the kind realised in active and robotic
metamaterials~\citep{Scheibner2020,Fruchart2023,Brandenbourger2019}, we also
prove that, near the passive network and in projection onto the antisymmetric
blocks, it restores the subspace that reciprocity removed where, and only
where, the passive network's wedges span $\Lambda^2\R^p$, the space of
antisymmetric $p\times p$ blocks (Theorem~\ref{thm:recovery}); numerical tests (Sec.~\ref{sec:numerics})
confirm it. That the wedges do span is measured rather than proved. The
restoration comes at a cost: each odd bond
exerts a couple on the two nodes it joins, and the network is active;
generically the couples leave it out of moment balance, and
Sec.~\ref{sec:price} gives the exception and the least coupling to first
order.


\section{Reachable response sets}

This section sets up the one object every later bound is about: the set of
response blocks a fixed network can reach as its parameters vary. It defines
the network, its bond law with the odd coupling that breaks reciprocity, the
block read on one set of degrees of freedom and driven on another, and that
reachable set. At a generic parameter point the local dimension of the set is
the rank of a Jacobian
computed in closed form, one rank-one column per parameter. A table collects
the symbols.

\subsection{Networks and response}
\label{sec:setup}

Consider $N$ point nodes at positions $\bx_i\in\R^{\delta}$, $\delta=2$, joined by $n_b$
bonds. For bond $b=(i,j)$, of length $L_b$, let $\hat{n}_b$ be the unit vector
along the bond and $\hat{t}_b$ its in-plane perpendicular. Collect the
linearised elongation of bond $b$ into a compatibility row
$\bq_b\in\R^{\delta N}$, carrying $-\hat{n}_b$ at node $i$ and $+\hat{n}_b$ at
node $j$, so that $e_b=\bq_b^\T\bu$ for a displacement field $\bu$. Define
$\bs_b$ identically with $\hat{t}_b$ in place of $\hat{n}_b$.

We take the bond force law
\begin{equation}
  \label{eq:stiffness}
  K \;=\; \sum_{b} \bigl(k_b\,\bq_b + a_b\,\bs_b\bigr)\,\bq_b^\T ,
\end{equation}
so that the external force needed to hold a displacement $\bu$ is $K\bu$. The
term $k_b>0$ is the ordinary central-force stiffness. The term $a_b$ is an
\emph{odd} coupling: the bond responds to its own elongation with a force partly
transverse to its axis (Fig.~\ref{fig:setup}b), the odd spring
of~\citet{Scheibner2020}.

We pin three degrees of freedom to remove the planar rigid-body modes, write
$K_{\!f\!f}$ for the restriction of $K$ to the $n_{\mathrm{free}}=2N-3$ remaining
ones, and set $C=K_{\!f\!f}^{-1}$. Wherever $\bq_b$ or $\bs_b$ multiplies $C$
or $K_{\!f\!f}$ it stands for its restriction to the free degrees of freedom.

The odd coupling makes these networks active, and it has a cost. The term
$a_b\bs_b\bq_b^\T$ puts the transverse forces $-a_be_b\hat{t}_b$ and
$+a_be_b\hat{t}_b$ at the two ends of bond $b$. They are equal and opposite,
so linear momentum is still balanced; but they are separated by
$L_b\hat{n}_b$ and are perpendicular to it, so they form a couple of magnitude
$|a_be_b|L_b$. An odd bond therefore exerts a net torque on the pair of nodes
it joins. Generically the network is then out of moment balance unless an
external agency supplies both the missing angular momentum and the work the
same term does around a closed deformation cycle. The odd spring is a
microscopic origin of the antisymmetric part of the odd elastic modulus
tensor~\citep{Scheibner2020,Fruchart2023}, and in the robotic realisations the
feedback-driven elements~\citep{Brandenbourger2019,Chen2021,Veenstra2024}
supply both the angular momentum and the cycle work.

A subspace of couplings escapes the moment imbalance, although each bond's own
couple and the cycle work remain (Proposition~\ref{prop:torquefree}).
Sec.~\ref{sec:price} puts a price on the cost, and the results below that are
stated in terms of $K_{\!f\!f}$ and its symmetry are not affected by it.

\citet{Nassar2020} review the routes to breaking reciprocity in elastic and
acoustic media, and \citet{Coulais2017} demonstrate that even a passive
metamaterial can transmit asymmetrically at finite deformation. Odd couplings
are a simple rank-one way to break reciprocity while leaving the network's
geometry untouched; switching them off restores it:
\begin{equation*}
  a_b = 0 \ \ \forall b
  \quad\Longrightarrow\quad
  K = K^\T
  \quad\Longrightarrow\quad
  \text{Maxwell--Betti reciprocity holds.}
\end{equation*}

Only the forward implications hold. Non-zero odd couplings generically break
the symmetry, but special cancellations, or the restriction to the free
degrees of freedom, can leave $K_{\!f\!f}$ symmetric even when some
$a_b\neq0$; Remark~\ref{rem:odd-hyp} exhibits such a configuration. Every
result below is therefore stated directly in terms of
$K_{\!f\!f}=K_{\!f\!f}^\T$ or $K_{\!f\!f}\neq K_{\!f\!f}^\T$, never as a
hypothesis on the $a_b$, with two exceptions. Proposition~\ref{prop:torquefree}
has as its subject the condition $(a_bL_b)_b\in\ker Q^\T$, with $Q$ the matrix
of compatibility rows (Appendix~\ref{app:jacobian}), on the couplings
themselves; and the second bound \eqref{eq:multirank2} of
Proposition~\ref{prop:multirank} assumes $a\equiv0$, so that $K_{\!f\!f}$ is
homogeneous of degree one in $k$. The odd couplings do appear in eight other statements, each in a
role that the statement itself names and that leaves this rule intact:
Theorem~\ref{thm:main}, Propositions~\ref{prop:tasklist}(v),
\ref{prop:multirank}, \ref{prop:odd} and~\ref{prop:ratio}, Theorem~\ref{thm:recovery}, and
Corollaries~\ref{cor:submersion} and~\ref{cor:price}. Theorem~\ref{thm:floorpd}
assumes passivity, which includes $a\equiv0$, but its proof uses only the part
$K_{\!f\!f}=K_{\!f\!f}^\T\succ0$ of it. Theorem~\ref{thm:dirichlet} needs
invertibility and $C_{\mu\mu}\ne0$ at every $\mu\in P$ for its transmission
formula, symmetry for its balance, cycle and rank clauses, and $C\succ0$ for
its sign and product clauses.
Theorem~\ref{thm:recovery} needs only $K_{\!f\!f}$ symmetric and so holds at
the configurations of Remark~\ref{rem:odd-hyp} as well.

For the spring networks of \eqref{eq:stiffness}, \emph{passive} is shorthand
for one case of that law and nothing wider: the linear, reciprocal, stable
network with $a\equiv0$, so
that $K=K^\T$ is the Hessian of a stored elastic energy, and
$K_{\!f\!f}\succ0$, so that the pinned network is stable and $C$ exists. For
the resistor and flow networks of Sec.~\ref{sec:universal} the word means the
same, and at finite frequency it carries its usual circuit-theoretic meaning. Each
result names the part of this it uses, and Theorem~\ref{thm:floorpd} is the
one result whose floor depends on $K_{\!f\!f}\succ0$.

Choose a set $S$ of driven degrees of freedom
and a set $T$ of read-out ones, with $|S|=m_S$, $|T|=m_T$, and let
\begin{equation*}
  P \;=\; T\cap S,
  \qquad p \;=\; |P| .
\end{equation*}
A learning rule is asked to shape the response block
\begin{equation*}
  R(\theta) \;=\; C[T,S] \;\in\; \R^{m_T\times m_S},
  \qquad
  \theta = (k) \ \text{ or } \ (k,a),\ \ \text{so } n_\theta=n_b \text{ or } 2n_b,
\end{equation*}
\begin{figure}[t]
  \centering
  \includegraphics[width=\textwidth]{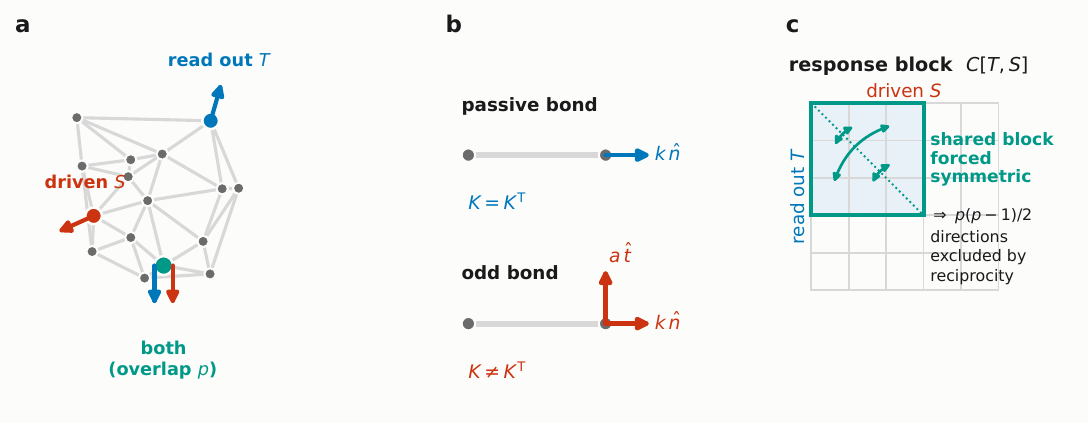}
  \caption{\textbf{Setup and the origin of the constraint.}
  \textbf{(a)}~A central-force network. Forces are applied on a set $S$ of
  degrees of freedom and displacements read on a set $T$; $p=|T\cap S|$ of them
  are both driven and read. Panels (a) and (c) are not drawn for the same $S$
  and $T$. At the node marked \emph{both}, the force (red) and the read-out
  (blue) are two parallel arrows: a shared degree of freedom is driven and read
  along the same coordinate.
  \textbf{(b)}~The bond force law. A passive bond pushes along its own axis and
  makes $K$ symmetric; an odd bond adds a transverse component $a_b\hat{t}_b$ and
  generically makes it asymmetric.
  \textbf{(c)}~The response block $C[T,S]$. Maxwell--Betti forces its $p\times p$
  sub-block on the shared degrees of freedom to be symmetric, which is
  $p(p-1)/2$ independent equalities.}
  \label{fig:setup}
\end{figure}
whose entry $R_{\tau\sigma}$ is the displacement at read-out degree of freedom
$\tau$ produced by a unit force at driven degree of freedom $\sigma$.

The \emph{reachable set} is the image $\mathcal{R}=\{R(\theta):\theta\in\Theta\}$
over the \emph{admissible} set $\Theta$: the parameters with $k_b>0$ at every
bond and $K_{\!f\!f}$ invertible, with $a\equiv0$ on the passive branch and
$a\in\R^{n_b}$ on the odd one. Individual results add hypotheses to this and
say where. At a generic $\theta$, where the rank is locally constant, the
local dimension of $\mathcal{R}$ is the rank of the
Jacobian $\partial R/\partial\theta$. That rank, rather than the number of
parameters, bounds how many independent behaviours the network can be trained
to hold.

\subsection{The Jacobian is exact and rank one per parameter}
\label{sec:jacobian}

Differentiating $C=K_{\!f\!f}^{-1}$ gives $\mathrm{d}C=-C\,(\mathrm{d}K)\,C$.
Every parameter derivative of \eqref{eq:stiffness} is a rank-one matrix,
\begin{equation*}
  \frac{\partial K}{\partial k_b}=\bq_b\bq_b^\T ,
  \qquad
  \frac{\partial K}{\partial a_b}=\bs_b\bq_b^\T ,
\end{equation*}
so for a rank-one perturbation $\mathrm{d}K=\bu\bv^\T$ the induced change in the
response block factorises. Writing $\theta_r$ for one of the scalar
parameters collected in $\theta$, the corresponding $m_T\times m_S$ derivative
(one column of the full $m_Tm_S\times n_\theta$ Jacobian, once flattened) is
\begin{equation}
  \label{eq:jaccol}
  \frac{\partial R}{\partial \theta_r}
  \;=\; -\,\bigl(C\bu\bigr)[T]\ \bigl(C^\T\bv\bigr)[S]^\T .
\end{equation}
Each Jacobian column is therefore an outer product costing $O(m_Tm_S)$ to form.
Every rank reported here is the rank of \eqref{eq:jaccol} assembled in closed
form, except in the two computations that use finite differences instead,
both listed in Appendix~\ref{app:stats}.

Equation~\eqref{eq:jaccol} already contains the whole result. For a passive
network $C=C^\T$, so with $\bu=\bv=\bq_b$ the two factors are restrictions of
one and the same vector
\begin{equation}
  \label{eq:wb}
  \bw_b \;=\; C\bq_b ,
  \qquad
  \frac{\partial R}{\partial k_b} \;=\; -\,\bw_b[T]\ \bw_b[S]^\T .
\end{equation}
With odd couplings switched on, $C\neq C^\T$ generically
(Remark~\ref{rem:odd-hyp} gives one family of exceptions), and the two factors
become restrictions of two \emph{different} vectors, $C\bq_b$ and $C^\T\bq_b$
(or $C\bs_b$ and $C^\T\bq_b$). That difference, one vector used twice against
two different vectors, is the whole of Theorem~\ref{thm:main} at the level of
the Jacobian; the containment itself needs no derivative
(Remark~\ref{rem:global}). The
local dimension of the reachable set at $\theta$ is thus the rank of
$n_\theta$ closed-form outer products, one per parameter: $n_b$ of them on the
passive branch and $2n_b$ on the odd one.

\subsection{Notation}
\label{sec:notation}

Table~\ref{tab:notation} collects the symbols used throughout, with the place
each is defined. Two conventions hold throughout. Calligraphic letters denote
subspaces, sets of blocks, sets of bonds, the stored energy and the
combinatorial objects assembled from a list of tasks (among them $\mathcal{S}_P$,
$\mathcal{A}_P$, $\mathcal{S}^{+}_P$, $\mathcal{Y}$, $\mathcal{R}$,
$\mathcal{E}$, $\mathcal{M}$, $\mathcal{S}_{\mathcal{L}}$,
$\mathcal{A}_{\mathcal{L}}$, $\mathcal{F}$, $\mathcal{G}$, $\mathcal{K}$,
$\mathcal{B}$, and the list
$\mathcal{L}$, the task family $\mathcal{T}$ and the digraph
$\mathcal{D}_{\mathcal{L}}$ themselves), never matrices. Bold lower case
denotes a vector or a row of the compatibility matrix ($\bq_b$, $\bs_b$,
$\bw_b$, $\bu$, $\mathbf{f}_j$, $\mathbf{g}_j$), never a scalar, so that the
generalised overlap $q$ of Corollary~\ref{cor:taskoverlap} is distinct from
the compatibility row $\bq_b$. Greek indices $\mu,\nu,\rho$ range over the
shared set $P$, whose elements are also written $r,c$ where a pair of them
indexes an entry, as in \eqref{eq:SP}, and $\tau$ and $\sigma$ over $T$ and over $S$, where they also
name the two injections placing $P$ inside each. Apart from the singular
values $\sigma_{\min}$, $\sigma_{\max}$ and the bond couple $\tau_b$ of
Sec.~\ref{sec:price} and the coordinates $\sigma_j$, $\tau_j$ of a coordinate
task in Sec.~\ref{sec:tasklist}, these Greek letters are used for nothing else;
the running degrees of freedom of Sec.~\ref{sec:tasklist} are the plain Latin
$u,v$. Five letters are also reused locally and defined where they occur: $u$ is
the log-stiffness of Sec.~\ref{sec:learning} and $v$ the drive amplitude of
Sec.~\ref{sec:consequences}; $B$ is the shared block of a realised response in
Sec.~\ref{sec:numerics} as well as of the target; the quarter turn $\epsilon$ of
Remark~\ref{rem:edgewedge} is distinct from the amplitude $\varepsilon$; and the
bold $\by^{*}_j$ of Sec.~\ref{sec:learning} is a read-out vector, not the task
value $y^{*}_j$. The Latin index $b$ always numbers a bond and $j$ a task, except in the
node pair $(i,j)$ of a bond and at the hinge node $j$ of an angular spring.

\begin{table}[p]
  \centering
  \scriptsize
  \renewcommand{\arraystretch}{0.80}
  \setlength{\abovecaptionskip}{4pt}
  \setlength{\belowcaptionskip}{4pt}
  \caption{\textbf{Symbols.} Subscripted symbols are listed in unsubscripted
  form, and the prime $\ell$ of Sec.~\ref{sec:certificates} is kept distinct
  from the overlap $p$. Some letters do double duty across distant sections:
  the damping $D$ against the transmission $D$, $D_{\nu\mu}$ and
  $\mathbf{D}_{\mathcal{T}}$; the mass $M$ against
  the $M$ in the proof of Thm.~\ref{thm:multiclamp} and the task dyad
  $\mathsf{M}_j$; the spatial
  dimension $\delta$ against the deficit, never written without a
  subscript; the domain
  $\Theta$ against the scaling in Prop.~\ref{prop:multirank};
  $N$ against the operator size in Thm.~\ref{thm:universal} and the count
$m_T+m_S$ of Sec.~\ref{sec:consequences}; $X$
  against the generic matrix of Sec.~\ref{sec:tasklist} and the node
  positions of Prop.~\ref{prop:generic}; the signed area
  $\Gamma$ against the $\Gamma$ of Appendix~\ref{app:jacobian};
  $J$: the task
  count here, a driven set in Sec.~\ref{sec:dirichlet},
  $\partial R/\partial\kappa$ in
  Sec.~\ref{sec:certificates};
  and $W$ with three meanings: the wedge matrix
  $\mathsf{W}$, the receptance blocks $W_{OZ},W_{JZ},W_{JJ}$ of
  Sec.~\ref{sec:dirichlet} and the scalar of \eqref{eq:duS}. The
  generic block $Z$ stands against the index set $Z$ there, $\mathsf{O}$ and
  $\mathsf{E}$ against the read-out set $O$, $O(\cdot)$, the subspace $E$ and
  the energy $\mathcal{E}$, and the bond length $L_b$ against $\mathsf{L}_a$,
  $\mathcal{L}$ and the electrode count $L$. No symbol
  carries two of these meanings inside one expression; only this table tells
  them apart.}
  \label{tab:notation}
  \begin{tabular}{@{}l@{\quad}p{0.64\textwidth}@{\quad}l@{}}
    \toprule
    Symbol & Meaning & Defined \\
    \midrule
    \multicolumn{3}{@{}l}{\emph{Network and response}}\\
    $N$, $n_b$, $\delta$ & nodes, bonds, spatial dimension
      & Sec.~\ref{sec:setup} \\
    $b$ & bond index, throughout & Sec.~\ref{sec:setup} \\
    $\bq_b$, $\bs_b$, $L_b$ & compatibility row of bond $b$; its transverse
      companion; its length & Sec.~\ref{sec:setup} \\
    $k_b$, $a_b$ & central-force stiffness; odd (non-reciprocal) coupling
      & \eqref{eq:stiffness} \\
    $\theta$, $n_\theta$ & tunable parameters $(k)$ or $(k,a)$, and their count
      & Sec.~\ref{sec:setup} \\
    $K$, $K_{\!f\!f}$ & stiffness operator; its restriction to the
      $n_{\mathrm{free}}$ free degrees of freedom & \eqref{eq:stiffness} \\
    $C$ & compliance, $C=K_{\!f\!f}^{-1}$ & Sec.~\ref{sec:setup} \\
    $\mathcal{E}$, $\mathcal{E}^{*}$ & stored energy; its value on the
      equilibrium branch & Sec.~\ref{sec:numerics} \\
    $\bw_b$ & $C\bq_b$, the vector whose outer square is a Jacobian column
      & \eqref{eq:wb} \\
    $\bz_b$ & $C\bs_b$; $\bz_b|_P\wedge\bw_b|_P$ is what an odd bond adds
      to $\mathcal{A}_P$ & \eqref{eq:wedge} \\
    $\kappa_b$ & rescaled stiffness $k_b/L_b^{2}$, rational at integer nodes
      & Sec.~\ref{sec:certificates} \\
    $\Delta_K$ & $\det K_{\!f\!f}$, cleared to certify a rank over
      $\mathbb{F}_\ell$ & Sec.~\ref{sec:certificates} \\
    $\ell$ & the certifying prime, reserved & Sec.~\ref{sec:certificates} \\
    \midrule
    \multicolumn{3}{@{}l}{\emph{Layout}}\\
    $S$, $T$ & driven and read-out index sets, $|S|=m_S$, $|T|=m_T$
      & Sec.~\ref{sec:setup} \\
    $P$, $p$ & the shared set $T\cap S$ and its size, the \emph{overlap}
      & Sec.~\ref{sec:setup} \\
    $U$, $n_U$ & the set $T\cup S$ and its size, the
      \emph{instrumentation budget}, a shared degree of freedom counted once
      & Prop.~\ref{prop:budget} \\
    $\alpha$, $\gamma$ & exclusively read and exclusively driven counts,
      $m_T-p$ and $m_S-p$ & Prop.~\ref{prop:budget} \\
    $\tau$, $\sigma$ & injections placing $P$ inside $T$ and inside $S$
      & Sec.~\ref{sec:bound} \\
    $\mu,\nu,\rho$ & terminals in $P$ under the imposed-displacement drive
      & Sec.~\ref{sec:dirichlet} \\
    $R(\theta)$, $R^{*}$ & the response block $C[T,S]$; the target block
      & Sec.~\ref{sec:setup} \\
    $\mathcal{R}$ & the reachable set $\{R(\theta)\}$ & Sec.~\ref{sec:setup} \\
    $d$ & $\dim\mathcal{S}_P=m_Tm_S-p(p-1)/2$, the symmetry ceiling & Sec.~\ref{sec:attainment} \\
    $\delta_p$ & the block deficit $p(p-1)/2$, the dimensions reciprocity
      removes & Sec.~\ref{sec:tasklist} \\
    \midrule
    \multicolumn{3}{@{}l}{\emph{Task lists}}\\
    $\mathcal{L}$, $J$, $j$ & a task list
      $\{(\mathbf{f}_j,\mathbf{g}_j,y^{*}_j)\}_{j=1}^{J}$: the pattern driven,
      the functional read and the value wanted, for each of its $J$ tasks
      & Sec.~\ref{sec:tasklist} \\
    $y(C)$, $y^{*}$ & realised value vector and target, both in $\R^{J}$
      & \eqref{eq:taskvalue} \\
    $\mathsf{M}_j$, $\mathcal{M}$, $\mathcal{M}^{s}$ & task dyad
      $\mathbf{g}_j\mathbf{f}_j^\T$; its span; the span of the symmetric parts
      & \eqref{eq:taskvalue} \\
    $\delta_{\mathcal{L}}$, $\mathcal{D}_{\mathcal{L}}$ & the deficit
      $\dim\mathcal{M}-\dim\mathcal{M}^{s}$; the drive-to-read digraph whose
      digons it counts & \eqref{eq:deltaL}, Cor.~\ref{cor:digon} \\
    $\mathcal{A}^{0}_{\mathcal{L}}$, $\mathcal{A}_{\mathcal{L}}$,
      $\mathcal{A}^{1}_{\mathcal{L}}$, $\mathcal{S}_{\mathcal{L}}$ & relations
      among the task values forced by linearity alone, by reciprocity, their
      difference (dimension $\delta_{\mathcal{L}}$), and the realisable values
      $\mathcal{A}_{\mathcal{L}}^{\perp}$ & Sec.~\ref{sec:tasklist} \\
    $\mathcal{F}$, $\mathcal{G}$, $q$ & spans of the drives and of the reads;
      the \emph{generalised overlap} $\dim(\mathcal{F}\cap\mathcal{G})$
      & Cor.~\ref{cor:taskoverlap} \\
    \midrule
    \multicolumn{3}{@{}l}{\emph{Subspaces, floors and ceilings}}\\
    $\mathcal{S}_P$ & blocks symmetric on the shared sub-block & \eqref{eq:SP} \\
    $\mathcal{A}_P$ & its orthogonal complement, $\mathcal{S}_P^{\perp}$
      & Thm.~\ref{thm:floor} \\
    $\mathcal{S}^{+}_P$ & the closed convex subset of $\mathcal{S}_P$ whose
      shared sub-block is positive semidefinite & Thm.~\ref{thm:floorpd} \\
    $B$, $B_s$, $B_a$ & the target's shared block
      $R^{*}[\tau(P),\sigma(P)]$, and its symmetric and antisymmetric parts
      & Thm.~\ref{thm:floor} \\
    $\pi_P$ & the shared sub-block, $A\mapsto A[\tau(P),\sigma(P)]$
      & Sec.~\ref{sec:floorpd} \\
    $\Pi_{\mathcal{S}_P}$, $\Pi_{\mathcal{A}_P}$ & orthogonal projections onto
      $\mathcal{S}_P$ and $\mathcal{A}_P$ & Thm.~\ref{thm:floor} \\
    $\Pi_T$, $\Pi_S$ & coordinate selections onto $T$ and $S$
      & Thm.~\ref{thm:universal} \\
    $\Lambda(p)$ & the ceiling of Theorem~\ref{thm:main} at fixed budget $n_U$,
      maximised over the split & \eqref{eq:budget} \\
    $V$, $\mathcal{Y}$ & the span of the bond dyads; the symmetrised lift of
      $\mathcal{S}_P$ & Thm.~\ref{thm:duality} \\
    \midrule
    \multicolumn{3}{@{}l}{\emph{Drives, training and dynamics}}\\
    $C_P$ & the shared principal block $C[P,P]$ & Thm.~\ref{thm:dirichlet} \\
    $D_{\nu\mu}$ & transmission: displacement at $\nu$ under a unit imposed
      displacement at $\mu$ & \eqref{eq:dtrans} \\
    $D_{B\leftarrow A}$ & transmission matrix, read on $B$ under a displacement
      pattern imposed on $A$ & \eqref{eq:mtrans} \\
    $X$, $Y$ & $C[A,A]$ and $C[B,B]$, the two driving-point compliance blocks
      & \eqref{eq:mtrans} \\
    $\mathbf{D}_{\mathcal{T}}$, $n_{\mathcal{T}}$, $\Phi_{\mathcal{T}}$ &
      the tuple of transmissions of a task family $\mathcal{T}$, its entry
      count, and the map $C_P\mapsto\mathbf{D}_{\mathcal{T}}$
      & Prop.~\ref{prop:multirank} \\
    $\varepsilon$, $\Psi$ & amplitude and unit direction of the target
      perturbation, $\|\Psi\|_F=1$ & Sec.~\ref{sec:learning} \\
    $\beta$, $\chi$ & nudge amplitude and step rate of coupled learning
      & Sec.~\ref{sec:learning} \\
    $t$ & antisymmetric contamination of the layout-task target
      & Sec.~\ref{sec:consequences} \\
    $M$, $D$, $G(\omega)$ & mass; damping; dynamic Green function
      & Sec.~\ref{sec:numerics} \\
    \midrule
    \multicolumn{3}{@{}l}{\emph{The price of non-reciprocity}}\\
    $\operatorname{sym}$, $\operatorname{anti}$ & symmetric and antisymmetric
      parts of a square matrix, $\tfrac12(Z+Z^\T)$ and $\tfrac12(Z-Z^\T)$
      & Sec.~\ref{sec:tasklist}, Sec.~\ref{sec:price} \\
    $\eta$ & non-reciprocity ratio,
      $\|(\operatorname{sym}Z)^{-1/2}(\operatorname{anti}Z)
      (\operatorname{sym}Z)^{-1/2}\|_2$ for a block $Z$ with
      $\operatorname{sym}Z\succ0$: its odd part weighed against its even part
      & Sec.~\ref{sec:price} \\
    $\operatorname{vec}_<$ & the $\delta_p$ strictly upper entries of an
      antisymmetric $p\times p$ block, in the ordering of $P$
      & Cor.~\ref{cor:price} \\
    $\mathsf{W}$, $\mathcal{B}$ & the wedge matrix, column $b$ the vectorised
      $\bz_b|_P\wedge\bw_b|_P$; a support of $\delta_p$ bonds, giving
      $\mathsf{W}_{\mathcal{B}}$ & Cor.~\ref{cor:price} \\
    $\mathsf{L}_a$ & the graph Laplacian with signed bond weights $a_b$
      & Cor.~\ref{cor:price} \\
    $\mathcal{K}$ & the torque-free couplings, $(a_bL_b)_b$ a state of
      self-stress & Prop.~\ref{prop:torquefree} \\
    $\tau_b$, $\Gamma$ & the couple $a_be_bL_b$ of odd bond $b$; the signed
      area a closed force cycle encloses
      & Prop.~\ref{prop:torquefree}, \eqref{eq:cycle} \\
    \bottomrule
  \end{tabular}
\end{table}


\section{What reciprocity forbids}
\label{sec:bound}

This section proves what reciprocity takes from a network that already exists.
The set a reciprocal network reaches by varying its stiffnesses, driven by forces, lies in a subspace of
codimension $p(p-1)/2$ (Theorem~\ref{thm:main}, Remark~\ref{rem:global};
Fig.~\ref{fig:setup}c), the distance from a target to that
subspace is an error floor, and passivity adds a second, orthogonal term. The
reciprocity statements hold for any symmetric invertible operator whose
response is a block of its inverse. For a finite list of
tasks, $p(p-1)/2$ is replaced by an invariant of the list, which on coordinate
tasks is the number of reversed pairs, and an imposed-displacement drive obeys
a law of its own. The last two subsections reduce attainment of the bound to a
transversality condition and certify it at rational configurations.

The subspace in question is the set of blocks whose sub-block on the shared
degrees of freedom is symmetric. Fix an ordering $r_1<\dots<r_p$ of the shared
set $P$ and let
$\tau:P\to\{1,\dots,m_T\}$ and $\sigma:P\to\{1,\dots,m_S\}$ be the injections
giving the position of each shared degree of freedom inside $T$ and inside $S$.
Define
\begin{equation}
  \label{eq:SP}
  \mathcal{S}_P
  \;=\;
  \bigl\{\,A\in\R^{m_T\times m_S}\ :\
  A_{\tau(r)\sigma(c)}=A_{\tau(c)\sigma(r)}\ \ \forall\, r,c\in P \,\bigr\},
\end{equation}
in which each condition equates the entry read at $r$ and driven at $c$ with
the entry read at $c$ and driven at $r$.

\begin{theorem}[Capacity under Maxwell--Betti]
  \label{thm:main}
  Let $K_{\!f\!f}$ be symmetric and invertible at every admissible $k$. This
  holds in particular for a passive network, $a\equiv 0$ with $K_{\!f\!f}\succ 0$, and it is
  the only hypothesis the proof uses. Then for
  every admissible stiffness vector $k$,
  \begin{equation}
    \label{eq:bound}
    \rank \frac{\partial R}{\partial k}
    \;\le\;
    \min\!\Bigl(\,n_b,\ \ m_Tm_S-\tfrac{1}{2}p(p-1)\,\Bigr).
  \end{equation}
  In particular, with $T=S$ and $m_T=m_S=m$ the set reachable by varying $k$
  has local dimension at most $m(m+1)/2$, so a fraction at least $(m-1)/(2m)$ of the $m^2$ target
  dimensions is unreachable, for every such network, passive ones included,
  whatever its size and its topology.
\end{theorem}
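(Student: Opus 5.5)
The plan is to bound the rank in two independent ways and take the minimum. The bound by $n_b$ is immediate: $\partial R/\partial k$, once flattened, is an $m_Tm_S\times n_b$ matrix, so its rank cannot exceed its number of columns. The substantive half is the bound by $m_Tm_S-\tfrac12p(p-1)$. For that I will show that every column of the Jacobian lies in the subspace $\mathcal{S}_P$ of \eqref{eq:SP}, and then compute $\dim\mathcal{S}_P$.

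\textbf{Containment.} I would work from \eqref{eq:jaccol} with $\bu=\bv=\bq_b$. Since $K_{\!f\!f}$ is symmetric and invertible, $C=K_{\!f\!f}^{-1}$ is symmetric, so $C^\T\bq_b=C\bq_b=\bw_b$, and \eqref{eq:wb} gives $\partial R/\partial k_b=-\bw_b[T]\,\bw_b[S]^\T$. For $r,c\in P$, the entry at position $(\tau(r),\sigma(c))$ is $-(\bw_b)_r(\bw_b)_c$. This is symmetric under $r\leftrightarrow c$, so the column lies in $\mathcal{S}_P$.

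A cleaner route, and the one I would actually record, avoids derivatives altogether, in the spirit of Remark~\ref{rem:global}. At every admissible $k$, $R(k)_{\tau(r)\sigma(c)}=C_{rc}=C_{cr}=R(k)_{\tau(c)\sigma(r)}$. Hence $R(k)\in\mathcal{S}_P$ for all admissible $k$. Because $\mathcal{S}_P$ is a linear subspace, every partial derivative of $R$ also lies in it, and so does the image of $\partial R/\partial k$. Only symmetry and invertibility of $K_{\!f\!f}$ are used, which matches the hypothesis.

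\textbf{Dimension count.} The conditions defining $\mathcal{S}_P$ are trivial when $r=c$. For $r\neq c$, the conditions for $(r,c)$ and $(c,r)$ coincide. The $\tfrac12p(p-1)$ conditions indexed by unordered pairs $\{r,c\}$ with $r<c$ each involve a distinct pair of coordinate positions $(\tau(r),\sigma(c))$ and $(\tau(c),\sigma(r))$. These positions are distinct from each other because $\tau$ and $\sigma$ are injective and $r\neq c$, and they are disjoint across different pairs. The linear functionals $A\mapsto A_{\tau(r)\sigma(c)}-A_{\tau(c)\sigma(r)}$ are therefore linearly independent, and $\dim\mathcal{S}_P=m_Tm_S-\tfrac12p(p-1)$. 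The rank of the Jacobian is at most this dimension, which together with the first half gives \eqref{eq:bound}.

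\textbf{The square case.} Setting $T=S$ and $m_T=m_S=m$ gives $p=m$ and $m^2-\tfrac12m(m-1)=\tfrac12m(m+1)$. At a point where the rank is locally constant, the local dimension of the reachable set equals the rank, so it is at most $m(m+1)/2$. The unreachable fraction is then at least $\tfrac12m(m-1)/m^2=(m-1)/(2m)$. Nothing in the argument refers to $N$, $n_b$, or the graph, which gives the claimed independence from size and topology.

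\textbf{Main obstacle.} The only step needing care is the linear independence in the dimension count. Everything hinges on injectivity of $\tau$ and $\sigma$ and on the entries of different unordered pairs not colliding. I would verify this by noting that $(\tau(r),\sigma(c))$ determines the ordered pair $(r,c)$.
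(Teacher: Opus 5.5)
Your proposal is correct and follows the paper's proof. The column count gives $n_b$, every column lies in $\mathcal{S}_P$, and the $\tfrac12p(p-1)$ defining conditions are independent because injectivity of $\tau$ and $\sigma$ makes them involve pairwise disjoint pairs of entries. The one difference is how you obtain the containment. You prefer to differentiate $R(k)\in\mathcal{S}_P$ over the open admissible set, which is the content of Remark~\ref{rem:global}. The paper instead reads it off the symmetry of $\mathrm{d}C=-C(\mathrm{d}K)C$, which needs symmetry only at the point itself. Under the stated hypothesis both routes are valid.
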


\begin{proof}
  The first argument of the minimum is the number of columns. For the second, we
  show that every column lies in $\mathcal{S}_P$ and that
  $\dim\mathcal{S}_P=m_Tm_S-p(p-1)/2$.

  \emph{Columns lie in $\mathcal{S}_P$.} The argument needs only that the
  perturbation is symmetric. If $\mathrm{d}K=\mathrm{d}K^\T$ then
  $\mathrm{d}C=-C(\mathrm{d}K)C$ is symmetric, so its
  $(\tau(r),\sigma(c))$ entry, which is the entry of $\mathrm{d}C$ at the
  pair of degrees of freedom $(r,c)\in P\times P$, is invariant under
  $r\leftrightarrow c$, which is precisely the defining condition
  \eqref{eq:SP}. Since $\partial K/\partial k_b=\bq_b\bq_b^\T$ is symmetric,
  every column qualifies; explicitly, by \eqref{eq:wb} the $b$-th column is
  $-\bw_b[T]\,\bw_b[S]^\T$ with $(\tau(r),\sigma(c))$ entry
  $-\,\bw_b(r)\,\bw_b(c)$. As $\mathcal{S}_P$ is a linear subspace, the image of
  the Jacobian lies inside it.

  \emph{Codimension.} The defining conditions are indexed by unordered pairs
  $\{r,c\}\subset P$ with $r\neq c$, of which there are $p(p-1)/2$. Because
  $\tau$ and $\sigma$ are injective, the map $(r,c)\mapsto(\tau(r),\sigma(c))$ is
  injective, so the $p(p-1)$ matrix entries appearing in these conditions are
  pairwise distinct, and each condition involves two entries that occur in no
  other condition. The conditions are therefore linearly independent and
  $\codim\mathcal{S}_P=p(p-1)/2$. Combining the two facts gives
  \eqref{eq:bound}.

  For $T=S$ we have $p=m$ and $m^2-m(m-1)/2=m(m+1)/2$; the unreachable fraction
  is $\bigl(m^2-m(m+1)/2\bigr)/m^2=(m-1)/(2m)$.

  Positive definiteness appears nowhere in the argument. All that is asked of
  $K_{\!f\!f}$ is that $C=K_{\!f\!f}^{-1}$ exist and satisfy $C=C^\T$; the
  hypothesis is therefore stated as symmetry plus invertibility, and
  passivity is kept for Theorem~\ref{thm:floorpd}, which is the result that
  genuinely uses it.
\end{proof}

\begin{remark}[the containment is global, and derivative-free]
  \label{rem:global}
  The tangent-space statement understates the situation. Since $C=C^\T$ for
  \emph{every} admissible $k$, the response block itself satisfies
  $R(k)\in\mathcal{S}_P$ for all $k$: the whole set reachable by varying $k$,
  not merely its tangent space, lies in a linear subspace of codimension
  $p(p-1)/2$. That
  containment is elementary and uses no derivative, and it is what the error
  floor of Theorem~\ref{thm:floor} rests on. The Jacobian is needed for
  everything the containment cannot decide: which dimensions inside
  $\mathcal{S}_P$ a given network actually reaches (the attainment question of
  Theorem~\ref{thm:duality}, where the bound is not obviously tight); the
  non-reciprocal case of Proposition~\ref{prop:odd}, where $C$ is not symmetric
  and the containment argument gives nothing; and the layout law of
  Proposition~\ref{prop:budget}, which is a statement about rank.
\end{remark}

\begin{remark}[the two branches have different scopes]
  The symmetry branch $m_Tm_S-p(p-1)/2$ is uniform in $k$, holding at every
  point of parameter space and not merely generically, and it is independent of
  $N$, of the coordination number, of the topology, of how far the network is
  from isostaticity, and of whether $K_{\!f\!f}$ is positive definite. The other
  branch, $n_b$, is a bond count and depends on the size, the coordination
  and the topology; in the rank sweep of experiment~03 (experiments are
  numbered as the deposited scripts of Table~\ref{tab:scripts}) it is the
  binding one in a substantial minority of cases. It is the symmetry branch that cannot be repaired by adding
  bonds.
\end{remark}

\subsection{The error floor}

Theorem~\ref{thm:main} bounds a dimension. Remark~\ref{rem:global} gives what
an experimentalist can use directly: the whole set reachable by varying $k$
lies in $\mathcal{S}_P$. A target outside $\mathcal{S}_P$ therefore has a floor below
which no training run can go.

Let $\Pi_{\mathcal{S}_P}$ and $\Pi_{\mathcal{A}_P}$ be the orthogonal
projections onto $\mathcal{S}_P$ and onto $\mathcal{A}_P=\mathcal{S}_P^{\perp}$,
the latter spanned by the $p(p-1)/2$ blocks
$E_{\tau(r)\sigma(c)}-E_{\tau(c)\sigma(r)}$, where $E_{ij}$ is the matrix unit
with a $1$ in entry $(i,j)$ and zeros elsewhere.

\begin{theorem}[error floor]
  \label{thm:floor}
  Let $R^{*}\in\R^{m_T\times m_S}$ be any target response block, and let
  $\|\cdot\|_F$ denote the Frobenius norm. For every network whose
  $K_{\!f\!f}$ is symmetric and invertible at every admissible stiffness
  vector, passive networks included, whatever its size and its topology, and
  for every admissible stiffness vector $k$,
  \begin{equation}
    \label{eq:floor}
    \bigl\|R(k)-R^{*}\bigr\|_F \;\ge\; \bigl\|\Pi_{\mathcal{A}_P}R^{*}\bigr\|_F
    \;=\; \tfrac{1}{2}\bigl\|B-B^\T\bigr\|_F ,
  \end{equation}
  where $B=R^{*}[\tau(P),\sigma(P)]$ is the target restricted to the shared
  degrees of freedom, its rows and its columns indexed by $P$ in one common
  order.
\end{theorem}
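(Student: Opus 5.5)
The plan is to combine the global containment of Remark~\ref{rem:global} with the orthogonal splitting $\R^{m_T\times m_S}=\mathcal{S}_P\oplus\mathcal{A}_P$ in the Frobenius inner product $\langle X,Y\rangle=\operatorname{tr}(X^\T Y)$. First I would check that $\mathcal{A}_P$, as defined by the blocks $E_{\tau(r)\sigma(c)}-E_{\tau(c)\sigma(r)}$, really is $\mathcal{S}_P^\perp$. A matrix $A$ is orthogonal to $E_{\tau(r)\sigma(c)}-E_{\tau(c)\sigma(r)}$ exactly when $A_{\tau(r)\sigma(c)}=A_{\tau(c)\sigma(r)}$. So $\mathcal{S}_P$ is precisely the orthogonal complement of the span of these blocks, and their span is $\mathcal{A}_P$, of dimension $p(p-1)/2$ by the independence argument in the proof of Theorem~\ref{thm:main}.

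Second, by Remark~\ref{rem:global}, $R(k)\in\mathcal{S}_P$ for every admissible $k$, because $C=K_{\!f\!f}^{-1}$ is symmetric and so $C_{rc}=C_{cr}$ for all $r,c\in P$. Hence $\Pi_{\mathcal{A}_P}R(k)=0$. Since orthogonal projection is nonexpansive,
\[
\|R(k)-R^{*}\|_F\ \ge\ \|\Pi_{\mathcal{A}_P}(R(k)-R^{*})\|_F=\|\Pi_{\mathcal{A}_P}R^{*}\|_F .
\]
Equivalently, by Pythagoras, the squared error splits as $\|R(k)-\Pi_{\mathcal{S}_P}R^{*}\|_F^2+\|\Pi_{\mathcal{A}_P}R^{*}\|_F^2$. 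This gives the inequality in \eqref{eq:floor}.

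Third, I would evaluate the projection explicitly. The generators $G_{rc}=E_{\tau(r)\sigma(c)}-E_{\tau(c)\sigma(r)}$ for $r<c$ have pairwise disjoint supports, because $(r,c)\mapsto(\tau(r),\sigma(c))$ is injective. They are therefore mutually orthogonal, each with $\|G_{rc}\|_F^2=2$. Then
\[
\Pi_{\mathcal{A}_P}R^{*}=\sum_{r<c}\tfrac12\bigl(B_{rc}-B_{cr}\bigr)G_{rc},
\]
with $B$ indexed by $P$ in one common order, so that $B_{rc}=R^{*}_{\tau(r)\sigma(c)}$. The squared norm is $\sum_{r<c}2\cdot\tfrac14(B_{rc}-B_{cr})^2=\tfrac12\sum_{r<c}(B_{rc}-B_{cr})^2$. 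On the other hand, $\|B-B^\T\|_F^2=2\sum_{r<c}(B_{rc}-B_{cr})^2$, because the diagonal vanishes and each pair appears twice. Hence $\|\Pi_{\mathcal{A}_P}R^{*}\|_F^2=\tfrac14\|B-B^\T\|_F^2$, which gives the stated equality.

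There is no real obstacle here. The only point needing care is bookkeeping: $B$ must be indexed consistently through $\tau$ and $\sigma$ so that $B-B^\T$ matches the defining conditions of $\mathcal{S}_P$. I would also note that the bound needs no derivative and no positivity, only the symmetry hypothesis stated in Theorem~\ref{thm:floor}.
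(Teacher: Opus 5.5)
Your proof is correct and follows essentially the paper's argument: the derivative-free containment $R(k)\in\mathcal{S}_P$ from Remark~\ref{rem:global}, then the orthogonal splitting $\mathcal{S}_P\oplus\mathcal{A}_P$ to obtain the Pythagorean lower bound, then evaluation of $\Pi_{\mathcal{A}_P}$ on the spanning blocks. Your explicit computation with the disjointly supported generators $G_{rc}$, each of squared norm $2$, supplies the bookkeeping that the paper compresses into its final sentence.
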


\begin{proof}
  By Remark~\ref{rem:global}, $R(k)\in\mathcal{S}_P$ for every admissible $k$,
  so it suffices to bound $\|A-R^{*}\|_F$ from below for $A\in\mathcal{S}_P$.
  Because $A\in\mathcal{S}_P$ we have $\Pi_{\mathcal{S}_P}(A-R^{*})
  =A-\Pi_{\mathcal{S}_P}R^{*}$ and $\Pi_{\mathcal{A}_P}(A-R^{*})
  =-\Pi_{\mathcal{A}_P}R^{*}$, and the orthogonal splitting
  $\R^{m_T\times m_S}=\mathcal{S}_P\oplus\mathcal{A}_P$ gives
  \[
    \|A-R^{*}\|_F^2
    = \bigl\|A-\Pi_{\mathcal{S}_P}R^{*}\bigr\|_F^2
    + \bigl\|\Pi_{\mathcal{A}_P}R^{*}\bigr\|_F^2
    \;\ge\; \bigl\|\Pi_{\mathcal{A}_P}R^{*}\bigr\|_F^2 ,
  \]
  the first term being non-negative. For the evaluation, $\mathcal{A}_P$ is
  spanned by the blocks $E_{\tau(r)\sigma(c)}-E_{\tau(c)\sigma(r)}$, so
  $\Pi_{\mathcal{A}_P}$ keeps the antisymmetric part of the shared block,
  $\tfrac12(B-B^\T)$, and annihilates everything else.
\end{proof}

The floor constrains the \emph{whole} reachable set. It therefore holds without
any genericity or attainment hypothesis. It is also quantitative: the floor is
the norm of the antisymmetric part of the target's shared block, a number
computed from the target alone before any training is attempted.

Theorem~\ref{thm:floor} is only a lower bound. The set reachable by varying $k$ is contained in
$\mathcal{S}_P$ but is in general a proper subset of it. The floor is therefore
tight only when the part of the target lying in $\mathcal{S}_P$ is itself
reachable, and loose otherwise. Both regimes occur, and
Sec.~\ref{sec:learning} measures how far apart they are.

\subsection{The floor with passivity}
\label{sec:floorpd}

Theorems~\ref{thm:main} and~\ref{thm:floor} use one hypothesis between them,
that $K_{\!f\!f}$ is symmetric and invertible. A passive network carries a
second, $K_{\!f\!f}\succ0$. It adds a term to the floor, and on a generic
target (Remark~\ref{rem:fraction-pd}) this term is comparable in size to the
reciprocity term.

If $K_{\!f\!f}$ is positive definite then so is $C=K_{\!f\!f}^{-1}$. Write
$\pi_P:A\mapsto A[\tau(P),\sigma(P)]$ for the shared sub-block of a response
block. The shared block of the \emph{realised} response,
\begin{equation*}
  \pi_P\bigl(R(k)\bigr) \;=\; R(k)[\tau(P),\sigma(P)] \;=\; C[P,P] ,
\end{equation*}
is a submatrix whose rows and columns are indexed by the same set $P$ in the
same order. It is therefore a \emph{principal} submatrix of a positive-definite
matrix, and so itself symmetric \emph{positive definite}.

Passivity thus places the reachable set inside the closed convex set
\begin{equation*}
  \mathcal{S}^{+}_P
  \;=\;
  \bigl\{\,A\in\R^{m_T\times m_S}\ :\ A[\tau(P),\sigma(P)]\in\Sym^{+}(p)\,\bigr\}
\end{equation*}
where $\Sym^{+}(p)$ is the closed cone of positive-semidefinite $p\times p$
matrices and $\Sym^{++}(p)$ its interior, the positive-definite ones; the
realised block lies in $\Sym^{++}(p)$. With $B=\pi_P(R^{*})$ the shared block
of the target, as in Theorem~\ref{thm:floor}, write $B_a=\tfrac12(B-B^\T)$ and
$B_s=\tfrac12(B+B^\T)$, with $\lambda_1,\dots,\lambda_p$ the eigenvalues of
$B_s$. The distance from $R^{*}$ to $\mathcal{S}^{+}_P$ has a reciprocity
term, set by $B_a$, and a passivity term, set by the negative eigenvalues of
$B_s$.

\begin{theorem}[error floor under reciprocity and passivity]
  \label{thm:floorpd}
  For every passive network, whatever its size and topology, and every
  admissible stiffness vector $k$,
  \begin{equation}
    \label{eq:floorpd}
    \bigl\|R(k)-R^{*}\bigr\|_F
    \;\ge\;
    \operatorname{dist}_F\!\bigl(R^{*},\mathcal{S}^{+}_P\bigr)
    \;=\;
    \Bigl(
      \underbrace{\tfrac14\bigl\|B-B^\T\bigr\|_F^{2}}_{\text{reciprocity}}
      \;+\;
      \underbrace{\textstyle\sum_{i}\min(\lambda_i,0)^{2}}_{\text{passivity}}
    \Bigr)^{1/2} .
  \end{equation}
  The two terms are orthogonal. The second vanishes exactly when
  $B_s\succeq0$, where \eqref{eq:floorpd} reduces to \eqref{eq:floor}; otherwise
  \eqref{eq:floorpd} is strictly larger.
\end{theorem}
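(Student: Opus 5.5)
The plan has three steps: show that the realised response lies in $\mathcal{S}^{+}_P$, compute the Frobenius distance from $R^{*}$ to that set exactly, and read off the orthogonality and strictness claims from the resulting formula.

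\emph{Containment.} Passivity gives $K_{\!f\!f}\succ0$, hence $C\succ0$. The shared block $\pi_P(R(k))=C[P,P]$ is a principal submatrix of $C$, so it is positive definite. Therefore $R(k)\in\mathcal{S}^{+}_P$, and
\[
\|R(k)-R^{*}\|_F\ge\operatorname{dist}_F(R^{*},\mathcal{S}^{+}_P)
\]
for every admissible $k$. All the work is in evaluating this distance.

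\emph{Decoupling.} The constraints defining $\mathcal{S}^{+}_P$ involve only the $p^2$ entries at positions $(\tau(r),\sigma(c))$, $r,c\in P$. These positions are pairwise distinct because $\tau$ and $\sigma$ are injective, as in the proof of Theorem~\ref{thm:main}. The Frobenius norm splits as a sum over entries, so on every entry outside the shared block we can take $A=R^{*}$ at no cost. The problem therefore reduces to
\[
\min_{X\in\Sym^{+}(p)}\|X-B\|_F .
\]
Split $B=B_s+B_a$. For symmetric $X$, the matrix $X-B_s$ is symmetric and $B_a$ is antisymmetric, and these two subspaces are Frobenius-orthogonal. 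Hence
\[
\|X-B\|_F^2=\|X-B_s\|_F^2+\|B_a\|_F^2 .
\]
Since $\|B_a\|_F^2=\tfrac14\|B-B^\T\|_F^2$, this gives the reciprocity term. It also shows the two contributions are orthogonal: they sit in the antisymmetric and symmetric sectors respectively. This is the same splitting used in Theorem~\ref{thm:floor}.

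\emph{The passivity term.} It remains to show $\min_{X\succeq0}\|X-B_s\|_F^2=\sum_i\min(\lambda_i,0)^2$. This is Higham's Theorem~2.1; the paper states it may be quoted, and I would quote it. The argument behind it runs as follows. Diagonalise $B_s=U\Lambda U^\T$. Unitary invariance of the Frobenius norm reduces the problem to the nearest positive-semidefinite matrix to the diagonal matrix $\Lambda$. For any $Y\succeq0$,
\[
\|Y-\Lambda\|_F^2\ge\sum_i (Y_{ii}-\lambda_i)^2\ge\sum_i\min(\lambda_i,0)^2 ,
\]
because $Y_{ii}\ge0$. The bound is attained at $Y=\max(\Lambda,0)$. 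This step is the only one with real content, and it is classical.

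\emph{The final clauses.} The passivity sum is zero exactly when every $\lambda_i\ge0$, that is, when $B_s\succeq0$. In that case the formula collapses to $\|\Pi_{\mathcal{A}_P}R^{*}\|_F$, which is \eqref{eq:floor}. Otherwise the sum is strictly positive, so \eqref{eq:floorpd} is strictly larger than \eqref{eq:floor}.

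The main obstacle is only bookkeeping. The reduction to the $p\times p$ problem must respect the chosen common ordering of $P$ in rows and columns, so that the distinguished entries really form the principal block $B$. The other entries must be shown to be free, which they are because $\mathcal{S}^{+}_P$ imposes no condition on them.
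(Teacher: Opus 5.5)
Your proof is correct and takes essentially the same route as the paper. It splits the Frobenius norm over the $p^2$ shared positions and the rest, uses that $C[P,P]$ is a principal submatrix of $C\succ0$, separates $B=B_s+B_a$ orthogonally, and reduces the passivity term to the eigenvalue truncation of Higham's Theorem~2.1. The one difference is that you also prove the truncation step, from the non-negative diagonal of the rotated positive-semidefinite matrix, where the paper only cites Higham.
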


\begin{proof}
  Because $\tau$ and $\sigma$ are injective, the $p^{2}$ positions
  $(\tau(r),\sigma(c))$ are distinct entries of the $m_T\times m_S$ block, so
  for every $A\in\R^{m_T\times m_S}$ the Frobenius norm splits over them and
  their complement,
  \begin{equation*}
    \|A-R^{*}\|_F^2
    = \bigl\|\pi_P(A)-B\bigr\|_F^2
      + \bigl\|(A-R^{*})|_{\mathrm{rest}}\bigr\|_F^2
    \;\ge\; \bigl\|\pi_P(A)-B\bigr\|_F^2 .
  \end{equation*}
  Taking $A=R(k)$: every realised $\pi_P(R(k))$ is positive definite, so the
  right-hand side is at least $\min_{Z\succeq0}\|Z-B\|_F^2$, the squared
  distance from $B$ to $\Sym^{+}(p)$.
  For symmetric $Z$ the splitting $B=B_s+B_a$ is orthogonal and $Z-B_s$ is
  symmetric, so $\|Z-B\|_F^2=\|Z-B_s\|_F^2+\|B_a\|_F^2$. Minimising
  $\|Z-B_s\|_F^2$ over the positive-semidefinite cone is the eigenvalue
  truncation, with residual $\sum_i\min(\lambda_i,0)^2$, and the resulting
  distance from $B$ to $\Sym^{+}(p)$, the right-hand side of
  \eqref{eq:floorpd}, is Theorem~2.1 of \citet{Higham1988}. Finally
  $\|B_a\|_F=\tfrac12\|B-B^\T\|_F=\|\Pi_{\mathcal{A}_P}R^{*}\|_F$, which is the
  first term. The infimum is taken over the closed cone, so it is a valid lower
  bound for the open one.
\end{proof}

\begin{remark}[what this does and does not change]
  \label{rem:pd-scope}
  Theorem~\ref{thm:main} is untouched. The blocks of $\mathcal{S}_P$ whose
  shared sub-block is positive definite form an \emph{open} subset of
  $\mathcal{S}_P$, hence a full-dimensional one, so passivity costs
  no dimensions at all and the rank bound stands exactly as proved. Only the
  metric statement changes. That is also why passivity is absent from the
  hypotheses of Theorems~\ref{thm:main} and~\ref{thm:floor} and present here:
  the sharp division is that a rank bound can see symmetry but not
  convexity, so everything passivity contributes is contained in the
  second term of \eqref{eq:floorpd} and in nothing else.

  The two terms of \eqref{eq:floorpd} are also of different kinds. The
  reciprocity term is a \emph{symmetry}: it is a linear
  condition, it removes a definite codimension, and it is what makes
  Theorem~\ref{thm:main} a rank bound at all. The passivity term is a convexity
  condition on the same block; it removes no dimensions, and it is not new in
  substance. That a passive network cannot realise every conceivable linear map
  is already stated in words by \citet{Stern2021}, who note that non-negativity
  of the tunable elements ``excludes many conceivable linear mappings''.
  Equation~\eqref{eq:floorpd} is that observation made exact, put on the same
  footing as the reciprocity term, and made computable from the target alone.
  Its two ingredients are classical too: the distance from a matrix to the nearest
  symmetric positive semidefinite one, which for $B$ is the whole right-hand
  side of \eqref{eq:floorpd}, is Theorem~2.1 of \citet{Higham1988}, and that a
  principal submatrix of a positive-definite matrix is positive definite is
  textbook. What is ours in the term is the identification alone --- which
  block, why it is principal, and that the number follows from the target
  without building a network.
\end{remark}

\begin{remark}[the forfeited norm fraction of a generic target]
  \label{rem:fraction-pd}
  Two different quantities have a claim to the name \emph{forfeited fraction},
  and they are kept apart from here on. The forfeited \emph{dimension} fraction
  is $p(p-1)/(2m_Tm_S)$, the codimension of Theorem~\ref{thm:main} as a
  share of the block; it is what a rank measurement returns, and it is the
  quantity the deficit law of Sec.~\ref{sec:numerics} reports. The forfeited
  \emph{norm} fraction is
  $\mathbb{E}\|\Pi_{\mathcal{A}_P}R^{*}\|_F^{2}/\mathbb{E}\|R^{*}\|_F^{2}$, the
  expected squared norm of a generic target that lies in forbidden directions;
  it is the quantity the last column of Table~\ref{tab:layouts} carries at the
  three force-driven layouts there, the law of Sec.~\ref{sec:dirichlet}
  supplying the entry at the sixteen layouts driven by an imposed displacement. All
  nineteen published layouts in that table have $p=0$, where both fractions
  vanish under either law. For an isotropic Gaussian
  ensemble the two coincide numerically, both being $(m-1)/(2m)$ at full
  overlap, but they are different measurements of different objects, and only
  the first is a dimension. This remark computes the second, and adds to it a
  passivity term that has no dimension counterpart at all.

  At full overlap, with $m_T=m_S=p=m$ and $R^{*}$ drawn with independent
  standard normal entries, $\mathbb{E}\|B_a\|_F^2=\tfrac12(m^2-m)$, while $B_s$
  is a Gaussian orthogonal ensemble matrix, whose spectrum is symmetric about
  zero, so
  $\mathbb{E}\sum_i\min(\lambda_i,0)^2=\tfrac12\mathbb{E}\|B_s\|_F^2
  =\tfrac14(m^2+m)$. Dividing by $\mathbb{E}\|R^{*}\|_F^2=m^2$,
  \begin{equation}
    \label{eq:fracpd}
    \frac{m-1}{2m}\ \ (\text{reciprocity})
    \;+\;
    \frac{m+1}{4m}\ \ (\text{passivity})
    \;=\;
    \frac{3m-1}{4m}\ \ (\text{together}) ,
  \end{equation}
  tending to $\tfrac12+\tfrac14=\tfrac34$. These are fractions of the
\emph{expected squared} norm; the corresponding root-mean-square fraction of
the norm is $\sqrt{(3m-1)/(4m)}$, which is $0.816$ at $m=3$, $0.829$
  at $m=4$ and $0.851$ at $m=10$, rising towards $\sqrt{3}/2$. Experiment~13,
  part~4, confirms \eqref{eq:fracpd} by Monte Carlo over $4000$ draws at each of
  the five sizes $m=3,4,6,8,10$ ($20\,000$ draws in all); the largest discrepancy between
  measurement
  and closed form is
  $1.4$ Monte Carlo standard errors.
\end{remark}

The closed form in \eqref{eq:floorpd} was checked numerically against a direct
projection onto the positive-semidefinite cone, with no network involved. Over
the $3200$ random blocks at $p=1,\dots,8$ of experiment~13, part~1, the two
agree to $2.9\times10^{-16}$ relative to $\|B\|_F$, and the eigenvalue form of
the passivity term to $2.0\times10^{-15}$.
Theorem~\ref{thm:floorpd} coincides with Theorem~\ref{thm:floor} on all $1187$
blocks whose symmetric part is already positive semidefinite, and exceeds it
strictly on all $2013$ of the blocks where it is not.

The passivity term vanishes on any target built by perturbing a realised
response by a small antisymmetric amount, since the symmetric part of the
shared block is then still positive definite. The forbidden targets of
Sec.~\ref{sec:learning} are constructed this way, so the passivity term does
not bind there.

\subsection{The bound is a statement about symmetric operators}
\label{sec:universal}

Nothing in the proofs of Theorems~\ref{thm:main} and~\ref{thm:floor} used a
spring, a bond, or two dimensions. Between them they used one fact: the
operator being inverted is symmetric and invertible at every admissible
parameter value. Theorem~\ref{thm:floor} uses no derivative at all. The proof
of Theorem~\ref{thm:main} reads a symmetric perturbation off
$\partial K/\partial k_b=\bq_b\bq_b^\T$, and that symmetry is already forced by
the same fact: an operator that is symmetric at every parameter value has a
symmetric derivative. Stated at that level, the bound covers a class much
larger than the one we simulate. Theorem~\ref{thm:floorpd} needs one fact more,
that the operator is positive definite, and transfers to any setting where that
also holds, which it does for the static compliance of every passive network.

\begin{theorem}[reciprocity bound for symmetric linear response]
  \label{thm:universal}
  Let $H(\theta)=H(\theta)^\T\in\R^{N\times N}$ be invertible and depend
  differentiably on parameters $\theta\in\R^{n_\theta}$, and let
  $R(\theta)=\Pi_T\,H(\theta)^{-1}\Pi_S^\T$ be the block read on an index set $T$
  and driven on an index set $S$, with $\Pi_T,\Pi_S$ the corresponding coordinate
  selections and $p=|T\cap S|$. Then $R(\theta)\in\mathcal{S}_P$ for every
  $\theta$, so
  \begin{equation*}
    \rank\frac{\partial R}{\partial\theta}
    \;\le\;\min\Bigl(n_\theta,\ m_Tm_S-\tfrac12 p(p-1)\Bigr),
  \end{equation*}
  and for every target $R^{*}$,
  $\|R(\theta)-R^{*}\|_F\ge\tfrac12\|B-B^\T\|_F$ with $B$ the target on the
  shared indices. The same holds verbatim for complex symmetric
  $H=H^\T\in\mathbb{C}^{N\times N}$, with $\mathcal{S}_P$ read over
  $\mathbb{C}$ and dimensions counted over $\mathbb{C}$;
  only $H=H^\T$ is used, never $H=H^{*}$. That the finite-frequency structure of
  a lossy passive network is complex symmetry with negative semidefinite
  imaginary part, rather than Hermiticity, is established
  by~\citet{MiltonSeppecher2008}; what is used here is only the symmetry.
\end{theorem}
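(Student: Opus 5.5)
The plan is to repeat the proofs of Theorems~\ref{thm:main} and~\ref{thm:floor} with $K_{\!f\!f}$ replaced by $H(\theta)$, and check that each step uses only the symmetry and invertibility of $H$.

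The first step is the global containment. Since $H(\theta)=H(\theta)^\T$ and $H$ is invertible, $(H^{-1})^\T=(H^\T)^{-1}=H^{-1}$. Write $G=H(\theta)^{-1}$. Then the $(\tau(r),\sigma(c))$ entry of $R(\theta)=\Pi_T G\,\Pi_S^\T$ equals $G_{rc}$ for $r,c\in P$. Symmetry of $G$ gives $G_{rc}=G_{cr}$, and that is exactly the defining condition of $\mathcal{S}_P$ in \eqref{eq:SP}. Hence $R(\theta)\in\mathcal{S}_P$ for every $\theta$, as in Remark~\ref{rem:global}.

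The rank bound follows in two parts. The Jacobian has $n_\theta$ columns, which gives the first branch. For the second branch, each column $\partial R/\partial\theta_r$ is a limit of difference quotients of elements of the linear subspace $\mathcal{S}_P$, and $\mathcal{S}_P$ is closed, so each column lies in $\mathcal{S}_P$. Equivalently, $\partial H/\partial\theta_r$ is symmetric because $H$ is symmetric for all $\theta$. Then $\mathrm{d}G=-G(\mathrm{d}H)G$ is symmetric, and the column argument of Theorem~\ref{thm:main} applies. The codimension count $p(p-1)/2$ of $\mathcal{S}_P$ is already proved there: it relies only on injectivity of $\tau$ and $\sigma$, so it carries over unchanged.

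The floor is Theorem~\ref{thm:floor} verbatim. Containment and the orthogonal splitting $\mathcal{S}_P\oplus\mathcal{A}_P$ give $\|R(\theta)-R^{*}\|_F\ge\|\Pi_{\mathcal{A}_P}R^{*}\|_F=\tfrac12\|B-B^\T\|_F$.

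The complex case is the only step I expect to need care. Containment and the rank and codimension statements transfer directly over $\mathbb{C}$. The argument uses only the transpose identity $(H^{-1})^\T=(H^\T)^{-1}$, never conjugation, and the independence of the $p(p-1)/2$ linear conditions holds over any field. One point to fix is the differentiability hypothesis. If $\theta$ is real and $H$ is complex-valued, the Jacobian is a complex matrix of derivatives; its rank is counted over $\mathbb{C}$, and its columns still lie in the complex subspace $\mathcal{S}_P$.

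For the floor over $\mathbb{C}$, I would use the Hermitian Frobenius inner product $\langle X,Y\rangle=\operatorname{tr}(X^{*}Y)$. Under it, $\mathcal{A}_P$, spanned by the real blocks $E_{\tau(r)\sigma(c)}-E_{\tau(c)\sigma(r)}$, is still the orthogonal complement of $\mathcal{S}_P$. To see this, note that $\langle E_{ij}-E_{kl},A\rangle=A_{ij}-A_{kl}$ with no conjugation issue, since the basis blocks are real. The Pythagorean splitting then holds, and the projection onto $\mathcal{A}_P$ again keeps $\tfrac12(B-B^\T)$; it is the transpose, not the conjugate transpose, because the conditions defining $\mathcal{S}_P$ involve no conjugation.
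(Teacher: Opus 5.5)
Your proof is correct and follows the paper's own argument. Symmetry of $H^{-1}$ puts $R(\theta)$ in $\mathcal{S}_P$ for every $\theta$, and the rank bound and the floor then follow from the same codimension count and orthogonal splitting used for Theorems~\ref{thm:main} and~\ref{thm:floor}. You also check that, under the Hermitian Frobenius product, the real blocks spanning $\mathcal{A}_P$ remain the orthogonal complement of $\mathcal{S}_P$, so the projection keeps $\tfrac12(B-B^\T)$ with a transpose rather than a conjugate transpose; this fills in a step that the paper's one-line proof leaves implicit in the complex case.
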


\begin{proof}
  $H=H^\T$ gives $H^{-1}=H^{-\T}$, so the $(\tau(r),\sigma(c))$ and
  $(\tau(c),\sigma(r))$ entries of $R$ are the same entry of $H^{-1}$ and
  $R\in\mathcal{S}_P$; the two displays are then the codimension count and the
  orthogonal splitting of the proofs above, neither of which used anything else.
\end{proof}

The elastic network of \eqref{eq:stiffness} is the case
$H=K_{\!f\!f}$. A network of variable \emph{linear} resistors with
conductances $g_e$ has $H=\sum_e g_e\,\mathbf{d}_e\mathbf{d}_e^\T$, with
$\mathbf{d}_e$ the incidence vector of edge $e$, equally symmetric. A learning resistor network of that kind~\citep{Dillavou2022}
therefore \emph{would} obey the same bound on its conductance-to-voltage map if
it were driven by injected currents and read on overlapping nodes. That is a
hypothetical configuration of their hardware, not their published protocol,
which clamps input voltages and is the drive of Sec.~\ref{sec:dirichlet}. The
bound transfers likewise to a Hagen--Poiseuille flow
network~\citep{Rocks2019}, and to any linear reciprocal medium whose response
is a symmetric Green function. In that sense the constraint applies to any
reciprocal learning medium driven by injected sources and read on overlapping
terminals, with spring networks as one instance.

The later transistor-based network of~\citet{Dillavou2024} is nonlinear, which
narrows the claim without removing it: the theorem then governs the
small-signal response about the trained operating point rather than the
trained input--output map, wherever the tangent operator there is symmetric, as
it is for any element whose current is a function of the voltage across it. In
the mechanical case, where the tangent stiffness is a Hessian, the tangent
bound is verified to survive large excursions, at bond strains up to $60.8\%$
(experiment~10, Sec.~\ref{sec:numerics}). We do not settle whether a
transistor edge has a symmetric tangent conductance. Their two layouts in
Table~\ref{tab:layouts} are counted by their driven and read-out sets alone,
which is all their table entry needs.

Two warnings apply. First, the theorem
transfers the \emph{bound}. It says nothing about attainment, which depends on
the subspace $V$ spanned by the particular parameter dyads
(Sec.~\ref{sec:attainment}) and differs from one physical setting to the next. Second, the codimension is $p(p-1)/2$ only
where $R$ is genuinely a block of $H^{-1}$. Node \emph{voltages} read against
injected \emph{currents} qualify. A mixed formulation, reading a current against
a voltage source, falls outside, and so does the drive used in most
physical-learning hardware, in which the input degrees of freedom are held at
prescribed \emph{displacements} and the output is read wherever the network
puts it. That drive obeys a law of a different shape, stated in Sec.~\ref{sec:dirichlet}.
Before it, Sec.~\ref{sec:tasklist} takes up a departure of another kind, in the
target rather than in the drive.

\subsection{A list of tasks, and the deficit it carries}
\label{sec:tasklist}

A laboratory seldom trains a full response block. What it imposes is a finite
list of demands (apply this load, read that displacement, want this number),
and the block $C[T,S]$ is the special case in which all $m_Tm_S$ drive--read
pairs available on the chosen terminals are on the list at once. A list may
impose two of the four values of a $2\times2$ block, or one value of a
$10\times10$ one. The overlap $p=|T\cap S|$ is then an invariant of a block
nobody asked for.

This section replaces $p(p-1)/2$ by an invariant of the list itself, its
deficit $\delta_{\mathcal{L}}$, and shows that
Theorems~\ref{thm:main} and~\ref{thm:floor} carry over with the replacement.
The rank ceiling carries over verbatim. The floor gains a second term,
produced by the list itself, which the full block does not have. The
section then recovers both theorems as the full-block case and, for tasks that
drive and read single degrees of freedom, identifies the deficit as a count of
reversed pairs of tasks. That count
sharpens a forecast of Sec.~\ref{sec:consequences}, replacing the shared
terminal by the reversed pair as the quantity that has to rise.

Write $n=n_{\mathrm{free}}$. A \emph{task list} is a finite family
\begin{equation*}
  \mathcal{L}
  \;=\;
  \bigl\{\,(\mathbf{f}_j,\mathbf{g}_j,y^{*}_j)\,\bigr\}_{j=1}^{J},
  \qquad \mathbf{f}_j,\mathbf{g}_j\in\R^{n},\ \ y^{*}_j\in\R ,
\end{equation*}
read: apply the force pattern $\mathbf{f}_j$ on the free degrees of freedom,
read the displacement through the functional $\mathbf{g}_j$, and want the value
$y^{*}_j$. Its \emph{realised value vector} is
$y(C)=\bigl(y_1(C),\dots,y_J(C)\bigr)\in\R^{J}$ with
\begin{equation}
  \label{eq:taskvalue}
  y_j(C)\;=\;\mathbf{g}_j^\T C\,\mathbf{f}_j\;=\;\langle \mathsf{M}_j,C\rangle ,
  \qquad
  \mathsf{M}_j\;=\;\mathbf{g}_j\mathbf{f}_j^\T\in\R^{n\times n},
\end{equation}
where $\langle A,B\rangle=\operatorname{tr}(A^\T B)$ is the trace inner product
on $\R^{n\times n}$, whose norm is $\|\cdot\|_F$; the second equality is
$\langle \mathbf{g}\mathbf{f}^\T,C\rangle=\operatorname{tr}(\mathbf{f}\mathbf{g}^\T C)
=\mathbf{g}^\T C\mathbf{f}$. The list carries a target $y^{*}\in\R^{J}$ exactly
as the block carried $R^{*}$, and $y(\cdot)$ extends by the same formula to a
linear map $\R^{n\times n}\to\R^{J}$. This is the force-driven protocol of
Sec.~\ref{sec:setup} with the block replaced by a list. The
imposed-displacement drive of Sec.~\ref{sec:dirichlet} is not covered, and the
distinction matters below.

Two subspaces of $\R^{n\times n}$ and, dual to them, three spaces of relations
in $\R^{J}$ carry everything. The first,
$\mathcal{M}=\spn\{\mathsf{M}_1,\dots,\mathsf{M}_J\}$, is what the list sees of
an arbitrary linear response. The second,
$\mathcal{M}^{s}=\operatorname{sym}\mathcal{M}
=\spn\{\operatorname{sym}\mathsf{M}_1,\dots,\operatorname{sym}\mathsf{M}_J\}$
with $\operatorname{sym}A=\tfrac12(A+A^\T)$, is all it sees of a reciprocal
one. The gap between them is the \emph{deficit of the list}, the number of
dimensions of realisable task values that reciprocity removes
(Proposition~\ref{prop:tasklist}(ii)),
\begin{equation}
  \label{eq:deltaL}
  \delta_{\mathcal{L}}
  \;=\;
  \dim\mathcal{M}-\dim\mathcal{M}^{s}
  \;=\;
  \dim\bigl(\mathcal{M}\cap\operatorname{Skew}(n)\bigr),
\end{equation}
$\operatorname{Skew}(n)$ being the antisymmetric $n\times n$ matrices; the
second equality is proved in Proposition~\ref{prop:tasklist}(iii). The
notation matches the block deficit $\delta_p=p(p-1)/2$, the codimension of
Theorem~\ref{thm:main}, which Proposition~\ref{prop:taskblock} shows is the
full-block value of $\delta_{\mathcal{L}}$. On the measurement side the
analogous count gives the completeness and redundancy of a set of measurements
on a grounded electrode array~\citep{Butler2019}; here it counts what a list of
targets forfeits.

Dually, a combination $\lambda\in\R^{J}$ of the tasks probes the matrix
$\mathsf{M}(\lambda)=\sum_j\lambda_j\mathsf{M}_j$, since
$\sum_j\lambda_jy_j(C)=\langle\mathsf{M}(\lambda),C\rangle$. Let
\begin{equation*}
  \mathcal{A}^{0}_{\mathcal{L}}
  =\bigl\{\lambda\in\R^{J}:\mathsf{M}(\lambda)=0\bigr\},
  \qquad
  \mathcal{A}_{\mathcal{L}}
  =\bigl\{\lambda\in\R^{J}:\operatorname{sym}\mathsf{M}(\lambda)=0\bigr\},
  \qquad
  \mathcal{A}^{1}_{\mathcal{L}}
  =\mathcal{A}_{\mathcal{L}}\cap\bigl(\mathcal{A}^{0}_{\mathcal{L}}\bigr)^{\perp},
\end{equation*}
and $\mathcal{S}_{\mathcal{L}}=\mathcal{A}_{\mathcal{L}}^{\perp}\subseteq\R^{J}$.
A vector $\lambda\in\mathcal{A}^{0}_{\mathcal{L}}$ is a linear relation
$\sum_j\lambda_jy_j=0$ among the task values that holds for \emph{any} linear
response whatever, reciprocal or not. It is therefore a redundancy the list
was written with, and for the full block $\mathcal{A}^{0}_{\mathcal{L}}$ is
$\{0\}$. A $\lambda\in\mathcal{A}_{\mathcal{L}}$ is a relation that holds for
every reciprocal response. The difference between the two,
$\mathcal{A}^{1}_{\mathcal{L}}$, is the physics, and it has dimension
$\delta_{\mathcal{L}}$. The subspace $\mathcal{S}_{\mathcal{L}}$ is where
reciprocal responses put the task values, the task-list counterpart of
$\mathcal{S}_P$ in \eqref{eq:SP}.

\begin{proposition}[what a task list can realise]
  \label{prop:tasklist}
  Let $\mathcal{L}$ be any task list. Then:
  \emph{(i)} the values realised by arbitrary, not necessarily reciprocal,
  linear responses fill the subspace $\bigl\{y(X):X\in\R^{n\times n}\bigr\}
  =(\mathcal{A}^{0}_{\mathcal{L}})^{\perp}$, of dimension $\dim\mathcal{M}$;
  \emph{(ii)} the values realised by symmetric responses fill exactly
  $\{y(X):X=X^\T\}=\mathcal{S}_{\mathcal{L}}$, of dimension
  $\dim\mathcal{M}^{s}$ and of codimension
  $\dim\mathcal{A}^{1}_{\mathcal{L}}=\delta_{\mathcal{L}}$ inside
  $(\mathcal{A}^{0}_{\mathcal{L}})^{\perp}$, so that
  $y(C)\in\mathcal{S}_{\mathcal{L}}$ for every symmetric $C$ --- invertible or
  not, positive definite or not, realised by a network or not --- and the
  linear relations forced among the task values are exactly
  $\mathcal{A}_{\mathcal{L}}$, of which $\delta_{\mathcal{L}}$ are due to
  reciprocity rather than to the list;
  \emph{(iii)}
  $\delta_{\mathcal{L}}=\dim(\mathcal{M}\cap\operatorname{Skew}(n))$;
  \emph{(iv)} over symmetric invertible $C$ the values form an open dense
  subset of $\mathcal{S}_{\mathcal{L}}$, and over $C\succ0$ a non-empty open
  convex cone in it; both are in general proper, the second need not be closed,
  and both span $\mathcal{S}_{\mathcal{L}}$, so neither satisfies a linear
  relation that $\mathcal{S}_{\mathcal{L}}$ does not;
  \emph{(v)} writing $y(k)$ for $y\bigl(K_{\!f\!f}(k)^{-1}\bigr)$, at every
  admissible $k$ with $K_{\!f\!f}=K_{\!f\!f}^\T$,
  \begin{equation}
    \label{eq:taskbound}
    \rank\frac{\partial y}{\partial k}
    \;\le\;
    \min\bigl(n_b,\ \dim\mathcal{M}-\delta_{\mathcal{L}}\bigr),
  \end{equation}
  while differentiating in $(k,a)$ without assuming symmetry gives only
  $\min(2n_b,\dim\mathcal{M})$.
\end{proposition}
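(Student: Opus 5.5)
The whole proposition is linear algebra of the map $y:\R^{n\times n}\to\R^J$, $X\mapsto(\langle\mathsf{M}_j,X\rangle)_j$, together with its restriction to $\Sym(n)$. I will obtain (i) and (ii) from the rule that the image of a linear map is the orthogonal complement of the kernel of its adjoint, and (iii) from a rank--nullity count. Part (iv) then follows by openness and homogeneity, and (v) follows from the containment for every admissible $k$, exactly as Remark~\ref{rem:global} gives Theorem~\ref{thm:main}.

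For (i), the adjoint of $y$ with respect to the trace inner product is $\lambda\mapsto\mathsf{M}(\lambda)$. Hence $\operatorname{im}y=(\ker y^{*})^{\perp}=(\mathcal{A}^0_{\mathcal{L}})^{\perp}$. Its dimension equals $\rank y^{*}$, which is $\dim\mathcal{M}$.

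For (ii), I restrict $y$ to $\Sym(n)$. Because $\langle\mathsf{M},X\rangle=\langle\operatorname{sym}\mathsf{M},X\rangle$ for symmetric $X$, the adjoint becomes $\lambda\mapsto\operatorname{sym}\mathsf{M}(\lambda)$. Its kernel is $\mathcal{A}_{\mathcal{L}}$, so the image is $\mathcal{S}_{\mathcal{L}}$, of dimension $\dim\mathcal{M}^s$. The containment $y(C)\in\mathcal{S}_{\mathcal{L}}$ for every symmetric $C$ is then immediate. For the codimension, note $\mathcal{A}^0_{\mathcal{L}}\subseteq\mathcal{A}_{\mathcal{L}}$, so $\mathcal{A}_{\mathcal{L}}=\mathcal{A}^0_{\mathcal{L}}\oplus\mathcal{A}^1_{\mathcal{L}}$ orthogonally. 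Therefore $\dim\mathcal{A}^1_{\mathcal{L}}=(J-\dim\mathcal{M}^s)-(J-\dim\mathcal{M})=\delta_{\mathcal{L}}$. This is also the codimension of $\mathcal{S}_{\mathcal{L}}$ inside $(\mathcal{A}^0_{\mathcal{L}})^\perp$.

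For (iii), apply rank--nullity to $\operatorname{sym}:\mathcal{M}\to\mathcal{M}^s$. The map is onto, since $\mathcal{M}^s=\operatorname{sym}\mathcal{M}$ by definition, and its kernel is $\mathcal{M}\cap\operatorname{Skew}(n)$.

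For (iv), the restriction $y|_{\Sym(n)}$ is a linear surjection onto $\mathcal{S}_{\mathcal{L}}$, hence an open map. The invertible symmetric matrices form an open dense subset of $\Sym(n)$, being the complement of the zero set of $\det$. So their image is open in $\mathcal{S}_{\mathcal{L}}$, and it is dense because it is the image of a dense set under a continuous surjection. The cone $\Sym^{++}(n)$ is open, convex, non-empty and closed under positive scaling, so its image is a non-empty open convex cone. An open subset of $\mathcal{S}_{\mathcal{L}}$ spans it, which settles the no-extra-relations clause. Properness and non-closedness need examples rather than proof. For the $\Sym^{++}$ cone, a single task with $\mathbf{f}=\mathbf{g}$ gives the open half-line $(0,\infty)$ inside $\R$, which is proper and not closed. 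For the invertible case, take $n=1$ and two tasks $\mathbf{f}_1=\mathbf{g}_1=1$, $\mathbf{f}_2=\mathbf{g}_2=2$. Then $\mathcal{S}_{\mathcal{L}}$ is the line spanned by $(1,4)$, and the point $0$ on it is never reached by an invertible $C$.

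For (v), I use the fact that $C(k)=K_{\!f\!f}(k)^{-1}$ is symmetric at every admissible $k$. By (ii), $y(k)\in\mathcal{S}_{\mathcal{L}}$ for all such $k$, so every $\partial y/\partial k_b$ lies in the linear space $\mathcal{S}_{\mathcal{L}}$. One can check this directly as well: the column equals $-y(C\bq_b\bq_b^\T C)$, the image of a symmetric matrix. The rank is therefore at most $\dim\mathcal{M}^s=\dim\mathcal{M}-\delta_{\mathcal{L}}$, and at most $n_b$ as a column count. Without symmetry, the columns lie only in $\operatorname{im}y$, which gives $\min(2n_b,\dim\mathcal{M})$.

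The main obstacle is the bookkeeping of (iv): stating precisely in what sense the sets are "in general proper" and "need not be closed", and supplying minimal examples like those above. Everything else is formal duality.
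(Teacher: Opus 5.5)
Your proposal is correct and follows the paper's proof essentially step by step. You use adjoint--kernel duality for (i) and (ii), rank--nullity on $\operatorname{sym}|_{\mathcal{M}}$ for (iii), openness of the linear surjection $y|_{\Sym(n)}$ for (iv), and symmetry of $-C\bq_b\bq_b^\T C$ for (v); there your direct pointwise check is the version that matches the stated hypothesis, symmetry at the given $k$.

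The only differences are in the witnesses for (iv). You use a single self-adjoint task for the cone and an $n=1$ two-task list for the invertible set. The paper uses the full-block list and the $n=1$ single task. Either choice works.
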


\begin{proof}
  \emph{(i)} $y(\cdot)$ is linear, and for $\lambda\in\R^{J}$ and
  $X\in\R^{n\times n}$ one has
  $\langle y(X),\lambda\rangle=\sum_j\lambda_j\langle\mathsf{M}_j,X\rangle
  =\langle\mathsf{M}(\lambda),X\rangle$, so the adjoint of $y(\cdot)$ is
  $\lambda\mapsto\mathsf{M}(\lambda)$ and the image is the orthogonal
  complement of its kernel $\mathcal{A}^{0}_{\mathcal{L}}$. Its dimension is
  $\rank\bigl(\lambda\mapsto\mathsf{M}(\lambda)\bigr)=\dim\mathcal{M}$.

  \emph{(ii)} For symmetric $X$ we have
  $\langle\mathsf{M}_j-\operatorname{sym}\mathsf{M}_j,X\rangle=0$, so
  $y_j(X)=\langle\operatorname{sym}\mathsf{M}_j,X\rangle$: a task value on a
  reciprocal response sees $\mathsf{M}_j$ only through its symmetric part. The
  adjoint of the restriction $y|_{\Sym(n)}:\Sym(n)\to\R^{J}$, with $\Sym(n)$
  carrying the trace inner product, is therefore
  $\lambda\mapsto\operatorname{sym}\mathsf{M}(\lambda)$, whose kernel is
  $\mathcal{A}_{\mathcal{L}}$ and whose rank is $\dim\mathcal{M}^{s}$. Hence
  the image is $\mathcal{A}_{\mathcal{L}}^{\perp}=\mathcal{S}_{\mathcal{L}}$,
  of dimension $J-\dim\mathcal{A}_{\mathcal{L}}=\dim\mathcal{M}^{s}$.
  Since $\mathsf{M}(\lambda)=0$ implies $\operatorname{sym}\mathsf{M}(\lambda)=0$
  we have $\mathcal{A}^{0}_{\mathcal{L}}\subseteq\mathcal{A}_{\mathcal{L}}$,
  so $\mathcal{S}_{\mathcal{L}}\subseteq(\mathcal{A}^{0}_{\mathcal{L}})^{\perp}$
  with codimension
  $\dim\mathcal{A}_{\mathcal{L}}-\dim\mathcal{A}^{0}_{\mathcal{L}}
  =(J-\dim\mathcal{M}^{s})-(J-\dim\mathcal{M})=\delta_{\mathcal{L}}$, and
  $\mathcal{A}_{\mathcal{L}}=\mathcal{A}^{0}_{\mathcal{L}}\oplus
  \mathcal{A}^{1}_{\mathcal{L}}$ orthogonally by the definition of
  $\mathcal{A}^{1}_{\mathcal{L}}$, so
  $\dim\mathcal{A}^{1}_{\mathcal{L}}=\delta_{\mathcal{L}}$.

  \emph{(iii)} Apply rank--nullity to
  $\operatorname{sym}|_{\mathcal{M}}:\mathcal{M}\to\Sym(n)$. Its image is
  $\mathcal{M}^{s}$ and its kernel is
  $\{A\in\mathcal{M}:A=-A^\T\}=\mathcal{M}\cap\operatorname{Skew}(n)$, so
  $\dim\mathcal{M}=\dim\mathcal{M}^{s}
  +\dim(\mathcal{M}\cap\operatorname{Skew}(n))$.

  \emph{(iv)} By (ii) the map $y|_{\Sym(n)}$ is linear and onto
  $\mathcal{S}_{\mathcal{L}}$, hence open onto it. The invertible symmetric
  matrices form an open subset of $\Sym(n)$, dense because $t\mapsto\det(X+tI)$
  is a non-zero polynomial for each fixed $X$; their image is therefore open,
  and dense because the preimage of a non-empty relatively open subset of
  $\mathcal{S}_{\mathcal{L}}$ is non-empty and open in $\Sym(n)$, hence meets
  that dense set. Likewise $\Sym^{++}(n)$ is a non-empty open convex cone and
  linear maps carry cones to cones and convex sets to convex sets, so its image
  is a non-empty open convex cone; a non-empty open subset of a subspace spans
  it, which gives the last clause. For properness take the full-block list
  $\mathbf{f}_{(u,v)}=e_v$, $\mathbf{g}_{(u,v)}=e_u$ over all
  $(u,v)\in\{1,\dots,n\}^{2}$, whose $\mathsf{M}_{(u,v)}$ are all $n^{2}$
  matrix units, so that $y(X)$ is the flattening of $X$ and $y|_{\Sym(n)}$ is a
  bijection onto $\mathcal{S}_{\mathcal{L}}$: the invertible symmetric
  responses then give the
  complement of the singular locus, which is proper because the zero matrix is
  symmetric and singular, and the positive definite ones give a copy of
  $\Sym^{++}(n)$, which is proper and is not closed, the zero matrix again
  lying in its closure and not in it. (At $n=1$ this is the single task
  $\mathbf{f}_1=\mathbf{g}_1=e_1$ with $\mathcal{S}_{\mathcal{L}}=\R$,
  $\R\setminus\{0\}$ and $(0,\infty)$; at $n\ge2$ that single task is not a
  witness for the invertible clause, since a symmetric invertible $C$ with $C_{11}=0$ exists and its
  image is all of $\R$.)

  \emph{(v)} Since $\partial K/\partial k_b=\bq_b\bq_b^\T$ and
  $\mathrm{d}C=-C(\mathrm{d}K)C$ with $C=C^\T$, the derivative
  $\partial C/\partial k_b=-\bw_b\bw_b^\T$ is symmetric, $\bw_b=C\bq_b$ being
  the vector of \eqref{eq:wb}; so every Jacobian column
  $\partial y/\partial k_b=y(\partial C/\partial k_b)$ lies in
  $\mathcal{S}_{\mathcal{L}}$ by (ii): the rank is at most
  $\dim\mathcal{S}_{\mathcal{L}}=\dim\mathcal{M}-\delta_{\mathcal{L}}$, and at
  most the number $n_b$ of columns. Without symmetry
  $\partial C/\partial a_b=-(C\bs_b)(C^\T\bq_b)^\T$ need not be symmetric, the
  columns lie only in $(\mathcal{A}^{0}_{\mathcal{L}})^{\perp}$, and there are
  $2n_b$ of them.
\end{proof}

Part (ii) is the containment of Remark~\ref{rem:global} for a list, and it is
again derivative-free, so it again yields a floor. We write it with $\Pi_{E}$,
the Euclidean orthogonal projection of $\R^{J}$ onto a subspace $E$.

\begin{theorem}[error floor for a task list]
  \label{thm:taskfloor}
  Let $\mathcal{L}$ be a task list with target $y^{*}\in\R^{J}$, and let
  $\|\cdot\|$ be the Euclidean norm on $\R^{J}$. For every network whose
  $K_{\!f\!f}$ is symmetric and invertible, passive networks included, whatever
  its size and its topology, and for every admissible stiffness vector $k$,
  \begin{equation}
    \label{eq:taskfloor}
    \bigl\|y(C)-y^{*}\bigr\|
    \;\ge\;
    \operatorname{dist}\bigl(y^{*},\mathcal{S}_{\mathcal{L}}\bigr)
    \;=\;
    \bigl\|\Pi_{\mathcal{A}_{\mathcal{L}}}y^{*}\bigr\| ,
  \end{equation}
  and the right-hand side splits orthogonally into a part the list imposes on
  itself and a part reciprocity imposes,
  \begin{equation}
    \label{eq:tasksplit}
    \bigl\|\Pi_{\mathcal{A}_{\mathcal{L}}}y^{*}\bigr\|^{2}
    \;=\;
    \underbrace{\bigl\|\Pi_{\mathcal{A}^{0}_{\mathcal{L}}}y^{*}\bigr\|^{2}}
      _{\text{list inconsistency}}
    \;+\;
    \underbrace{\bigl\|\Pi_{\mathcal{A}^{1}_{\mathcal{L}}}y^{*}\bigr\|^{2}}
      _{\text{reciprocity}},
    \qquad
    \dim\mathcal{A}^{1}_{\mathcal{L}}=\delta_{\mathcal{L}} .
  \end{equation}
  The first term is the error that no linear response of any kind can remove
  and the second is the additional error reciprocity alone imposes; the floor
  vanishes if and only if $\sum_j\lambda_jy^{*}_j=0$ for every $\lambda$ with
  $\operatorname{sym}\mathsf{M}(\lambda)=0$. Over all symmetric $C$ the bound is
  attained; over symmetric invertible $C$ it is the infimum and is attained for
  every $y^{*}$ outside a closed set with empty interior; over the admissible
  stiffnesses of a fixed graph it is a lower bound and is in general strict.
\end{theorem}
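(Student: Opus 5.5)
The plan is to reduce everything to Proposition~\ref{prop:tasklist}(ii) and elementary Euclidean geometry in $\R^{J}$. First, for any symmetric invertible $C$ (in particular $C=K_{\!f\!f}(k)^{-1}$ at an admissible $k$ with $K_{\!f\!f}$ symmetric), part (ii) gives $y(C)\in\mathcal{S}_{\mathcal{L}}$. Hence $\|y(C)-y^{*}\|\ge\operatorname{dist}(y^{*},\mathcal{S}_{\mathcal{L}})$. Since $\mathcal{S}_{\mathcal{L}}=\mathcal{A}_{\mathcal{L}}^{\perp}$ is a linear subspace, the distance is the norm of the component of $y^{*}$ in its orthogonal complement, which is $\|\Pi_{\mathcal{A}_{\mathcal{L}}}y^{*}\|$. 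This is \eqref{eq:taskfloor}, in the same way Theorem~\ref{thm:floor} followed from Remark~\ref{rem:global}.

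For the split \eqref{eq:tasksplit}, I will use the orthogonal decomposition $\mathcal{A}_{\mathcal{L}}=\mathcal{A}^{0}_{\mathcal{L}}\oplus\mathcal{A}^{1}_{\mathcal{L}}$ already established in the proof of Proposition~\ref{prop:tasklist}(ii), together with $\dim\mathcal{A}^{1}_{\mathcal{L}}=\delta_{\mathcal{L}}$. The projection onto an orthogonal direct sum is the sum of the projections, and Pythagoras gives the identity. For the interpretation of the first term, part (i) shows that arbitrary linear responses realise exactly $(\mathcal{A}^{0}_{\mathcal{L}})^{\perp}$, so the same distance argument gives $\|\Pi_{\mathcal{A}^{0}_{\mathcal{L}}}y^{*}\|$ as the floor for any linear response. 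The vanishing criterion is immediate: $\Pi_{\mathcal{A}_{\mathcal{L}}}y^{*}=0$ if and only if $y^{*}\perp\mathcal{A}_{\mathcal{L}}$, that is, $\sum_j\lambda_jy^{*}_j=0$ for every $\lambda$ with $\operatorname{sym}\mathsf{M}(\lambda)=0$.

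The attainment clauses follow from part (iv). Over all symmetric $C$, the values fill $\mathcal{S}_{\mathcal{L}}$ exactly by (ii), so the nearest point $\Pi_{\mathcal{S}_{\mathcal{L}}}y^{*}$ is realised. Over symmetric invertible $C$, the values form an open dense subset $\mathcal{O}$ of $\mathcal{S}_{\mathcal{L}}$, so the infimum is unchanged. The floor is attained exactly when $\Pi_{\mathcal{S}_{\mathcal{L}}}y^{*}\in\mathcal{O}$. The exceptional set is therefore $\Pi_{\mathcal{S}_{\mathcal{L}}}^{-1}(\mathcal{S}_{\mathcal{L}}\setminus\mathcal{O})$. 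This set is closed, because the complement of $\mathcal{O}$ is closed in $\mathcal{S}_{\mathcal{L}}$ and the projection is continuous. It has empty interior, because a closed set with empty interior in $\mathcal{S}_{\mathcal{L}}$ has a preimage under an open surjective linear map that again has empty interior: an open set inside the preimage would project to an open set inside $\mathcal{S}_{\mathcal{L}}\setminus\mathcal{O}$.

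Over the admissible stiffnesses of a fixed graph, the bound remains a lower bound by the first paragraph. To show it is in general strict, one example suffices. The cleanest choice is a graph with fewer bonds than $\dim\mathcal{S}_{\mathcal{L}}$. Then the image of $k\mapsto y(k)$ has dimension at most $n_b$ by \eqref{eq:taskbound}, so it is a measure-zero subset of $\mathcal{S}_{\mathcal{L}}$, and a generic $\Pi_{\mathcal{S}_{\mathcal{L}}}y^{*}$ misses its closure's relevant points. Alternatively, a passive network's positivity excludes symmetric targets with negative diagonal task values.

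The main obstacle is this last clause. \emph{In general strict} has to be made precise, and I would settle for one explicit witness: the single task $\mathbf{f}=\mathbf{g}=e_1$ with $y^{*}<0$ on a passive network. There $C_{11}>0$, so the error exceeds the floor, which is zero. The remaining steps are routine applications of Proposition~\ref{prop:tasklist}.
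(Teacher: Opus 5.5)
Your proposal is correct and follows the paper's proof essentially step for step. It draws the containment from Proposition~\ref{prop:tasklist}(ii), applies the Pythagorean splittings over $\mathcal{S}_{\mathcal{L}}\oplus\mathcal{A}_{\mathcal{L}}$ and $\mathcal{A}^{0}_{\mathcal{L}}\oplus\mathcal{A}^{1}_{\mathcal{L}}$, uses density from part~(iv) for the invertible infimum, and identifies the exceptional targets as the preimage under $\Pi_{\mathcal{S}_{\mathcal{L}}}$ of a closed nowhere dense set. On the fixed-graph clause the paper only asserts that the realised values form a proper subset in general, whereas your single driving-point task $\mathbf{f}=\mathbf{g}=e_1$ with $y^{*}<0$ gives a valid concrete witness of strictness: the floor is zero, yet $y=C_{11}>0$ at every admissible passive $k$, so the error exceeds $|y^{*}|$ uniformly in $k$.
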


\begin{proof}
  By Proposition~\ref{prop:tasklist}(ii), $y(C)\in\mathcal{S}_{\mathcal{L}}$
  for every symmetric $C$, and $\mathcal{S}_{\mathcal{L}}$ is a linear subspace
  with orthogonal complement $\mathcal{A}_{\mathcal{L}}$. For any
  $w\in\mathcal{S}_{\mathcal{L}}$ the splitting
  $\R^{J}=\mathcal{S}_{\mathcal{L}}\oplus\mathcal{A}_{\mathcal{L}}$ gives
  \begin{equation*}
    \|w-y^{*}\|^{2}
    =\bigl\|w-\Pi_{\mathcal{S}_{\mathcal{L}}}y^{*}\bigr\|^{2}
     +\bigl\|\Pi_{\mathcal{A}_{\mathcal{L}}}y^{*}\bigr\|^{2}
    \;\ge\;\bigl\|\Pi_{\mathcal{A}_{\mathcal{L}}}y^{*}\bigr\|^{2},
  \end{equation*}
  which is \eqref{eq:taskfloor}, with equality at
  $w=\Pi_{\mathcal{S}_{\mathcal{L}}}y^{*}$; such a $w$ is realised by some
  symmetric $X$ because $\mathcal{S}_{\mathcal{L}}$ is by definition the image
  of $\Sym(n)$, which proves attainment in the first case. For
  \eqref{eq:tasksplit},
  $\mathcal{A}_{\mathcal{L}}
  =\mathcal{A}^{0}_{\mathcal{L}}\oplus\mathcal{A}^{1}_{\mathcal{L}}$ is
  orthogonal, so
  $\Pi_{\mathcal{A}_{\mathcal{L}}}
  =\Pi_{\mathcal{A}^{0}_{\mathcal{L}}}+\Pi_{\mathcal{A}^{1}_{\mathcal{L}}}$ with
  orthogonal summands; that
  $\|\Pi_{\mathcal{A}^{0}_{\mathcal{L}}}y^{*}\|$ is the distance from $y^{*}$
  to the values of arbitrary linear responses is the same computation applied
  to Proposition~\ref{prop:tasklist}(i), and $\dim\mathcal{A}^{1}_{\mathcal{L}}
  =\delta_{\mathcal{L}}$ is part (ii). The vanishing criterion restates
  $y^{*}\in\mathcal{S}_{\mathcal{L}}=\mathcal{A}_{\mathcal{L}}^{\perp}$. In the
  invertible case the infimum of a continuous function over a dense subset
  equals the infimum over the set, and the values realised by symmetric
  invertible $C$ are dense in $\mathcal{S}_{\mathcal{L}}$ by
  Proposition~\ref{prop:tasklist}(iv); attainment holds exactly when
  $\Pi_{\mathcal{S}_{\mathcal{L}}}y^{*}$ is one of those values, so the
  exceptional targets form the preimage under
  $\Pi_{\mathcal{S}_{\mathcal{L}}}$ of a closed nowhere dense set. On a fixed
  graph the realised values form a subset of $\mathcal{S}_{\mathcal{L}}$ that is
  in general proper, so \eqref{eq:taskfloor} remains a lower bound and need not
  be tight.
\end{proof}

Theorems~\ref{thm:main} and~\ref{thm:floor} are the case of a list that asks
for everything.

\begin{proposition}[the block law is the full-block case]
  \label{prop:taskblock}
  Take the task list indexed by $(u,v)\in T\times S$ with
  $\mathbf{f}_{(u,v)}=e_{v}$ and
  $\mathbf{g}_{(u,v)}=e_{u}$, so that
  $\mathsf{M}_{(u,v)}=e_{u}e_{v}^\T=E_{uv}$, the matrix
  unit of $\R^{n\times n}$ at that position, and $J=m_Tm_S$; here $u$ and $v$
  are running degrees of freedom, while $\tau$ and $\sigma$ keep the meaning
  they have from Sec.~\ref{sec:bound} onwards, the injections placing $P$
  inside $T$ and inside $S$. Then
  $\dim\mathcal{M}=m_Tm_S$, $\mathcal{A}^{0}_{\mathcal{L}}=\{0\}$,
  $\dim\mathcal{M}^{s}=m_Tm_S-\tfrac12p(p-1)$ and
  $\delta_{\mathcal{L}}=\tfrac12p(p-1)=\dim\mathcal{A}^{1}_{\mathcal{L}}$;
  and under the isometry $\R^{J}\cong\R^{m_T\times m_S}$,
  $y\leftrightarrow R$, one has $\mathcal{S}_{\mathcal{L}}=\mathcal{S}_P$,
  $\mathcal{A}_{\mathcal{L}}=\mathcal{A}_P$ and
  $\|\Pi_{\mathcal{A}_{\mathcal{L}}}y^{*}\|=\tfrac12\|B-B^\T\|_F$, so that
  \eqref{eq:taskbound} and \eqref{eq:taskfloor} are \eqref{eq:bound} and
  \eqref{eq:floor}.
\end{proposition}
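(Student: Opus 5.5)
The plan is to compute everything directly from the fact that the matrix units $E_{uv}$, $(u,v)\in T\times S$, are distinct and orthonormal in $\R^{n\times n}$. First, $\mathcal{M}=\spn\{E_{uv}\}$ has dimension $m_Tm_S=J$, because the units are pairwise distinct (the pairs $(u,v)$ are distinct). Hence $\lambda\mapsto\mathsf{M}(\lambda)$ is injective and $\mathcal{A}^{0}_{\mathcal{L}}=\{0\}$. By Proposition~\ref{prop:tasklist}(iii), $\delta_{\mathcal{L}}=\dim(\mathcal{M}\cap\operatorname{Skew}(n))$. I will identify this intersection.

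A matrix $\sum\lambda_{uv}E_{uv}$ supported on $T\times S$ is antisymmetric only if every entry $(u,v)$ with nonzero coefficient has its mirror $(v,u)$ also in $T\times S$. It must also carry the opposite coefficient there, and have zero diagonal. The mirror condition means $u\in S$ and $v\in T$, so $u,v\in P$. The intersection is therefore spanned by $E_{rc}-E_{cr}$ over unordered pairs $\{r,c\}\subset P$ with $r\neq c$. These $p(p-1)/2$ matrices are independent because their supports are disjoint. This gives $\delta_{\mathcal{L}}=p(p-1)/2$, and then $\dim\mathcal{M}^{s}=m_Tm_S-p(p-1)/2$ by \eqref{eq:deltaL}. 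Since $\mathcal{A}^{0}_{\mathcal{L}}=\{0\}$, we get $\mathcal{A}^{1}_{\mathcal{L}}=\mathcal{A}_{\mathcal{L}}$, of dimension $\delta_{\mathcal{L}}$ by Proposition~\ref{prop:tasklist}(ii).

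Next comes the identification under the flattening isometry $y\leftrightarrow R$, with $y_{(u,v)}=R_{\tau'(u)\sigma'(v)}$, where $\tau'$ and $\sigma'$ are the positions of $u$ in $T$ and $v$ in $S$; on $P$ these agree with $\tau$ and $\sigma$. This map is an isometry from Euclidean $\R^{J}$ to Frobenius $\R^{m_T\times m_S}$, and $y(X)$ is just $X[T,S]$. The condition $\lambda\in\mathcal{A}_{\mathcal{L}}$ says $\operatorname{sym}\mathsf{M}(\lambda)=0$, that is, $\mathsf{M}(\lambda)$ is antisymmetric. By the previous paragraph, $\mathcal{A}_{\mathcal{L}}$ is then spanned by the flattenings of $E_{\tau(r)\sigma(c)}-E_{\tau(c)\sigma(r)}$, which is exactly the spanning set of $\mathcal{A}_P$ given before Theorem~\ref{thm:floor}. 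Taking orthogonal complements under the isometry gives $\mathcal{S}_{\mathcal{L}}=\mathcal{A}_P^{\perp}=\mathcal{S}_P$. Projections commute with the isometry, so $\|\Pi_{\mathcal{A}_{\mathcal{L}}}y^{*}\|=\|\Pi_{\mathcal{A}_P}R^{*}\|_F$, and this equals $\tfrac12\|B-B^\T\|_F$ by Theorem~\ref{thm:floor}.

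Finally I will substitute into \eqref{eq:taskbound} and \eqref{eq:taskfloor}. The value vector $y(K_{\!f\!f}^{-1})$ is the flattening of $R(k)$, so both the Jacobians and the distances correspond. The bound $\dim\mathcal{M}-\delta_{\mathcal{L}}=m_Tm_S-p(p-1)/2$ then reproduces \eqref{eq:bound}, and the floor reproduces \eqref{eq:floor}. The one step needing care is the bookkeeping between degree-of-freedom labels and block positions, that is, checking that the mirror condition forces both indices into $P$ and that the injections $\tau$ and $\sigma$ match the flattening. Everything else is routine linear algebra.
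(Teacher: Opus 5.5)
Your proof is correct and is essentially the paper's direct computation with matrix units. The differences are small. You count the kernel $\mathcal{M}\cap\operatorname{Skew}(n)$, which is in effect the digon argument of Corollary~\ref{cor:digon}, and recover $\dim\mathcal{M}^{s}$ by rank--nullity, whereas the paper counts the fibres of $(u,v)\mapsto\{u,v\}$ to get $\dim\mathcal{M}^{s}$ directly. You also take the norm evaluation $\|\Pi_{\mathcal{A}_P}R^{*}\|_F=\tfrac12\|B-B^\T\|_F$ from Theorem~\ref{thm:floor}, where the paper recomputes it by projecting onto the orthogonal basis $e_{(u,v)}-e_{(v,u)}$, each vector of squared norm $2$.
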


\begin{proof}
  The index pairs $(u,v)$ are distinct, so the $E_{uv}$ are pairwise
  distinct matrix units, hence independent: $\dim\mathcal{M}=J$ and
  $\mathcal{A}^{0}_{\mathcal{L}}=\{0\}$. Next,
  $\operatorname{sym}E_{uv}$ depends only on the unordered pair
  $\{u,v\}$, and distinct pairs give matrices on disjoint sets of
  entries, hence independent, so $\dim\mathcal{M}^{s}$ counts the distinct
  pairs arising. Two distinct tasks give the same pair only if they are
  $(u,v)$ and $(v,u)$ with $u\neq v$, and both of those
  are tasks exactly when $v\in T$ and $u\in S$, that is when
  $u,v\in P$; so the fibres of
  $(u,v)\mapsto\{u,v\}$ have two elements over the $\binom p2$
  pairs inside $P$ and one elsewhere, giving
  $\dim\mathcal{M}^{s}=m_Tm_S-\binom p2$ and $\delta_{\mathcal{L}}=\binom p2$.
  A $\lambda$ lies in $\mathcal{A}_{\mathcal{L}}$ exactly when its sum over
  each fibre vanishes, which kills it on singleton fibres and forces
  $\lambda_{(u,v)}+\lambda_{(v,u)}=0$ on the others, so the
  vectors $e_{(u,v)}-e_{(v,u)}$ for $u\neq v$ in $P$,
  having disjoint supports, are an orthogonal basis of
  $\mathcal{A}_{\mathcal{L}}=\mathcal{A}^{1}_{\mathcal{L}}$. Flattening
  $\R^{m_T\times m_S}\to\R^{J}$ entry by entry carries the Frobenius norm to
  the Euclidean one and the conditions \eqref{eq:SP} defining $\mathcal{S}_P$
  to orthogonality against those same vectors, so
  $\mathcal{S}_P\leftrightarrow\mathcal{S}_{\mathcal{L}}$ and
  $\mathcal{A}_P\leftrightarrow\mathcal{A}_{\mathcal{L}}$; projecting on that
  basis, each member of which has squared norm $2$, gives
  $\|\Pi_{\mathcal{A}_{\mathcal{L}}}y^{*}\|^{2}
  =\sum_{\{r,c\}\subseteq P,\,r\neq c}(B_{rc}-B_{cr})^{2}/2
  =\tfrac14\|B-B^\T\|_F^{2}$ with $B=R^{*}[\tau(P),\sigma(P)]$, as in
  Theorem~\ref{thm:floor}.
\end{proof}

What replaces the overlap count is a count of reversed pairs of tasks, each
driving where the other reads. To count them on a list of coordinate tasks,
each driving one degree of freedom $\sigma_j$ and reading one, $\tau_j$, call
its \emph{drive-to-read digraph} $\mathcal{D}_{\mathcal{L}}$ the digraph on
$\{1,\dots,n\}$ carrying an arc $\sigma_j\to\tau_j$ for each task, repeated
tasks contributing the same arc. A reversed pair is then a two-cycle, or
digon, of $\mathcal{D}_{\mathcal{L}}$.

\begin{corollary}[coordinate tasks: the deficit counts digons]
  \label{cor:digon}
  Suppose every task drives one degree of freedom and reads one,
  $\mathbf{f}_j=e_{\sigma_j}$ and $\mathbf{g}_j=e_{\tau_j}$. Then
  \begin{equation}
    \label{eq:digon}
    \delta_{\mathcal{L}}
    \;=\;
    \#\bigl\{\,\{u,v\}:u\neq v,\ \text{both } u\to v
    \text{ and } v\to u \text{ lie in }\mathcal{D}_{\mathcal{L}}\,\bigr\},
  \end{equation}
  the number of two-cycles, or digons, of $\mathcal{D}_{\mathcal{L}}$. A
  self-loop $u\to u$, that is a driving-point task, costs nothing, and
  $\delta_{\mathcal{L}}=0$ exactly when no pair of degrees of freedom is
  instrumented in both directions.
\end{corollary}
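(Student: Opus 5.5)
The plan is to compute the two dimensions in \eqref{eq:deltaL} directly, exactly as in the proof of Proposition~\ref{prop:taskblock}, but now for an arbitrary set of arcs rather than the full product $T\times S$. With $\mathbf{f}_j=e_{\sigma_j}$ and $\mathbf{g}_j=e_{\tau_j}$, each dyad is a matrix unit, $\mathsf{M}_j=E_{\tau_j\sigma_j}$. Repeated tasks give the same matrix unit. Hence $\mathcal{M}$ is spanned by the distinct matrix units indexed by the arc set of $\mathcal{D}_{\mathcal{L}}$ (arc $\sigma\to\tau$ giving $E_{\tau\sigma}$), and $\dim\mathcal{M}$ equals the number of distinct arcs, self-loops included.

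Next I would count $\dim\mathcal{M}^{s}$. Here $\operatorname{sym}E_{\tau\sigma}=\tfrac12(E_{\tau\sigma}+E_{\sigma\tau})$ depends only on the unordered pair $\{\tau,\sigma\}$, and for a self-loop it is $E_{uu}$. Symmetrised units belonging to distinct unordered pairs have disjoint supports and are therefore linearly independent. So $\dim\mathcal{M}^{s}$ is the number of distinct unordered pairs (allowing $u=v$) underlying the arcs.

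It then remains to compare the two counts through the map sending an arc to its underlying unordered pair. A fibre of this map has two elements exactly when both $u\to v$ and $v\to u$ are arcs with $u\ne v$, that is, at a digon. Every other fibre is a singleton; in particular a self-loop is alone in its fibre. The difference $\dim\mathcal{M}-\dim\mathcal{M}^{s}$ is therefore the number of digons, which is \eqref{eq:digon}. As a cross-check, the same count can be obtained from Proposition~\ref{prop:tasklist}(iii): $\mathcal{M}\cap\operatorname{Skew}(n)$ is spanned by the matrices $E_{uv}-E_{vu}$, one for each digon, because a skew combination of matrix units must carry opposite coefficients on $E_{uv}$ and $E_{vu}$ and zero on diagonal units. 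In this form the two computations visibly agree.

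The remaining clauses follow directly. A self-loop contributes a singleton fibre, so it adds one to both dimensions and costs nothing. The statement $\delta_{\mathcal{L}}=0$ is then equivalent to the absence of any digon, which is the same as saying that no pair of degrees of freedom is instrumented in both directions. The only point I expect to need care is the bookkeeping of repeated tasks. Repeated tasks enlarge $J$ and $\mathcal{A}^{0}_{\mathcal{L}}$ but not $\mathcal{M}$, so the argument has to be phrased throughout in terms of distinct arcs. Otherwise the corollary is a routine counting argument.
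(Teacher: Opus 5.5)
Your proof is correct, but its main line takes a different route from the paper's. You work from the defining difference $\dim\mathcal{M}-\dim\mathcal{M}^{s}$ in \eqref{eq:deltaL} and count both terms through the fibres of the map from arcs to unordered pairs, exactly as in the proof of Proposition~\ref{prop:taskblock}.

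The paper's proof is essentially your cross-check, carried out in full. It uses the second form, $\dim(\mathcal{M}\cap\operatorname{Skew}(n))$, from Proposition~\ref{prop:tasklist}(iii). It shows that a skew matrix supported on the position set $\{(\tau_j,\sigma_j)\}$ vanishes on the diagonal and at every position whose transpose is absent. It then concludes that the matrices $E_{uv}-E_{vu}$, one per digon, form a basis.

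The two computations are the image and kernel sides of one rank--nullity count. Yours needs nothing beyond the definition. It also returns the two dimensions separately: $\dim\mathcal{M}$ is the number of distinct arcs, and $\dim\mathcal{M}^{s}$ is the number of edges, loops included, of the underlying undirected graph. The second of these is the quantity $\dim\mathcal{M}-\delta_{\mathcal{L}}$ in the ceiling \eqref{eq:taskbound}.

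The paper's computation instead returns an explicit basis of the skew part. Pulled back through $\lambda\mapsto\mathsf{M}(\lambda)$, that basis names the forced relations themselves: the two task values on each digon must be equal.
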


\begin{proof}
  Here $\mathcal{M}$ is the span of the matrix units at the position set
  $\mathcal{I}=\{(\tau_j,\sigma_j)\}$. Let
  $X\in\mathcal{M}\cap\operatorname{Skew}(n)$. Lying in $\mathcal{M}$ means
  $X_{uv}=0$ whenever $(u,v)\notin\mathcal{I}$, and antisymmetry means
  $X_{uu}=0$ and $X_{uv}=-X_{vu}$; so if $(u,v)\in\mathcal{I}$ but
  $(v,u)\notin\mathcal{I}$ then $X_{vu}=0$ and hence $X_{uv}=0$. Thus $X$ is
  supported on the off-diagonal positions $(u,v)$ with both $(u,v)$ and $(v,u)$
  in $\mathcal{I}$, and on those it is free subject to $X_{uv}=-X_{vu}$, so the
  matrices $E_{uv}-E_{vu}$, one for each such unordered pair, are a basis of
  $\mathcal{M}\cap\operatorname{Skew}(n)$. Since $(u,v)\in\mathcal{I}$ says the
  arc $v\to u$ is present, the condition on a pair says both arcs are, which is
  a digon; and a self-loop contributes only $(u,u)$, which the diagonal
  condition $X_{uu}=0$ removes. Now apply \eqref{eq:deltaL}.
\end{proof}

A digon is the normal--reciprocal pair of resistivity
surveying~\citep{LaBrecque1996} and the reciprocity check of modal testing,
where the two readings are measured and compared; the corollary adds that a
digon on which targets are imposed is the only thing a coordinate list pays
for.

\begin{corollary}[the generalised overlap]
  \label{cor:taskoverlap}
  Let $\mathcal{F}=\spn\{\mathbf{f}_j\}$ and $\mathcal{G}=\spn\{\mathbf{g}_j\}$
  and $q=\dim(\mathcal{F}\cap\mathcal{G})$. Then
  \begin{equation}
    \label{eq:qbound}
    \delta_{\mathcal{L}}\;\le\;\tfrac12\,q(q-1),
  \end{equation}
  with equality if and only if every antisymmetric matrix with image inside
  $\mathcal{F}\cap\mathcal{G}$ belongs to $\mathcal{M}$; equality holds for the
  full block of Proposition~\ref{prop:taskblock}, where $q=p$. Moreover, if
  every $\mathbf{f}_j$ is supported in a set $S$ of degrees of freedom and every
  $\mathbf{g}_j$ in a set $T$, then $q\le p=|T\cap S|$ and hence
  $\delta_{\mathcal{L}}\le\tfrac12p(p-1)$: the block law never underestimates
  the deficit of a list drawn on the same terminals. In particular
  $\delta_{\mathcal{L}}=0$ when $q\le1$, hence for a
  single task and for every layout that drives one set of degrees of freedom
  and reads a disjoint one; and also, whatever $q$, for a list of self-adjoint
  tasks $\mathbf{g}_j\in\R\mathbf{f}_j$.
\end{corollary}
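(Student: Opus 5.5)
The plan is to use the characterisation \eqref{eq:deltaL}, $\delta_{\mathcal{L}}=\dim(\mathcal{M}\cap\operatorname{Skew}(n))$ from Proposition~\ref{prop:tasklist}(iii), and to show that every antisymmetric element of $\mathcal{M}$ has its image inside $E=\mathcal{F}\cap\mathcal{G}$. Every $\mathsf{M}_j=\mathbf{g}_j\mathbf{f}_j^\T$ has column space in $\mathcal{G}$ and row space in $\mathcal{F}$. Hence any $X\in\mathcal{M}$ satisfies $\operatorname{im}X\subseteq\mathcal{G}$ and $\operatorname{im}X^\T\subseteq\mathcal{F}$. If in addition $X=-X^\T$, then $\operatorname{im}X=\operatorname{im}X^\T\subseteq\mathcal{F}\cap\mathcal{G}=E$. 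Moreover $\ker X=(\operatorname{im}X^\T)^{\perp}\supseteq E^{\perp}$, so $X$ is determined by an antisymmetric operator on $E$. Thus $\mathcal{M}\cap\operatorname{Skew}(n)$ embeds in the space $\operatorname{Skew}_E=\{X\in\operatorname{Skew}(n):\operatorname{im}X\subseteq E\}$. This space is isomorphic to $\Lambda^2E$ (choose an orthonormal basis of $E$), so its dimension is $q(q-1)/2$. That gives \eqref{eq:qbound}. Equality holds iff $\mathcal{M}\cap\operatorname{Skew}(n)=\operatorname{Skew}_E$, i.e.\ iff $\operatorname{Skew}_E\subseteq\mathcal{M}$, which is the stated criterion.

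For the full block, Proposition~\ref{prop:taskblock} already gives $\delta_{\mathcal{L}}=\binom p2$. There $\mathcal{F}=\spn\{e_v:v\in S\}$ and $\mathcal{G}=\spn\{e_u:u\in T\}$, so $E=\spn\{e_u:u\in P\}$ and $q=p$; equality follows. Alternatively, one can check directly that each $E_{uv}-E_{vu}$ with $u,v\in P$ lies in $\mathcal{M}$.

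For the support clause, $\mathcal{F}\subseteq\spn\{e_v:v\in S\}$ and $\mathcal{G}\subseteq\spn\{e_u:u\in T\}$. The intersection of two coordinate subspaces is the coordinate subspace on $T\cap S$, so $q\le p$. Monotonicity of $q(q-1)/2$ in $q\ge0$ then gives $\delta_{\mathcal{L}}\le p(p-1)/2$.

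For the special cases: $q\le1$ makes the bound zero. This covers a single task, since then $\dim\mathcal{F}\le1$, and disjoint drive and read sets, since then $p=0$. For self-adjoint tasks, $\mathbf{g}_j=c_j\mathbf{f}_j$ gives $\mathsf{M}_j=c_j\mathbf{f}_j\mathbf{f}_j^\T$, which is symmetric. So $\mathcal{M}\subseteq\Sym(n)$, and $\mathcal{M}\cap\operatorname{Skew}(n)=\{0\}$, whence $\delta_{\mathcal{L}}=0$ whatever $q$ is.

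The only step needing care is the image/kernel argument identifying $\mathcal{M}\cap\operatorname{Skew}(n)$ with a subspace of $\operatorname{Skew}_E$, in particular that $X^\T$ having image in $\mathcal{F}$ follows from the row-space observation.
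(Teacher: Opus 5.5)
Your proposal is correct and follows the paper's proof essentially step for step. You both rewrite $\delta_{\mathcal{L}}$ as $\dim(\mathcal{M}\cap\operatorname{Skew}(n))$ via \eqref{eq:deltaL}, confine the image of an antisymmetric member of $\mathcal{M}$ to $\mathcal{F}\cap\mathcal{G}$ with its kernel containing the orthogonal complement, and count $\binom q2$; the full-block, support and special cases are then read off in the same way.
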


\begin{proof}
  Let $A=\mathsf{M}(\lambda)\in\mathcal{M}\cap\operatorname{Skew}(n)$. Then
  $\operatorname{im}A\subseteq\mathcal{G}$ and
  $\operatorname{im}A^\T\subseteq\mathcal{F}$, while antisymmetry gives
  $\operatorname{im}A=\operatorname{im}A^\T$, so
  $\operatorname{im}A\subseteq\mathcal{F}\cap\mathcal{G}=:\mathcal{V}$. Since $A$ is
  antisymmetric, $\ker A=(\operatorname{im}A)^{\perp}\supseteq \mathcal{V}^{\perp}$, so
  $A$ vanishes on $\mathcal{V}^{\perp}$ and restricts to an antisymmetric operator on
  $\mathcal{V}$; the space of such $A$ has dimension $\binom q2$, and \eqref{eq:qbound}
  and its equality case follow from \eqref{eq:deltaL}. For the block,
  $\mathcal{F}=\spn\{e_\sigma\}_{\sigma\in S}$ and
  $\mathcal{G}=\spn\{e_\tau\}_{\tau\in T}$ give $\mathcal{V}=\spn\{e_\mu\}_{\mu\in P}$
  and $q=p$, and every $E_{\mu\nu}$ with $\mu,\nu\in P$ is one of the
  $\mathsf{M}_j$, so the $E_{\mu\nu}-E_{\nu\mu}$ spanning the antisymmetric
  operators on $\mathcal{V}$ lie in $\mathcal{M}$ and equality holds, consistently with
  $\delta_{\mathcal{L}}=p(p-1)/2$ there. For the support statement, write
  $\R^{S}$ for the coordinate subspace spanned by $\{e_\sigma\}_{\sigma\in S}$;
  then $\mathcal{F}\subseteq\R^{S}$ and $\mathcal{G}\subseteq\R^{T}$, so
  $\mathcal{V}\subseteq\R^{S}\cap\R^{T}=\R^{P}$ and $q\le p$. Finally
  $q\le1$ makes $\mathcal{V}$ carry no non-zero antisymmetric operator, which covers a
  single task, where $\mathcal{V}$ is at most a line, and a disjoint layout, where $\mathcal{V}=0$;
  and if every $\mathsf{M}_j$
  is symmetric then $\mathcal{M}\subseteq\Sym(n)$ meets
  $\operatorname{Skew}(n)$ only in $0$.
\end{proof}

Experiment~21 checks all of this numerically, over the $7245$ task lists of
its first three parts and the $1637$ floor instances of its fourth, the
full-block values in exact rational arithmetic; Appendix~\ref{app:stats} lists
what each part tests. Sec.~\ref{sec:consequences} applies this
section to a single network: one reversed task with an unequal demand, added to a
three-task cycle, creates the floor of Theorem~\ref{thm:taskfloor}, and one
odd bond removes it.

Three consequences follow.

The first is that $p$ is the invariant of a block and not of a list, and
misreports in both directions. Take two tasks: drive at $a$ and read at $b$,
then drive at $b$ and read at $a$. Each, taken alone as a one-task block, has
$T\cap S=\emptyset$,
so a per-task reading of Theorem~\ref{thm:main} charges nothing. Yet the two
are a reversed pair, $\delta_{\mathcal{L}}=1$, and the forced relation $y_1=y_2$ is
Maxwell--Betti read on the pair $(a,b)$. The enclosing block $T=S=\{a,b\}$ does
see it, at $p(p-1)/2=1$, but charges that lost dimension against four values
where two were imposed. The forfeited \emph{fraction} of what was asked for is
therefore one half and not one quarter (both lists are rows of experiment~21,
part~2). In the other direction, the driving-point pair
$\mathbf{f}_1=\mathbf{g}_1=e_a$, $\mathbf{f}_2=\mathbf{g}_2=e_b$ has both
$\mathsf{M}_j$ symmetric and $\delta_{\mathcal{L}}=0$, while the enclosing
block has $p=2$ and charges one dimension, a relation between two values the
list never asked for.

What survives is the one-sided bound of Corollary~\ref{cor:taskoverlap}, and it
strengthens Sec.~\ref{sec:consequences} under a weaker hypothesis. The nineteen
layouts of Table~\ref{tab:layouts} drive one set of degrees of freedom and read
a disjoint one, so $q=0$ and $\delta_{\mathcal{L}}=0$ at each. This holds not
merely for the block each would generate but for whatever sub-list its authors
imposed, and whether the patterns are coordinate vectors, ghost-bond dipoles or
principal components. It is a theorem for the three force-driven rows of the
table; for the sixteen driven by an imposed displacement it is a count on the
force-driven analogue list, not an instance of the theorem. Experiment~21, part~5, evaluates it layout by layout and finds
$q=0$ and $\delta_{\mathcal{L}}=0$ in all $323$ rows the nineteen generate: the
base list, eight coordinate sub-lists drawn at random (a single-task layout
repeats its one task) and eight dense-pattern variants of each.

The second consequence is a number for the one published task whose driven and
read-out sets are exchanged. What is charged must be the list imposed and not
the block that encloses it. Fig.~2 of \citet{Du2026}, read as the deposited script implements
it and as Sec.~\ref{sec:consequences} states it, clamps units $1,2,3$
\emph{simultaneously} and asks for the three angles of units $4,5,6$, then
clamps $4,5,6$ simultaneously and asks for the three angles of $1,2,3$. That is
six scalar demands, not eighteen. Its force-driven analogue is therefore the
list of $J=6$ tasks with the two combined drives
$\mathbf{f}_{+}=e_1+e_2+e_3$ and $\mathbf{f}_{-}=e_4+e_5+e_6$: the three tasks
$(\mathbf{f}_{+},e_4),(\mathbf{f}_{+},e_5),(\mathbf{f}_{+},e_6)$ and the three
tasks $(\mathbf{f}_{-},e_1),(\mathbf{f}_{-},e_2),(\mathbf{f}_{-},e_3)$. It has
$\mathcal{F}=\spn\{\mathbf{f}_{+},\mathbf{f}_{-}\}$ and $q=2$, so
Corollary~\ref{cor:taskoverlap} caps its deficit at one. The cap is attained,
because $\mathsf{M}$ summed over the first three tasks is
$\mathbf{f}_{-}\mathbf{f}_{+}^{\T}$ and over the last three is
$\mathbf{f}_{+}\mathbf{f}_{-}^{\T}$, whose difference is a non-zero
antisymmetric member of $\mathcal{M}$. So $\dim\mathcal{M}=6$,
$\dim\mathcal{M}^{s}=5$ and $\delta_{\mathcal{L}}=1$ (experiment~21, part~5,
which computes both dimensions and recovers the relation vector). The single
forced relation is that the three read angles of the forward clamp sum to the
same number as the three of the reverse clamp. It is Maxwell--Betti between the
two combined drives, which is exactly the exchange the two targets ask to break.

The enclosing pair of full blocks, in which each of the three clamped units is
driven separately, is a different and larger list: $J=18$ coordinate tasks
whose digraph is the complete bipartite graph on $\{1,2,3\}$ and $\{4,5,6\}$
in both directions. All nine edges are reversed pairs; $\dim\mathcal{M}=18$,
$\dim\mathcal{M}^{s}=9$ and $\delta_{\mathcal{L}}=9$ (experiment~21, part~5,
which deposits both readings as separate rows). Nine is the deficit of that
enclosure and not of the protocol; the twelve extra demands were never
imposed. Both counts are of a force-driven analogue: the drive of Fig.~2 is an
imposed angle, which neither Theorem~\ref{thm:main} nor
Theorem~\ref{thm:taskfloor} covers, and $\delta_{\mathcal{L}}=1$ says only
what a force-driven list of the same shape would forfeit. The obstruction for
the imposed angle is the spectral condition \eqref{eq:mcone} of
Theorem~\ref{thm:multiclamp} (Sec.~\ref{sec:dirichlet}), with the identity
\eqref{eq:duS} and the floor \eqref{eq:dufloor} of
Sec.~\ref{sec:consequences}, tested against the deposited trajectory in
experiment~17.

The third consequence is that a positive deficit is a property of a list and
not a charge against it. The cleanest illustration is a method that profits
from one. The in situ backpropagation of \citet{Li2024} performs two solves on
the same network: a forward one driven by $\mathbf{f}_{\mathrm{in}}$, in which
every bond elongation $\bq_b^\T\bu$ is read, and an adjoint one driven by
$\mathbf{f}_{\mathrm{out}}$, in which the same elongations are read. That list
has $\mathcal{G}=\spn\{\bq_b\}=\R^{n}$, since $K_{\!f\!f}$ is invertible, and
$\mathcal{F}=\spn\{\mathbf{f}_{\mathrm{in}},\mathbf{f}_{\mathrm{out}}\}$, so
$q=2$ and \eqref{eq:qbound} caps the deficit at one. The cap is attained,
because $\mathcal{M}=\R^{n}\mathbf{f}_{\mathrm{in}}^\T
+\R^{n}\mathbf{f}_{\mathrm{out}}^\T$ contains the non-zero antisymmetric matrix
$\mathbf{f}_{\mathrm{out}}\mathbf{f}_{\mathrm{in}}^\T
-\mathbf{f}_{\mathrm{in}}\mathbf{f}_{\mathrm{out}}^\T$ whenever the two drives
are independent. So $\delta_{\mathcal{L}}=1$ exactly, and the one relation is
Maxwell--Betti between the drives. Experiment~21, part~6, finds
$\delta_{\mathcal{L}}=1$ in all $460$ instances it builds, $400$ from random
compatibility matrices and $60$ on Delaunay networks whose read functionals
are the actual elongation covectors $\bq_b$. On those $60$ it recovers
$\mathbf{f}_{\mathrm{out}}$ and $-\mathbf{f}_{\mathrm{in}}$ from the relation
vector, to $1.5\times10^{-14}$.

No target is prescribed anywhere on that list. The elongation reads of both
solves are instrumentation, from which the gradient is assembled, and not
demands. There is therefore no $y^{*}$, and \eqref{eq:taskfloor} is not
engaged at all: the method uses the relation to obtain an exact gradient from a
second measurement on the same hardware rather than paying for it. Its deficit
would be a cost only against a list on which targets were imposed. The targets
of \citet{Li2024} are imposed on a different list, the disjoint drive-and-read
layout of Table~\ref{tab:layouts}, where $q=0$ and $\delta_{\mathcal{L}}=0$.

The forecast of Sec.~\ref{sec:consequences} for platforms that share a
terminal sharpens accordingly. A shared terminal alone is a self-loop of
$\mathcal{D}_{\mathcal{L}}$, a driving-point task whose
$\mathsf{M}_j=\mathbf{f}_j\mathbf{f}_j^\T$ is already symmetric, and by
Corollary~\ref{cor:digon} it is free, however many there are; the cost begins
only with a reversed pair of tasks, which the whole block on the accessed
degrees of freedom supplies and a list of tasks need not. In the two published
demonstrations with a shared terminal, both outside Table~\ref{tab:layouts}
(the same-node demonstration of \citet{Li2024} and the locomotion
demonstration of \citet{Du2026}), the shared degree of freedom enters one
drive and one read-out and no pair of terminals is instrumented in both
directions, so $q\le1$ and $\delta_{\mathcal{L}}=0$ (experiment~21,
part~5; Sec.~\ref{sec:consequences} gives the reading of each, including the
assumption made for \citet{Li2024}). Nothing in Sec.~\ref{sec:consequences} that concerns
a full block is affected: Proposition~\ref{prop:budget} is a statement about
$\mathcal{S}_P$ and stands as proved.

\paragraph{What the list version does not deliver.} Passivity still adds an
orthogonal second term. Since $\{y(C):C\succ0\}$ lies in
$\mathcal{S}_{\mathcal{L}}$, the splitting of Theorem~\ref{thm:taskfloor}
gives $\|y(C)-y^{*}\|^{2}\ge\|\Pi_{\mathcal{A}_{\mathcal{L}}}y^{*}\|^{2}
+\operatorname{dist}\bigl(\Pi_{\mathcal{S}_{\mathcal{L}}}y^{*},
\overline{y(\Sym^{++}(n))}\bigr)^{2}$. The right-hand side is the value of
the convex program $\inf\{\|y(X)-y^{*}\|^{2}:X\succeq0\}$, since $\Sym^{++}(n)$
is dense in $\Sym^{+}(n)$ and $y$ is continuous. For a task list the eigenvalue
formula of \eqref{eq:floorpd} does not carry over, and the passivity term is
computed from that convex program instead. Take $n=2$ and the three
self-adjoint tasks $\mathbf{f}_j=\mathbf{g}_j=e_1,e_2,e_1+e_2$, so that
$y=(C_{11},C_{22},C_{11}+2C_{12}+C_{22})$ and $\delta_{\mathcal{L}}=0$. There
is no reciprocity relation at all, and the whole floor is the passivity term.
For $y^{*}=(1,1,-3)$ the exact floor is $3$. Semidefiniteness gives
$C_{12}\ge-\sqrt{C_{11}C_{22}}$, hence
$y_3=y_1+y_2+2C_{12}\ge(\sqrt{y_1}-\sqrt{y_2})^{2}\ge0$ and
$\|y-y^{*}\|\ge y_3+3\ge3$, with equality at $y=(1,1,0)$. That is the value of
the convex program over $C\succeq0$, and it is attained there, at
$C=\bigl(\begin{smallmatrix}\ \ 1&-1\\-1&\ \ 1\end{smallmatrix}\bigr)$, which is
positive semidefinite with eigenvalues $0$ and $2$ and hence singular. It is
attained by no $C\succ0$ and therefore by no passive network: definiteness
makes $C_{12}>-\sqrt{C_{11}C_{22}}$ strict, so $y_3>0$ and $\|y-y^{*}\|>3$ at
every one of them. The identity is not the minimiser and is not close to it:
$y(I)=(1,1,2)$, an error of $5$.

Reading a symmetric $B$ off the target in the obvious way, $B_{11}=y^{*}_1$,
$B_{22}=y^{*}_2$ and $B_{12}=B_{21}=(y^{*}_3-y^{*}_1-y^{*}_2)/2=-5/2$, and
applying the eigenvalue truncation of \citet{Higham1988} to it returns $3/2$,
wrong by a factor of two. The reason is structural. On a block, the shared
part $C\mapsto C[P,P]$ of $y$ is an isometry onto a copy of $\Sym(p)$ carrying the
Frobenius metric, so the eigenvalue projection is taken in the metric that
measures the error, whereas $y|_{\Sym(n)}$ is not in general a similarity. Nothing is
lost on a block, where a Schur-complement construction realises every block
with a positive definite shared sub-block by some $C\succ0$, so
\eqref{eq:floorpd} is there the exact distance to the closure of the
achievable set. For a list, that closure is needed, because the cone is not
always closed: for the tasks $(e_1,e_1)$ and $(e_2,e_1)$ it is the open
half-space $\{y_1>0\}$, and at $y^{*}=(-1,5)$ the infimum $1$ is not attained.
Attainment on a fixed graph is likewise left open. Whether a graph reaches
\eqref{eq:taskbound} is the transversality question of
Theorem~\ref{thm:duality}, which for a list reads
$\dim\mathcal{M}^{s}-\rank\partial y/\partial k
=\dim(\mathcal{M}^{s}\cap K_{\!f\!f}V^{\perp}K_{\!f\!f})$ by the same proof, with
$V^{\perp}$ as defined before that theorem.

Two cautions remain. Tasks may carry different units, so the Euclidean norm on
$\R^{J}$ is a choice: $\mathcal{S}_{\mathcal{L}}$, $\delta_{\mathcal{L}}$ and
the vanishing of the floor do not depend on it, but its value does. And of the
two terms of \eqref{eq:tasksplit} only the second is physics; in a passive
network the passivity term above is physics too. The first is produced by
repeated tasks, null tasks and tasks that combine others, so a reciprocity cost
must never be read off the total, on pain of inflating it by padding the list.

A list, like the block of Theorem~\ref{thm:main}, is read on a network driven
by applied forces. Most hardware instead holds its inputs at prescribed
displacements; the next subsection gives the law of that drive, promised at
the end of Sec.~\ref{sec:universal}.

\subsection{The imposed-displacement drive obeys a different law}
\label{sec:dirichlet}

Under an imposed-displacement drive only the input is imposed, and the output
is read wherever the network puts it. Fix a set $P$ of $p$ degrees of freedom, each driven in one test and read in
the others. Impose a unit
displacement at one $\mu\in P$, leave every other degree of freedom, the rest
of $P$ included, free to equilibrate, and read the displacement at $\nu\in P$.
Call that number $D_{\nu\mu}$. This is the protocol of \citet{Altman2024} and
of the analogue networks of~\citet{Dillavou2022,Dillavou2024}, and, with
several units clamped at once (Theorem~\ref{thm:multiclamp}), of
\citet{Du2026}. It is also the drive of sixteen of the nineteen layouts of
Table~\ref{tab:layouts}.

The matrix $D$ of these transmissions is not a block of $C$, so
Theorem~\ref{thm:main} does not constrain it. Reciprocity still does, but not
by a codimension count. Theorem~\ref{thm:dirichlet} writes each transmission
as a ratio of two entries of $C$; what symmetry and passivity then impose, and
how that differs from the force-driven law, is set out after its proof.

\begin{theorem}[reciprocity under an imposed-displacement drive]
  \label{thm:dirichlet}
  Let $K_{\!f\!f}$ be invertible, $C=K_{\!f\!f}^{-1}$, and $C_{\mu\mu}\ne0$ for
  every $\mu\in P$. Then for every $\mu\ne\nu$,
  \begin{equation}
    \label{eq:dtrans}
    D_{\nu\mu} \;=\; \frac{C_{\nu\mu}}{C_{\mu\mu}} .
  \end{equation}
  Whenever $C=C^\T$, and for all $\mu,\nu,\rho\in P$,
  \begin{equation}
    \label{eq:dbalance}
    D_{\nu\mu}\,C_{\mu\mu} \;=\; D_{\mu\nu}\,C_{\nu\nu} \;=\; C_{\mu\nu} ,
    \qquad
    D_{\nu\mu}D_{\rho\nu}D_{\mu\rho} \;=\; D_{\mu\nu}D_{\nu\rho}D_{\rho\mu} ,
  \end{equation}
  both sides of the cycle identity being
  $C_{\nu\mu}C_{\rho\nu}C_{\mu\rho}/(C_{\mu\mu}C_{\nu\nu}C_{\rho\rho})$. If in
  addition the network is passive, so that $C\succ0$, then the diagonal entries
  are positive and, with the inequality for $\mu\ne\nu$,
  \begin{equation}
    \label{eq:dcycle}
    \operatorname{sign} D_{\nu\mu} = \operatorname{sign} D_{\mu\nu},
    \qquad
    D_{\nu\mu}D_{\mu\nu} < 1 .
  \end{equation}
    The diagonal of $D$ is identically one, so at $p=1$ there is nothing to
  train. For $p\ge2$ and \emph{whenever $C=C^\T$},
  \begin{equation}
    \label{eq:dbound}
    \rank\frac{\partial D}{\partial k}
    \;\le\;
    \min\Bigl(n_b,\ \tfrac12 (p-1)(p+2)\Bigr) ,
  \end{equation}
  which is $p(p-1)-\tfrac12(p-1)(p-2)$: a rank ceiling $(p-1)(p-2)/2$ below the
  $p(p-1)$ off-diagonal entries, hence a deficit of at least that many. The
  claim is a bound on rank, hence on the
  local dimension of the reachable set, and not a containment: unlike
  Theorem~\ref{thm:main}, the proof does not exhibit a linear subspace of that
  codimension holding the reachable set, and we do not assert one.
\end{theorem}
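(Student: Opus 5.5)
The plan is to reduce every clause to statements about the shared principal block $C_P=C[P,P]$, starting from a static description of the drive. Imposing a unit displacement at $\mu$ and leaving every other free degree of freedom unloaded means that the only external force is the reaction at $\mu$. The equilibrium is therefore $\bu=C(f e_\mu)$ for some scalar $f$, and the constraint $u_\mu=1$ fixes $f=1/C_{\mu\mu}$; this is where $C_{\mu\mu}\neq0$ is used. Reading $u_\nu$ then gives \eqref{eq:dtrans}. The one point to state carefully is that a held degree of freedom is equivalent to an unknown point force at that degree of freedom, with zero load elsewhere. This is the modelling step I would spell out, since the rest is algebra.

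With $C=C^\T$, the balance identity \eqref{eq:dbalance} follows by multiplying \eqref{eq:dtrans} by $C_{\mu\mu}$, and likewise with $\mu$ and $\nu$ exchanged, giving the common value $C_{\mu\nu}=C_{\nu\mu}$. For the cycle identity I would substitute \eqref{eq:dtrans} into both triple products. Each side becomes $C_{\nu\mu}C_{\rho\nu}C_{\mu\rho}/(C_{\mu\mu}C_{\nu\nu}C_{\rho\rho})$, using symmetry to match the off-diagonal numerators. Under passivity $C\succ0$, so every $C_{\mu\mu}>0$. Then the signs of $D_{\nu\mu}$ and $D_{\mu\nu}$ both equal the sign of $C_{\mu\nu}$. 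Moreover
\[
D_{\nu\mu}D_{\mu\nu}=\frac{C_{\mu\nu}^2}{C_{\mu\mu}C_{\nu\nu}}<1,
\]
because the $2\times2$ principal minor of $C$ on $\{\mu,\nu\}$ is strictly positive. The diagonal $D_{\mu\mu}=1$ is immediate from the constraint.

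For the rank bound \eqref{eq:dbound}, I would factor $D$ through a map $\Phi$ with $D(k)=\Phi(C_P(k))$. Here $\Phi$ is defined on symmetric $p\times p$ matrices with nonzero diagonal by $\Phi(Z)_{\nu\mu}=Z_{\nu\mu}/Z_{\mu\mu}$. The chain rule gives
\[
\frac{\partial D}{\partial k}=\mathrm{d}\Phi_{C_P}\circ\frac{\partial C_P}{\partial k}.
\]
The domain of $\mathrm{d}\Phi_{C_P}$ is $\Sym(p)$, of dimension $p(p+1)/2$, because $\partial C_P/\partial k_b=-\bw_b[P]\bw_b[P]^\T$ is symmetric. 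The key observation is that $\Phi$ is homogeneous of degree zero, $\Phi(cZ)=\Phi(Z)$. Euler's relation then gives $\mathrm{d}\Phi_{Z}(Z)=0$, and $Z=C_P\neq0$ because its diagonal is nonzero. So $\mathrm{d}\Phi_{C_P}$ has a nontrivial kernel in $\Sym(p)$, and its rank is at most $p(p+1)/2-1=(p-1)(p+2)/2$. Combined with the trivial bound of $n_b$ columns, this gives \eqref{eq:dbound}. The identity $p(p-1)-\tfrac12(p-1)(p-2)=\tfrac12(p-1)(p+2)$ is a one-line expansion.

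The main obstacle is interpretive rather than computational. The argument yields only a rank bound, since the level sets of $\Phi$ are rays and not a linear subspace, which matches the caveat at the end of the statement that no containing subspace of that codimension is asserted. I would also check that the scaling direction is not already excluded by the symmetric constraint, so that the drop of one is genuine. It is: $C_P$ itself lies in $\Sym(p)$.
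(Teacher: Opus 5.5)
Your proof is correct and follows the paper's argument clause by clause. The balance, cycle, sign and product clauses come from substituting \eqref{eq:dtrans} and using the $2\times2$ principal minor, and the rank ceiling comes from factoring $D=\Phi(C_P)$ and noting that the degree-zero homogeneity of $\Phi$ puts $C_P\in\Sym(p)$ in the kernel of $\mathrm{d}\Phi_{C_P}$ restricted to $\Sym(p)$.

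The only variation is in \eqref{eq:dtrans}. You model the clamp as an unknown reaction $f e_\mu$, so that $\bu=fCe_\mu$ and $f=1/C_{\mu\mu}$. The paper instead partitions into $\{\mu\}$ and its complement $F$ and uses $\det K_{FF}=C_{\mu\mu}\det K_{\!f\!f}$ with the bordered-inverse identity. Your route gives existence and uniqueness of the clamped state in one step and, like the paper's, uses no symmetry.
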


\begin{proof}
  Partition the free degrees of freedom into $\{\mu\}$ and its complement $F$.
  By the cofactor formula for the inverse, $\det K_{FF}=C_{\mu\mu}\det K_{\!f\!f}$,
  so the hypothesis $C_{\mu\mu}\ne0$ is exactly the invertibility of $K_{FF}$.
  Imposing $u_\mu=1$ with $K_{\!f\!f}u=0$ on $F$ then has the unique solution
  $u_F=-K_{FF}^{-1}K_{F\mu}$,
  and the bordered-inverse identity
  $C_{\nu\mu}/C_{\mu\mu}=-(K_{FF}^{-1}K_{F\mu})_\nu$ is
  \eqref{eq:dtrans}. No symmetry is used, so \eqref{eq:dtrans} holds for an odd
  network too; the hypothesis $C_{\mu\mu}\ne0$ is what makes the clamped problem
  well posed and both sides defined, and
  it is automatic once $C\succ0$.
  If $C=C^\T$ then $C_{\nu\mu}=C_{\mu\nu}$, and multiplying \eqref{eq:dtrans} by
  $C_{\mu\mu}$ gives the first equality of \eqref{eq:dbalance}. Multiplying it
  around a triple leaves both sides of the second equal to
  $C_{\nu\mu}C_{\rho\nu}C_{\mu\rho}/(C_{\mu\mu}C_{\nu\nu}C_{\rho\rho})$, the three
  diagonal factors cancelling whatever their signs, so the cycle identity needs
  symmetry alone. Positive definiteness is what \eqref{eq:dcycle} adds: the
  diagonal entries are then positive, so $D_{\nu\mu}$ and $D_{\mu\nu}$ carry the
  sign of the single number $C_{\mu\nu}$, and their product is
  $C_{\mu\nu}^2/(C_{\mu\mu}C_{\nu\nu})$, the squared correlation of the pair,
  which is less than one by the strict $2\times2$ principal minor. Only the
  positivity of the $2\times2$ principal minors of $C_P=C[P,P]$ is used, so a
  negative-definite $C$ satisfies both clauses too; each can fail on a suitable
  symmetric indefinite $C$, though never both at the same pair, since a sign
  disagreement forces $C_{\mu\mu}C_{\nu\nu}<0$ and hence
  $D_{\nu\mu}D_{\mu\nu}<0<1$.
    For \eqref{eq:dbound}, note that $D$ depends on $k$ only through the principal
  submatrix $C_P=C[P,P]$, by way of $\Phi:C_P\mapsto C_P\,\mathrm{diag}(C_P)^{-1}$, defined on the
  $p\times p$ matrices with invertible diagonal and restricted to the
  $p(p+1)/2$-dimensional space of symmetric ones. $\Phi$ is invariant under $C_P\mapsto tC_P$, so
  differentiating in $t$ at $t=1$ gives $\mathrm{d}\Phi_{C_P}(C_P)=0$ with
  $C_P\ne0$; hence $\rank\mathrm{d}\Phi\le p(p+1)/2-1=(p-1)(p+2)/2$ at every such
  point, and $\rank\partial D/\partial k\le\rank\mathrm{d}\Phi$. Only symmetry of
  $C_P$ is used: without it the domain has dimension $p^2$, the same step gives only
  $\rank\mathrm{d}\Phi\le p^2-1$, and the ceiling falls back to the vacuous
  $p(p-1)$, the full count of off-diagonal entries. If moreover $a\equiv0$, the parameter vector itself
  lies in the kernel and the first argument of the minimum sharpens to
  $n_b-1$ (Proposition~\ref{prop:multirank}, \eqref{eq:multirank2}).
\end{proof}

The first identity of \eqref{eq:dbalance} is the form Maxwell--Betti reciprocity takes under
this drive: the two transmissions of a pair agree once each is multiplied by
the driving-point compliance at its own input.

Only the last clause, the rank ceiling \eqref{eq:dbound}, is ours, together
with the reading of any of this as a constraint on what a physically learning
network can be trained to do; we have not found either stated anywhere. Identity \eqref{eq:dtrans} is classical several
times over: it is the influence-line construction of
M\"uller-Breslau~\citep{MuellerBreslau1886}, the circuit relation that an
open-circuit voltage transfer ratio is a transimpedance over a driving-point
impedance, and, for a Gaussian field with precision matrix $K$, the simple
regression coefficient $\mathbb{E}[u_\nu\mid u_\mu]/u_\mu$. Equation
\eqref{eq:dbalance} implies that the two regression coefficients of a
pair multiply to the squared correlation. The sign condition of
\eqref{eq:dcycle} together with the cycle identity of \eqref{eq:dbalance} is
exactly the criterion of \citet{EngelSchneider1973} for a matrix to be
diagonally similar to a symmetric one; see also \citet{EngelSchneider1980} for
the minor-based form and \citet{FominZelevinsky2003}, Lemma 7.4, for the skew
analogue. In the Markov-chain setting the same cycle condition is Kolmogorov's
criterion for reversibility~\citep{Kolmogorov1936}.

The Engel--Schneider criterion is stated for all cycles. Triples suffice whenever every
off-diagonal entry of $D$ is non-zero, since a diagonal similarity can then be
assembled from the cycles through one fixed terminal. That is the generic case,
$C_{\nu\mu}\ne0$ for every pair in $P$ failing only on a lower-dimensional set
of stiffnesses unless it fails identically. It is not guaranteed, and at a
configuration where some $C_{\nu\mu}$ vanishes the criterion must be checked on
every cycle of the support of $D$. Our own proof of the cycle identity does not
pass through the criterion and holds for every triple regardless. In the
passive case the similarity is explicit,
$D=\mathrm{diag}(C_{\mu\mu})^{1/2}\,\Omega\,\mathrm{diag}(C_{\mu\mu})^{-1/2}$
with $\Omega$ the correlation matrix of $C_P$.

The conditions \eqref{eq:dbalance} and \eqref{eq:dcycle} are necessary but not
sufficient. Reachability requires $D\,\mathrm{diag}(w)$ to be symmetric
\emph{and positive definite} for some positive $w$, and \eqref{eq:dbound}
bounds the dimension of that set without giving a complete semialgebraic
description of it.

Set against Theorem~\ref{thm:main}, the two laws differ in three ways. First,
the deficit bound is smaller and of a different kind. Driving with forces
confines the whole set reachable by varying $k$ to a linear subspace of codimension exactly
$p(p-1)/2$. Imposing displacements leaves a deficit of \emph{at least}
$(p-1)(p-2)/2$ below the $p(p-1)$ off-diagonal entries, and only as a bound on
$\rank\partial D/\partial k$, that is, as dimension the image lacks. At $p=2$
the first removes one dimension, while the rank ceiling of the second is the
full $p(p-1)=2$: with two shared degrees of freedom the bound sees nothing, and
reciprocity acts there only through \eqref{eq:dbalance} and \eqref{eq:dcycle}.

Second, what remains is a set of inequalities inherited from the
positive-definite cone, not a subspace, and more than one condition. The sign law and
$D_{\nu\mu}D_{\mu\nu}<1$ each exclude an open set of targets while removing no
dimension. That is the same shape of statement as the passivity term of
Theorem~\ref{thm:floorpd}, and no count of parameters against constraints
registers it. The two are independent: $D_{\nu\mu}=D_{\mu\nu}=2$ passes the
first and fails the second, and $D_{\nu\mu}=1$, $D_{\mu\nu}=-\tfrac12$ fails
the first and passes the second.

Third, the magnitudes are free. Nothing forces $|D_{\nu\mu}|=|D_{\mu\nu}|$;
\eqref{eq:dbalance} fixes only their ratio, at $C_{\nu\nu}/C_{\mu\mu}$.

The first identity of \eqref{eq:dbalance} is also the simplest statement in
this paper that can be tested on a sample without knowing its network. Measure
the two transmissions and the two driving-point compliances, four scalars all
accessible at the terminals: a reciprocal linear sample must return
$D_{\nu\mu}C_{\mu\mu}-D_{\mu\nu}C_{\nu\nu}=0$.

Experiment~16 checks the clauses on our own networks. Identity
\eqref{eq:dtrans} holds to $1.3\times10^{-15}$, and to $1.3\times10^{-14}$ on
unstructured accretive operators, symmetric or not; the first identity of
\eqref{eq:dbalance} holds to a relative $6.2\times10^{-13}$. The cycle identity,
the second of \eqref{eq:dbalance}, holds to $4.5\times10^{-13}$, against a violation of order one, $1.0$, once odd
couplings are on, and there is no sign disagreement in $2000$ passive cases,
against $603$ of $2000$ odd ones.

The rank of \eqref{eq:dbound} is attained in $20$ of $20$ configurations, with
a singular-value gap at the cut of at least $10^{13}$. Rank is measured from the
exact Jacobian, since central differences put a noise floor at the cutoff and
inflate the count (Appendix~\ref{app:stats}). Equation~\eqref{eq:dbound} needs
$C=C^\T$ and nothing more; positive definiteness plays no part in it. It does
need that much: with odd couplings the identical measurement gives the full
off-diagonal $p(p-1)$ at $p=3$, in $20$ of $20$ configurations, against
$(p-1)(p+2)/2$ for the passive control.

Theorem~\ref{thm:dirichlet} clamps one terminal and leaves the rest of $P$
free. Most hardware clamps several at once, and the two targets of
\citet{Du2026} clamp three. That case obeys a law of the same shape, in which
the two inequality clauses of \eqref{eq:dcycle} become a single spectral one.

\begin{theorem}[reciprocity under a multi-terminal imposed drive]
  \label{thm:multiclamp}
  Let $A$ and $B$ be disjoint sets of degrees of freedom, let every remaining
  free degree of freedom equilibrate, and let $D_{B\leftarrow A}$ be the matrix
  of read-out displacements produced by a prescribed displacement pattern on
  $A$. Write $X=C[A,A]$ and $Y=C[B,B]$, and assume both are invertible. Then
  \begin{equation}
    \label{eq:mtrans}
    D_{B\leftarrow A} \;=\; C[B,A]\,X^{-1} ,
  \end{equation}
  with no symmetry hypothesis, the degrees of freedom outside $A\cup B$ having
  been eliminated by the Schur complement. If $K_{\!f\!f}=K_{\!f\!f}^\T$ then,
  writing $D_1=D_{B\leftarrow A}$ and $D_2=D_{A\leftarrow B}$,
  \begin{equation}
    \label{eq:mbalance}
    D_1X \;=\; \bigl(D_2Y\bigr)^\T \;=\; C[B,A] ,
  \end{equation}
  and if in addition $K_{\!f\!f}\succ0$ then
  \begin{equation}
    \label{eq:mcone}
    \spec\bigl(D_2D_1\bigr) \;\subset\; [0,1) .
  \end{equation}
\end{theorem}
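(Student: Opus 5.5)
I would prove \eqref{eq:mtrans} by generalising the bordered-inverse step of Theorem~\ref{thm:dirichlet} from one clamped terminal to the set $A$.

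\emph{The transmission formula.} Partition the free degrees of freedom into $A$ and its complement $F$. Imposing a displacement pattern $u_A$ with no applied force on $F$ means solving $K_{FF}u_F=-K_{FA}u_A$. The block-inverse identities for $C=K_{\!f\!f}^{-1}$ give two facts:
\[
\det K_{FF}=\det X\,\det K_{\!f\!f},
\qquad
-K_{FF}^{-1}K_{FA}=C[F,A]\,C[A,A]^{-1}.
\]
The first is Jacobi's complementary-minor identity. The second follows from $K_{FA}C[A,A]+K_{FF}C[F,A]=0$, which is the $(F,A)$ block of $K_{\!f\!f}C=I$. So invertibility of $X$ is exactly what makes the clamped problem well posed. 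Restricting the resulting $u_F$ to $B\subseteq F$ gives $D_{B\leftarrow A}=C[B,A]X^{-1}$. No symmetry is used, so the formula holds for odd networks too.

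\emph{The balance law.} Exchanging the roles of $A$ and $B$ in \eqref{eq:mtrans} gives $D_2=C[A,B]Y^{-1}$. Hence $D_1X=C[B,A]$ and $D_2Y=C[A,B]$. If $K_{\!f\!f}$ is symmetric, so is $C$, and then $C[A,B]^\T=C[B,A]$, which is \eqref{eq:mbalance}.

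\emph{The spectral clause.} Write $Z=C[B,A]$, so that
\[
D_2D_1=Z^\T Y^{-1}ZX^{-1}.
\]
This is similar to $M=X^{-1/2}Z^\T Y^{-1}ZX^{-1/2}$, which is symmetric positive semidefinite. Its spectrum is therefore real and non-negative. It remains to show every eigenvalue is below one, that is $M\prec I$, or equivalently
\[
X-Z^\T Y^{-1}Z\succ0 .
\]
This holds because $C[A\cup B,A\cup B]$ is a principal submatrix of $C\succ0$ and so is positive definite. Its Schur complement with respect to the block $Y$ is exactly $X-Z^\T Y^{-1}Z$, and is therefore positive definite. With $M\prec I$, every eigenvalue lies in $[0,1)$, and similarity carries this to $D_2D_1$.

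The only step needing care is the last: using the symmetric square-root similarity to turn a statement about the non-symmetric product $D_2D_1$ into a Löwner inequality. This is also where $C\succ0$, and not merely $C=C^\T$, is used. The rest is block-inverse bookkeeping.
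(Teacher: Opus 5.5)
Your proof is correct and follows the paper's argument step for step: Jacobi's complementary-minor identity with the block-inverse identity for the transmission, transposition of $C[A,B]$ for the balance, and conjugation by $X^{-1/2}$ followed by the Schur complement of $Y$ in $C[A\cup B,A\cup B]$ for the spectral clause. One step is left implicit: before you write $X^{-1/2}$ and call $M$ positive semidefinite, say that $C\succ0$ gives $X\succ0$ and $Y\succ0$.
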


\begin{proof}
  Partition the free degrees of freedom into $A$ and its complement $F$, which
  contains $B$. By Jacobi's complementary minor identity,
  $\det K_{FF}=\det X\cdot\det K_{\!f\!f}$, which is the block form of the
  cofactor relation used in the proof of Theorem~\ref{thm:dirichlet}, so the hypothesis that $X$ is
  invertible is exactly the invertibility of $K_{FF}$, and the clamped problem
  $K_{FF}\bu_F=-K_{FA}\bu_A$ has the unique solution
  $\bu_F=-K_{FF}^{-1}K_{FA}\bu_A$. The block-inverse identity
  $-K_{FF}^{-1}K_{FA}=C[F,A]\,X^{-1}$ restricted to the rows $B$ is
  \eqref{eq:mtrans}. No symmetry is used, and nothing beyond $X$ is inverted,
  exactly as in \eqref{eq:dtrans}; the invertibility of $Y$ is what makes the
  same statement available with the two sets exchanged.
  Equation~\eqref{eq:mbalance} is $C[A,B]=C[B,A]^\T$ substituted into the two
  definitions. For \eqref{eq:mcone} note first that $K_{\!f\!f}\succ0$ gives
  $C\succ0$, hence $C[U,U]\succ0$ for $U=A\cup B$ and in particular $X\succ0$,
  since a principal submatrix of a positive-definite matrix is positive definite.
  Let $X^{1/2}$ be its unique positive-definite square root and put
  $M=C[A,B]\,Y^{-1}C[B,A]\succeq0$. Then $D_2D_1=MX^{-1}$, and conjugation by
  $X^{-1/2}$ gives
  \[
    X^{-1/2}\bigl(D_2D_1\bigr)X^{1/2} \;=\; X^{-1/2}MX^{-1/2} ,
  \]
  so the two have the same spectrum, which is therefore real and non-negative.
  The Schur condition for $C[U,U]\succ0$ is $X-M\succ0$, hence
  $X^{-1/2}MX^{-1/2}\prec I$ and every eigenvalue is strictly below one.
\end{proof}

All three clauses of Theorem~\ref{thm:multiclamp} specialise at $|A|=|B|=1$ to
statements of Theorem~\ref{thm:dirichlet}: \eqref{eq:mtrans} becomes
\eqref{eq:dtrans}, \eqref{eq:mbalance} becomes
$D_{\nu\mu}C_{\mu\mu}=D_{\mu\nu}C_{\nu\nu}$, and \eqref{eq:mcone} becomes both
clauses of \eqref{eq:dcycle} at once, the sign law and $D_{\nu\mu}D_{\mu\nu}<1$
being the two statements that the single eigenvalue is non-negative and below
one. The containment does not run the other way: the three-index cycle identity
of \eqref{eq:dbalance} and the rank ceiling \eqref{eq:dbound} have no
counterpart among the three clauses above, and for the ceiling
Proposition~\ref{prop:multirank} below supplies the multi-terminal form.

The invertibility hypothesis is on $X$ and $Y$ and on nothing else. It does
not imply that $C[U,U]$ is
invertible, and the proof does not need it to be: for $A=\{1\}$, $B=\{2\}$ and
$C[U,U]$ the all-ones $2\times2$ matrix, both diagonal blocks are invertible
while $C[U,U]$ is singular. Under the hypothesis of \eqref{eq:mcone} the
question does not arise, since $C[U,U]$ is then a principal submatrix of a
positive-definite matrix.

The mathematics here is classical to the same extent as that of
\eqref{eq:dtrans} and \eqref{eq:dbalance} above, and we claim none of it. The
eigenvalues of $D_2D_1$ are the squared canonical correlations between the two
terminal sets under the Gaussian field with precision matrix $K_{\!f\!f}$, and
that they lie in $[0,1)$ is the defining property of canonical
correlations~\citep{Hotelling1936,Anderson2003}. The object is not new either:
structural dynamics writes the multi-degree-of-freedom transmissibility as a
product of receptance blocks, $W_{OZ}W_{JZ}^{+}$, carrying the responses at a
driven set $J$ to the responses at a read-out set $O$ when the forces act on a
third set $Z$ alone~\citep{Ribeiro2000,Maia2001}. With the forces on the driven
set itself, $Z=J$, so that $W_{JJ}$ is square and the pseudo-inverse is an
inverse, at zero frequency, where the receptance is the static compliance, and
with $J=A$ and $O=B$, the result is $C[B,A]X^{-1}$, entry for entry.
Equation~\eqref{eq:mtrans} is that matrix, and the step
$\bu_F=-K_{FF}^{-1}K_{FA}\bu_A$ behind it is Guyan reduction, or static
condensation, in its original form~\citep{Guyan1965}. That literature uses
transmissibility to identify and to reconstruct a structure taken as given,
and as far as we have found it contains none of the three statements we draw
from the object: the reciprocity identity \eqref{eq:mbalance} between the
forward and the reverse transmissibility, the spectral bound \eqref{eq:mcone}
on their product, or the rank ceiling of Proposition~\ref{prop:multirank} on
what tuning a fixed network can make them do. What is ours is that reading:
\eqref{eq:mbalance} and \eqref{eq:mcone} constrain the transmission pairs a
passive reciprocal learning network can be \emph{trained} to realise, they are
stated in four blocks all measurable at the terminals, and they do not depend
on the network behind them.

If $A$ and $B$ intersect, a shared terminal is both clamped and read, so its
row of $D_{B\leftarrow A}$ is a unit vector and carries no information;
deleting the shared terminals from $B$ returns the disjoint case, with $A$
unchanged and $B\setminus A$ in place of $B$. A single task carries no
constraint at all: given any target $D$, set $X=I$, $C[B,A]=D$ and
$Y=\lambda I$ with $\lambda>\|D\|^2$, and the result is positive definite and
realises $D$ exactly. There is no dimension deficit either: a fibre count over
the pair $(D_1,D_2)$ returns the full $2|A||B|$.
Remark~\ref{rem:multirank} below says why no count through $C_P$ could return
less. The content of Theorem~\ref{thm:multiclamp}
appears only when the same terminals occupy both roles across two tasks, and it
is then an inequality rather than a codimension.

Experiment~18 checks \eqref{eq:mbalance} and \eqref{eq:mcone} on the networks
of Sec.~\ref{sec:numerics} at $|A|,|B|\le3$, in a multi-clamp table separate
from the wedge comparison reported there: the identity holds to
$3.7\times10^{-15}$ and the spectrum lies in $[0,1)$ in $243$ of $243$ cases.

The scale-invariance count behind \eqref{eq:dbound} extends to every family of
imposed-displacement tasks on a common set of terminals.

\begin{proposition}[rank ceiling for a family of imposed-displacement tasks]
  \label{prop:multirank}
  Let $\mathcal{T}$ be a non-empty finite collection of tasks
  $\iota=(A_\iota,B_\iota)$, each a pair of non-empty disjoint sets of free
  degrees of freedom with $X_\iota=C[A_\iota,A_\iota]$ invertible; let
  $P=\bigcup_\iota(A_\iota\cup B_\iota)$, $p=|P|$, and let
  $\mathbf{D}_{\mathcal{T}}=(D_{B_\iota\leftarrow A_\iota})_{\iota\in\mathcal{T}}$
  be the tuple of the transmissions \eqref{eq:mtrans}, with
  $n_{\mathcal{T}}=\sum_\iota|B_\iota|\,|A_\iota|$ entries. Whenever $C=C^\T$,
  \begin{equation}
    \label{eq:multirank}
    \rank\frac{\partial\mathbf{D}_{\mathcal{T}}}{\partial k}
    \;\le\;
    \min\Bigl(n_b,\ \tfrac12p(p+1)-1\Bigr) .
  \end{equation}
  If moreover $a\equiv0$, so that
  $K_{\!f\!f}=\sum_bk_b\bq_b\bq_b^\T$ on the free degrees of freedom, then
  \begin{equation}
    \label{eq:multirank2}
    \rank\frac{\partial\mathbf{D}_{\mathcal{T}}}{\partial k}
    \;\le\;
    \rank\frac{\partial C_P}{\partial k}-1
    \;\le\;
    \min\Bigl(n_b,\ \tfrac12p(p+1)\Bigr)-1 ,
  \end{equation}
  the $-1$ because $k\mapsto tk$ leaves $\mathbf{D}_{\mathcal{T}}$ fixed.
  Without symmetry, for the Jacobian with respect to $(k,a)$, the same
  argument gives $\min\bigl(2n_b-1,\ p(p-1)\bigr)$.
\end{proposition}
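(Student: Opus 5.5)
Let $U_\iota=A_\iota\cup B_\iota$. By \eqref{eq:mtrans} each $D_{B_\iota\leftarrow A_\iota}=C[B_\iota,A_\iota]\,C[A_\iota,A_\iota]^{-1}$ depends on $k$ only through the principal block $C_P=C[P,P]$. So I would factor $\mathbf{D}_{\mathcal{T}}$ as $k\mapsto C_P(k)\mapsto\Phi_{\mathcal{T}}(C_P)$, with
\[
\Phi_{\mathcal{T}}:Z\mapsto\bigl(Z[B_\iota,A_\iota]\,Z[A_\iota,A_\iota]^{-1}\bigr)_\iota .
\]
This map is defined on the open set of $p\times p$ matrices whose blocks $Z[A_\iota,A_\iota]$ are invertible. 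The chain rule then gives $\rank\partial\mathbf{D}_{\mathcal{T}}/\partial k\le\min\bigl(\rank\partial C_P/\partial k,\ \rank\mathrm{d}\Phi_{\mathcal{T}}\bigr)$, and the first factor is at most $n_b$, the number of columns.

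\emph{Symmetric case.} When $C=C^\T$, the image of $\partial C_P/\partial k$ lies in $\Sym(p)$, because each column $-\bw_b|_P\bw_b|_P^\T$ is symmetric by \eqref{eq:wb}, or more generally because $C_P$ is symmetric at every $k$. So only the restriction of $\mathrm{d}\Phi_{\mathcal{T}}$ to $\Sym(p)$ matters. As in the proof of \eqref{eq:dbound}, $\Phi_{\mathcal{T}}(tZ)=\Phi_{\mathcal{T}}(Z)$ for $t\neq0$, since each factor $Z[B,A]$ scales by $t$ and each inverse by $t^{-1}$. Differentiating at $t=1$ gives $\mathrm{d}\Phi_{\mathcal{T}}|_{C_P}(C_P)=0$. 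Here $C_P\neq0$ because the blocks $X_\iota$ are invertible and $\mathcal{T}$ is non-empty, and $C_P$ is symmetric, so it is a non-zero kernel vector of the restriction to $\Sym(p)$. Hence $\rank\le\tfrac12p(p+1)-1$, which gives \eqref{eq:multirank}.

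\emph{Case $a\equiv0$.} Here $K_{\!f\!f}(tk)=tK_{\!f\!f}(k)$, so $C_P(tk)=t^{-1}C_P(k)$, and differentiating gives $\frac{\partial C_P}{\partial k}\,k=-C_P\neq0$. Thus $-C_P$ lies in the image $\mathcal{I}=\operatorname{im}\partial C_P/\partial k$, and it is also in the kernel of $\mathrm{d}\Phi_{\mathcal{T}}$. Therefore $\rank\bigl(\mathrm{d}\Phi_{\mathcal{T}}\circ\partial C_P/\partial k\bigr)=\dim\mathrm{d}\Phi_{\mathcal{T}}(\mathcal{I})\le\dim\mathcal{I}-1$, which is the first inequality of \eqref{eq:multirank2}. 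The second follows because $\dim\mathcal{I}\le\min(n_b,\tfrac12p(p+1))$: there are $n_b$ columns, and they lie in $\Sym(p)$. Equivalently, $k$ itself lies in the kernel of $\partial\mathbf{D}_{\mathcal{T}}/\partial k$.

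\emph{Non-symmetric case, $(k,a)$.} Here $\Phi_{\mathcal{T}}$ is defined on all of $\R^{p\times p}$, and scale invariance again puts $C_P$ in its kernel. The parameter side is handled by homogeneity: $K_{\!f\!f}$ is linear in $(k,a)$ jointly, so $(k,a)$ lies in the kernel of the full Jacobian, giving at most $2n_b-1$. On the target side, every transmission is a rational function of $C_P$ that is invariant under overall scaling. Scale invariance alone only gives $p^2-1$. To reach $p(p-1)$ I would use the finer invariance under $C_P\mapsto C_P\Lambda$ for diagonal $\Lambda$. This invariance would hold if each task's columns were rescaled through $X_\iota$, but it holds only per column when $|A_\iota|=1$; for general $A_\iota$, $Z[B,A]Z[A,A]^{-1}$ is invariant under $Z\mapsto ZG$ only for $G$ acting block-wise on $A$.

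This last clause is the step I expect to be the obstacle. My first fallback is to read the intended bound as a count of the entries a family on $P$ can carry: with $A_\iota,B_\iota$ disjoint, each entry of $\mathbf{D}_{\mathcal{T}}$ sits at an off-diagonal position. If one shows that right multiplication by an invertible diagonal matrix leaves every $D_{B_\iota\leftarrow A_\iota}$ up to a computable reparametrisation, the diagonal of $C_P$ drops out. The rank is then at most the $p(p-1)$ off-diagonal entries of $C_P$, matching the single-clamp count. The symmetric clauses do not depend on resolving this.
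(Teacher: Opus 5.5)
Your argument for the two symmetric clauses is correct and matches the paper's. You factor through $C_P$ and use scale invariance to put the non-zero symmetric $C_P$ in the kernel of $\mathrm{d}\Phi_{\mathcal{T}}$. For $a\equiv0$ you use $\sum_bk_b\,\partial C_P/\partial k_b=-C_P$ to place that kernel vector inside the image of $\partial C_P/\partial k$. The homogeneity bound $2n_b-1$ in the last clause is also right. The gap is the $p(p-1)$ half of the last clause, and the obstacle you name there does not exist. You claim that invariance under $Z\mapsto Z\Lambda$, with $\Lambda$ invertible diagonal, holds only when $|A_\iota|=1$; that is false. Right multiplication by a diagonal matrix rescales columns, so $(Z\Lambda)[B_\iota,A_\iota]=Z[B_\iota,A_\iota]\Lambda_{A_\iota}$ and $(Z\Lambda)[A_\iota,A_\iota]=Z[A_\iota,A_\iota]\Lambda_{A_\iota}$, where $\Lambda_{A_\iota}$ is the diagonal block of $\Lambda$ on $A_\iota$. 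This factor cancels in $Z[B_\iota,A_\iota]\,Z[A_\iota,A_\iota]^{-1}$ for every task at once, whatever the sizes and overlaps of the $A_\iota$. A diagonal matrix is block-diagonal for every partition of $P$, so this case was already covered by your own remark about $G$ acting block-wise on $A$. The invariance is exact, needs no reparametrisation, and is precisely what the paper uses.

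To finish, differentiate along $Z\mapsto Z(I+sE_{ii})$ at $s=0$. The $p$ tangent vectors $ZE_{ii}$, $i\in P$, then lie in $\ker\mathrm{d}\Phi_{\mathcal{T}}$ at $Z$. If $i$ lies in some $A_\iota$, the $i$th column of $Z$ is non-zero because $Z[A_\iota,A_\iota]$ is invertible, so $ZE_{ii}\neq0$. If $i$ lies in no $A_\iota$, $\Phi_{\mathcal{T}}$ never reads column $i$, so $E_{ii}$ itself is in the kernel. These $p$ non-zero kernel vectors are supported in distinct columns, hence independent, so $\rank\mathrm{d}\Phi_{\mathcal{T}}\le p^2-p$. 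The chain rule together with your $2n_b-1$ then gives $\min(2n_b-1,p(p-1))$.

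Note that $\Phi_{\mathcal{T}}$ is not defined on all of $\R^{p\times p}$, as you wrote. It is defined on the open set where every $Z[A_\iota,A_\iota]$ is invertible, and that invertibility is what makes the columns in $\bigcup_\iota A_\iota$ non-zero.

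Your fallback of counting the off-diagonal positions occupied by entries of $\mathbf{D}_{\mathcal{T}}$ would not give the bound on its own. Different tasks place different values at the same position, and $n_{\mathcal{T}}$ can exceed $p(p-1)$: the paper's $p=4$ family has $n_{\mathcal{T}}=24$ against $12$. The ceiling therefore has to come from the kernel of $\mathrm{d}\Phi_{\mathcal{T}}$, which the diagonal invariance supplies.
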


\begin{proof}
  Each $D_{B_\iota\leftarrow A_\iota}=C[B_\iota,A_\iota]X_\iota^{-1}$ is a
  function of $C_P=C[P,P]$ alone, so
  $\mathbf{D}_{\mathcal{T}}=\Phi_{\mathcal{T}}(C_P)$ with
  $\Phi_{\mathcal{T}}:\mathsf{H}\mapsto\bigl(\mathsf{H}[B_\iota,A_\iota]\,
  \mathsf{H}[A_\iota,A_\iota]^{-1}\bigr)_{\iota\in\mathcal{T}}$, defined and
  smooth on the open set $\mathcal{O}$ of $p\times p$ matrices on which every
  $\mathsf{H}[A_\iota,A_\iota]$ is invertible. For $t>0$,
  $\Phi_{\mathcal{T}}(t\mathsf{H})=\Phi_{\mathcal{T}}(\mathsf{H})$;
  differentiating in $t$ at $t=1$ shows that the differential of
  $\Phi_{\mathcal{T}}$ at $\mathsf{H}$ annihilates $\mathsf{H}$, and
  $\mathsf{H}\ne0$ because it has an invertible block. If
  $\mathsf{H}\in\Sym(p)$, then $\mathsf{H}$ is a non-zero vector of $\Sym(p)$
  in the kernel of the restriction of that differential to $\Sym(p)$, so the
  restriction has rank at most $\tfrac12p(p+1)-1$. At a $k$ with $C=C^\T$,
  \eqref{eq:wb} gives
  $\partial C_P/\partial k_b=-\bw_b[P]\,\bw_b[P]^\T\in\Sym(p)$, so the chain
  rule $\partial\mathbf{D}_{\mathcal{T}}/\partial k_b
  =\mathrm{d}\Phi_{\mathcal{T}}(\partial C_P/\partial k_b)$, with the
  differential taken at $C_P$, puts every one of the $n_b$ columns of
  $\partial\mathbf{D}_{\mathcal{T}}/\partial k$ in the image of $\Sym(p)$
  under that differential, and \eqref{eq:multirank} follows. If $a\equiv0$
  then $\sum_bk_b\,\partial C_P/\partial k_b=-(CK_{\!f\!f}C)[P,P]=-C_P$, so
  $C_P$ lies in the span $\mathcal{W}$ of the columns
  $\partial C_P/\partial k_b$ and is annihilated by the differential at
  $C_P$; hence $\rank\partial\mathbf{D}_{\mathcal{T}}/\partial k
  =\dim\mathrm{d}\Phi_{\mathcal{T}}(\mathcal{W})\le\dim\mathcal{W}-1$, which
  is \eqref{eq:multirank2}. Without symmetry $\Phi_{\mathcal{T}}$ on
  $\mathcal{O}$ is invariant under every column scaling
  $\mathsf{H}\mapsto\mathsf{H}\Theta$, $\Theta$ invertible diagonal, since
  $(\mathsf{H}\Theta)[B_\iota,A_\iota]\bigl((\mathsf{H}\Theta)[A_\iota,A_\iota]\bigr)^{-1}
  =\mathsf{H}[B_\iota,A_\iota]\Theta_{A_\iota}\Theta_{A_\iota}^{-1}
  \mathsf{H}[A_\iota,A_\iota]^{-1}
  =\mathsf{H}[B_\iota,A_\iota]\mathsf{H}[A_\iota,A_\iota]^{-1}$.
  The kernel of the differential at $\mathsf{H}$ therefore contains
  $\mathsf{H}e_ie_i^\T$, with $e_i$ the standard basis vector, for every
  column $i$ contained in some $A_\iota$, non-zero and in distinct positions,
  and every $e_je_i^\T$ for a column $i$ contained in no $A_\iota$, so it has
  dimension at least $p$ and the rank is at most $p^2-p$; and $K_{\!f\!f}$
  is linear in $(k,a)$, so $(k,a)\mapsto t(k,a)$ leaves
  $\mathbf{D}_{\mathcal{T}}$ fixed and the non-zero vector $(k,a)$ lies in
  the kernel of the $(k,a)$-Jacobian, which therefore has rank at most
  $2n_b-1$.
\end{proof}

Equation~\eqref{eq:dbound} is the singleton family
$\mathcal{T}=\{(\{\mu\},\{\nu\}):\mu\ne\nu\in P\}$, $p\ge2$, with
$X_\iota=C_{\mu\mu}$, $n_{\mathcal{T}}=p(p-1)$ and
$\tfrac12p(p+1)-1=\tfrac12(p-1)(p+2)$; the family
$\{(\{\mu\},P\setminus\{\mu\}):\mu\in P\}$, one terminal clamped and all
others read, returns the same rows and has the same rank. The ceiling bites
only when $n_{\mathcal{T}}>\tfrac12p(p+1)-1$, and the deficit is then at
least $n_{\mathcal{T}}-\tfrac12p(p+1)+1$. At $p=4$ with $\mathcal{T}$ the six
ordered pairs of disjoint two-element subsets, $n_{\mathcal{T}}=24$ against a
ceiling of $9$. Of the deficit of at least $15$, $(p-1)(p-2)/2=3$ is the shortfall of the symmetric
ceiling against the non-symmetric $p(p-1)=12$; the remaining $12$ lies beyond
any parametrisation through $C_P$, symmetric or not. No odd coupling restores
those $12$ dimensions.

Where the ceiling bites, experiment~20 finds it attained in $264$ of $264$
configurations on Delaunay networks of $16$ to $40$ nodes of the kind used in
Sec.~\ref{sec:numerics} ($p=3$ to $6$; the singleton, singleton-to-rest and
disjoint-subset families), with a singular-value gap of at least $10^{12}$ at
the cut. It is never exceeded in any of the $720$ rows, which also cover $8$-node
networks; there, with $n_b$ close to binding, it is missed in $34$ of $66$
configurations (Appendix~\ref{app:stats}).

For the pair of full $3\times3$ blocks that encloses the six scalar demands of
\citet{Du2026}, $\mathcal{T}=\{(A_1,B_1),(B_1,A_1)\}$ with
$A_1=\{1,2,3\}$ and $B_1=\{4,5,6\}$, the ceiling does not bite: $p=6$ and
$n_{\mathcal{T}}=18\le20=\tfrac12p(p+1)-1$, so the count constrains nothing
beyond the bond count $n_b$. Within Theorem~\ref{thm:multiclamp} the
obstruction to that task is the spectral clause \eqref{eq:mcone} alone: it excludes
an open set of pairs $(D_1,D_2)$, the target $D_2D_1\mathbf{1}=-\mathbf{1}$ of
Sec.~\ref{sec:consequences} among them, and removes no dimension, so no count
of parameters against constraints can see it.

\begin{remark}[the ceiling is sharp for the argument]
  \label{rem:multirank}
  For the singleton family the scalar direction is the whole kernel: with
  $\Phi$ the map of the proof of \eqref{eq:dbound}, if $H\in\Sym(p)$ and
  $\mathrm{d}\Phi_{C_P}(H)=0$ then $H_{ij}=C_{ij}H_{jj}/C_{jj}$ for all
  $i,j$, so $C_{ij}(H_{ii}/C_{ii}-H_{jj}/C_{jj})=0$, and $H$ is a multiple of
  $C_P$ whenever the off-diagonal entries of $C_P$ that vanish do not
  disconnect $P$. There $\rank\mathrm{d}\Phi=\tfrac12p(p+1)-1$ exactly, and
  \eqref{eq:dbound} is all that any argument through $C_P$ can give. For the
  pair of blocks enclosing the task of \citet{Du2026} the kernel is
  three-dimensional, tangent to the fibre of the map
  $C_P\mapsto(D_1,D_2)$, and the same argument returns $18=n_{\mathcal{T}}$:
  no constraint.
\end{remark}

\subsection{Attainment is a transversality condition}
\label{sec:attainment}

We now leave the change of drive and return to the force-driven block
$C[T,S]$.
Theorems~\ref{thm:main} and~\ref{thm:floor} hold at every admissible
parameter value, with no genericity assumption; the first
bounds the rank from above and the second the error from below. Whether they
are \emph{tight} is a separate question. This subsection reduces it to whether
two explicit subspaces meet in general position.

There is a coordinate-free way to see \eqref{eq:bound}. Let
$\pi_{T,S}:\Sym(n_{\mathrm{free}})\to\R^{m_T\times m_S}$ extract rows $T$ and
columns $S$ of a symmetric matrix. Two entries $(i,j)$ and $(i',j')$ of the
block come from the same entry of the symmetric matrix if and only if $i=j'$ and
$j=i'$, which forces $i,j\in P$. Identifications therefore occur only inside
the shared sub-block, and
\begin{equation*}
  \pi_{T,S}\bigl(\Sym(n_{\mathrm{free}})\bigr) \;=\; \mathcal{S}_P ,
  \qquad
  \dim = m_Tm_S-\tfrac{1}{2}p(p-1) \;=:\; d .
\end{equation*}
Let $V=\spn\{\bq_b\bq_b^\T\}\subseteq\Sym$ be the space of
stiffness perturbations the bonds can produce. Since $M\mapsto CMC$ is a linear
automorphism of $\Sym$ when $C$ is symmetric and invertible,
\begin{equation}
  \label{eq:rankid}
  \rank\frac{\partial R}{\partial k}
  \;=\;
  \dim \pi_{T,S}\bigl(CVC\bigr).
\end{equation}
This identity recovers Theorem~\ref{thm:main}. It amounts to a change of
variables. Theorem~\ref{thm:duality}
states exactly how much is lost.

Write $V^{\perp}=\{M\in\Sym:\bq_b^\T M\bq_b=0\ \forall b\}$ for the orthogonal
complement of $V$; in $KV^{\perp}K$, here and below, $K$ abbreviates
$K_{\!f\!f}$. For $Y\in\R^{m_T\times m_S}$, let
$\hat{Y}\in\R^{n_{\mathrm{free}}\times n_{\mathrm{free}}}$ denote $Y$ placed in
rows $T$ and columns $S$ with zeros elsewhere. The matrix $\hat{Y}$ is not
symmetric, so we symmetrise it and put
\begin{equation*}
  \mathcal{Y} \;=\; \bigl\{\,\tfrac12(\hat{Y}+\hat{Y}^\T)\ :\ Y\in\mathcal{S}_P\,\bigr\}
  \;\subseteq\;\Sym(n_{\mathrm{free}}) .
\end{equation*}

\begin{theorem}[the deficit is an intersection]
  \label{thm:duality}
  For every $K_{\!f\!f}=K_{\!f\!f}^\T$ invertible and every admissible $k$,
  \begin{equation}
    \label{eq:duality}
    \dim\mathcal{Y}=d
    \qquad\text{and}\qquad
    d-\rank\frac{\partial R}{\partial k}
    \;=\;
    \dim\Bigl(\mathcal{Y}\ \cap\ KV^{\perp}K\Bigr).
  \end{equation}
\end{theorem}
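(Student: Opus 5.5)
The plan is to view the rank identity \eqref{eq:rankid} as a rank--nullity computation for the composite linear map $\pi_{T,S}\circ\phi_C$, where $\phi_C:M\mapsto CMC$, restricted to $V$. The defect $d-\rank$ will be identified with a subspace of the cokernel inside $\mathcal{S}_P$, and that subspace will then be transported back to $\Sym$ by adjoints. I work throughout with the trace inner product on $\Sym(n_{\mathrm{free}})$ and the Frobenius inner product on $\R^{m_T\times m_S}$.

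The first step is $\dim\mathcal{Y}=d$. The map $\iota:Y\mapsto\tfrac12(\hat Y+\hat Y^\T)$ from $\R^{m_T\times m_S}$ to $\Sym$ is linear. On $\mathcal{S}_P$ it is injective, because $\pi_{T,S}\circ\iota$ is the identity on $\mathcal{S}_P$. To see this, note that $\pi_{T,S}(\tfrac12(\hat Y+\hat Y^\T))$ has $(i,j)$ entry $\tfrac12(Y_{ij}+\hat Y_{ji})$. Here $\hat Y_{ji}$ is nonzero only when both indices lie in $P$, and there it equals $Y_{ij}$ by the defining relations \eqref{eq:SP}; elsewhere the second term vanishes and the entry is $\tfrac12 Y_{ij}$. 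So I will actually compute $\pi_{T,S}\circ\iota$ explicitly: it is the identity on the shared sub-block and $\tfrac12$ times the identity off it. That map is still an invertible diagonal scaling of $\mathcal{S}_P$, which suffices for injectivity and gives $\dim\mathcal{Y}=d$. The same computation identifies the adjoint. For $M\in\Sym$ and $Y\in\mathcal{S}_P$ one has $\langle \iota(Y),M\rangle=\langle \hat Y,M\rangle=\langle Y,\pi_{T,S}M\rangle$, so $\iota|_{\mathcal{S}_P}$ is the adjoint of $\pi_{T,S}:\Sym\to\mathcal{S}_P$. This adjoint identity is the key point, and it is why the symmetrisation appears in the definition of $\mathcal{Y}$.

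The second step applies duality to the image. Put $W=\pi_{T,S}(CVC)\subseteq\mathcal{S}_P$, so that $d-\rank=\dim(W^{\perp}\cap\mathcal{S}_P)$. A vector $Y\in\mathcal{S}_P$ is orthogonal to $W$ iff $\langle \iota(Y),CMC\rangle=0$ for all $M\in V$. Since $C$ is symmetric, $\langle \iota(Y),CMC\rangle=\langle C\iota(Y)C,M\rangle$, so the condition is $C\iota(Y)C\in V^{\perp}$, that is, $\iota(Y)\in K V^{\perp}K$ with $K=C^{-1}$. Because $\iota$ is injective on $\mathcal{S}_P$, it carries $W^{\perp}\cap\mathcal{S}_P$ isomorphically onto $\mathcal{Y}\cap KV^{\perp}K$, and this gives \eqref{eq:duality}.

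I expect the main obstacle to be the bookkeeping of the adjoint on the shared block, namely checking that $\langle \hat Y,M\rangle=\langle Y,\pi_{T,S}M\rangle$ holds for symmetric $M$ with no factor discrepancy, and that $V^{\perp}$ as defined (through $\bq_b^\T M\bq_b=0$) coincides with the trace-orthogonal complement of $V$ in $\Sym$. Both reduce to $\langle \bq\bq^\T,M\rangle=\bq^\T M\bq$ and to the fact that $\hat Y$ pairs with $M$ only through the entries in rows $T$ and columns $S$. Once these are verified, everything else is linear algebra.
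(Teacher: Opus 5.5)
Your proof is correct and follows the paper's own argument: you show that $Y\mapsto\tfrac12(\hat Y+\hat Y^\T)$ is injective on $\mathcal{S}_P$, and then you carry the orthogonal complement of the Jacobian image inside $\mathcal{S}_P$ onto $\mathcal{Y}\cap KV^{\perp}K$ through a chain of equivalences. The differences are only in packaging. You obtain injectivity from $\pi_{T,S}\circ\iota$ being an invertible diagonal scaling of $\mathcal{S}_P$, where the paper identifies the kernel as $\mathcal{A}_P$. You also write the paper's column-by-column quadratic-form step, $\bw_b^\T\hat Y\bw_b=\bw_b^\T\tfrac12(\hat Y+\hat Y^\T)\bw_b$, as the adjoint identity $\langle\iota(Y),M\rangle=\langle Y,\pi_{T,S}M\rangle$ for symmetric $M$, applied to all of $CVC$ at once.
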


\begin{proof}
  The linear map $Y\mapsto\tfrac12(\hat{Y}+\hat{Y}^\T)$ kills $Y$ exactly when
  $\hat{Y}+\hat{Y}^\T=0$. As $\hat{Y}$ is supported in rows $T$ and columns $S$
  and $\hat{Y}^\T$ in rows $S$ and columns $T$, this happens precisely when $Y$
  is supported on the shared block and antisymmetric there, i.e.\
  $Y\in\mathcal{A}_P$. Since $\mathcal{S}_P\cap\mathcal{A}_P=0$ the map is
  injective on $\mathcal{S}_P$, giving $\dim\mathcal{Y}=d$.

  By Theorem~\ref{thm:main} the Jacobian image lies in $\mathcal{S}_P$, so its
  orthogonal complement \emph{within} $\mathcal{S}_P$ has dimension
  $d-\rank$. A functional $Y\in\mathcal{S}_P$ annihilates every column iff
  $\langle Y,\bw_b[T]\bw_b[S]^\T\rangle=0$ for all $b$, that is
  $\bw_b[T]^\T Y\bw_b[S]=0$, that is $\bw_b^\T\hat{Y}\bw_b=0$, and a quadratic
  form sees only the symmetric part, so this reads
  $\bw_b^\T\,\tfrac12(\hat{Y}+\hat{Y}^\T)\,\bw_b=0$. Substituting
  $\bw_b=C\bq_b$,
  \[
    \bq_b^\T\,C\,\tfrac12(\hat{Y}+\hat{Y}^\T)\,C\,\bq_b=0 \ \ \forall b
    \iff
    C\,\tfrac12(\hat{Y}+\hat{Y}^\T)\,C\in V^{\perp}
    \iff
    \tfrac12(\hat{Y}+\hat{Y}^\T)\in KV^{\perp}K .
  \]
  Each step is an equivalence, so the correspondence is onto as well as into:
  every $Z\in\mathcal{Y}\cap KV^{\perp}K$ is $\tfrac12(\hat{Y}+\hat{Y}^\T)$ for
  a unique $Y\in\mathcal{S}_P$, and that $Y$ annihilates every column. The
  annihilator is therefore carried isomorphically onto
  $\mathcal{Y}\cap KV^{\perp}K$, which gives \eqref{eq:duality}.
\end{proof}

\begin{remark}[the general-position answer, restated]
  \label{cor:generic}
  Under the hypotheses of Theorem~\ref{thm:duality}, suppose the bond dyads are
  independent, $\dim V=n_b$, so that
  $\dim KV^{\perp}K=\dim\Sym-n_b$. If $\mathcal{Y}$ and $KV^{\perp}K$ meet in
  general position, then
  $\dim(\mathcal{Y}\cap KV^{\perp}K)=\max(0,\,d-n_b)$ and
  \begin{equation*}
    \rank\frac{\partial R}{\partial k}=\min(n_b,\,d),
  \end{equation*}
  the bound of Theorem~\ref{thm:main}, attained.
\end{remark}

For two \emph{specific} subspaces, ``meeting in general position'' is precisely
the conclusion restated, so invoking it as a property proves nothing. What
Theorem~\ref{thm:duality} does buy is a change of question. Failures of
attainment are failures of transversality between $\mathcal{Y}$ and
$KV^{\perp}K$, and nothing else. $V$ is generated by the rigidity dyads of a graph
and is highly structured; $\mathcal{Y}$ is supported on a prescribed block.
The rigidity matroid does not decide whether they are transverse, and we leave
it open.

\subsection{Certified attainment at rational configurations}
\label{sec:certificates}

Transversality can nevertheless be \emph{proved} instance by instance, and the
proof then transfers to almost every configuration. Two observations make this
work.

First, the problem is rational. The compatibility row carries the unit vector
along the bond, whose entries are in general irrational at rational node
positions. Let $L_b$ be the bond length and $\check{\bq}_b$ the row built from
the unnormalised difference $\bx_j-\bx_i$. Then
$\bq_b\bq_b^\T=\check{\bq}_b\check{\bq}_b^\T/L_b^{2}$, so
$K=\sum_b\kappa_b\,\check{\bq}_b\check{\bq}_b^\T$ with
$\kappa_b=k_b/L_b^{2}$. Reparametrising by $\kappa$ rescales each Jacobian
column by a non-zero constant and cannot change the rank. At integer node
positions and integer $\kappa$, every quantity is rational.

Second, a rank can be certified from below in a finite field. In this
subsection $J$ denotes $\partial R/\partial\kappa$. With
$\Delta_K=\det K_{\!f\!f}\in\mathbb{Z}$, the matrix $\Delta_K^{2}J$ has integer
entries. For every prime $\ell$, reduction modulo $\ell$ can only lower the
rank, $\rank_{\mathbb{F}_\ell}\le\rank_{\mathbb{Q}}$ (we write $\ell$ for the
prime throughout, reserving $p$ for the overlap $|T\cap S|$), and for
$\ell\nmid\Delta_K$ the reduction can be computed from $K_{\!f\!f}\bmod\ell$
alone.
Theorem~\ref{thm:main} bounds the rank from above, so a mod-$\ell$ computation
that returns $\min(n_b,d)$ forces equality over $\mathbb{Q}$.

\begin{proposition}[from one certificate to almost every configuration]
  \label{prop:generic}
  Fix a graph, a choice of pinned degrees of freedom, and index sets $T,S$ among
  the remaining ones, and write $X$ for the stacked node positions
  $(\bx_1,\dots,\bx_N)$. If a single $(X_0,\kappa_0)$ with $\kappa_0>0$ satisfies
  $\rank J=\min(n_b,d)$, then there is a non-zero polynomial
  $F$ on $\R^{\delta N}\times\R^{n_b}$ ($\delta=2$, the spatial dimension of
  Sec.~\ref{sec:setup}) such that $\rank J=\min(n_b,d)$ at every
  admissible $(X,\kappa)$ with $F(X,\kappa)\neq0$. The excluded set has Lebesgue
  measure zero and empty interior.
\end{proposition}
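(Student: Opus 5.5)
The plan is the standard determinantal-minor argument of \citet{Boyer2016}, applied with the node positions as variables alongside $\kappa$. Write $r=\min(n_b,d)$. By hypothesis some $r\times r$ minor of $J$ is non-zero at $(X_0,\kappa_0)$; fix one such minor, with row set $I$ and column set $\mathcal{C}$. The goal is to show that this minor, after clearing denominators, is a polynomial in $(X,\kappa)$. The polynomial $F$ will be that cleared minor times a power of $\Delta_K$. Wherever $F\ne0$ the chosen minor is non-zero, so $\rank J\ge r$. Theorem~\ref{thm:main} gives $\rank J\le r$ at every admissible point, and this holds for the $\kappa$-parametrisation too, since rescaling columns by the non-zero constants $L_b^{2}$ does not change rank. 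Together these give equality.

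The key step is polynomiality. With unnormalised rows, $\check{\bq}_b$ has entries linear in $X$, so $K_{\!f\!f}=\sum_b\kappa_b\check{\bq}_b\check{\bq}_b^\T$ restricted to the free degrees of freedom has entries polynomial in $(X,\kappa)$. Hence $\Delta_K=\det K_{\!f\!f}$ is polynomial, and $\operatorname{adj}K_{\!f\!f}=\Delta_K C$ is polynomial. The $\kappa_b$-column of $J$ is $-(C\check{\bq}_b)[T]\,(C\check{\bq}_b)[S]^\T$, so $\Delta_K^{2}J$ has polynomial entries. Define
\[
  F(X,\kappa)\;=\;\det\bigl(\Delta_K^{2}J\bigr)[I,\mathcal{C}]\;=\;\Delta_K^{2r}\,\det J[I,\mathcal{C}].
\]
At $(X_0,\kappa_0)$ the network is admissible, so $\Delta_K\ne0$ there, and the minor is non-zero by choice. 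Hence $F(X_0,\kappa_0)\ne0$ and $F$ is a non-zero polynomial. At any admissible $(X,\kappa)$ with $F\ne0$ we have $\Delta_K\ne0$ and the minor of $J$ is non-zero, which gives the rank conclusion.

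The measure-zero and empty-interior clauses follow from the standard fact that the zero set of a non-zero real polynomial on $\R^{\delta N+n_b}$ is a closed null set. I would prove that fact by induction on the number of variables with Fubini, or simply cite it. A closed null set has empty interior.

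I expect the main obstacle to be bookkeeping rather than depth. First, the pinned degrees of freedom and the index sets $T,S$ must stay fixed as $X$ varies, so that the free restriction is a fixed coordinate selection; the statement fixes these, so this holds. Second, "admissible" must be read as $\kappa>0$ with $K_{\!f\!f}$ invertible, so that $\Delta_K\ne0$ is built into $F\ne0$ and no division by zero occurs. Third, I must check that the rank of $J$ in $\kappa$ matches the rank in $k$ that Theorem~\ref{thm:main} bounds. This holds because $L_b\ne0$ for distinct endpoints, and coincident endpoints also lie in the zero set of $F$, since $\check{\bq}_b=0$ would kill a column and hence any minor using it.
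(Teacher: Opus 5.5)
Your proof is the paper's proof: clear denominators with $\Delta_K=\det K_{\!f\!f}$, take an $r\times r$ minor of $\Delta_K^{2}J$ that is non-zero at $(X_0,\kappa_0)$, and get $\rank J\ge r$ off its zero set and $\rank J\le r$ from Theorem~\ref{thm:main}. The only difference is that the paper multiplies that minor by one more factor of $\Delta_K$, so that $F\ne0$ by itself forces invertibility, whereas you take invertibility from admissibility, which is equally valid. Separately, your side remark that coincident endpoints put a configuration in the zero set of $F$ is not true in general, since that bond's column may lie outside $\mathcal{C}$. You do not need it, because the $\kappa_b$-columns $-(C\check{\bq}_b)[T]\,(C\check{\bq}_b)[S]^\T$ already lie in $\mathcal{S}_P$, so the upper bound holds directly without any rescaling by $L_b^{2}$.
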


\begin{proof}
  Fixing the pinning makes $T$ and $S$ index a set of coordinates that does not
  move with $X$. The entries of $J$ are rational in $(X,\kappa)$, regular where
  $\det K_{\!f\!f}\neq0$, so $\det(K_{\!f\!f})^{2}J$ has polynomial entries. Let
  $r=\min(n_b,d)$ and let $F$ be the product of $\det K_{\!f\!f}$ with an
  $r\times r$ minor of $\det(K_{\!f\!f})^{2}J$ that is non-zero at
  $(X_0,\kappa_0)$; such a minor exists by hypothesis, so $F$ is a non-zero
  polynomial. Where $F\neq0$ the rank is at least $r$, and
  Theorem~\ref{thm:main} bounds it above by $r$. The zero set of a non-zero
  polynomial on $\R^{\delta N}\times\R^{n_b}$ is closed with empty interior and Lebesgue measure
  zero, and it therefore meets the open admissible region $\{\kappa>0\}$ in a
  set of measure zero.
\end{proof}

\begin{remark}
  The statement is deliberately phrased through one explicit polynomial rather
  than in Zariski-topological terms. Over $\R$ a non-empty Zariski-open subset
  of $\R^{\delta N}\times\R^{n_b}$ is open and dense in the Euclidean topology, with a complement of
  measure zero, so the two phrasings carry the same content; ``outside the zero
  set of one explicit non-zero polynomial'' is the easier one to check, and it is
  what the certificate delivers.
\end{remark}

We carried this out with $\ell=2^{61}-1$ for $432$ instances: Delaunay
triangulations of random integer points, $10$ to $36$ nodes, $m_T=m_S=m$ from
$2$ to $6$, and overlaps $p\in\{0,1,\lfloor m/2\rfloor,m\}$. The room criterion
asks for room for the predicted rank both among the free degrees of freedom and
among the parameters, $d\le\tfrac12 n_{\mathrm{free}}$ and
$d\le\tfrac12 n_\theta$. The criterion is empirical, read off Sec.~\ref{sec:numerics}, and it fails
without a genericity clause, which we do not have; here it only sorts the
instances.

\emph{Every one of the $237$ instances satisfying the room criterion is
certified.} Of the remaining $195$, $133$ certify, which makes $370$ certified
configurations in all. Each certified instance is a theorem in the sense of
Proposition~\ref{prop:generic}: for that graph, that pinning and that choice
of driven and read-out degrees of freedom, attainment holds off the zero set
of an explicit polynomial. Each certified configuration (node coordinates,
bond list, $T$, $S$ and $\kappa$) is supplied with the code, so that it can be
restated and rechecked independently.

The other $62$ do not certify. Failing to certify proves nothing on its own,
since $\rank_{\mathbb{F}_\ell}$ is only a lower bound, so we recomputed those
$62$ \emph{exactly over} $\mathbb{Q}$ with rational arithmetic. In $62$ of
$62$ the exact rank falls strictly below $\min(n_b,d)$; those are genuine
failures of attainment. The floating-point rank agrees with the mod-$\ell$ rank
in all $432$ instances, but both ranks are one-sided, so we record the
agreement as a consistency check only and not as evidence either way.

Two caveats belong with any computer-assisted result. Every computation
reported in this paper is a single implementation, unverified by an independent one, and
the certificates are for particular graphs. The one exception to the first
caveat is experiment~18, whose two claims were also checked by
a second implementation, experiment~18b, that shares no code with the first:
it rebuilds the multi-terminal transmission by solving the clamped equilibrium
directly rather than through \eqref{eq:mtrans}, and takes the Jacobian whose
projection \eqref{eq:wedge} describes by central finite differences
(Sec.~\ref{sec:numerics}). The certificate computation itself has not been
reimplemented.

What is still missing is a proof valid for \emph{all} graphs at once. We have
reduced that to the transversality of $\mathcal{Y}$ and $KV^{\perp}K$
(Theorem~\ref{thm:duality}), verified the reduction and certified a family of
instances, but we have not established the transversality in general. The
result of this section as a whole stands at two levels. The containment in a subspace of codimension $p(p-1)/2$ is proved for every fixed
network; attainment of the bound is certified at $370$ of the $432$
configurations tested, and exact rational arithmetic shows it failing at the
other $62$.


\section{What non-reciprocity buys}
\label{sec:recovery}

The containment of Theorem~\ref{thm:main} rests on reciprocity, and this
section asks what an odd coupling, which breaks it, buys back. The answer is
that, whenever the passive network's wedges span, odd couplings on
$p(p-1)/2$ bonds chosen in advance realise every small enough antisymmetric
part of the shared block exactly, in projection and at unchanged stiffnesses.
Proposition~\ref{prop:odd} records what dropping reciprocity delivers on its own;
Theorem~\ref{thm:recovery} with Corollary~\ref{cor:submersion} identifies which
directions an odd coupling restores, and from how few bonds.
Sec.~\ref{sec:price} then asks what the network pays for them.

\subsection{What the wedges restore}
\label{sec:wedges}

\begin{proposition}[the reciprocity obstruction is no longer guaranteed]
  \label{prop:odd}
  Suppose $K_{\!f\!f}\neq K_{\!f\!f}^\T$, with $K_{\!f\!f}$ invertible. Then
  the containment $R\in\mathcal{S}_P$ of Theorem~\ref{thm:main} and
  Remark~\ref{rem:global} is no longer guaranteed (the step that produces
  it, the symmetry of $\mathrm{d}C$, is unavailable), and the bound that
  holds uniformly over such networks is the count
  \begin{equation*}
    \rank \frac{\partial R}{\partial(k,a)} \;\le\; \min\bigl(2n_b,\ m_Tm_S\bigr).
  \end{equation*}
  Nothing more is asserted. The hypothesis withdraws the \emph{universal}
  guarantee of codimension $p(p-1)/2$; it does not force the realised block
  out of $\mathcal{S}_P$ at any parameter value, and it says nothing about how
  much of $\mathcal{A}_P$ a given construction reaches. That is a property of
  the construction; for the odd coupling of \eqref{eq:stiffness}
  Theorem~\ref{thm:recovery} computes it, and Sec.~\ref{sec:numerics}
  measures it for those networks and for no others.
\end{proposition}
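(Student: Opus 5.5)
The statement has three parts, and I would prove them in this order: the column count, the withdrawal of the containment argument, and the remark that nothing stronger is claimed.

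\emph{The rank bound.} The Jacobian $\partial R/\partial(k,a)$, once flattened, is an $m_Tm_S\times 2n_b$ matrix. Its columns come from \eqref{eq:jaccol} with $\mathrm{d}K=\bq_b\bq_b^\T$ or $\mathrm{d}K=\bs_b\bq_b^\T$, one column per parameter. Its rank is therefore at most the number of columns, $2n_b$, and at most the number of rows, $m_Tm_S$. This needs only the invertibility of $K_{\!f\!f}$, so that $C$ and $\mathrm{d}C=-C(\mathrm{d}K)C$ exist. Symmetry is not used.

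\emph{Why the containment is no longer available.} I would point to the single step in the proof of Theorem~\ref{thm:main} that uses symmetry. For the $a_b$ derivative,
\[
  \frac{\partial C}{\partial a_b}=-(C\bs_b)(C^\T\bq_b)^\T ,
\]
and this matrix is not symmetric in general. Its $(\tau(r),\sigma(c))$ entry is $-(C\bs_b)(r)\,(C^\T\bq_b)(c)$, which need not equal the mirrored entry. Likewise, $C\ne C^\T$ breaks the global argument of Remark~\ref{rem:global}, because the entries of $R$ at $(r,c)$ and $(c,r)$ are then two different entries of $C$. To show that the guarantee is genuinely lost, and not merely that the proof fails, I would add one concrete instance. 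The natural choice is a single odd bond near a passive network at $p=2$, with the shared sub-block computed to first order in $a_b$; the coefficient of $E_{12}-E_{21}$ is non-zero unless the wedge $\bz_b|_P\wedge\bw_b|_P$ vanishes.

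\emph{The disclaimers.} The closing sentences assert nothing positive, so I would support them with a counterexample rather than an argument. The configurations of Remark~\ref{rem:odd-hyp}, where some $a_b\neq0$ but $K_{\!f\!f}$ stays symmetric, show that non-zero odd couplings do not force $R$ out of $\mathcal{S}_P$. More to the point, for a genuinely asymmetric $K_{\!f\!f}$ the shared sub-block $C[P,P]$ can still be symmetric whenever $p\le1$. This confirms that the hypothesis says nothing about how much of $\mathcal{A}_P$ is reached.

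The one step needing care is the witness that the containment can actually fail. A small explicit example settles it, or one can defer to the wedge computation in Theorem~\ref{thm:recovery}.
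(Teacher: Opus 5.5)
Your proposal is correct, but it locates the failure of the symmetry step differently from the paper; the rank bound (columns against codomain dimension) is the same in both.

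\textbf{How the paper does it.} The paper's proof ties the hypothesis to the $k$-columns themselves. Because $K_{\!f\!f}^\T=\sum_b\bq_b(k_b\bq_b+a_b\bs_b)^\T$ has range inside the span of the restricted $\bq_b$, invertibility forces those $\bq_b$ to span $\R^{n_{\mathrm{free}}}$. The non-zero $C-C^\T$ therefore cannot annihilate all of them, and for some bond $C\bq_b\neq C^\T\bq_b$. That is the precise sense in which the one-vector-used-twice structure of \eqref{eq:wb}, the step Theorem~\ref{thm:main} actually uses, is lost.

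\textbf{Your $a$-derivative argument does not use the hypothesis.} The matrix $\partial C/\partial a_b=-(C\bs_b)(C^\T\bq_b)^\T$ is non-symmetric at $a=0$ too, where $K_{\!f\!f}=K_{\!f\!f}^\T$; it is exactly what Theorem~\ref{thm:recovery} exploits. So its asymmetry is independent of $K_{\!f\!f}\neq K_{\!f\!f}^\T$. It also does not bear on Theorem~\ref{thm:main}, whose containment concerns $\partial R/\partial k$ only.

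\textbf{Your global observation does use it.} The point that $C\neq C^\T$ leaves the mirrored entries $C_{rc}$ and $C_{cr}$ of $R$ unconstrained rests on the hypothesis and suffices to withdraw the global containment of Remark~\ref{rem:global}. What the paper's spanning lemma adds is that the tangent-level argument visibly fails in some $k$-column, not merely that its premise is absent.

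\textbf{Your first-order witness goes beyond the paper.} The paper deliberately claims only that the reciprocity argument is absent. Your computation is correct: near a passive network the antisymmetric part of the shared block is $-\tfrac{a_b}{2}\,\bz_b|_P\wedge\bw_b|_P+O(a_b^2)$. But you still have to exhibit one configuration with $\bz_b|_P\wedge\bw_b|_P\neq0$. Theorem~\ref{thm:recovery} does not supply one; the paper only measures it.

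\textbf{Your support for the disclaimers is weak, but unnecessary.} The configurations of Remark~\ref{rem:odd-hyp} have $K_{\!f\!f}$ symmetric, so they lie outside the hypothesis. The $p\le1$ case says nothing about how much of $\mathcal{A}_P$ is reached, since $\mathcal{A}_P=\{0\}$ there. Because the disclaimers assert nothing, none of this is needed.
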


\begin{proof}
  If $K_{\!f\!f}\neq K_{\!f\!f}^\T$ then $C\neq C^\T$, and by \eqref{eq:jaccol}
  the columns are $-\bx[T]\,\by[S]^\T$ with $\bx\in\{C\bq_b,C\bs_b\}$ and
  $\by=C^\T\bq_b$. At least one of these pairs is distinct: the range of
  $K_{\!f\!f}^\T$ lies in the span of the restricted $\bq_b$, so invertibility
  makes them span $\R^{n_{\mathrm{free}}}$, and $C-C^\T\ne0$ then fails to
  annihilate some $\bq_b$. The step in
  Theorem~\ref{thm:main} that used $\bx=\by$, equivalently the symmetry of
  $\mathrm{d}C$, is therefore unavailable, so \eqref{eq:SP} is no longer implied.
  That is all the argument gives: $\bx\ne\by$ does not make the shared block of
  $\bx[T]\by[S]^\T$ asymmetric, and nothing here excludes further entry
  relations or rank deficiencies arising
  from the particular geometry, topology or parametrisation; the claim is only
  that the reciprocity argument is absent. The stated inequality is the column count against the dimension of
  the codomain.
\end{proof}

\begin{remark}[why the hypothesis is on $K_{\!f\!f}$ and not on $a$]
  \label{rem:odd-hyp}
  The natural-looking hypothesis ``some $a_b\neq0$'' is not enough, and we state
  the stronger one because the weaker one is false. A bond whose two endpoints
  are pinned except along a single free direction contributes a $1\times1$ block
  to $K_{\!f\!f}$, and a $1\times1$ block is symmetric whatever $a_b$ is. With
  the pinning used here such a bond is constructible: one joining the fully
  pinned node
  to the partly pinned one touches exactly one free coordinate, and setting
  $a_b=3.7$ on it leaves $\|K_{\!f\!f}-K_{\!f\!f}^\T\|_F$ at machine zero, below
  $10^{-15}$, so $C$ is symmetric and the realised block lies in
  $\mathcal{S}_P$ after all.

  Nor is the counting bound always the best available, which is why we do not
  call it the only one: for such a bond $\partial K/\partial k_b$ and
  $\partial K/\partial a_b$ restrict to proportional $1\times1$ blocks, so two
  of the $2n_b$ columns are parallel and counting alone already gives $2n_b-1$.
  Neither claim enters a proof: the configuration is cited elsewhere only
  to explain the hypothesis on $K_{\!f\!f}$, to note that
  Theorem~\ref{thm:recovery} covers it, and to list it among the one-off checks under Code
  availability. The proposition is deliberately
  negative.
\end{remark}

\begin{remark}[the limits of the proposition]
  Proposition~\ref{prop:odd} is a negative result. It withdraws a guarantee,
  and leaves open how much of $\R^{m_T\times m_S}$ the network's own vectors
  actually reach. It would be easy, and wrong, to argue from the fact that
  unconstrained outer products $\bx[T]\by[S]^\T$ span everything; the columns
  are not unconstrained. Theorem~\ref{thm:recovery} answers the part of that
  question which concerns $\mathcal{A}_P$, and answers it by proof rather than
  by measurement.
\end{remark}

For the odd coupling of \eqref{eq:stiffness}, the question the remark leaves
open has an exact answer, computable at the passive network before any odd
bond is switched on.

The spaces $\mathcal{A}_P$ and $\Lambda^2\R^p$ have the same dimension,
$\dim\mathcal{A}_P=p(p-1)/2=\dim\Lambda^2\R^p$. Fixing the ordering of $P$ used in \eqref{eq:SP} identifies
them, by sending $E_{\tau(r)\sigma(c)}-E_{\tau(c)\sigma(r)}$ to
$e_r\wedge e_c$, with the convention
$\bu\wedge\bv=\bu\bv^\T-\bv\bu^\T$ used throughout. The identification depends
on that ordering; the spans below do not.

\begin{theorem}[what non-reciprocity recovers]
  \label{thm:recovery}
  Let $K_{\!f\!f}=K_{\!f\!f}^\T$ be invertible, $C=K_{\!f\!f}^{-1}$, and set
  $\bw_b=C\bq_b$ and $\bz_b=C\bs_b$. At any such configuration, whether $a=0$
  or one of the exceptional $a\neq0$ of Remark~\ref{rem:odd-hyp},
  \begin{equation}
    \label{eq:wedge}
    \Pi_{\mathcal{A}_P}\Bigl(\operatorname{im}
      \frac{\partial R}{\partial(k,a)}\Bigr)
    \;=\;
    \spn\bigl\{\,\bz_b|_P\wedge\bw_b|_P \;:\; b=1,\dots,n_b\,\bigr\}
    \;\subseteq\; \Lambda^2\R^p .
  \end{equation}
  The odd branch therefore recovers the whole of the subspace reciprocity
  removes if and only if those $n_b$ wedges span $\Lambda^2\R^p$. A single bond
  contributes nothing exactly when $\bz_b|_P$ and $\bw_b|_P$ are linearly
  dependent, which includes the case where one of them vanishes, and every
  contribution vanishes when $p\le1$.
\end{theorem}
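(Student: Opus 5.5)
The plan is to compute the image of each Jacobian column under $\Pi_{\mathcal{A}_P}$ directly from the closed form \eqref{eq:jaccol}, using $C=C^\T$ at the configuration considered. Since the image of the Jacobian is the span of its columns and $\Pi_{\mathcal{A}_P}$ is linear, the projected image is the span of the projected columns. So \eqref{eq:wedge} reduces to a column-by-column identification.

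\emph{Step 1: stiffness columns.} With $C=C^\T$, \eqref{eq:wb} gives $\partial R/\partial k_b=-\bw_b[T]\,\bw_b[S]^\T$. Its shared sub-block is $-\bw_b|_P\bw_b|_P^\T$, which is symmetric. Following the proof of Theorem~\ref{thm:floor}, $\Pi_{\mathcal{A}_P}$ retains only the antisymmetric part of the shared sub-block, so these columns project to zero. This is exactly the argument of Theorem~\ref{thm:main}, and it uses only the symmetry of $K_{\!f\!f}$. It therefore holds at the exceptional $a\ne0$ of Remark~\ref{rem:odd-hyp} as well.

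\emph{Step 2: odd columns.} Next I would use $\partial K/\partial a_b=\bs_b\bq_b^\T$, restricted to the free degrees of freedom as the setup stipulates. Then \eqref{eq:jaccol} gives $\partial R/\partial a_b=-(C\bs_b)[T]\,(C^\T\bq_b)[S]^\T=-\bz_b[T]\,\bw_b[S]^\T$, again using $C^\T=C$. Its shared sub-block is $-\bz_b|_P\bw_b|_P^\T$, whose antisymmetric part is
\[
-\tfrac12\bigl(\bz_b|_P\bw_b|_P^\T-\bw_b|_P\bz_b|_P^\T\bigr)=-\tfrac12\,\bz_b|_P\wedge\bw_b|_P .
\]
Under the stated identification of $\mathcal{A}_P$ with $\Lambda^2\R^p$, the projection of this column is a non-zero multiple of the wedge. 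Here one has to check that the identification map, sending $E_{\tau(r)\sigma(c)}-E_{\tau(c)\sigma(r)}$ to $e_r\wedge e_c$, is compatible with reading off the antisymmetric part of the shared block. This holds because both sides are linear and agree on the basis. Scalar factors do not affect spans. Combining Steps 1 and 2 yields \eqref{eq:wedge}.

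\emph{Step 3: consequences.} Full recovery of $\mathcal{A}_P$ by the odd branch is, by \eqref{eq:wedge}, exactly the condition that the span of the wedges is the whole of $\Lambda^2\R^p$. A single wedge $\bu\wedge\bv$ vanishes if and only if $\bu$ and $\bv$ are linearly dependent. This is the elementary fact that $\bu\bv^\T=\bv\bu^\T$ forces proportionality, and the case where one vector is zero is included. For $p\le1$ we have $\Lambda^2\R^p=0$, so every contribution vanishes.

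I do not expect a serious obstacle. The one delicate point is to keep the two factors straight in \eqref{eq:jaccol}: the odd derivative contributes $C\bs_b$ on the read side and $C^\T\bq_b$ on the drive side. It is this asymmetry between the factors, and not any asymmetry of $C$, that produces the wedge, which is why only $K_{\!f\!f}=K_{\!f\!f}^\T$ is needed.
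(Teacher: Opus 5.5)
Your proposal is correct and follows the paper's proof essentially step for step. The stiffness columns project to zero because their shared block is symmetric. Each odd column $-\bz_b[T]\,\bw_b[S]^\T$ has a shared block whose antisymmetric part is $-\tfrac12\,\bz_b|_P\wedge\bw_b|_P$. Linearity of $\Pi_{\mathcal{A}_P}$ then turns this column-by-column computation into the span identity. Your explicit checks of the identification $\mathcal{A}_P\cong\Lambda^2\R^p$ and of when a single wedge vanishes supply details the paper leaves implicit.
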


\begin{proof}
  By Theorem~\ref{thm:main} every $\partial R/\partial k_b$ lies in
  $\mathcal{S}_P$ and projects to zero. By hypothesis $C=C^\T$, so
  \eqref{eq:jaccol} with $\bu=\bs_b$ and $\bv=\bq_b$ gives
  $\partial R/\partial a_b=-\bz_b[T]\,\bw_b[S]^\T$, whose shared sub-block is
  $-\bz_b|_P\,\bw_b|_P^\T$ and whose antisymmetric part is therefore
  $-\tfrac12\bigl(\bz_b|_P\bw_b|_P^\T-\bw_b|_P\bz_b|_P^\T\bigr)
  =-\tfrac12\,\bz_b|_P\wedge\bw_b|_P$. $\Pi_{\mathcal{A}_P}$ is linear, so the projection of a span is the span of
  the projections.
\end{proof}

\begin{remark}[the raw wedge is an edge, not a direction]
  \label{rem:edgewedge}
  In two dimensions the bond's direction and length drop out of the wedge
  before the compliance acts on it. With $\epsilon=\left(\begin{smallmatrix}0&1\\-1&0\end{smallmatrix}
  \right)$ and the degrees of freedom ordered node by node,
  \begin{equation}
    \label{eq:edgewedge}
    \bs_b\wedge\bq_b\;=\;-\,\bigl[(e_i-e_j)(e_i-e_j)^\T\bigr]\otimes\epsilon
    \qquad\text{for a bond } b=\{i,j\},
  \end{equation}
  since $\bq_b=(e_j-e_i)\otimes\hat n_b$ and $\bs_b=(e_j-e_i)\otimes\hat t_b$,
  whose outer product is even in that sign, with $\hat t_b=\epsilon^\T\hat n_b$,
  and
  $\hat t\hat n^\T-\hat n\hat t^\T=-\epsilon$ whatever the direction, because
  $SO(2)$ acts trivially on $\Lambda^2\R^2$. Hence
  $\bz_b\wedge\bw_b=C(\bs_b\wedge\bq_b)C$, and what separates one odd bond from
  another in \eqref{eq:wedge} is which pair of nodes it joins, conjugated by the
  compliance; the bond's own direction and length do not enter, though $C$
  carries the geometry of the whole network. The cancellation is two-dimensional:
  in three dimensions $\bs_b$ is not determined by the bond, and as the
  transverse direction turns, the wedge sweeps the whole two-dimensional family
  of $2$-forms $\hat n_b\wedge\hat t$ with $\hat t\perp\hat n_b$, in place of
  the single fixed $\epsilon$.
\end{remark}

Breaking reciprocity to regain what it forbids has been done in another
medium: \citet{Li2025} break it throughout a surface-plasmonic network, with
magnetically biased ferrite, so that its forward and backward paths carry
independently programmed functions, rather than on bonds selected in advance.
An odd bond is the distributed counterpart of the gyrator of network
synthesis~\citep{Tellegen1948,Belevitch1968}, which likewise adds one wedge to
an immittance matrix; synthesis places its gyrators freely, whereas here the
wedges are fixed by the response fields of the built network.

Four consequences follow: two counts, a selection rule and a decoupling.

\emph{At least $p(p-1)/2$ bonds must contribute, with the stiffnesses held.} A
wedge is a single element of
$\Lambda^2\R^p$, so fewer bonds with a non-zero wedge than $\dim\Lambda^2\R^p$
cannot span it. The count is read off \eqref{eq:wedge}, so it bounds the odd
bonds needed to reach $\mathcal{A}_P$ from a passive network at fixed $k$. The
count is not a bound on the finite problem with $k$ free as well. Once
$a\neq0$, the derivatives in $k$ no longer lie in $\mathcal{S}_P$. We are not
aware of a lower bound of this kind on the non-reciprocal side of the problem.

Experiment~21, part~8, runs both cases (Appendix~\ref{app:stats}). In its
stiffness-free arm, on $40$-node networks at $m=p=5$, where the count asks for
ten bonds, three odd bonds realise a prescribed antisymmetric block of relative
size $10^{-2}$ in nine of nine (network, target) pairs, with every $k_b>0$ and
the symmetric part of $K_{\!f\!f}$ still positive definite; three bonds suffice,
but not every triple does. Its negative control runs the pivoted-$QR$ triple
at fixed $k$, at the smaller target $10^{-4}$, and solves none of the nine: it
stalls at the component of the target orthogonal to the span of its three
wedges, to within a relative $8.4\times10^{-5}$. A second fixed-$k$ control, on the ten bonds the count does ask
for, is run up the full ladder to $10^{-2}$, with mixed results;
Remark~\ref{rem:submersion} reports them.

\emph{Conversely, $p(p-1)/2$ bonds suffice to recover the antisymmetric
complement, though not necessarily the full block.} The part of the block
inside $\mathcal{S}_P$ is the ordinary attainment question that the room
criterion sorts; switching on $p(p-1)/2$ odd bonds adds only $p(p-1)/2$
parameters to that question. When that question is answered affirmatively at
$k_0$, Corollary~\ref{cor:submersion} reaches the full block exactly near the passive
response. Those bonds do settle $\mathcal{A}_P$: whenever $p(p-1)/2$ of
them have independent wedges, \eqref{eq:wedge} delivers the whole of the
antisymmetric complement from those bonds alone, and every other bond may stay
passive. Corollary~\ref{cor:submersion} makes that exact rather than first
order, on a neighbourhood whose size is governed in part by the conditioning of
the selected wedges (Remark~\ref{rem:submersion}). The \emph{number} of active
elements is therefore bounded by the codimension rather than by the size of the
network. Keeping that number small matters because each odd bond exerts its
own couple (Sec.~\ref{sec:setup}); Proposition~\ref{prop:torquefree} cancels
their net torque but not the couples themselves. Corollary~\ref{cor:price} gives, to first order, the
least Euclidean norm of the coupling $a$ that reaches a small antisymmetric
target; how small the couples themselves can be made is left open.

\emph{The choice can be made before anything is built.} Both $\bw_b$ and $\bz_b$
are response fields of the passive network, obtained from one factorisation of
$K_{\!f\!f}$ followed by a batched set of right-hand-side solves, with no active
element present. Like the error floor of Theorem~\ref{thm:floor}, the choice
is computed in advance rather than measured afterwards. Selection is a
rank-revealing problem rather than a ranking one: each wedge is vectorised by
its $p(p-1)/2$ strictly upper entries, as one column of the wedge matrix
$\mathsf{W}$ (Sec.~\ref{sec:price}), and a pivoted $QR$ of that (unnormalised) matrix
selects $p(p-1)/2$ of them, the greedy selection also used to place
sensors~\citep{Manohar2018}. The smallest singular value $\sigma_{\min}$ of the
selected set reports how well conditioned the choice is. It is also one of the
two quantities that set the size of the neighbourhood on which
Corollary~\ref{cor:submersion} applies (Remark~\ref{rem:submersion}). Ordering
the bonds by $\|\bz_b|_P\wedge\bw_b|_P\|$ alone would not do, since several
large wedges can be nearly parallel. Nor would the rank-greedy rule, which
accepts any wedge not yet in the span and so certifies independence but not
conditioning. Experiment~18 uses the pivoted $QR$, a heuristic that
Sec.~\ref{sec:price} measures against the optimal selection.

\emph{The theorem decouples the two rank questions.} The comparison of
Sec.~\ref{sec:numerics} needed two rank equalities, that the passive image
reaches $\mathcal{S}_P$ and that the odd image reaches all of
$\R^{m_T\times m_S}$, and it excluded the configurations where the first fails.
Equation~\eqref{eq:wedge} is a statement about $\Pi_{\mathcal{A}_P}$ of the
image and needs neither, so it covers the room-limited configurations as well.

On a given graph, the wedges span generically as soon as they span at one
configuration, as the argument below shows. That they span at all is measured
(Sec.~\ref{sec:numerics}) and not proved: the rationalisation of
Sec.~\ref{sec:certificates} would carry over to it, but we have not run those
certificates on the wedge condition. The genericity argument carries over to
the odd branch as follows. Writing $\check{\bs}_b$ for the row built from the
unnormalised perpendicular of $\bx_j-\bx_i$, one has
$\bs_b\bq_b^\T=\check{\bs}_b\check{\bq}_b^\T/L_b^{2}$, so that with
$\check{a}_b=a_b/L_b^{2}$ the operator
$K=\sum_b(\kappa_b\check{\bq}_b+\check{a}_b\check{\bs}_b)\check{\bq}_b^\T$ is
polynomial in the node positions and $(\kappa,\check{a})$; the rescaling
multiplies each odd Jacobian column, and each wedge, by a non-zero constant,
so no rank and no span changes.
The wedge entries are rational in the node positions and
$(\kappa,\check{a})$, so a $\delta_p\times\delta_p$ minor of the wedge matrix
that is non-zero at one configuration clears to a non-zero polynomial, and, by
the argument of Proposition~\ref{prop:generic}, the wedges span at every
configuration off its zero set. The target there has dimension
$p(p-1)/2$ rather than $m_Tm_S$, so the certificate is correspondingly cheaper
and the room criterion does not bind.

Theorem~\ref{thm:recovery} is a statement about the tangent space wherever
$K_{\!f\!f}$ is symmetric, and in particular at $a=0$.
The inverse function theorem turns it into one about the response itself.

\begin{corollary}[the recovery is exact in projection, not only to first order]
  \label{cor:submersion}
  Fix an admissible $k_0$ with $K_{\!f\!f}(k_0,0)$ invertible, and suppose
  the wedges $\bz_b|_P\wedge\bw_b|_P$, $b=1,\dots,n_b$, formed at $(k_0,0)$,
  span $\Lambda^2\R^p$. Then $F:a\mapsto\Pi_{\mathcal{A}_P}R(k_0,a)$ is a
  submersion at $a=0$ with $F(0)=0$, so the image of every neighbourhood of
  $a=0$ contains a neighbourhood of $0$ in $\mathcal{A}_P$. That is, every
  antisymmetric shared block of sufficiently small norm is realised exactly,
  at $k=k_0$ unchanged, by some odd-coupling vector $a$ near $0$, which may be
  supported on any $p(p-1)/2$ bonds whose wedges are independent. The
  symmetric part of the response is not controlled by this clause
  (Remark~\ref{rem:submersion}). If in addition
  $\rank\partial R/\partial(k,a)=m_Tm_S$ at $(k_0,0)$, which holds in
  particular when $\rank\partial R/\partial k=d=m_Tm_S-p(p-1)/2$ at $k_0$,
  the $d$ of Sec.~\ref{sec:attainment}, then $R$ itself is a submersion at $(k_0,0)$
  and the image of every neighbourhood of $(k_0,0)$ contains a neighbourhood
  of $R(k_0,0)$. Every target sufficiently close to the passive response,
  antisymmetric part included, is reached exactly.
\end{corollary}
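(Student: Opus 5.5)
The argument is the constant-rank (submersion) form of the inverse function theorem, applied twice: once to the antisymmetric projection at fixed $k_0$, and once to the full response map. The input in both cases is the first-order information in Theorem~\ref{thm:recovery} and Theorem~\ref{thm:main}, evaluated at the passive point $(k_0,0)$.

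\emph{Smoothness and the base value.} Near $(k_0,0)$, $K_{\!f\!f}(k,a)$ is affine in $(k,a)$ and is invertible at $(k_0,0)$. Invertibility is an open condition, so $C=K_{\!f\!f}^{-1}$, and hence $R=C[T,S]$, is real-analytic on a neighbourhood of $(k_0,0)$. At $a=0$ the operator $K_{\!f\!f}$ is symmetric, so Remark~\ref{rem:global} gives $R(k_0,0)\in\mathcal{S}_P$ and therefore $F(0)=\Pi_{\mathcal{A}_P}R(k_0,0)=0$.

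\emph{First clause.} The differential of $F$ at $a=0$ is $\Pi_{\mathcal{A}_P}\circ\partial R/\partial a$. The proof of Theorem~\ref{thm:recovery} identifies its columns: under the identification of $\mathcal{A}_P$ with $\Lambda^2\R^p$, they are $-\tfrac12\,\bz_b|_P\wedge\bw_b|_P$. By hypothesis these span $\Lambda^2\R^p$, so $\mathrm{d}F_0$ is onto $\mathcal{A}_P$ and $F$ is a submersion at $0$. To get the support statement, choose $p(p-1)/2$ bonds whose wedges are independent, and restrict $F$ to the coordinate subspace $\R^{\mathcal{B}}$ of $a$ supported on those bonds. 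The restricted differential is then a square, invertible linear map $\R^{\mathcal{B}}\to\mathcal{A}_P$. The inverse function theorem makes $F|_{\R^{\mathcal{B}}}$ a local diffeomorphism from a neighbourhood of $0$ onto a neighbourhood of $0$ in $\mathcal{A}_P$. Every sufficiently small antisymmetric shared block is therefore $F(a)$ for a unique small $a$ supported on $\mathcal{B}$, with $k=k_0$ untouched.

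For the "image of every neighbourhood" phrasing, I apply the same local diffeomorphism to smaller neighbourhoods. I will state explicitly that the symmetric part $\Pi_{\mathcal{S}_P}R(k_0,a)$ moves along with $a$ and is not prescribed here.

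\emph{Second clause.} Suppose $\rank\partial R/\partial(k,a)=m_Tm_S$ at $(k_0,0)$. Then $R$ has surjective differential there, and the submersion theorem (the inverse function theorem applied after restricting to $m_Tm_S$ coordinates whose columns are independent) shows that $R$ maps every neighbourhood of $(k_0,0)$ onto a neighbourhood of $R(k_0,0)$ in $\R^{m_T\times m_S}$. I also need a small admissibility check: $k_b>0$ and invertibility are open conditions, so the parameters that realise the nearby targets can be kept admissible by shrinking the neighbourhood.

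\emph{The sufficient condition for full rank.} The one step needing an argument is that $\rank\partial R/\partial k=d$ at $k_0$ implies full rank $m_Tm_S$. The image of $\partial R/\partial(k,a)$ contains $\operatorname{im}\partial R/\partial k$. By Theorem~\ref{thm:main} this lies in $\mathcal{S}_P$, and since its dimension is $d=\dim\mathcal{S}_P$, it equals $\mathcal{S}_P$. Now take $Y=Y_s+Y_a\in\R^{m_T\times m_S}$. By Theorem~\ref{thm:recovery} there is a combination of the $a$-columns whose $\Pi_{\mathcal{A}_P}$ projection is $Y_a$. Its $\mathcal{S}_P$ component can be corrected using the $k$-columns, which span $\mathcal{S}_P$. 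The image is therefore all of $\mathcal{S}_P\oplus\mathcal{A}_P$.

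\emph{Where care is needed.} None of the steps is difficult. The only point requiring care is this last decomposition: the $a$-columns are not themselves antisymmetric, so the argument must use the $k$-columns to absorb their symmetric parts rather than treating the $a$-columns as pure $\mathcal{A}_P$ directions.
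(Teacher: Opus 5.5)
Your proposal is correct and follows the paper's proof essentially step for step. The shared steps are smoothness from invertibility, $F(0)=0$ from Remark~\ref{rem:global}, surjectivity of $\mathrm{d}F_0$ from the wedge formula of Theorem~\ref{thm:recovery}, a coordinate slice on $p(p-1)/2$ bonds with independent wedges, and, for the second clause, the observation that $\operatorname{im}\partial R/\partial k=\mathcal{S}_P$ absorbs the symmetric part of the $a$-columns. The paper adds one detail: the radius given by the inverse function theorem depends on the chosen support, so it makes the statement uniform over ``any'' admissible support by taking the smallest radius over the finitely many such supports.
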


\begin{proof}
  The entries of $C=K_{\!f\!f}^{-1}$ are rational in $(k,a)$, regular on the
  open set $\{\det K_{\!f\!f}\neq0\}\ni(k_0,0)$, so $R$ and $F$ are smooth
  there and $\mathrm{d}F_0=\Pi_{\mathcal{A}_P}\circ\partial R/\partial a$. By
  the proof of Theorem~\ref{thm:recovery} its image is the span of the
  wedges, identified with a subspace of $\mathcal{A}_P$, which is all of
  $\mathcal{A}_P$ by hypothesis; $F$ is a submersion at $0$. The local
  submersion theorem (the inverse function theorem on a slice through $0$
  complementary to $\ker\mathrm{d}F_0$) then gives that the image under $F$
  of every neighbourhood of $0$ contains a neighbourhood of $F(0)$. The slice
  may be taken to be the coordinate subspace of any $p(p-1)/2$ bonds whose
  wedges are independent, since $\mathrm{d}F_0$ restricted to those
  coordinates is then invertible; the realising $a$ can therefore be
  supported on those bonds alone, with every other bond passive.
  The radius this produces depends on the slice, but the admissible supports
  are the subsets of $\{1,\dots,n_b\}$ of size $p(p-1)/2$ whose wedges are
  independent, of which there are at most $\binom{n_b}{p(p-1)/2}$; taking the
  smallest radius over that finite collection makes the statement uniform in
  the choice of support, as stated. Finally
  $F(0)=0$ because $C$ is symmetric at $a=0$, so $R(k_0,0)\in\mathcal{S}_P$
  (Remark~\ref{rem:global}).

  For the second clause let $J=\partial R/\partial(k,a)$ at $(k_0,0)$. The
  parameter set $\{k>0\}\times\R^{n_b}$ is open and $(k_0,0)$ is interior to
  it, so if $\rank J=m_Tm_S$ the same local submersion theorem applies to $R$
  and gives the conclusion. It remains to see that the rank hypothesis on
  $\partial R/\partial k$ forces $\rank J=m_Tm_S$.
  Theorem~\ref{thm:main} places $\operatorname{im}\,\partial R/\partial k$
  inside $\mathcal{S}_P$, whose dimension is $d$, so that hypothesis gives
  $\mathcal{S}_P=\operatorname{im}\,\partial R/\partial k\subseteq
  \operatorname{im}J$. For $A\in\mathcal{A}_P$ pick $w\in\operatorname{im}J$
  with $\Pi_{\mathcal{A}_P}w=A$, as \eqref{eq:wedge} and the wedge hypothesis
  allow; then $A=w-\Pi_{\mathcal{S}_P}w\in\operatorname{im}J+\mathcal{S}_P
  =\operatorname{im}J$. Hence $\operatorname{im}J\supseteq\mathcal{S}_P
  \oplus\mathcal{A}_P=\R^{m_T\times m_S}$.
\end{proof}

\begin{remark}[the limits of the corollary]
  \label{rem:submersion}
  Five limits belong with it. \emph{No uniform radius.} ``Sufficiently
  small'' is local: the radius is set by $\sigma_{\min}$ of $\mathrm{d}F_0$,
  the wedge matrix, and by the curvature of $R$, and nothing here bounds
  either from below. Stability lives inside that same unquantified radius and
  is not separately guaranteed. Positive
  definiteness of the symmetric part of $K_{\!f\!f}$ is an open condition, so
  it survives on some neighbourhood of $a=0$ along with everything else; but
  that neighbourhood is not bounded here either, and it is not the same for
  every construction. On the sparse construction this corollary supplies (odd
  couplings on the $p(p-1)/2$ bonds of the pivoted-$QR$ selection, every
  other bond passive), experiment~19 finds positive definiteness lost in
  none of the $336$ configurations at $\varepsilon=10^{-4}$ or $10^{-3}$, in
  $28$ at $10^{-2}$ and in $261$ at $10^{-1}$; the rank-greedy selection of
  the same size loses it in $21$, $121$, $238$ and $311$, already at the
  smallest target; and letting every odd coupling be free, which is not the sparse
  construction the corollary supplies, loses it in none, none, one and $115$. Concentrating
  the couplings on the few bonds the count allows is what costs the stability,
  and the selection rule decides how much. Experiment~21, part~8, also has a
  ten-bond fixed-$k$ arm, which runs this corollary's own count (the
  three-bond arms quoted earlier in this section do not). It sees the
  same thing from the other side: on $40$-node networks at
  $m=p=5$, where the count asks for ten bonds, the couplings those ten need at
  fixed $k$ grow to the size of the stiffnesses they sit on, and the $10^{-2}$
  target is missed in $5$ of the $9$ (network, target) pairs. Four of the five
  halt below it, at a median target size of $4.4\times10^{-3}$, two of those
  with $\operatorname{sym}K_{\!f\!f}$ driven to the edge of positive
  definiteness; the fifth carries the ladder to $10^{-2}$ and fails to converge
  there with the symmetric part still definite. The remaining $4$ of the $9$
  carry the ladder through and solve the
  $10^{-2}$ target to a relative residual below $10^{-12}$.
  What is missing is a radius, and the stability is one of the things inside
  it.
  \emph{The symmetric part moves.} The first clause fixes
  $\Pi_{\mathcal{A}_P}R$ and says nothing about $\Pi_{\mathcal{S}_P}R$;
  undoing that change is what the $k$-branch of the second clause is for.
  \emph{The second clause is conditional.} It assumes that the passive
  network attains the bound of Theorem~\ref{thm:main} at $k_0$ (more
  generally, that $\rank\partial R/\partial(k,a)=m_Tm_S$ there), which this
  paper certifies instance by instance (Sec.~\ref{sec:certificates}) and does
  not prove in general. \emph{Reaching is not learning.} The floor of
  Theorem~\ref{thm:floor} is removed by the odd branch, which moves $R$ out
  of $\mathcal{S}_P$, not lowered by any tuning of $k$;
  Sec.~\ref{sec:learning} measures trainability separately. \emph{The network
  is active.} The cost is the one stated after \eqref{eq:stiffness}; the
  sparse form of the first clause confines it to $p(p-1)/2$ bonds, and the
  selection rule matters. Experiment~19 gives an empirical lower bound on the radius for
  its ensemble, and puts numbers on the stability, the symmetric part, the conditional clause and the
  selection rule (Sec.~\ref{sec:numerics}).
\end{remark}

Theorem~\ref{thm:recovery} is also the easiest statement in this paper to
refute: a network whose wedges span $\Lambda^2\R^p$ but whose odd branch fails
to reach some direction of $\mathcal{A}_P$, or one whose odd branch reaches a
direction its wedges do not span, would falsify \eqref{eq:wedge} outright.
Experiment~18 looks for both and finds neither (Sec.~\ref{sec:numerics}).


\subsection{The price of non-reciprocity}
\label{sec:price}

Theorem~\ref{thm:recovery} and its consequences bound the number of odd bonds
by the codimension, but leave open how strongly those bonds must act and what
the network pays for it. The answer has three parts: a lower bound on the
non-reciprocity of any realising network with a positive-definite symmetric
part, fixed by the target alone
(Proposition~\ref{prop:ratio}); to first order, the least coupling that
delivers a small antisymmetric block, together with the bond selection that
minimises its worst case over targets (Corollary~\ref{cor:price}); and a class
of realisations that exerts no net torque (Proposition~\ref{prop:torquefree}).

For a real square matrix $Z$ write $\operatorname{anti}Z=\tfrac12(Z-Z^\T)$
beside $\operatorname{sym}Z$, and, when $\operatorname{sym}Z\succ0$,
\begin{equation*}
  \eta(Z) \;=\; \bigl\|(\operatorname{sym}Z)^{-1/2}\,(\operatorname{anti}Z)\,
  (\operatorname{sym}Z)^{-1/2}\bigr\|_2 .
\end{equation*}
We call $\eta(K_{\!f\!f})$ the non-reciprocity ratio of the free block. It
weighs odd against even stiffness over the whole block; the ratio $|a_b|/k_b$
of a single bond is a different quantity. Compared with the bond's axial force,
the couple $\tau_b$ of an odd bond (Sec.~\ref{sec:setup}) gives
$|\tau_b|/(|k_be_b|L_b)=|a_b|/k_b$ for every drive with $e_b\neq0$: the ratio
of transverse to axial force, the tangent of the angle by which the bond force
tilts off the axis, and no more. $L_b$ cancels from it, and $|k_be_b|L_b$ is
not a torque, since the axial pair is collinear and exerts no moment.
Norms are spectral unless marked $F$, and
$C_P=C[P,P]$ is the shared sub-block of $R$.

\begin{proposition}[the non-reciprocity a target demands]
  \label{prop:ratio}
  Every network with a positive-definite symmetric part that realises a target
  is at least as non-reciprocal as the target's shared block, as measured by
  the ratio $\eta$ of its odd to its even part.
  Let $K_s=\operatorname{sym}K_{\!f\!f}(k,a)$ and
  $K_a=\operatorname{anti}K_{\!f\!f}(k,a)$ be the two parts of the active
  operator whose inverse gives $R$, so that $K_s$ is $K_{\!f\!f}(k,0)$ plus
  the restriction to the free degrees of freedom of
  $\tfrac12\sum_ba_b(\bs_b\bq_b^\T+\bq_b\bs_b^\T)$, and suppose $K_s\succ0$.
  Then $K_{\!f\!f}$ is invertible, $\operatorname{sym}C_P\succ0$, and
  \begin{equation}
    \label{eq:ratio}
    \eta(C_P) \;\le\; \eta(K_{\!f\!f})
    \;=\; \bigl\|K_s^{-1/2}K_aK_s^{-1/2}\bigr\|_2 ,
  \end{equation}
  with equality when $P$ is the whole free set. In particular every such
  network whose shared block equals a target $B$ has
  $\eta(K_{\!f\!f})\ge\eta(B)$. Nothing is linearised in $a$.
\end{proposition}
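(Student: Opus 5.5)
The plan is to reduce everything to a numerical-range characterisation of $\eta$ and then use that inversion and compression preserve it. For $Z$ with $\operatorname{sym}Z=Z_s\succ0$ and $\operatorname{anti}Z=Z_a$, write $\eta(Z)=\|Z_s^{-1/2}Z_aZ_s^{-1/2}\|_2$. Because $Z_s^{-1/2}Z_aZ_s^{-1/2}$ is real antisymmetric, it is normal with imaginary eigenvalues. Its spectral norm is therefore $\max_{x\ne0}|x^{*}Z_ax|/(x^{*}Z_sx)$ over complex $x$. Equivalently,
\[
  \eta(Z)=\max_{x\in\mathbb{C}^n\setminus0}\frac{|\operatorname{Im}x^{*}Zx|}{\operatorname{Re}x^{*}Zx},
\]
the largest tangent of the argument of the numerical range of $Z$. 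This is the standard London/Li--Sze sector fact the paper cites. So $\eta(Z)\le\eta_0$ says exactly that the numerical range of $Z$ lies in the closed sector $|\arg w|\le\arctan\eta_0$. I will take this identity as the organising tool and prove it quickly via the substitution $y=Z_s^{1/2}x$.

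First, invertibility of $K_{\!f\!f}$ and the sector statement for $C$. If $K_s\succ0$, then $\operatorname{Re}x^{*}K_{\!f\!f}x>0$ for $x\ne0$, so $K_{\!f\!f}$ has trivial kernel. Next put $x=Cy$, so $y=K_{\!f\!f}x$. Then
\[
  y^{*}Cy=x^{*}K_{\!f\!f}^{*}x=\overline{x^{*}K_{\!f\!f}x},
\]
so the numerical range of $C$ is the complex conjugate of that of $K_{\!f\!f}$, up to positive scaling of each point. (Strictly, $y^{*}Cy=(K_{\!f\!f}x)^{*}x=\overline{x^{*}K_{\!f\!f}x}$.) Two things follow. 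First, $\operatorname{Re}y^{*}Cy>0$, so $\operatorname{sym}C\succ0$. Second, the ratio $|\operatorname{Im}|/\operatorname{Re}$ is pointwise identical, so $\eta(C)=\eta(K_{\!f\!f})$. This settles the middle equality in \eqref{eq:ratio}, which is just the definition.

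Second, compression to $P$. For $v\in\mathbb{C}^{p}$, let $\hat v$ be its extension by zeros to the free set. Then $v^{*}C_Pv=\hat v^{*}C\hat v$, so the numerical range of $C_P$ is contained in that of $C$. In particular $\operatorname{sym}C_P=(\operatorname{sym}C)[P,P]\succ0$, and the maximum defining $\eta(C_P)$ is taken over a subset of the one defining $\eta(C)$. Hence $\eta(C_P)\le\eta(C)=\eta(K_{\!f\!f})$, with equality when $P$ is the whole free set, since then $C_P=C$. The final clause is then immediate: a realising network has $C_P=B$, so $\eta(B)\le\eta(K_{\!f\!f})$. Nothing has been linearised, since only exact identities of quadratic forms are used.

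The main obstacle is the first step, namely justifying the sector form of $\eta$ cleanly. This has two parts. One must show that the spectral norm of the congruence-normalised antisymmetric part equals the maximal $|\operatorname{Im}|/\operatorname{Re}$ over complex vectors, not over real ones: for real $x$ the imaginary part vanishes, so complex vectors are essential. One must also check that the congruence $y=Z_s^{1/2}x$ respects the Hermitian forms. I would do this by writing $x^{*}Zx=x^{*}Z_sx+x^{*}Z_ax$. Here $x^{*}Z_ax$ is purely imaginary because $Z_a$ is real antisymmetric, hence skew-Hermitian. After the congruence, the ratio becomes $|y^{*}Ny|/\|y\|^2$ with $N$ normal. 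Its maximum equals the spectral radius, which equals $\|N\|_2$. The remaining steps are then bookkeeping.
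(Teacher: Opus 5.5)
Your proof is correct, and it takes a different route from the paper's.

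\textbf{The paper's route.} It works entirely with real vectors. It writes $K_{\!f\!f}=K_s^{1/2}(I+\widehat K_a)K_s^{1/2}$ and inverts $I+\widehat K_a$ explicitly as $(I-\widehat K_a)\Sigma$, with $\Sigma=(I+\widehat K_a^\T\widehat K_a)^{-1}$ commuting with $\widehat K_a$. From this it reads off closed forms for $\operatorname{sym}C$ and $\operatorname{anti}C$, restricts them to $P$, and bounds the real bilinear form $x^\T(\operatorname{anti}C_P)\,y$ by Cauchy--Schwarz. The equality clause then needs an explicit orthogonal matrix, $\Sigma^{1/2}K_s^{-1/2}(\operatorname{sym}C)^{-1/2}$.

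\textbf{Your route.} You complexify and characterise $\eta$ as the largest $|\operatorname{Im}|/\operatorname{Re}$ of the quadratic form. This is the accretive-sector fact the paper only cites (London; Li--Sze; Lin) after its own proof. The single identity $y^{*}Cy=\overline{x^{*}K_{\!f\!f}x}$, with $y=K_{\!f\!f}x$, then carries the sector from $K_{\!f\!f}$ to $C$ exactly. Compression becomes a mere restriction of the set of values.

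\textbf{What your route buys.} It is shorter. The equality clause, at $P$ equal to the whole free set, is immediate rather than requiring an orthogonal similarity. It also proves more than is stated: the identity $v^{*}(V^\T CV)v=(Vv)^{*}C(Vv)$ gives $\eta(V^\T CV)\le\eta(C)$ for every real injective $V$, not only coordinate compressions.

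\textbf{What it costs.} Complex vectors are indispensable, since, as you note, real vectors see no antisymmetric part. The step equating numerical radius with $\|\cdot\|_2$ also relies on normality of $K_s^{-1/2}K_aK_s^{-1/2}$. The paper's real computation needs neither, and it additionally yields explicit formulas for $\operatorname{sym}C$ and $\operatorname{anti}C$ in terms of $K_s$, $\widehat K_a$ and $\Sigma$.

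One wording slip: your first step establishes $\eta(C)=\eta(K_{\!f\!f})$, which is what powers the equality clause. The middle equality in \eqref{eq:ratio} is the definition of $\eta$ and needs no argument.
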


\begin{proof}
  Set $\widehat K_a=K_s^{-1/2}K_aK_s^{-1/2}$, which is antisymmetric, hence
  normal, with $\|\widehat K_a\|_2=\eta(K_{\!f\!f})$, so that
  $K_{\!f\!f}=K_s^{1/2}(I+\widehat K_a)K_s^{1/2}$; the argument uses the
  congruence normal form of~\citet{London1981}. Since
  $\bx^\T K_{\!f\!f}\bx=\bx^\T K_s\bx>0$ for $\bx\ne0$, $K_{\!f\!f}$ is
  invertible. From
  $(I+\widehat K_a)(I-\widehat K_a)=I+\widehat K_a^\T\widehat K_a$ and the
  commutation of every factor, $(I+\widehat K_a)^{-1}=(I-\widehat K_a)\Sigma$
  with $\Sigma=(I+\widehat K_a^\T\widehat K_a)^{-1}\succ0$ commuting with
  $\widehat K_a$. Congruence by $K_s^{-1/2}$ preserves symmetry and
  antisymmetry, so
  \begin{equation*}
    \operatorname{sym}C=K_s^{-1/2}\Sigma K_s^{-1/2},
    \qquad
    \operatorname{anti}C=-K_s^{-1/2}\widehat K_a\Sigma K_s^{-1/2}.
  \end{equation*}
  Restricting to $P$,
  $\operatorname{sym}C_P=K_s^{-1/2}[P,:]\,\Sigma\,K_s^{-1/2}[:,P]$ and
  $\operatorname{anti}C_P=-K_s^{-1/2}[P,:]\,\widehat K_a\Sigma\,K_s^{-1/2}[:,P]$.
  Since $K_s^{-1/2}$ is symmetric, $K_s^{-1/2}[P,:]$ is the transpose of
  $K_s^{-1/2}[:,P]$, so $\bx^\T(\operatorname{sym}C_P)\,\bx$ is the
  $\Sigma$-form of $K_s^{-1/2}[:,P]\,\bx$. The $p$ columns $K_s^{-1/2}[:,P]$ of
  the invertible matrix $K_s^{-1/2}$ are linearly independent, so that vector
  is non-zero for $\bx\neq0$, and $\Sigma\succ0$ gives
  $\bx^\T(\operatorname{sym}C_P)\,\bx>0$, so $\operatorname{sym}C_P\succ0$.
  For $\bx,\by\in\R^p$, as $\Sigma^{1/2}$ commutes with $\widehat K_a$,
  Cauchy--Schwarz gives
  \begin{equation*}
    |\bx^\T(\operatorname{anti}C_P)\by|
    \;=\; \bigl|(\Sigma^{1/2}K_s^{-1/2}[:,P]\,\bx)^\T\,\widehat K_a\,
      (\Sigma^{1/2}K_s^{-1/2}[:,P]\,\by)\bigr|
    \;\le\; \|\widehat K_a\|_2\,
    (\bx^\T\operatorname{sym}C_P\,\bx)^{1/2}\,
    (\by^\T\operatorname{sym}C_P\,\by)^{1/2},
  \end{equation*}
  and substituting $\bx\mapsto(\operatorname{sym}C_P)^{-1/2}\bx$,
  $\by\mapsto(\operatorname{sym}C_P)^{-1/2}\by$ gives \eqref{eq:ratio}. When
  $P$ is the whole free set,
  $\mathsf{O}=\Sigma^{1/2}K_s^{-1/2}(\operatorname{sym}C)^{-1/2}$
  is orthogonal and
  $(\operatorname{sym}C)^{-1/2}(\operatorname{anti}C)(\operatorname{sym}C)^{-1/2}
  =-\mathsf{O}^\T\widehat K_a\mathsf{O}$, whose norm is $\|\widehat K_a\|_2$. The last clause is
  \eqref{eq:ratio} at $C_P=B$.
\end{proof}

Proposition~\ref{prop:ratio} is the counterpart of Theorem~\ref{thm:floor}:
the floor says what a passive network cannot reach, and \eqref{eq:ratio} says
how strongly non-reciprocal an active network with a positive-definite
symmetric part must be to reach a target $B$.
The algebra is classical. In the language of accretive matrices,
$\eta(Z)=\tan\alpha$, with $\alpha$ the half-angle of the smallest sector of
the complex plane that contains the numerical range of $Z$, and
\eqref{eq:ratio} says only that this sector widens under neither inversion nor
compression to a principal
sub-block~\citep{London1981,LiSze2014,Lin2015}. What \eqref{eq:ratio} adds
here is a reading: its left-hand side is computed from the target, so it is a lower
bound on the non-reciprocity of every network with a positive-definite
symmetric part that realises the target.

Experiment~22 checks \eqref{eq:ratio} on $3000$ abstract matrices with
$K_s\succ0$ and on $3360$ cases inside the odd-bond model. Each abstract draw
carries one choice of $P$, $1872$ of them a proper subset and $1128$ the whole
set, whereas every odd-bond case is read at both choices; the $5232$
proper-subset readings are those $1872$ together with all $3360$
(Appendix~\ref{app:stats}). Over them the largest ratio
$\eta(C_P)/\eta(K_{\!f\!f})$ is $0.9997$. The $1128$ abstract cases with $P$
the whole set are the equality clause, and sit at $1$ to within
$2.64\times10^{-5}$, the round-off of an ill-conditioned evaluation (median
$4\times10^{-15}$; Appendix~\ref{app:stats}).

The $K_s$ in \eqref{eq:ratio} is that of the active operator, and the
symmetric part $K_{\!f\!f}(k,0)$ of the passive network with the same
stiffnesses cannot stand in for it. With this passive substitution the
inequality fails on the whole free set, where \eqref{eq:ratio} is an
equality, in $3066$ of the $3360$ odd-bond cases; only on a small shared set,
where the inequality has slack of its own, does the substitution mostly
survive (Appendix~\ref{app:stats}). The substitution would be safe if the
passive operator were dominated by the
active symmetric part, $K_{\!f\!f}(k,0)\preceq K_s$; it is that order which
fails. The term the substitution drops is
$\mathsf{E}=K_s-K_{\!f\!f}(k,0)$, the restriction to the free degrees of freedom of
$\tfrac12\sum_ba_b(\bs_b\bq_b^\T+\bq_b\bs_b^\T)$. On the whole operator that
term is traceless, since $\bs_b\cdot\bq_b=0$ bond by bond. On the free block,
which is where the comparison with $K_s$ is made, tracelessness need not
survive the restriction, and what experiment~22 measures is the free block
itself: $\lambda_{\min}(\mathsf{E})<0$ in $3360$ of $3360$ cases. So
$\mathsf{E}\not\succeq0$,
that is $K_s$ dominates $K_{\!f\!f}(k,0)$ in none of the $3360$ cases, and no ordering is available
to rescue the substitution at any coupling strength.

At first order the coupling itself can be priced. Write
$\operatorname{vec}_<(Z)\in\R^{\delta_p}$ for the strictly upper entries of an
antisymmetric $p\times p$ matrix $Z$, in the ordering of $P$ fixed before
Theorem~\ref{thm:recovery}, and let $\mathsf{W}$ be the $\delta_p\times n_b$
wedge matrix on which Sec.~\ref{sec:wedges} runs its pivoted $QR$. Its column $b$
is $\operatorname{vec}_<(\bz_b|_P\wedge\bw_b|_P)$, formed at $(k_0,0)$, and
its singular values are
$\sigma_{\max}(\mathsf{W})\ge\dots\ge\sigma_{\min}(\mathsf{W})$. The same
ordering identifies $\mathcal{A}_P$ with the antisymmetric $p\times p$
matrices, and $F(a)$ of Corollary~\ref{cor:submersion} with
$\operatorname{anti}C_P(k_0,a)$. The corollary below follows from
Theorem~\ref{thm:recovery} and a first-order lemma, proved in Step~1 of the
proof of Corollary~\ref{cor:price}.

\begin{corollary}[the least odd coupling, to first order]
  \label{cor:price}
  To first order, the odd coupling a small antisymmetric shared block requires
  is at worst $\sqrt2\|B_a\|_F/\sigma_{\min}(\mathsf{W})$: the larger the
  smallest singular value of the wedges, the weaker it is.
  Let $p\ge2$, fix an admissible $k_0$ with $K_{\!f\!f}(k_0,0)$ invertible and
  suppose $\rank\mathsf{W}=\delta_p=p(p-1)/2$. For an antisymmetric $p\times p$ block
  $B_a$, the odd coupling of least Euclidean norm with
  $\mathrm{d}F_0[a]=B_a$ is
  \begin{equation}
    \label{eq:astar}
    a^{*} \;=\; -2\,\mathsf{W}^{+}\operatorname{vec}_<(B_a),
    \qquad
    \frac{\sqrt2\,\|B_a\|_F}{\sigma_{\max}(\mathsf{W})}
    \;\le\; \|a^{*}\|_2 \;\le\;
    \frac{\sqrt2\,\|B_a\|_F}{\sigma_{\min}(\mathsf{W})},
  \end{equation}
  with $\mathsf{W}^{+}$ the Moore--Penrose pseudoinverse, and each bound is
  attained on the corresponding singular direction. On a support
  $\mathcal{B}$ of $\delta_p$ bonds with $\mathsf{W}_{\mathcal{B}}$ invertible
  the coupling is unique,
  $a_{\mathcal{B}}=-2\mathsf{W}_{\mathcal{B}}^{-1}\operatorname{vec}_<(B_a)$,
  and obeys \eqref{eq:astar} with $\mathsf{W}_{\mathcal{B}}$ in place of
  $\mathsf{W}$. The worst case over $\|B_a\|_F=1$ is therefore
  $\sqrt2/\sigma_{\min}(\mathsf{W}_{\mathcal{B}})$, and minimising it over
  supports of size $\delta_p$ is exactly maximising
  $\sigma_{\min}(\mathsf{W}_{\mathcal{B}})$.
\end{corollary}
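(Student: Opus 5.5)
The plan has two steps. First, a first-order lemma writes $\mathrm{d}F_0$ as the linear map $a\mapsto-\tfrac12\mathsf{W}a$ in $\operatorname{vec}_<$ coordinates. Second, the minimum-norm problem is solved by the pseudoinverse and the singular-value bounds.

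\emph{Step 1 (first-order lemma).} By Corollary~\ref{cor:submersion}, $F(a)=\operatorname{anti}C_P(k_0,a)$ under the stated identification, so $\mathrm{d}F_0[a]=\sum_b a_b\,\operatorname{anti}\bigl(\partial C_P/\partial a_b\bigr)$ evaluated at $(k_0,0)$. The proof of Theorem~\ref{thm:recovery} gives $\partial R/\partial a_b=-\bz_b[T]\bw_b[S]^\T$. Its shared block is $-\bz_b|_P\bw_b|_P^\T$, whose antisymmetric part is $-\tfrac12\,\bz_b|_P\wedge\bw_b|_P$. Applying $\operatorname{vec}_<$, which is linear, gives
\[
\operatorname{vec}_<\bigl(\mathrm{d}F_0[a]\bigr)=-\tfrac12\,\mathsf{W}a .
\]
The condition $\mathrm{d}F_0[a]=B_a$ therefore reads $\mathsf{W}a=-2\operatorname{vec}_<(B_a)$. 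We also need $\|\operatorname{vec}_<(Z)\|_2=\|Z\|_F/\sqrt2$ for antisymmetric $Z$, because each strictly upper entry occurs twice in $Z$.

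\emph{Step 2 (least-norm solution and bounds).} Since $\rank\mathsf{W}=\delta_p$, $\mathsf{W}$ has full row rank. The system is therefore solvable for every $B_a$, and its least Euclidean-norm solution is $a^*=\mathsf{W}^{+}\bigl(-2\operatorname{vec}_<(B_a)\bigr)$. Write $v=-2\operatorname{vec}_<(B_a)$, so that $\|v\|_2=\sqrt2\|B_a\|_F$. With the thin SVD $\mathsf{W}=U\Sigma V_1^\T$, where $U$ is $\delta_p\times\delta_p$ orthogonal, we have $\mathsf{W}^{+}=V_1\Sigma^{-1}U^\T$. Hence $\|a^*\|_2=\|\Sigma^{-1}U^\T v\|_2$, which lies between $\|v\|_2/\sigma_{\max}$ and $\|v\|_2/\sigma_{\min}$. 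The upper bound is attained when $v$ is parallel to the left singular vector for $\sigma_{\min}$, and the lower bound when $v$ is parallel to the one for $\sigma_{\max}$. Every $v\in\R^{\delta_p}$ arises from some antisymmetric $B_a$, so both cases are realised.

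\emph{Step 3 (support version).} On a support $\mathcal{B}$ with $\mathsf{W}_{\mathcal{B}}$ square and invertible, the equation restricted to couplings supported on $\mathcal{B}$ has the unique solution $a_{\mathcal{B}}=-2\mathsf{W}_{\mathcal{B}}^{-1}\operatorname{vec}_<(B_a)$. Here $\mathsf{W}_{\mathcal{B}}^{+}=\mathsf{W}_{\mathcal{B}}^{-1}$, so the bounds of Step~2 carry over verbatim. Over $\|B_a\|_F=1$ the maximum of $\|a_{\mathcal{B}}\|_2$ is $\|\mathsf{W}_{\mathcal{B}}^{-1}\|_2\sqrt2=\sqrt2/\sigma_{\min}(\mathsf{W}_{\mathcal{B}})$. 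Minimising this over supports is the same as maximising $\sigma_{\min}(\mathsf{W}_{\mathcal{B}})$, because $x\mapsto\sqrt2/x$ is decreasing on $(0,\infty)$.

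The only delicate point is the bookkeeping in Step~1. The sign and the factor $\tfrac12$ from $\operatorname{anti}$, together with the factor $\sqrt2$ from $\operatorname{vec}_<$, must combine correctly to produce the constant $-2$ in $a^*$ and the $\sqrt2$ in the bounds. Everything after that is standard pseudoinverse algebra.
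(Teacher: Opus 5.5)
Your proposal is correct and follows the paper's proof. Both rest on the first-order identity $\operatorname{vec}_<(\mathrm{d}F_0[a])=-\tfrac12\mathsf{W}a$, then use the pseudoinverse and singular-value bounds with the factor $\|\operatorname{vec}_<(B_a)\|_2=\|B_a\|_F/\sqrt2$, and treat the support case through $\mathsf{W}_{\mathcal{B}}^{-1}$. The one difference is a shortcut: you read each bond's contribution $-\tfrac12\,\bz_b|_P\wedge\bw_b|_P$ directly off the Jacobian column of Theorem~\ref{thm:recovery}, whereas the paper first writes the antisymmetric part of $\mathrm{d}K$ in graph-Laplacian form (which it notes is not needed) and reaches the same contribution.
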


\begin{proof}
  \emph{Step~1, the first-order lemma.} Only the antisymmetric part of the odd
  couplings acts at first order, and
  $\operatorname{vec}_<(\mathrm{d}F_0[a])=-\tfrac12\,\mathsf{W}a$. At $a=0$, $C$ is symmetric and $\mathrm{d}C=-C(\mathrm{d}K)C$
  gives $\mathrm{d}C_P=-C[P,:]\,(\mathrm{d}K)\,C[:,P]$. The odd part of
  $\mathrm{d}K$ is $\sum_ba_b\bs_b\bq_b^\T$ on the free degrees of freedom,
  and its antisymmetric part is, by \eqref{eq:edgewedge},
  \begin{equation*}
    K_a \;=\; \tfrac12\sum_ba_b\,\bs_b\wedge\bq_b
    \;=\; -\tfrac12\,(\mathsf{L}_a\otimes\epsilon)_{f\!f},
    \qquad
    \mathsf{L}_a=\sum_{b=\{i,j\}}a_b\,(e_i-e_j)(e_i-e_j)^\T ,
  \end{equation*}
  $\mathsf{L}_a$ being the graph Laplacian with signed bond weights $a_b$. It
  depends on neither bond direction nor bond length. Congruence preserves
  symmetry, so the symmetric part of $\mathrm{d}K$ leaves the antisymmetric
  part of $\mathrm{d}C_P$ alone, and
  \begin{equation*}
    \mathrm{d}F_0[a] \;=\; \operatorname{anti}\mathrm{d}C_P
    \;=\; -\,C[P,:]\,K_a\,C[:,P] .
  \end{equation*}
  This holds wherever $K_{\!f\!f}$ is symmetric, not only at $a=0$; elsewhere
  the symmetric increment also moves $\operatorname{anti}R$ at first order.
  The formula for $K_a$ needs the node-by-node ordering and
  $\hat t_b=\epsilon^\T\hat n_b$ of Remark~\ref{rem:edgewedge}, since the
  other perpendicular flips the sign of $\epsilon$. The construction is
  two-dimensional, $\epsilon$ being the quarter turn in the plane, but only the
  Laplacian form of $K_a$ uses the plane: the first equality of the display is
  $\operatorname{anti}\sum_ba_b\bs_b\bq_b^\T$ written out, and the step this
  corollary goes on to use is the one Theorem~\ref{thm:recovery} proves from
  $C=C^\T$ and nothing else, so neither it nor \eqref{eq:astar} carries a
  hypothesis on the dimension.
  Bond by bond, $C[P,:]\,\bs_b=\bz_b|_P$ and $C[P,:]\,\bq_b=\bw_b|_P$, so
  bond $b$ contributes $-\tfrac12\,\bz_b|_P\wedge\bw_b|_P$, as in the proof of
  Theorem~\ref{thm:recovery}, and
  $\operatorname{vec}_<(\mathrm{d}F_0[a])=-\tfrac12\,\mathsf{W}a$.

  \emph{Step~2, the least-norm coupling.} With $\rank\mathsf{W}=\delta_p$ the system
  $-\tfrac12\mathsf{W}a=\operatorname{vec}_<(B_a)$ is solvable for every
  $B_a$, and its least-norm solution is $a^{*}$. On $\R^{\delta_p}$ the
  singular values of $\mathsf{W}^{+}$ are the reciprocals of those of
  $\mathsf{W}$, so $\|\mathsf{W}^{+}\operatorname{vec}_<(B_a)\|_2$ lies between
  $\|\operatorname{vec}_<(B_a)\|_2/\sigma_{\max}(\mathsf{W})$ and
  $\|\operatorname{vec}_<(B_a)\|_2/\sigma_{\min}(\mathsf{W})$, with the ends
  reached when $\operatorname{vec}_<(B_a)$ lies along the corresponding left
  singular vector. Each strictly upper entry of $B_a$ appears twice in
  $\|B_a\|_F^2$, so $\|\operatorname{vec}_<(B_a)\|_2=\|B_a\|_F/\sqrt2$, and
  multiplying by $2$ gives \eqref{eq:astar}. For the support $\mathcal{B}$ the
  same argument runs with $\mathsf{W}_{\mathcal{B}}^{-1}$. The supremum of
  $\|a_{\mathcal{B}}\|_2$ over $\|B_a\|_F=1$ is
  $\sqrt2/\sigma_{\min}(\mathsf{W}_{\mathcal{B}})$ and is attained, so the
  support with the least worst case is the one with the largest
  $\sigma_{\min}(\mathsf{W}_{\mathcal{B}})$.
\end{proof}

In Sec.~\ref{sec:wedges}, $\sigma_{\min}$ of the selected wedges reports how
well conditioned the selection is. By Corollary~\ref{cor:price} it also sets
the worst-case first-order price of that selection, and choosing the support that
maximises it is a column subset-selection problem. Strong rank-revealing $QR$
solves that problem to within a factor polynomial in the number of
bonds~\citep{GuEisenstat1996}, by making the volume of the selected block
large. Experiment~22 compares the pivoted $QR$ of experiment~18, a heuristic
whose choice need not be the optimum, with the optimum. It finds the optimum by
brute force over all supports of size $\delta_p$ on ten nine-node Delaunay networks at
$p=3$, one per seed, each with its own geometry, stiffnesses and terminal set.
The pivoted $QR$ finds the optimum in $3$ of the ten, and the median of
$\sigma_{\min}(\mathsf{W}_{QR})/\max_{\mathcal{B}}\sigma_{\min}(\mathsf{W}_{\mathcal{B}})$
is $0.93$.

The hypothesis $\rank\mathsf{W}=\delta_p$ is not automatic even when
$n_b\ge\delta_p$: it can fail on non-generic networks. Without it, only the
$B_a$ with $\operatorname{vec}_<(B_a)\in\operatorname{im}\mathsf{W}$ are
reached at first order, and $\sigma_{\min}(\mathsf{W})$ becomes the smallest
non-zero singular value, on that subspace. The equivalence with maximising
$\sigma_{\min}(\mathsf{W}_{\mathcal{B}})$ holds at first order, for
Frobenius-normalised $B_a$ and Euclidean $\|a\|$; Appendix~\ref{app:stats}
gives the origin of the factor $\sqrt2$ and what other normalisations change.

The third part is the exception to the statement of Sec.~\ref{sec:setup} that
an odd network is generically out of moment balance. Let $Q$ be the matrix of
compatibility rows of Appendix~\ref{app:jacobian}, $L_b$ the length of bond
$b$, and $\mathcal{K}=\{a\in\R^{n_b}:(a_bL_b)_b\in\ker Q^\T\}$, the couplings
that exert no net torque by clause~(i) of Proposition~\ref{prop:torquefree}.
Write
$\mathsf{W}\mathcal{K}=\{\mathsf{W}a:a\in\mathcal{K}\}\subseteq\R^{\delta_p}$
for the image of the subspace $\mathcal{K}$ under $\mathsf{W}$. Unlike
$\mathsf{W}_{\mathcal{B}}$, which keeps the columns of $\mathsf{W}$ indexed by
a bond set $\mathcal{B}$, $\mathsf{W}\mathcal{K}$ is a subspace, not a
matrix. The reactions at the three pinned degrees of freedom are external
forces like the applied loads.

\begin{proposition}[a torque-free realisation]
  \label{prop:torquefree}
  Odd couplings whose products $(a_bL_b)_b$ form a state of self-stress realise
  any small enough
  antisymmetric shared block when $\mathsf{W}\mathcal{K}=\R^{\delta_p}$, with no
  net torque on the network as a whole. Each odd bond that changes length
  still exerts its own couple, and the symmetric part of the response is not
  controlled.
  \emph{(i)}~The net moment of all external forces holding a displacement
  $\bu$, applied loads and pin reactions together, is
  $\sum_b\tau_b$ with $\tau_b=a_be_bL_b$, and it vanishes for every $\bu$ if
  and only if $(a_bL_b)_b$ is a state of self-stress, $a\in\mathcal{K}$.
  \emph{(ii)}~Fix an admissible $k_0$ with $K_{\!f\!f}(k_0,0)$ invertible and
  suppose $\mathsf{W}\mathcal{K}=\R^{\delta_p}$, which requires
  $\dim\ker Q^\T\ge\delta_p$ and so fails for isostatic networks when $p\ge2$.
  Then every
  antisymmetric shared block $B_a$ of sufficiently small norm is realised
  exactly in projection, $\Pi_{\mathcal{A}_P}R(k_0,a)=B_a$, by some
  $a\in\mathcal{K}$ near $0$.
  \emph{(iii)}~For such an $a$ realising $B_a\neq0$, each odd bond with
  $a_be_b\neq0$ still exerts its own couple $\tau_b$, $K_a\neq0$, and a closed
  force cycle on $P$ still exchanges work.
\end{proposition}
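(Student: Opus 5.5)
Clause~(i) is a statics computation, clause~(ii) is Corollary~\ref{cor:submersion} restricted to the subspace $\mathcal{K}$, and clause~(iii) comes from looking at the realising $a$ directly.

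For clause~(i) I first compute the net moment bond by bond. The external force holding $\bu$ is $K\bu=\sum_b(k_b\bq_b+a_b\bs_b)e_b$, and on the pinned degrees of freedom the pin reactions supply exactly the missing components. So the total external load, applied plus reactions, is the full unrestricted vector $K\bu$, and its net moment is a sum over bonds.

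The central term $k_be_b\bq_b$ puts collinear, opposite forces on the two ends of the bond, so it has zero moment and zero resultant. The odd term $a_be_b\bs_b$ puts $\mp a_be_b\hat t_b$ at nodes $i$ and $j$, separated by $L_b\hat n_b$. Its moment is therefore $a_be_bL_b(\hat n_b\times\hat t_b)$, which equals $\tau_b$ once the orientation of $\hat t_b$ from Remark~\ref{rem:edgewedge} is fixed.

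The net moment is then $\sum_b a_bL_b\,e_b=\sum_b a_bL_b\,\bq_b^\T\bu=\bigl((a_bL_b)_b\bigr)^\T Q\bu$. This vanishes for every $\bu$ if and only if $Q^\T(a_bL_b)_b=0$, that is, $a\in\mathcal{K}$. The one point to check is that $\bu$ may be taken over all displacements: pinned components are zero, so $\bu$ ranges over the free space. I therefore read $Q$ as the compatibility matrix restricted to the free degrees of freedom, which I take to be the convention of Appendix~\ref{app:jacobian}.

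Clause~(ii) is Corollary~\ref{cor:submersion} restricted to $\mathcal{K}$. Consider $F|_{\mathcal{K}}:a\mapsto\Pi_{\mathcal{A}_P}R(k_0,a)$ for $a\in\mathcal{K}$. By Step~1 of Corollary~\ref{cor:price}, its differential at $0$ is $a\mapsto-\tfrac12\mathsf{W}a$ in $\operatorname{vec}_<$ coordinates. Its image is $\mathsf{W}\mathcal{K}=\R^{\delta_p}$, so $F|_{\mathcal{K}}$ is a submersion at $0$ with $F(0)=0$.

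Next I pick a $\delta_p$-dimensional subspace of $\mathcal{K}$ on which $\mathsf{W}$ is invertible and apply the inverse function theorem there. This gives an $a\in\mathcal{K}$ near $0$ realising every small $B_a$.

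The dimension remark is immediate. Surjectivity forces $\dim\mathcal{K}\ge\delta_p$, and $\dim\mathcal{K}=\dim\ker Q^\T$ because $a\mapsto(a_bL_b)_b$ is invertible. An isostatic network has no self-stress, so the hypothesis fails there whenever $\delta_p\ge1$, i.e.\ when $p\ge2$.

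Clause~(iii) follows from $B_a\ne0$. If $a=0$, then $F(a)=0$, so $a\ne0$. Since $K_a=-\tfrac12(\mathsf{L}_a\otimes\epsilon)_{ff}$ is linear in $a$, I expect $K_a\neq0$. The subtle case is a nonzero $a$ that gives $K_a=0$ on the free block. Then $K_{\!f\!f}$ is symmetric, $R\in\mathcal{S}_P$ by Remark~\ref{rem:global}, and $F(a)=0\ne B_a$, a contradiction.

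The same fact gives the work exchange. $C_P$ has a nonzero antisymmetric part, so a closed cycle of forces on $P$ does work $\oint\bu\cdot\mathrm{d}\mathbf{f}$ proportional to the antisymmetric form of $C_P$. This is the cycle identity of \eqref{eq:cycle}, which I would invoke rather than rederive. A cycle in the plane of two shared coordinates on which $\operatorname{anti}C_P$ is nonzero gives nonzero work.

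The couple $\tau_b=a_be_bL_b$ is nonzero by definition whenever $a_be_b\neq0$. Torque-freeness only cancels the sum of the couples, not the individual ones.

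The main obstacle is the moment bookkeeping in~(i): the sign and orientation conventions for $\hat t_b$, and confirming that the pin reactions close the balance so that the moment is exactly $\sum_b\tau_b$, with no stray contribution from the restriction.
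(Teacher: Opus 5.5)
Your proposal is correct and follows the paper's proof clause by clause: the same bond-by-bond moment computation giving $\sum_b a_bL_b\,\bq_b^\T\bu$, the same submersion argument on a slice of $\mathcal{K}$ built on the first-order lemma of Corollary~\ref{cor:price}, and the same contradiction with $B_a\neq0$ were $K_{\!f\!f}$ symmetric, followed by \eqref{eq:cycle}. The paper adds one point you sidestep by adopting the free-coordinate reading of $Q$: the three pins are statically determinate, because a rigid motion vanishing on them would be a null vector of the invertible $K_{\!f\!f}$, so $\ker Q^\T$ is the same whether or not the pinned coordinates are kept, and this is what licenses calling $(a_bL_b)_b$ a self-stress of the unpinned framework.
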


\begin{proof}
  The term $k_b\bq_b\bq_b^\T$ puts collinear forces on the ends of bond $b$
  and contributes no moment. The term $a_b\bs_b\bq_b^\T$ puts
  $\mp a_be_b\hat t_b$ at nodes $i$ and $j$, separated by $L_b\hat n_b$; with
  $\hat t_b=\epsilon^\T\hat n_b$ their moment is $a_be_bL_b$. The external
  force $K\bu$ is the sum of these bond contributions, so its net moment is
  $\sum_ba_bL_b\,\bq_b^\T\bu$, the pairing of $(a_bL_b)_b$ with $Q\bu$, and it
  vanishes for every free $\bu$ exactly when $Q^\T(a_bL_b)_b=0$. The three
  pins are statically determinate, since a rigid motion vanishing on them
  would be a null vector of the invertible $K_{\!f\!f}$. Hence $\ker Q^\T$ is
  the same whether or not the pinned coordinates are kept, and the self-stress
  may be taken in either framework. This is (i).

  For (ii), $\mathcal{K}$ is a linear subspace, and by the proof of
  Corollary~\ref{cor:price} the differential at $0$ of $F$ restricted to
  $\mathcal{K}$ has image $-\tfrac12\mathsf{W}\mathcal{K}$ under
  $\operatorname{vec}_<$, which is all of $\R^{\delta_p}$. The local
  submersion theorem, applied as in the proof of
  Corollary~\ref{cor:submersion} on a slice of $\mathcal{K}$, gives the
  realisation; the realising $a$ lies in $\mathcal{K}$, so by (i) the net
  moment vanishes identically, not only to first order. Since
  $\dim\mathsf{W}\mathcal{K}\le\dim\mathcal{K}=\dim\ker Q^\T$, the hypothesis
  needs $\dim\ker Q^\T\ge\delta_p$, and an isostatic network has no
  self-stress, so it fails the hypothesis when $p\ge2$.

  For (iii), $\tau_b=a_be_bL_b$ is non-zero whenever $a_be_b\neq0$. If
  $K_{\!f\!f}$ were symmetric, $C$ and $C_P$ would be too, against
  $B_a\neq0$; so $K_a=-\tfrac12(\mathsf{L}_a\otimes\epsilon)_{f\!f}\neq0$. By
  \eqref{eq:cycle} below, a cycle with $\mathbf{h}_1^\T B_a\mathbf{h}_2\neq0$,
  which exists when $B_a\neq0$, exchanges non-zero work.
\end{proof}

Experiment~22 runs the construction on $336$ configurations. The
torque-free couplings span the target space, $\mathsf{W}\mathcal{K}=\R^{\delta_p}$,
in $316$ of them, and every configuration where they do not has
$\dim\ker Q^\T<\delta_p$. Clause (ii) is local, and spanning and solving are
counted apart: at target sizes $\varepsilon=10^{-4}$, $10^{-3}$ and $10^{-2}$,
the solve inside $\mathcal{K}$ reaches the antisymmetric target in $316$, $314$
and $314$ of the spanning configurations. At $\varepsilon=10^{-4}$ the
torque-free coupling that realises the target exactly costs a median $4.73$
times the $\|a^{*}\|_2$ of Corollary~\ref{cor:price}, which is a first-order
quantity and is not itself required to realise the target.

Odd elasticity without a net torque density has a precedent: the odd shear
modulus carries none, and a honeycomb lattice whose two bond families are
tuned against each other has zero net torque density for every linear
deformation while each spring still exerts its own
couple~\citep[Fig.~S1]{Scheibner2020}. What is added here is that on an
arbitrary graph the condition is linear and reads $(a_bL_b)_b\in\ker Q^\T$, a
state of self-stress. The couples $\tau_b$ cancel only in sum, so the
actuators still exchange angular momentum through the frame that carries them.

For force cycles supported on the terminals $P$, the work exchanged around a
closed cycle is fixed by the target. That the work per cycle is set by the odd
modulus and the enclosed area is known, in continuum
form~\citep{Scheibner2020} and for an experimental unit
cell~\citep{Chen2021}; what follows is its form in terms of the response
block of a discrete network. For loads
$\mathbf{h}=c_1\mathbf{h}_1+c_2\mathbf{h}_2$ on $P$, with
$\mathbf{h}_1,\mathbf{h}_2\in\R^p$ and $(c_1,c_2)$ tracing a closed loop that
encloses the signed area $\Gamma$,
\begin{equation}
  \label{eq:cycle}
  \oint\mathbf{h}^\T C_P\,\mathrm{d}\mathbf{h}
  \;=\; 2\Gamma\,\mathbf{h}_1^\T(\operatorname{anti}C_P)\,\mathbf{h}_2 ,
\end{equation}
since $\operatorname{sym}C_P$ integrates to zero around a closed loop and
$\oint(c_1\mathrm{d}c_2-c_2\mathrm{d}c_1)=2\Gamma$. Nothing is linearised in
$a$. Once the target is met, $\operatorname{anti}C_P=B_a$, and no choice of
the odd couplings reduces the work. Outside $P$ the target no longer fixes the
work: loads there bring in entries of $C[S,S]$ that $R$ does not contain. The
other conditions on \eqref{eq:cycle} are listed in Sec.~\ref{sec:limitations}.

The directions Theorem~\ref{thm:recovery} restores come at a price that the target
alone bounds from below: every realising network with a
positive-definite symmetric part has $\eta(K_{\!f\!f})\ge\eta(B)$, and for
force cycles on $P$ the work per cycle is fixed. In experiment~22 the ratio
$\eta(C_P)/\eta(K_{\!f\!f})$ is at most $0.9997$ over all $5232$
proper-subset readings, and at $\varepsilon=10^{-4}$ a torque-free realisation
costs a median $4.73$ times the least first-order coupling.


\section{Numerical verification}
\label{sec:numerics}

We test four things separately, so that a failure would identify which claim is
wrong: the mechanism (are the columns symmetric on the shared block?), the
subspace (is what non-reciprocity adds the antisymmetric complement, or merely
the right \emph{number} of dimensions?), the bound (is it ever violated?), and
attainment (when is it equality?).
We then repeat the tests in three dimensions, with angular springs, about a
deformed state and at finite frequency, and close with a control that
separates symmetry from a generic rank shortfall.

All ranks come from the singular-value spectrum of the exact Jacobian
\eqref{eq:jaccol} with a relative cutoff $10^{-9}$, with two exceptions. Both
take the Jacobian by central finite differences, whose noise floor is set by
truncation error in place of machine epsilon: the large-deformation sweep cuts
at $10^{-7}$, and the second implementation, experiment~18b, at $10^{-6}$
(Appendix~\ref{app:stats} gives the gap distributions and what happens at
$10^{-9}$).
On the exact Jacobian, Fig.~\ref{fig:robust}a shows the gap in which the
$10^{-9}$ cutoff falls for one $30$-node network at full overlap, $m=6$: the
passive spectrum drops by thirteen orders of magnitude after index
$m^2-p(p-1)/2=21$, and with odd couplings it continues to $m^2=36$.

In the main rank sweep, networks are random Delaunay
triangulations of $12$--$56$ points and
triangular lattices with positional disorder $0$, $0.05$ and $0.15$ lattice
units. Stiffnesses are drawn log-normally, $k_b=e^{\xi_b}$ with
$\xi_b\sim\mathcal{N}(0,0.4^2)$, and odd couplings as
$a_b\sim\mathcal{N}(0,0.3^2)$; Appendix~\ref{app:params} notes the experiments
that depart from these draws. There is no material scale to normalise against:
$\xi_b$ has mean zero, so the median stiffness $k=1$ \emph{defines} the unit of
stiffness, and $a_b$, in the same unit, has spread $0.3$ of it. Nothing
reported depends on this choice: ranks are invariant under a global rescaling
$(k,a)\mapsto c(k,a)$, which sends $C\mapsto C/c$ and rescales every Jacobian column by
$c^{-2}$, and every error and floor is a ratio of two quantities in the same
unit. The spread $0.4$ does matter: it sets the stiffness disorder at a factor
$e^{0.4}=1.49$ per standard deviation.

In the main rank sweep (experiment~03)
seeds depend on every loop variable, so no two of its cases share a random
draw. The control sweeps
deliberately share index sets and starting draws between the arms they compare
(the two arms of experiment~01, experiments~04 and~13, and experiments~11
and~12 at fixed network). The counts from the main rank sweep, in the text and
in Fig.~\ref{fig:robust}, are medians over three independent stiffness draws
per configuration; for attainment, a yes-or-no outcome, the median is a
majority vote, and the convention matters for reading the counts.
Appendix~\ref{app:stats} gives the per-draw figures; the bound is violated in
zero cases under either convention.

\paragraph{The mechanism.}
For $120$ network configurations of experiment~03, each evaluated under both
bond laws, we formed
every Jacobian column, extracted its $p\times p$ sub-block $B$ on the shared
degrees of freedom and measured $\|B-B^\T\|_F/\|B\|_F$. Across all $6579$
passive columns the median is $1.4\times10^{-15}$ and the largest
$3.3\times10^{-12}$: machine zero, as Theorem~\ref{thm:main} requires, and
\emph{not one} passive column exceeds the $10^{-9}$ rank cutoff.
Across the $13158$ columns of the same networks with odd couplings the ratio has
median $0.755$ and maximum $1.414$. Both features are predicted. For a rank-one
block $B=\bx\by^\T$,
$\|B-B^\T\|_F/\|B\|_F=\sqrt{2}\,\lvert\sin\angle(\bx,\by)\rvert$, so
$\sqrt{2}$ is the ceiling, reached when the two factors are orthogonal, and the
odd population must spread below it as the angle varies, as
Fig.~\ref{fig:robust}d shows. The smallest odd value anywhere is
$1.8\times10^{-4}$, so even in the worst case the two populations are separated
by nearly eight orders of magnitude.

\paragraph{Which subspace non-reciprocity buys.}
Theorem~\ref{thm:recovery} makes this a direct measurement rather than an
inference. For each configuration we form the wedges $\bz_b|_P\wedge\bw_b|_P$ at
$a=0$ and compare their rank with the dimension of the projection of the full
Jacobian image onto $\mathcal{A}_P$; \eqref{eq:wedge} says the two are equal.
They agree in all $1008$ rows of experiment~18, which cover Delaunay networks
and disordered triangular lattices at $m=3$ to $6$ and overlaps
$p\in\{0,1,\lfloor m/2\rfloor,m\}$ ($945$ distinct configurations;
Appendix~\ref{app:stats}).

At full overlap the wedges span the whole of $\Lambda^2\R^p$ in $251$ of $252$
configurations. The exception is a $5\%$-disorder lattice configuration inside
the room criterion \eqref{eq:criterion} stated below, of the near-lattice kind
discussed in Appendix~\ref{app:stats}: its passive rank is $11$ against $d=15$,
its wedge rank is $8$ against $10$, and \eqref{eq:wedge} holds there too. So in
this construction non-reciprocity buys the antisymmetric complement and nothing
beyond it. In the $251$ configurations whose wedges span, keeping odd couplings
on only the $p(p-1)/2$ bonds chosen by a pivoted $QR$ of the unnormalised wedge
matrix, and leaving every other bond passive, spans $\mathcal{A}_P$ to first
order in $251$ of $251$, as Theorem~\ref{thm:recovery} requires. In this sweep
it also gives the full rank $m^2$ in $251$ of $251$, which is a statement about room rather
than about symmetry. Those $251$ are exactly the configurations in which the odd
Jacobian has full rank $m^2$, the hypothesis of the second clause of
Corollary~\ref{cor:submersion}, so the corollary applies to each.

The measurement is not confined to two dimensions. Repeated on
three-dimensional bond networks in the second implementation, experiment~18b,
with a pinning of its own (one node fully, two coordinates of a second, one of
a third) and the finite-difference Jacobian and $10^{-6}$ cutoff given above,
applied to the column-normalised projection, the two ranks agree in $60$ of
$60$ configurations.

The measurement also separates two effects that the rank gain conflates. At
$p=0$, where $\mathcal{A}_P=\{0\}$ ($252$ configurations), the odd network
still gains rank over the passive one, by up to $13$ dimensions. That gain comes from parameters, not symmetry: odd
couplings double $n_\theta$, and a room-limited network spends the extra
columns inside $\mathcal{S}_P$.

A second argument reaches the same conclusion from two rank equalities alone,
without Theorem~\ref{thm:recovery}; it counts dimensions where the wedge test
measures a projection.
In $117$ of the $120$ configurations of experiment~03 the passive image has
$\dim=\dim\mathcal{S}_P$. By Theorem~\ref{thm:main} that image lies inside
$\mathcal{S}_P$, and a subspace of $\mathcal{S}_P$ of the same dimension is
$\mathcal{S}_P$. In all $117$ the odd network reaches the full $m_Tm_S$, so the
part of the odd image orthogonal to the passive image is $\mathcal{A}_P$. What
\eqref{eq:wedge} adds is that the conclusion no longer needs either equality,
and so covers the room-limited configurations that argument had to exclude.

\paragraph{Exact realisation (experiment~19).}
On experiment~18's Delaunay geometries, re-run at every overlap from $2$ to $m$
and at two stiffness draws ($336$ configurations), Newton's method from $a=0$
at fixed $k=k_0$ realises a prescribed antisymmetric shared block of relative
size $\varepsilon$, from $10^{-4}$ to $10^{-1}$, in $336$ of $336$
configurations at every $\varepsilon$, with $\|a\|/\|k_0\|\approx2\varepsilon$
in the median. That count measures one thing only: the residual of
$\Pi_{\mathcal{A}_P}R$ against the target, below $2\times10^{-12}$ relative in
every solve.

That count says nothing about the rest of the block, and the rest of the block
moves.
On the same solves $\Pi_{\mathcal{S}_P}R$, the part of the response that
reciprocity already allowed and that nothing asked to change, is displaced by
$2.9\,\|A\|$ in the median and by up to $549\,\|A\|$, with $A$ the prescribed
block.

Positive definiteness goes as well. With every odd coupling free, the
symmetric part of $K_{\!f\!f}$ loses it in $115$ of the $336$ at
$\varepsilon=10^{-1}$, in one at $10^{-2}$ and in none below. At the largest
target, then, more than a third of these solves sit outside the neighbourhood
of $a=0$ on which Remark~\ref{rem:submersion} can promise anything about
stability. The sparse selections below lose it far more often, and
Remark~\ref{rem:submersion} gives their counts.

Where the passive rank attains $d$ ($327$ configurations), the joint solve in
$(k,a)$ reaches a prescribed full block of relative size $10^{-3}$ in $327$ of
$327$, with every $k_b>0$ and $\operatorname{sym}K_{\!f\!f}\succ0$, and in
$313$ of $327$ at $10^{-2}$. Of the $14$ misses, twelve drive a stiffness below
$10^{-4}$ (one of these also runs to the iteration cap), one other runs to the
iteration cap, and one stalls after five damped steps; in all $14$ the
positivity guard intercepts a Newton step that would cross $k_b=0$. That is a limit of the solver
and not a statement about existence.

With odd couplings free on only the $p(p-1)/2$ bonds of the pivoted-$QR$
selection, the target is met in $336$, $335$, $312$ and $210$ of $336$
configurations at $\varepsilon=10^{-4}$, $10^{-3}$, $10^{-2}$ and $10^{-1}$.
The rank-greedy selection, which certifies independence but not conditioning,
manages $287$, $218$, $127$ and $75$.

\begin{figure}[t]
  \centering
  \includegraphics[width=\textwidth]{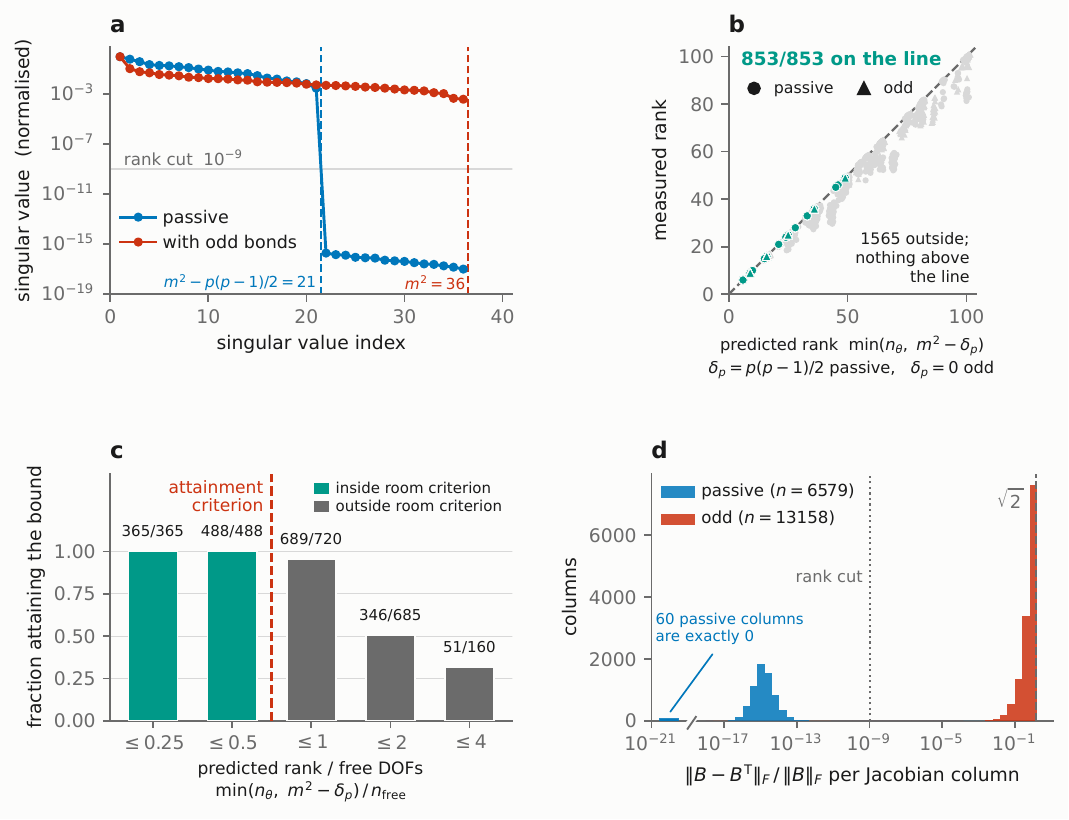}
  \caption{\textbf{Spectra, ranks and the column mechanism.}
  \textbf{(a)}~Singular values of the exact Jacobian, normalised by the
  largest, for one $30$-node network at full overlap, $m=6$, passive and with
  odd couplings. The dashed lines mark the indices $m^2-p(p-1)/2=21$ and
  $m^2=36$; the horizontal line is the rank cutoff $10^{-9}$.
  \textbf{(b)}~Measured against predicted rank over all $2418$ rank cases;
  circles are passive and triangles the odd control. Cases inside
  criterion~\eqref{eq:criterion} are coloured. Cases outside it are grey and
  jittered, with the jitter kept below the dashed diagonal.
  \textbf{(c)}~Fraction attaining the bound over the same $2418$ cases, binned
  by predicted rank per free degree of freedom. Each bar carries its count of
  attaining over total cases, and the dashed line separates the bins at or
  below $\tfrac12$. The legend's ``room criterion'' is
  criterion~\eqref{eq:criterion}.
  \textbf{(d)}~Distribution of $\|B-B^\T\|_F/\|B\|_F$, $B$ the shared block of
  one Jacobian column, over $120$ configurations ($6579$ passive and $13158$
  odd columns). The dotted line is the rank cutoff and the dashed line the
  ceiling $\sqrt{2}$; columns with a ratio of exactly zero are drawn left of
  the axis break.}
  \label{fig:robust}
\end{figure}

\paragraph{The bound.}
Over $2418$ cases spanning both network families, $m$ from $3$ to $10$, overlaps
$0$, $1$, $\lfloor m/2\rfloor$ and $m$, and both bond laws, including networks far too small for the
target and networks comfortably large, the measured rank exceeded its bound,
\eqref{eq:bound} for the passive law and the count of
Proposition~\ref{prop:odd} for the odd one, in \emph{zero} cases
(Fig.~\ref{fig:robust}b).

\paragraph{The law at fixed room.}
Where a network has enough parameters for the target, the measured deficit
equals $p(p-1)/2$ in $216$ of $216$ cases (Fig.~\ref{fig:law}; $12$ network
builds, each with its own random draw, at every overlap $p=0,\dots,m$ for
$m=4,5,6$, which is $5+6+7=18$ settings per build), with no fitted quantity.
It is independent of $m$: the three curves for $m=4,5,6$ collapse onto one, as
predicted. Of the $216$, the
$144$ with $p\ge2$ test a non-zero prediction; the other $72$ verify that the
deficit is zero when it should be. The non-reciprocal control reaches the full
$m^2$ in $216$ of $216$ cases at every overlap.

These two sets of $216$ are the reciprocal and non-reciprocal halves of the
$432$-case overlap sweep. Experiment~02 deposits $540$ cases in all, the
further $108$ being the large-$m$, full-overlap arm on disordered lattices from
which the $(m-1)/(2m)$ comparison is drawn. At full overlap the forfeited
\emph{dimension} fraction matches $(m-1)/(2m)$ at each of $m=4,6,8,10$: $0.375$,
$0.417$, $0.438$ and $0.450$. With odd
couplings the median forfeited fraction is zero at each of these $m$.

The $(m-1)/(2m)$ comparison is restricted to the cases with $n_\theta\ge m^2$,
where the rank is set by the symmetry branch and the bond count does not bind.
The restriction removes real data: the reciprocal $m=12$ and $m=14$ arms of the
sweep entirely, and three of the nine reciprocal networks at $m=10$. No
excluded case violates the bound or loses less than predicted, and where excluded
cases lose more, the losses are not small; Appendix~\ref{app:stats} reports what they measure.

\begin{figure}[t]
  \centering
  \includegraphics[width=\textwidth]{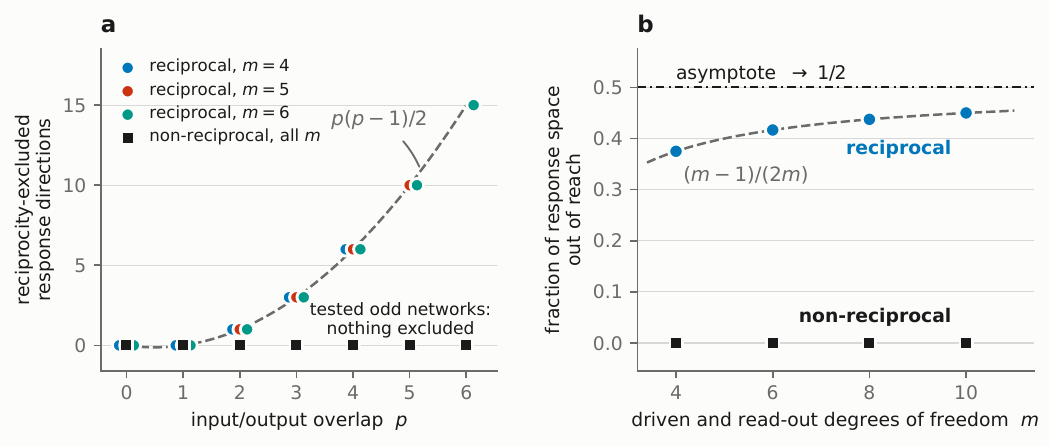}
  \caption{\textbf{The overlap law.}
  \textbf{(a)}~Lost response dimensions against the input/output overlap $p$.
  Coloured points are the reciprocal medians over $12$ network builds for
  $m=4,5,6$, dodged horizontally; black squares are the medians with odd
  couplings, pooled over the values of $m$ that admit that overlap. Each arm
  has $216$ cases. The dashed curve is $p(p-1)/2$.
  \textbf{(b)}~Fraction of the response space out of reach at full overlap,
  against $m$, reciprocal and with odd couplings. Each point is the median over
  the networks with $n_\theta\ge m^{2}$: nine, except six on the reciprocal arm
  at $m=10$; no reciprocal network at $m=12$ or $14$ meets the condition
  (Sec.~\ref{sec:numerics}). The dashed curve is
  $(m-1)/(2m)$ and the dash-dotted line its asymptote $\tfrac12$.}
  \label{fig:law}
\end{figure}

\paragraph{Attainment, and where it stops.}
Equality is not universal. On a given graph, pinning and layout it holds at
almost every configuration or at none, since one attaining configuration
implies attainment at almost every configuration
(Proposition~\ref{prop:generic}). Write $d=m_Tm_S-p(p-1)/2$ for the dimension
of $\mathcal{S}_P$, the symmetry branch of the ceiling of Theorem~\ref{thm:main}. On the non-reciprocal control arm there is no deficit to lose, and
$d$ is read as $m_Tm_S$; this convention holds throughout the counts below and
in the deposited table. On the square layouts of this sweep,
$d=m^{2}-\delta_p$, with $\delta_p=p(p-1)/2$ passive and $\delta_p=0$ odd,
the convention of the axis labels of Fig.~\ref{fig:robust} and of this sweep
only; elsewhere $\delta_p$ is always $p(p-1)/2$.

The bound is attained in $853$ of $853$ cases satisfying
\begin{equation}
  \label{eq:criterion}
  d \;\le\; \tfrac{1}{2}\,n_{\mathrm{free}}
  \qquad\text{and}\qquad
  d \;\le\; \tfrac{1}{2}\,n_\theta ,
\end{equation}
and the fraction attaining it then falls smoothly. Binned by the predicted rank
per free degree of freedom, $\min(n_\theta,\,m^{2}-\delta_p)/n_{\mathrm{free}}$,
the abscissa of Fig.~\ref{fig:robust}c, it is $365/365$, $488/488$,
$689/720$, $346/685$ and $51/160$ in the bins that end at
$0.25$, $0.5$, $1$, $2$ and $4$. On this sweep the bars at or below
$\tfrac12$ hold exactly the $853$ cases that satisfy \eqref{eq:criterion}
(Appendix~\ref{app:stats}).

This fall-off is consistent with
\eqref{eq:rankid}: when $d$ is small the projection $\pi_{T,S}$ sees
only a small window of $\Sym$, through which the structured subspace $CVC$ is
indistinguishable from a generic one; as $d$ grows towards the size of the whole
space, the structure of $V$ shows through and costs additional rank. Every such
case is a further \emph{loss}, never a violation.
The counts above are conditioned on an empirical criterion, not a proven one.
Criterion~\eqref{eq:criterion} is read off this sweep and needs a genericity
clause, which we do not have; what is \emph{proved} about attainment is the
certified family of Sec.~\ref{sec:certificates} and nothing wider.

Without a genericity clause the criterion is false. On a perfect square
lattice with diagonals and uniform stiffnesses (integer coordinates, so the
rank is exact over $\mathbb{Q}$ with no floating-point cutoff at all),
attainment fails \emph{inside} the criterion: at $m=4$, $p=3$, $d=13$ the exact
rank is $11$. Appendix~\ref{app:stats} gives a second case, at $m=3$, $p=2$,
and a sweep of random index sets, and experiment~18 a near-lattice third: the
$5\%$-disorder configuration above, at passive rank $11$ against $d=15$.

That clause cannot be about positions and stiffnesses alone. Attainment can fail
inside \eqref{eq:criterion} at configurations generic in both, for a reason that
is local and provable. Let $v$ be a vertex of degree two whose two bonds
have independent directions $\hat n_1,\hat n_2$, and let one of its two
neighbours, $w$, be instrumented. In any response to a load applied away from
$v$, equilibrium at $v$ reads
$k_1(\hat n_1\cdot\Delta_1)\hat n_1+k_2(\hat n_2\cdot\Delta_2)\hat n_2=0$, with
$\Delta_i$ the relative displacement across bond $i$. Because the two directions
are independent, both extensions vanish, so $\bu_v$ is a fixed geometric
function of its neighbours' displacements, the same for every stiffness vector.
Take $T=S$ to be the two degrees of freedom of $v$ together with the two of $w$.
The two columns loaded at $w$ then give one identity each. A third comes from
the columns loaded at $v$ itself: a load at $v$ along $\hat n_2$ is carried by
the second bond alone, so the first is again unstrained, which is one relation
between those two columns with coefficients fixed by the geometry. All three
hold identically in $k$, so the set reachable by varying $k$ lies in a
subspace of dimension $m(m+1)/2-3$, however large and however densely bonded the rest of the network
is.

Hanging such a vertex on a complete graph $K_r$ gives, at $m=4$ and
$d=10$, exact rank $7$ over $\mathbb{Q}$ at $r=11,15,19$, where $n_b=57,107,173$
and $n_{\mathrm{free}}=21,29,37$ and both halves of \eqref{eq:criterion} hold by
a margin that grows with $r$ (a one-off exact computation, outside the
deposited sweeps). What
\eqref{eq:criterion} is missing is a condition on the graph and the index sets
together; the counts above describe the ensembles sampled here.

We also expected perfect triangular lattices, being non-generic, to lose extra
rank. The counts lean that way, $72.6\%$ attained at zero disorder against $78.2\%$
over the two non-zero disorders, but the gap of $5.6$ percentage points is not
established (two-sided observed significance level $0.10$;
Appendix~\ref{app:stats}); room remains the main determinant of attainment.

\paragraph{The deficit is the intersection of Theorem~\ref{thm:duality}.}
An identity that only reproduced the answer where the bound is attained would
carry no information, since there both sides are forced. We therefore computed
both sides
of \eqref{eq:duality} directly, assembling $\mathcal{Y}$ and $KV^{\perp}K$ as
explicit subspaces of $\Sym(n_{\mathrm{free}})$ and taking their intersection
from the rank of the stacked bases, over $336$ configurations of which $40$
do \emph{not} attain the bound. The identity holds in $336/336$, including
$40/40$ of the non-attained cases, where the excess intersection
$\dim(\mathcal{Y}\cap KV^{\perp}K)-\max(0,d-n_b)$ runs from $1$ to $7$. The
sweep includes $144$ cases with $m_T\neq m_S$ and $112$ with $p=0$, both
$100\%$. In all $336$ the independence hypothesis $\dim V=n_b$ of
Remark~\ref{cor:generic} holds, so it is not the weak link; general position
is.

\paragraph{The bound in three dimensions and with angular springs.}
The proof of Theorem~\ref{thm:main} uses only that $K$ is symmetric at every
$k$, hence that $\mathrm{d}K$ is symmetric,
so it should survive a change of spatial dimension and a change of element, and
we checked both. For a bond, and for any element with one generalised strain
per stiffness, the linearised strain is a linear functional
$s_r=\bq_r\cdot\bu$ of the displacement, the energy is
$\sum_r(\theta_r/2)s_r^{2}$, and $\partial K/\partial\theta_r=\bq_r\bq_r^\T$,
symmetric and rank one. An angular spring at a hinge, whose $s_r$ is the change
of the angle between two bonds meeting there, qualifies just as a central-force
bond does. Over $432$ cases in three settings (two-dimensional bonds,
three-dimensional bonds on Delaunay-tetrahedralised points, and two-dimensional
bonds with angular springs added), the bound was violated in \emph{zero} cases
and was attained in $102/102$, $126/126$ and $102/102$ of the cases inside
criterion~\eqref{eq:criterion}. The shared block $B$ of the response, on whose
symmetry the error floor rests, stayed symmetric to within
$7.3\times10^{-16}$, $3.0\times10^{-15}$ and $1.9\times10^{-14}$ respectively,
measured as $\|B-B^\T\|_F/\|B\|_F$ and maximised over the cases with $p\ge2$.

\paragraph{The bound survives in the tangent response about a deformed state.}
Everything above is the linear response about an undeformed reference state,
whereas physical-learning experiments operate at large strain. For a
hyperelastic network the tangent stiffness at \emph{any} configuration is the
Hessian of the energy,
\begin{equation*}
  H \;=\; \sum_b\Bigl[\, k_b\,\bq_b\bq_b^\T
      \;+\; \frac{k_b(L_b'-L_b)}{L_b'}\,\bs_b\bs_b^\T \Bigr],
\end{equation*}
with $L_b'$ the current bond length, $L_b$ its rest length and $\bs_b$ the
transverse companion of $\bq_b$ (Table~\ref{tab:notation}), carrying the
in-plane unit perpendicular at each end of the bond. In
$\delta$ dimensions $\bs_b\bs_b^\T$ is replaced by the sum of $\delta-1$ such
terms over an orthonormal frame transverse to the bond. The two terms are the
material and the geometric contribution, both symmetric, so $H$ takes the place
of $K$ and the tangent compliance is again symmetric. Symmetry of
$H$ at every parameter value is the \emph{only} property Theorems~\ref{thm:main}
and~\ref{thm:floor} use. They therefore apply to the tangent response about any
pre-loaded state, which is what a small-signal measurement on a deformed sample
returns.

We checked this where it could fail. The nonlinear equilibrium is solved by
Newton with load stepping, and $\partial R/\partial k$ is taken by central
finite differences, so that the change of the equilibrium \emph{configuration}
with $k$ is included. Over $59$ cases at bond strains
up to $60.8\%$, the shared block of the tangent response stays symmetric to
$5\times10^{-16}$, and, at the $10^{-7}$ cutoff of this sweep, the bound is violated
in \emph{zero} cases and attained in all of them. All $59$ lie inside
criterion~\eqref{eq:criterion}, so the sweep is not evidence about attainment
outside it. The
rank is identical across three finite-difference step sizes in every case, so it
is not an artefact of the differencing. At the largest load the median peak
bond strain is $22\%$. Of $60$ planned cases, $59$ are reported; the one
dropped failed to converge in the perturbed finite-difference equilibrium
(Appendix~\ref{app:stats}).

That cutoff is set by the finite-difference noise floor, and the choice
determines the outcome: at the $10^{-9}$ of the
closed-form sweeps the measured ranks exceed the bound in $21$ of the $59$
cases, and in none of the $59$ do they agree across the three step sizes, while
at $10^{-7}$ they agree in all $59$. That is a statement about differencing and
not about Theorem~\ref{thm:main}. The spectrum has a gap where $10^{-7}$ is
placed; Appendix~\ref{app:stats} gives its size and names the deposited
columns that carry it.

\paragraph{The finite map inherits the symmetry, on a branch.}
Hold the loaded degrees of freedom at $\bu[S]$,
let every other one equilibrate, and write $\mathcal{E}^{*}(\bu[S])$ for the
stored energy in that state. On any branch along which the equilibrium is
single-valued and moves smoothly with $\bu[S]$, $\mathcal{E}^{*}$ is one
differentiable function and the envelope identity gives
$\mathbf{f}[S]=\partial\mathcal{E}^{*}/\partial\bu[S]$: the finite port map is
a gradient map. Its Jacobian is then symmetric at every amplitude. This is the
Hessian above read at finite deformation, not a second argument. That Jacobian
is the Schur complement of $H$ onto $S$. Where it is invertible, its inverse is
the tangent compliance block $C[S,S]$, of which the shared block is a principal
submatrix; it stops being invertible at a limit load, where a load-controlled
path snaps through.
Integrated, the same identity says that no closed port path inside the branch
does net work, so the state reached depends on where a load path ends and not
on how it got there. In that sense the constraint is not an artefact of
linearisation, though $p(p-1)/2$ counts tangent dimensions throughout and we do
not turn the gradient property into a count for the finite map.

What the argument uses, and Theorem~\ref{thm:main} does not, is that the
network is conservative along the whole branch; a tangent operator that happens
to be symmetric at one configuration is not enough. It also uses
single-valuedness, and invertibility of the Hessian block on the equilibrating
degrees of freedom, which is what makes the branch smooth and is a condition
distinct from the one that fails at a limit load. Where that Hessian block
becomes singular the branch is lost and $\mathcal{E}^{*}$ stops being one
function of $\bu[S]$, so the
symmetry is inherited branch by branch and not across the jump. On a branch
the finite map does not escape the symmetry; what remains open is what a path
gains by crossing branches.

\paragraph{The bound survives at finite frequency.}
A material driven periodically responds through the dynamic
Green function $G(\omega)=\bigl[K-\omega^{2}M+i\omega D\bigr]^{-1}$ instead of
through $C$, with $M$
the mass and $D$ the damping. If $M$ and $D$ are symmetric (lumped or
consistent mass, and viscous damping, including the Rayleigh form $\alpha_R
K+\beta_R M$), the bracket is \emph{complex symmetric}, so $G=G^{\T}$ (the
transpose, not the conjugate transpose). Theorem~\ref{thm:universal}
(Sec.~\ref{sec:universal}) then applies over $\mathbb{C}$, and the bound holds
at every frequency with the dimensions counted over $\mathbb{C}$. Each column
is again a symmetric rank-one outer product. For damping independent of the
stiffnesses, $\partial G/\partial k_b=-G\bq_b\bq_b^\T G$. For the Rayleigh
damping used in the sweep below, $D=\alpha_R K+\beta_R M$ depends on $k_b$
through $K$, and the derivative acquires a scalar factor,
$\partial G/\partial k_b=-(1+i\alpha_R\omega)\,G\bq_b\bq_b^\T G$, which changes
neither the rank nor the symmetry of the column.

The sweep covers $216$ cases: three network sizes, two geometries each,
$m=3,4,5$, at $p=0$ and at full overlap, and six frequencies from $\omega=0$ to $\omega=2$ in
the reduced units of Appendix~\ref{app:params}, in which the median bond
stiffness and the mean nodal mass are both $1$, so that $\omega$ is measured in
units of the square root of their ratio. Of these, $180$ have $\omega>0$. The bound
is violated in \emph{zero} cases and attained in $216$ of $216$, and at full
overlap the deficit equals $p(p-1)/2$ in all $108$. In those $108$
full-overlap cases, the only ones in which a shared sub-block exists, it stays
symmetric to a median $4\times10^{-16}$ and at worst $4.4\times10^{-15}$. (At
$\omega=0$ the damping drops out and $G$ is the static compliance, so the $36$
cases at $\omega=0$ repeat the static problem; excluding the $18$ of them at
full overlap moves the median to $5\times10^{-16}$ and leaves the maximum
unchanged.)

The control is the frequency-domain analogue of the odd coupling: a
circulatory or feedback term makes $D$ non-symmetric. The shared block $B$ is then
asymmetric, with $\|B-B^\T\|_F/\|B\|_F$ at a median $0.46$, and the image reaches the full $m_Tm_S$ in all
$18$ cases of that control, a sweep of its own (three networks at each of
$m=3,4,5$ and $\omega=0.2,1$), not a sub-count of the $216$.

At $\omega>0$ the stiffnesses are real while $G$ is complex, so the image of the
parameter-to-response map is a \emph{real} subspace, and its real dimension is
not fixed by the complex rank: a complex rank of $r$ permits any real rank from
$r$ to $2r$ (the $1\times2$ Jacobians $[1\ \ i]$ and $[1\ \ 2]$ both have
complex rank $1$, with real ranks $2$ and $1$ respectively). We therefore
measure it separately, as the rank of the stacked real Jacobian
$[\Re J;\,\Im J]$ under the same relative cutoff, and deposit it alongside the
complex rank. Generically the real and imaginary parts of a column are
independent directions and the real dimension is $\min(n_\theta,2d)$, twice the
static value wherever the bond count does not bind, that is wherever
$n_\theta\ge2d$. The sweep finds this in $179$ of the $180$ cases with
$\omega>0$.

The one exception (sixteen nodes, $m=5$, full overlap,
$\omega=0.2$) has real rank $29$ against $30$: near a resonance the leading
singular value grows to $\sim\!7\times10^{2}$ and the thirtieth falls to
$6\times10^{-10}$ of it, just under the cutoff, while the numerical floor sits
at $10^{-17}$. The thirtieth direction is present but weakly conditioned, and
the case shows that the doubling is generic rather than forced. Where it holds,
the forbidden fraction is unchanged, because the reachable and the forbidden
parts double together. Driving at a frequency generically buys independent
directions; it never buys any of the ones reciprocity forbids.

\paragraph{A control that separates symmetry from a generic rank shortfall.}
If the deficit were an artefact of small networks rather than of Maxwell--Betti,
it would appear with disjoint drive and read-out sets too. A separate sweep of
$640$ cases isolates the two extremes. With $T\cap S=\emptyset$ the passive
networks reach the full $m^2$ in $32/32$ cases at $m=2,3,4$, whereas with $T=S$
they reach only $m(m+1)/2$ in $32/32$; this is the symmetry effect, present
only when $p>0$. At $m=5$ and $m=6$ the disjoint passive case reaches $m^2$ in
$27/32$ and $19/32$. This is the room effect of \eqref{eq:criterion}, and at
$m=6$ it reaches the odd networks too, far more weakly: they reach $m^2$ in
$32/32$ at $m=5$ and $31/32$ at $m=6$ over the same $32$ configurations. The
two effects are therefore separable by their signature: the symmetry effect
switches on with $p$ and never touches an odd network, and the room effect
grows with $m$ and by $m=6$ touches both. Over the $2418$
rank cases of experiment~03 the bound is exceeded in none and attained in all
$853$ inside \eqref{eq:criterion}; in the overlap sweep of experiment~02, where
the network has room, the deficit equals $p(p-1)/2$ in $216$ of $216$ cases.


\section{The floor, seen by a learning rule}
\label{sec:learning}

A capacity statement earns its keep only if a learning rule actually runs into
it. We therefore trained networks on targets with a known forbidden part, first
with a second-order optimiser and then with a contrastive rule whose update
direction is bond-local, and asked where training stops relative to the floor
of Theorem~\ref{thm:floor}. Both rules end at it in almost every run; the
allowed-target control, and for the optimiser also the odd-coupling control,
goes below it. On an arbitrary target that floor is not
attained, and the passivity term of Theorem~\ref{thm:floorpd} accounts for most
of the remaining gap.

\paragraph{Protocol.}
We fit the log-stiffnesses $u=\log k$ to a target block by Levenberg--Marquardt
on the residual $\mathrm{vec}(R(k)-R^{*})$, using the exact Jacobian
\eqref{eq:jaccol}, so that a plateau is a property of the reachable set and not
of a noisy gradient. Targets are built with a known forbidden part: starting
from a realisable block $R_0=R(k_0)$ at full overlap ($T=S$, $p=m$), we set
\begin{equation*}
  R^{*} \;=\; R_0 \;+\; \varepsilon\,\Psi , \qquad \|\Psi\|_F = 1 ,
\end{equation*}
with $\Psi$ either \emph{forbidden} (antisymmetric on the shared block, hence in
$\mathcal{A}_P$) or \emph{allowed} ($\Psi\in\mathcal{S}_P$). The floor is then
evaluated from the target by the formula of Theorem~\ref{thm:floor}, not reused
from $\varepsilon$; the two agree to $2.3\times10^{-12}$ relative, and the floor
of the allowed targets comes out at $4.3\times10^{-12}$ relative to
$\varepsilon$, i.e.\ zero. Nothing is fitted.

\paragraph{Why the starting point matters.}
Starting the training at $k_0$ would be circular. For a forbidden target, $k_0$
\emph{is} the global minimiser. Every Jacobian column lies in $\mathcal{S}_P$
while the residual $-\varepsilon\Psi$ lies in $\mathcal{A}_P$, so the
Gauss--Newton gradient $J^\T\mathbf{r}$ vanishes identically there: the
optimiser could not move, and ``the error equals the floor'' would hold by
construction rather than by measurement. Every run reported below therefore
starts from an \emph{independent} random stiffness vector, unrelated to $k_0$,
and takes the best of four restarts. The optimiser has to find its way to the
constrained optimum, and arriving at the floor is a result.

\paragraph{A second-order optimiser stops at the floor.}
The runs number $144$, but they are not $144$ independent samples: three network
sizes $\times$ two geometries each $\times$ three values of $m$ $\times$ two
target draws $=36$ configurations, each trained at four amplitudes of
$\varepsilon$ spanning three decades. Over those runs the passive network on a
forbidden target reaches
\begin{equation*}
  \begin{gathered}
  \frac{\text{error reached}}{\text{floor predicted}} :
  \quad \text{median } 1-7.8\times10^{-10},
  \quad \text{within } 10^{-6} \text{ in } 139/144 \text{ runs,}\\
  \text{within } 0.1\% \text{ in } 142/144,
  \quad \text{within } 1\% \text{ in } 143/144 .
  \end{gathered}
\end{equation*}
No run ends meaningfully below the floor. Of the $144$, $126$ end below it, by a
relative amount of median $1.1\times10^{-9}$ and worst case
$2.65\times10^{-7}$, which is round-off in the solve at the conditioning these
runs reach (an exactly symmetric block cannot fall below the floor) and so no
violation (Fig.~\ref{fig:learning}b). The largest ratio of error reached to
floor predicted is $1.0228$, one run of the $144$ that stopped $2.3\%$ above
the floor.

The two controls locate the plateau in the reachable set, away from the
optimiser and away from the network. On an allowed perturbation of the
\emph{same size}, on the same network, with the same optimiser and the same
starts, the error falls to a median $8.8\times10^{-9}$ of the floor. With odd
couplings switched on, the error on the forbidden target falls to a median
$2.9\times10^{-11}$ of the floor, and below $10^{-6}$ in \emph{all} $144$ runs
(Fig.~\ref{fig:learning}a,c).

\paragraph{Where the controls are weaker, and why.}
The allowed-target control is decisive at small perturbations and less so at the
largest. Relative to the floor it falls below $10^{-6}$ in $116$ of $144$ runs,
with a worst case of $0.44$, and the failures concentrate at large
$\varepsilon$: $18/36$ fall below $10^{-6}$ of the floor at $\varepsilon/\|R_0\|_F=10^{-1}$
against $34/36$ at $10^{-3}$ and $10^{-4}$. We attribute these failures to the
optimiser: where the perturbation is a tenth of the block, the map
$k\mapsto R(k)$ is far from linear and Levenberg--Marquardt sometimes stops
early. Whatever their cause, they cannot explain the forbidden arm, which stops
at the floor at \emph{every} $\varepsilon$, including the ones where the
control succeeds outright. The odd control has no
such tail: $144/144$.

\paragraph{The same floor, found by coupled learning.}
A plateau found by a second-order optimiser may still be an artefact of that
optimiser, so we repeated the test with \emph{coupled learning}
\citep{Stern2021} in its contrastive form, in which each bond sees only its own
extension in two physical states,
\begin{equation*}
  \Delta k_b \;\propto\; \sum_j\Bigl[(e_b^{\mathrm{fr},j})^{2}-(e_b^{\mathrm{cl},j})^{2}\Bigr],
\end{equation*}
where $j$ indexes the training pairs $(\mathbf{f}_j,\by^{*}_j)$ of applied load
and desired read-out vector. Here $e_b^{\mathrm{fr},j}$ is the bond extension in the free
state solving $K_{\!f\!f}\bu=\mathbf{f}_j$ and $e_b^{\mathrm{cl},j}$ that in the nudged
state
$(K_{\!f\!f}+\beta\Pi_T^\T\Pi_T)\bu=\mathbf{f}_j+\beta\Pi_T^\T\by^{*}_j$, with $\Pi_T$
selecting the read-out coordinates $T$. The update itself uses no Jacobian and
no global loss gradient; this is a rule a material could plausibly implement.

The update \emph{direction} is local. Its magnitude is not: the step is
rescaled by the largest relative change over all bonds
(Appendix~\ref{app:params}), and the rate schedule, the stopping test and the
choice of which state to return read the global loss $\|R-R^{*}\|_F$, so a
supervisor measuring that loss is assumed.

Theorem~\ref{thm:floor} must bind this rule too, because the floor is a
property of the reachable set and not of an algorithm. The open question is
whether a local rule actually \emph{reaches} the floor or stalls somewhere above
it. It reaches it. Over $24$ runs on six networks, started from stiffnesses
independent of the target, the error reached divided by the floor predicted has
median $1.000001$, minimum $1.000000$ and maximum $1.088$; within the step
budget, $22$ of the $24$ runs finished within $0.1\%$ of the floor and none
below it (Fig.~\ref{fig:coupled}b). The trajectories of Fig.~\ref{fig:coupled}a are
one illustrative pair: its ratio on the deposited run, $1.0000003$, is close to
but not equal to the median over the $24$ runs, $1.0000012$, and every
conclusion rests on all $24$.

Because a slow rule and a blocked rule can end at similar errors, we also
measured which one has \emph{stopped}: the fractional improvement over the final
fifth of each run. On a forbidden target it is $3.7\times10^{-7}$ at the median,
below $1\%$ in $23$ of $24$ runs, and below the $10^{-3}$ that
Fig.~\ref{fig:coupled}c marks in $20$ of $24$, so the typical run is flat by
either threshold. On an allowed perturbation of the same size, on the same network,
from the same start, it is $0.43$: still learning, and by then already a median
factor of $3\times10^{3}$ below where the forbidden run stopped
(Fig.~\ref{fig:coupled}a,c); in the pair of panel~a the allowed run's
error descends $6.2$ decades
in all, $5.0$ of them below the floor. Yet $6$ of the $24$ allowed runs land
within a factor $20$ of the \emph{forbidden} arm's floor, which is the only
floor in the comparison (the allowed arm's own floor is zero). The two arms therefore separate cleanly in
three quarters of the runs.

The two runs that end more than $1\%$ above the floor, at $1.020$ and $1.088$,
are both on the $30$-node network where the rule converged most slowly, and
both exhausted the step budget. They
are not the same kind of run. The $8.8\%$ one has a tail improvement of $0.38$:
it had not stopped, and we count it as
budget-limited rather than as a stall. The $2.0\%$ one has a tail improvement
of $2.3\times10^{-3}$, close to flat. The count of $22$ in $24$ is therefore
finite-budget performance and not evidence of convergence in general. Eleven of
the $24$ runs reached the budget of $40\,000$ steps and thirteen stopped early;
for the thirteen the reported value is where the rule settles rather than where
we chose to stop it, and for the eleven it is not.

Because $\beta$ is a parameter of the rule and not of the network, we also
varied the nudge amplitude over two decades, $\beta=10^{-2}$, $10^{-3}$,
$10^{-4}$, on one network and one target: the stall value moves by less than
$4\times10^{-5}$ relative. The update is rescaled to a fixed relative rate
$\chi$, which cancels the $1/(2\beta)$ prefactor of the update
(Appendix~\ref{app:params}), so the sweep probes only the
$O(\beta)$ error in the step \emph{direction}, and shows only that the stall
value is not an artefact of the particular $\beta$ we chose.

\begin{figure}[t]
  \centering
  \includegraphics[width=\textwidth]{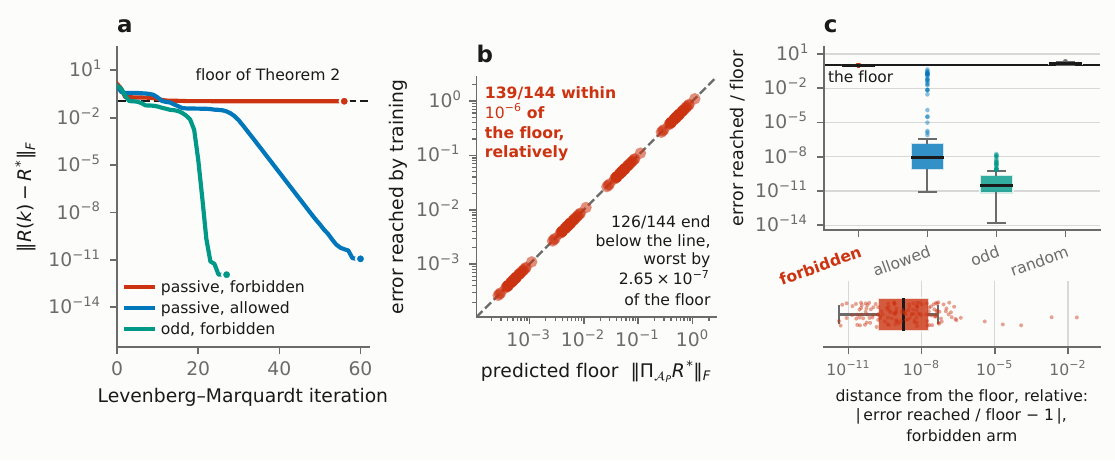}
  \caption{\textbf{The error floor, seen by a learning rule.}
  \textbf{(a)}~Training curves for one network and one target ($30$ nodes,
  geometry seed $0$, $m=5$, $\varepsilon/\|R_0\|_F=10^{-2}$), all arms started
  from the same independent random stiffnesses. Each curve is drawn only for as
  many iterations as it actually ran, with a dot at its last. Arms: the passive
  network on the forbidden target, the same optimiser on an allowed
  perturbation of the same size, and the odd network on the forbidden target.
  \textbf{(b)}~Error reached against the floor Theorem~\ref{thm:floor} predicts,
  over $144$ runs and three decades of perturbation size; the dashed diagonal
  is equality. The in-panel count uses a relative tolerance of $10^{-6}$; at
  the $0.1\%$ of the text it is $142/144$.
  \textbf{(c)}~Error reached relative to the floor, by arm, drawn as box plots:
  the box is the interquartile range, the line inside it the median, the
  whiskers reach $1.5$ times the interquartile range, and every point beyond
  them is drawn individually. The strip beneath the axis resolves the forbidden
  arm, whose box collapses onto the floor line, by plotting
  $\lvert\text{error reached}/\text{floor}-1\rvert$ for its $144$ runs on a
  logarithmic scale. All four boxes are experiment~04. The first three arms
  are divided by the floor of the forbidden target on the same configuration.
  The fourth box, labelled \emph{random}, is the random-target ensemble
  on the same configurations, divided by its own floor, and is not the
  experiment~13 generic-target ensemble of the text. Its $144$ plotted points
  carry only $36$ distinct values, one random target per configuration, fitted once and copied into the four
  amplitude rows.}
  \label{fig:learning}
\end{figure}

\paragraph{How tight is the floor for an arbitrary target?}
Theorem~\ref{thm:floor} is a lower bound, and the targets above were built so
that it is attained. A fourth arm trains the same networks on a \emph{random}
block of the same scale, for which nothing guarantees tightness: the
\emph{experiment~04 random-target ensemble}. It draws one random target per
configuration, fits it once outside the amplitude loop and copies the result
into each of the four amplitude rows, so its $144$ rows in the deposited table
hold only $36$ distinct measurements. Over those $36$, which also make up the
fourth box of Fig.~\ref{fig:learning}c, the error reached is $1.400$ times that
floor at the median, ranging from $1.075$ to $2.192$.

\paragraph{The origin of the slack.}
On a second, independent ensemble of generic targets, most of such a gap is
\eqref{eq:floorpd} at work rather than a limitation of the networks. A random block has an indefinite symmetric part. Its passivity term is
therefore non-zero, and Theorem~\ref{thm:floor} cannot see it. To separate the
two terms we retrain on a second, independent set of generic blocks, the
\emph{experiment~13 generic-target ensemble}: $36$ configurations over the same
six networks, with the same optimiser and four restarts drawn the same way but
its own targets, so its numbers are not expected to reproduce those of the
experiment~04 ensemble, and do not. There the ratio of error reached to floor
predicted has median $1.369$, from $1.093$ to $2.003$, against
Theorem~\ref{thm:floor}, and median $1.016$, from $1.002$ to $1.099$, against
Theorem~\ref{thm:floorpd}. Both are best-of-four-restart figures, and
quadrupling that budget barely moves them: on the same targets, held fixed, the
best of sixteen restarts puts the two medians at $1.365$ and $1.010$
(experiment~21, part~7), which points to the reachable set rather than the
search as the source of the slack. The slack against Theorem~\ref{thm:floor}
was mostly the second term.

The separation is sharpest where reciprocity says nothing at all. Symmetrise the
shared block of a realised response, so that Theorem~\ref{thm:floor}'s floor is
zero up to round-off, then flip the sign of its smallest eigenvalue. The block
is still symmetric to round-off, so Theorem~\ref{thm:floor} predicts a floor of
$1.5\times10^{-16}$ at the median, which is nothing at all. Training stalls at a
median of $0.526$, in the same absolute units, where Theorem~\ref{thm:floorpd}
predicts $0.449$; relative to the target norm, the medians are an error reached
of $9.3\%$ against a floor of $8.8\%$. Each of these medians is taken separately
over the $36$ runs, so neither quotient, including the $1.05$ of the relative
pair (taken from the unrounded medians), is the slack of any one run. The run-by-run ratio of error reached to
floor predicted has median $1.08$. Passivity is the floor that binds here, and
Theorem~\ref{thm:floor}, which sees only reciprocity, misses it.

Across all $108$ runs of experiment~13, spanning its three target classes
(constructed, generic and indefinite; Appendix~\ref{app:params}), no
run ends below the full floor by more than $3.4\times10^{-11}$ of the target
norm. The hypothesis behind the full floor was measured rather than assumed:
over $72$ realised shared blocks on the same six networks (three sizes, two
geometries each, $m=4,5,6$ and four index-set draws), the shared block is
positive definite in $72$ of $72$, with a smallest eigenvalue never closer to
zero than $2.4\%$ of the largest, and with an antisymmetric part at most
$1.1\times10^{-15}$ of its norm.

A run that stalls \emph{above} \eqref{eq:floorpd} may be blocked by the size
of the network, or by the optimiser, or by nothing but the slack of the bound;
the floor alone does
not distinguish those (Sec.~\ref{sec:consequences} reads a stall \emph{at}
it). The median gap above the floor falls from $36.9\%$ to $1.6\%$ when the
floor includes the passivity term. The medians are over the $36$ targets of the
experiment~13 generic-target ensemble, each the best of four restarts, and each
gap is taken as a fraction of the floor it stands above. A median gap of
$1.6\%$ makes the floor usable as a diagnostic.

On targets built to attain it,
the floor of Theorem~\ref{thm:floor} is also where training ends. From starts
independent of the target, a second-order optimiser reaches it within $0.1\%$ in
$142$ of $144$ runs, and a contrastive rule with a bond-local update direction,
under a budget of $40\,000$ steps, reaches it within $0.1\%$ in $22$ of $24$.

\begin{figure}[t]
  \centering
  \includegraphics[width=\textwidth]{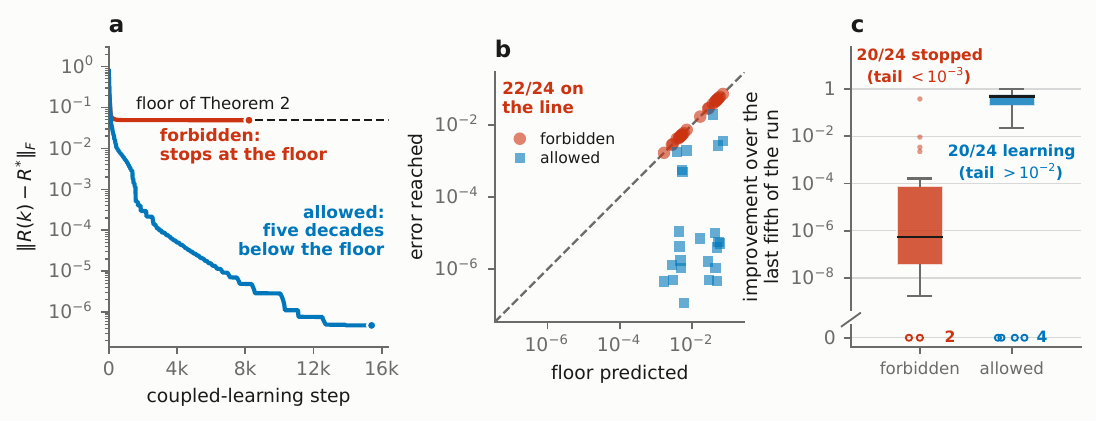}
  \caption{\textbf{The floor found by a contrastive rule with a bond-local update
  direction.}
  \textbf{(a)}~Coupled learning on one network: the $30$-node network at $m=5$,
  geometry seed $1$, the pair whose trajectories the script deposits. That
  pair is hard-coded in the script, chosen after a first run as one near
  the median (eight of the other $23$ lie closer to it). The curves are the best error reached so far, and so
  monotone by construction, on a forbidden target and on an allowed
  perturbation of the same size from the same start; the dashed line is the
  floor of Theorem~\ref{thm:floor}.
  \textbf{(b)}~Error reached against the floor predicted, all $24$ runs and both
  arms; the dashed diagonal is equality.
  \textbf{(c)}~Fractional improvement over the last fifth
  of each run, as box plots (interquartile range, median, $1.5\times$
  interquartile whiskers, and every outlying run drawn), with the runs that
  improved by exactly nothing counted on their own row below the axis break.
  The boxes summarise the runs with a non-zero improvement ($22$ forbidden,
  $20$ allowed), so their medians ($5.5\times10^{-7}$ and $0.50$) differ from
  the medians over all $24$ runs quoted in the text ($3.7\times10^{-7}$ and
  $0.43$).}
  \label{fig:coupled}
\end{figure}


\section{Consequences for physical learning}
\label{sec:consequences}

This section draws the consequences of
Theorems~\ref{thm:main}--\ref{thm:floorpd} and of Sec.~\ref{sec:recovery} for
physical learning: what can be
computed before training, where to place sensors and actuators, and what the
published layouts pay. Each consequence is testable on an existing
experimental platform, at layouts none of them currently uses
(Table~\ref{tab:layouts}, where every layout has $p=0$). In the tabulated
layouts a terminal is an actuator or a sensor, not both, and the two published
demonstrations with a shared terminal, discussed below, instrument no pair of
terminals in both directions (the locomotion demonstration of \citet{Du2026}
as we read it, the same-node demonstration of \citet{Li2024} as we assume).
Reaching a layout that pays therefore means adding instrumentation on hardware,
not only moving it; the comparison of $p=m$ with
$p=0$ below is run in simulation.

\paragraph{Three of these results can be used before anything is trained.}
First, for a force-driven block the floor of Theorem~\ref{thm:floorpd}
follows from the target and the layout alone. Second, the layout that
maximises the reachable ceiling follows from which resource is scarce, before
any network is built (Proposition~\ref{prop:budget}; both layout rules are
stated together below), and no choice of overlap buys a target's antisymmetric
part. Third, odd couplings on only $\delta_p=p(p-1)/2$
bonds, chosen at the passive network in advance, realise that part exactly
when it is small enough
(Corollary~\ref{cor:submersion}), provided the passive wedges span, which is
measured instance by instance and not proved in general.
Proposition~\ref{prop:ratio} and Corollary~\ref{cor:price} bound the cost of
that recovery from the target and the wedges, the corollary to first order.

\paragraph{Testing the floor needs pairs of terminals instrumented in both directions.}
For each pair $\mu\neq\nu$ in $P$, a force-driven test drives $\mu$ and reads $\nu$,
and then the reverse. A passive, linear network trained on a target whose two
directions differ cannot bring $\|R-R^{*}\|_F$ below the floor of
Theorem~\ref{thm:floorpd}. In simulation, on targets with a reachable
symmetric part, both learning rules end on this floor in nearly every run,
within $0.1\%$ (Sec.~\ref{sec:learning}).
A layout with $p\le1$, or a list of coordinate tasks without such a pair
(Corollary~\ref{cor:digon}), shows no reciprocity floor. Under an imposed
displacement Theorem~\ref{thm:dirichlet} predicts instead that $D_{\nu\mu}$ and
$D_{\mu\nu}$ share a sign and multiply to less than one. Neither prediction
covers non-reciprocal materials, which the odd-coupling controls take below the
floor, or finite port paths that cross equilibrium branches.

\paragraph{Self-measurement is expensive at a fixed block size.}
A force-driven learning material that senses at the same locations it
actuates has $p=m$
and forfeits a fraction $(m-1)/(2m)$ of its nominal capacity: $37.5\%$ at $m=4$,
$45\%$ at $m=10$, approaching one half. Separating the actuated and sensed
degrees of freedom removes that loss without changing a single stiffness; any
limit set by the number of tunable bonds still applies. This separation is the
design rule, and we tested it directly.

Keeping the network, starting stiffnesses, number of parameters, learning rule,
step budget and target construction (a realisable block plus a $5\%$ generic
perturbation) the same, we moved only the $m$ read-out degrees of freedom,
from the driven ones ($p=m$) to $m$ others ($p=0$). In all twelve matched
pairs the overlapping layout stalled at or near
the floor of Theorem~\ref{thm:floor} (reached/floor: median $1.00000$, none
below, the worst $1.062$, on the $24$-node pair at $m=3$). The disjoint
layout, whose floor is identically zero, reached a median factor
$7\times10^{5}$ lower (Fig.~\ref{fig:design}).

Fig.~\ref{fig:design}a shows one pair for illustration only; the conclusions
rest on all $12$ pairs. On reached/floor, the quantity quoted above, that pair
ends at $1.012$, away from the median. With sensors on the actuators its error approaches the floor
from above and never crosses it: the best error reached so far stands $7.2\%$
above the floor at step $100$ ($7.6\%$ for the raw error at that step), $2.8\%$
at step $10^{4}$, and $1.2\%$ when the run exhausts its budget of $40\,000$
steps, still creeping down. With the sensors moved elsewhere, at no cost in
stiffnesses or parameters, the same rule descends six and a half decades before stopping
at step $32\,046$ on the early-stopping rule of Appendix~\ref{app:params}. Only
this pair's training curves are deposited, so the twelve cannot be ranked by
speed of convergence.

The comparison establishes a floor on one side and its absence on the other.
The median factor of
$7\times10^{5}$ measures the step budget as much as the capacity. With
no floor to settle on, the disjoint runs stop only on the step budget or on the
early-stopping rule (the table records no step count for the eleven pairs not
plotted), so a longer budget or patience would enlarge the factor. In four of
the twelve pairs the gap was only $2$--$5\times$. We attribute this to the
disjoint run not having converged; the table cannot confirm it, since it records
no step count for those runs.

\paragraph{Under a fixed budget of accessed degrees of freedom the design rule inverts.}
The comparison above holds $m$ fixed: the same $m\times m$ target, the sensors
moved, with separation paid for in instrumentation. An experimentalist more
often faces a fixed number of degrees of freedom they can afford to
instrument. Write $n_U=|T\cup S|$ for that number, and $\alpha=m_T-p$,
$\gamma=m_S-p$ for the exclusively read and exclusively driven counts, so
$n_U=\alpha+\gamma+p$. This is a budget of access points, each of which can be
driven and read (a terminal of an electrical network, a self-sensing actuator,
a robotic unit with motor and encoder), and a degree of freedom used for both
is charged once, unlike in a budget of separate sensors and actuators,
$m_T+m_S$.

\begin{proposition}[optimal layout at fixed instrumentation]
  \label{prop:budget}
  Let $n_U=|T\cup S|\ge2$ be fixed, so that a degree of freedom that is both
  driven and read is charged once, and suppose the bond count does not bind.
  Then the ceiling of
  Theorem~\ref{thm:main}, maximised over the split $\alpha+\gamma=n_U-p$, is
  \begin{equation}
    \label{eq:budget}
    \Lambda(p) \;=\; \Bigl\lfloor \tfrac{1}{4}(n_U+p)^{2} \Bigr\rfloor
                \;-\; \tfrac{1}{2}p(p-1) ,
  \end{equation}
  which increases strictly with $p$ at every integer step, by
  $\Lambda(p+1)-\Lambda(p)=\lceil (n_U+p)/2\rceil-p\ge 1$. The maximum is at \emph{full}
  overlap, $\Lambda(n_U)=n_U(n_U+1)/2$, against
  $\Lambda(0)=\lfloor n_U/2\rfloor\lceil n_U/2\rceil$ for the disjoint layout, a
  factor $2+1/\lfloor n_U/2\rfloor$, approaching two from above.
\end{proposition}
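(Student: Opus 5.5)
With the bond count not binding, the ceiling of Theorem~\ref{thm:main} is its symmetry branch, $m_Tm_S-\tfrac12p(p-1)$. The plan is to write this in the budget variables and maximise over the split. With $m_T=\alpha+p$, $m_S=\gamma+p$ and $\alpha+\gamma=n_U-p$ fixed, the sum $m_T+m_S=n_U+p$ is fixed. Only the product $m_Tm_S$ depends on the split, since the deficit $\tfrac12p(p-1)$ depends on $p$ alone. So the first step is to maximise a product of two non-negative integers with fixed sum $n_U+p$. The constraints are $m_T,m_S\ge p$, because $\alpha,\gamma\ge0$.

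The integer maximum of $xy$ subject to $x+y=s$ is $\lfloor s/2\rfloor\lceil s/2\rceil=\lfloor s^2/4\rfloor$, attained at the balanced split. To see this, write $xy=\tfrac14\bigl(s^2-(x-y)^2\bigr)$ and note that $|x-y|$ has the parity of $s$. The one point to check is feasibility: we need $\lfloor (n_U+p)/2\rfloor\ge p$. This is equivalent to $n_U+p\ge2p$, i.e.\ $p\le n_U$, which always holds because $P\subseteq U$. This gives \eqref{eq:budget}.

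For monotonicity, compute
\[
\Lambda(p+1)-\Lambda(p)=\bigl\lfloor (s+1)^2/4\bigr\rfloor-\bigl\lfloor s^2/4\bigr\rfloor-p,
\qquad s=n_U+p .
\]
Checking the two parities of $s$ gives $\lfloor (s+1)^2/4\rfloor-\lfloor s^2/4\rfloor=\lceil s/2\rceil$. Hence the increment is $\lceil (n_U+p)/2\rceil-p$.

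The main obstacle is proving the increment is at least one over the whole admissible range $0\le p\le n_U-1$ (the step to $p+1$ requires $p+1\le n_U$). Since $p\le n_U-1$, we have $n_U+p\ge 2p+1$, so $\lceil (n_U+p)/2\rceil\ge p+1$, and the increment is at least $1$.

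For the endpoints, $\Lambda(n_U)=n_U^2-\tfrac12n_U(n_U-1)=n_U(n_U+1)/2$ and $\Lambda(0)=\lfloor n_U^2/4\rfloor=\lfloor n_U/2\rfloor\lceil n_U/2\rceil$. The ratio of the two follows by splitting on the parity of $n_U$:
\begin{itemize}
\item for $n_U=2k$ the ratio is $(2k+1)/k=2+1/k$;
\item for $n_U=2k+1$ it is $(2k+1)(2k+2)/\bigl(2k(k+1)\bigr)=(2k+1)/k=2+1/k$.
\end{itemize}
In both cases $k=\lfloor n_U/2\rfloor$, which is positive since $n_U\ge2$. So the ratio is $2+1/\lfloor n_U/2\rfloor$, which decreases to two from above as $n_U$ grows.
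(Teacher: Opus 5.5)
Your proof is correct and follows essentially the same route as the paper's: maximise $(\alpha+p)(\gamma+p)$ at the balanced split to get $\lfloor (n_U+p)^2/4\rfloor$, subtract $p(p-1)/2$, obtain the increment by cases on the parity of $n_U+p$, and compute the endpoint ratio by cases on the parity of $n_U$. Your explicit feasibility check $\lfloor (n_U+p)/2\rfloor\ge p$ is automatic in the paper's version, which balances $\alpha,\gamma\ge0$ directly. One small wording point: $2+1/\lfloor n_U/2\rfloor$ is non-increasing in $n_U$ rather than strictly decreasing, since it is constant on each pair $n_U=2k,2k+1$.
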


\begin{proof}
  For fixed $p$ the integers $\alpha,\gamma\ge0$ with $\alpha+\gamma=n_U-p$ maximise
  $(\alpha+p)(\gamma+p)$ when
  they are as equal as possible, and then
  $(\alpha+p)(\gamma+p)=\lfloor(n_U+p)/2\rfloor\lceil(n_U+p)/2\rceil=\lfloor(n_U+p)^2/4\rfloor$;
  subtracting the codimension $p(p-1)/2$ gives \eqref{eq:budget}. The stated
  increment follows by cases on the parity of $n_U+p$, and is at least $1$ for
  $p<n_U$, so $\Lambda$ is strictly increasing and maximal at $p=n_U$, where $\alpha=\gamma=0$
  and
  $\Lambda=n_U^{2}-n_U(n_U-1)/2=n_U(n_U+1)/2$. At $p=0$ the formula reads
  $\lfloor n_U^{2}/4\rfloor=\lfloor n_U/2\rfloor\lceil n_U/2\rceil$, and the ratio
  of the two is $2+1/\lfloor n_U/2\rfloor$ for either parity of $n_U$.
\end{proof}

The maximum $\Lambda(n_U)=n_U(n_U+1)/2$ is itself the classical count of
independent self- and mutual resistances of a grounded electrode
array~\citep{Butler2019}; what the proposition adds is the comparison across
overlaps at a fixed budget.

If sensors and actuators are instead counted separately, with $N=m_T+m_S$
fixed and a shared degree of freedom charged twice, the ceiling is
$\lfloor N^2/4\rfloor-p(p-1)/2$, maximal at $p\le1$ and
decreasing thereafter: for even $N$, full overlap gives $N^2/8+N/4$ against
$N^2/4$ for the disjoint layout, a factor $2N/(N+2)$ the other way, and the
design rule stands.

The inversion is arithmetic. Reciprocity removes a fraction approaching $1/2$
of $m^2$, whereas splitting a budget into two disjoint halves divides the count
by $4$, because it halves \emph{both} sides of the block.

We realised every overlap $p=0,\dots,n_U$ at the balanced split of fixed $n_U$
(Appendix~\ref{app:params}) on nine networks of $30$ to $56$ nodes. The
measured rank equals $\min(n_\theta,\Lambda(p))$ in $315$ of $315$ cases (this
arm is passive, so $n_\theta=n_b$), and the maximum over layouts is at full
overlap in $45$ of $45$ network--budget pairs (Fig.~\ref{fig:phase}a). At
$n_U=8$ the measured ranks give the predicted ratio
$\Lambda(n_U)/\Lambda(0)=36/16=2.25$.

\paragraph{Both counts are of the same object.}
It is fair to object that $\Lambda(n_U)$ counts dimensions inside an
$n_U\times n_U$ target space and $\Lambda(0)$ inside a much smaller one, so
that their ratio means nothing.

\begin{proposition}[$\Lambda(p)$ counts constraints on one fixed object]
  \label{prop:count}
  Let $U=T\cup S$ with $|U|=n_U$. The distinct entries of the compliance
  submatrix $C[U,U]$ that the layout $(T,S)$ prescribes are the unordered pairs
  $\{u,v\}$ with $u\in T$ and $v\in S$, including the singletons $u=v$, which
  index the diagonal entries of $C[U,U]$; there are exactly
  $(\alpha+p)(\gamma+p)-p(p-1)/2$ of them, which is $\Lambda(p)$ at the split that
  maximises it
  and the value used throughout.
\end{proposition}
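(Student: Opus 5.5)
The plan is to count the image of the map $\phi:T\times S\to\{\text{unordered pairs in }U\}$, $(u,v)\mapsto\{u,v\}$. Because $C[U,U]$ is symmetric, this image is exactly the set of distinct entries of $C[U,U]$ the layout prescribes: the entries at $(u,v)$ and $(v,u)$ are one number, and the diagonal entry $(u,u)$ corresponds to the singleton $\{u\}$. This fibre count is the one already carried out in the proof of Proposition~\ref{prop:taskblock}, and I would reuse it verbatim.

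The key step is to determine the fibres. Suppose two distinct ordered pairs $(u,v)\ne(u',v')$ in $T\times S$ have the same unordered image. Then necessarily $u'=v$, $v'=u$ and $u\ne v$. Membership in $T\times S$ then forces $u,v\in T$ and $u,v\in S$, so $u,v\in P$.

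It follows that every fibre has one element, except the fibres over the $\binom p2$ two-element subsets of $P$, which have two. The number of distinct prescribed entries is therefore
\[
|T\times S|-\binom p2=m_Tm_S-\tfrac12p(p-1)=(\alpha+p)(\gamma+p)-\tfrac12p(p-1),
\]
where I substitute $m_T=\alpha+p$ and $m_S=\gamma+p$. This is the symmetry branch $d$ of Theorem~\ref{thm:main}, and it agrees with $\dim\mathcal{S}_P$ via the identity $\pi_{T,S}(\Sym)=\mathcal{S}_P$ of Sec.~\ref{sec:attainment}.

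For the last clause, I would maximise over the split with $p$ fixed, exactly as in the proof of Proposition~\ref{prop:budget}. The maximum is $\lfloor(n_U+p)^2/4\rfloor-p(p-1)/2=\Lambda(p)$, attained at the balanced split.

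I expect no real obstacle here. The only care needed is bookkeeping: diagonal singletons $\{u\}$ with $u\in P$ are counted once and never paired, and $\alpha$, $\gamma$, $p$ refer to the same fixed $U$. This makes precise the claim that all the counts are of entries of the one object $C[U,U]$.
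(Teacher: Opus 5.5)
Your proof is correct and takes the same approach as the paper's: count the $(\alpha+p)(\gamma+p)$ ordered pairs in $T\times S$, note that an unordered pair is hit twice exactly when $u\neq v$ both lie in $P$, and subtract those $p(p-1)/2$ duplicates. The maximisation over the split, which you take from Proposition~\ref{prop:budget}, is what the paper's proof also relies on when it identifies the count with $\Lambda(p)$.
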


\begin{proof}
  $C[u,v]$ and $C[v,u]$ are the same number, so a prescribed entry is an
  unordered pair. There are $(\alpha+p)(\gamma+p)$ ordered choices $(u,v)\in T\times S$. A
  pair is obtained twice exactly when both $(u,v)$ and $(v,u)$ are admissible,
  which requires $u,v\in T\cap S=P$ and $u\neq v$: that is $p(p-1)$ ordered
  choices forming $p(p-1)/2$ duplicated pairs. Subtracting gives
  $(\alpha+p)(\gamma+p)-p(p-1)/2$, which is $\Lambda(p)$ at the split that
  maximises it.
\end{proof}

Whatever the layout, the instrumented set carries one object, the compliance
submatrix $C[U,U]$, symmetric and positive definite because
$C=K_{\!f\!f}^{-1}$ is. A layout does not create a target space of its own;
it selects which entries of that object the experimentalist
can see and therefore prescribe. So $p(p-1)/2$ is the number of would-be
prescriptions that reciprocity turns into duplicates instead of new
information, and $\Lambda(n_U)$ and $\Lambda(0)$ count entries of the same
matrix.

We checked the count by enumeration, with no network involved. Over
$454$ distinct layouts (budgets $n_U=1,\dots,12$, every overlap, every split of
the remainder between exclusive read-out and exclusive drive; $544$ checks in
all, the balanced split visited twice), the number of distinct pairs equals
$(\alpha+p)(\gamma+p)-p(p-1)/2$ with no mismatch. This is experiment~14,
part~1, which deposits no table; the $360$ layouts listed for experiment~14
under Data availability are those trained in part~2.

\paragraph{On a task, too, the overlapping layout wins.}
Take as target the compliance among $n_U$ instrumented degrees of freedom of
the same network at an \emph{independent} stiffness draw, contaminated by an antisymmetric part of relative
size $t$. Each layout is trained on the block it can see and then evaluated on
the whole of $C[U,U]$ under one normalisation, so that no layout is scored
against a target space of its own. Over six networks of $30$ to $44$ nodes,
budgets $n_U=4,6,8$ and $t=0$,
$0.02$, $0.05$, $0.10$, $0.20$, full overlap is the best layout in $90$ of
$90$ network--budget--asymmetry cases. The disjoint layout leaves $56$ to
$60\%$ of the object uninstrumented over these budgets, and that costs more
than the floor does.

At $t=0$, where the target is realisable at every layout by construction, the
layouts meet every constraint they can see, at a fit error below $10^{-6}$ of
the target norm, in $67$ of $72$ runs. The five misses are the optimiser
stopping short, not forfeited constraints: each has a predicted floor below
$4\times10^{-16}$ and sits at the largest overlaps and budgets tried ($p=n_U=6$, $p=n_U=8$ and $p=6$ at $n_U=8$), and their fit errors run from $6\times10^{-6}$ to
$1.8\times10^{-4}$, up to a hundred and eighty times the threshold.

The error of the full-overlap layout is also \emph{predicted}, not merely
bounded. Over all $90$ full-overlap runs the end-to-end error equals
\eqref{eq:floorpd}, computed from the target alone before a network is chosen,
to a median of $8\times10^{-12}$ of the target norm and at worst
$1.4\times10^{-4}$; the largest of the five misses above, $1.8\times10^{-4}$,
is at a partial overlap.

\paragraph{Which rule applies depends on what is scarce.}
The two rules answer different questions, and neither supersedes the other.
(i)~The design rule: if the target block is fixed and reachable except for its
antisymmetric part, sensors should be separated from actuators, which removes
the floor at no cost in parameters. (ii)~The layout law: if instead the number
of accessible degrees of freedom is what is scarce, each of them able to be
both driven and read, and the target is one the symmetric subspace can hold,
overlapping the sensors onto the actuators buys about twice the ceiling on
reachable dimension (Proposition~\ref{prop:budget}). Overlap can never buy the
antisymmetric part itself: at any $p\ge2$ a target whose shared block is not
symmetric has a non-zero floor by Theorem~\ref{thm:floor}, so a target with an
irreducible antisymmetric component needs disjoint read-out or broken
reciprocity, however many dimensions the overlapping layout offers. Dimension
and reachability are different currencies, and Fig.~\ref{fig:phase} separates
them.

Together, \eqref{eq:bound} and \eqref{eq:budget} give the phase diagram of
Fig.~\ref{fig:phase}b. At small $n_U$ the symmetry branch binds and overlap is
worth a factor $2+1/\lfloor n_U/2\rfloor$. Past $\Lambda(p)=n_\theta$ the
parameter count binds instead and the gain collapses: a network that has run
out of bonds to move has no use for extra reachable directions. A network too
small for the enlarged target loses dimensions for the unrelated reason of the
empirical criterion~\eqref{eq:criterion}, which needs a genericity clause
(Sec.~\ref{sec:numerics}); here and below it serves as a filter, never as a
theorem.

\paragraph{Where the bound bites, and where it does not.}
We read $m_S$, $m_T$ and $p$ off every
platform cited here whose driven and read-out sets are separately
identifiable, from the papers and, where they release it, the code, and
evaluated the forfeited \emph{norm} fraction (Remark~\ref{rem:fraction-pd}) at
each. For a target with independent
standard normal entries the expected squared fraction that reciprocity puts out
of reach is
\begin{equation}
  \label{eq:fracgen}
  \frac{\mathbb{E}\|B_a\|_F^{2}}{\mathbb{E}\|R^{*}\|_F^{2}}
  \;=\; \frac{p(p-1)}{2\,m_Tm_S} ,
\end{equation}
which is the codimension of Theorem~\ref{thm:main} divided by the dimension of
the block, since a Gaussian ensemble is isotropic; at full overlap it reduces
to $(m-1)/(2m)$. Table~\ref{tab:layouts} (Appendix~\ref{app:layouts},
reproduced by experiment~15) reports it at the force-driven layouts and, at
those driven by an imposed displacement, the quantity that replaces it under
the law of Sec.~\ref{sec:dirichlet}. We count
independent scalar channels: a strain imposed across a pair of nodes through a
ghost bond is one, and so is a clamped node voltage, including the grounded
node that several electrical tasks use to fix the global voltage shift.

\emph{At all nineteen published layouts we tabulated, the forfeited norm
fraction is exactly zero}. All nineteen drive one
set of degrees of freedom and read another, sixteen under an imposed drive and
three under applied forces, and at $p=0$
equation~\eqref{eq:fracgen} vanishes. The imposed-displacement law of
Theorem~\ref{thm:dirichlet} constrains nothing there either, because with no
shared terminal the matrix $D$ has no entries. Theorem~\ref{thm:main} does not
correct a single capacity number in this literature. The table is a sample of
the cited demonstrations, not a census: a layout is admitted only when its
source fixes $m_S$ and $m_T$, and further layouts in the same papers also have
$p=0$, among them the one-input, two-output symmetry network of
\citet{Altman2024} (their Fig.~4), whose source node is driven along $y$ and
read along $x$. The five layouts excluded from the table are named at the end
of this discussion.

The theorem locates the constraint instead, and two things follow. First,
$p=0$ is not free. By Proposition~\ref{prop:budget}, at a fixed number $n_U$
of accessed degrees of freedom, each of which could be both driven and read,
the disjoint layout is the \emph{worst} available and costs a factor
$2+1/\lfloor n_U/2\rfloor$ in the ceiling on reachable dimension. The three
force-driven layouts in Table~\ref{tab:layouts}, all at $p=0$, forfeit no norm fraction,
but they would pay that factor in reachable dimension if their hardware were
counted in access points. The
sixteen imposed-drive layouts lie outside Proposition~\ref{prop:budget}
altogether. Only one source in the table, covering two of its layouts, reports a
hardware budget at all, and that budget is additive: \citet{Dillavou2024} state the restriction ``input nodes $+$
output nodes $\le 8$'', which caps $N=m_T+m_S$ rather than fixing it and
charges a shared terminal twice. This is the second budget above
(experiment~15). The platform's
drive is an imposed voltage, so Proposition~\ref{prop:budget} does not govern
it either. For a layout the proposition does cover, a budget of this shape
puts the maximum of the ceiling at $p\le1$, so the disjoint layout is then the
right choice rather than a forfeited one. We have found no in-place learning
demonstration reporting a
budget of the first kind, on access points rather than channels, which is the
kind Proposition~\ref{prop:budget} governs.

Second, two published demonstrations already point towards overlap.
\citet{Li2024} demonstrate a node serving as both input and output and
advertise it as enabling ``more compact design''; \citet{Du2026} drive and read
the same unit in their locomotion demonstration. Whether the field follows them
is not ours to assert, and a shared terminal is not by itself enough. Under the
task-list law of Sec.~\ref{sec:tasklist} a terminal that is both driven and
read is a self-loop and costs nothing; on a list of coordinate tasks the
deficit counts reversed \emph{pairs} (Corollary~\ref{cor:digon}). In each
demonstration, as reported or, for Li and Mao, as assumed below, the one shared
degree of freedom enters one drive and one read-out and no pair of terminals is
instrumented in both directions, so each has $q\le1$ and $\delta_{\mathcal{L}}=0$ (experiment~21, part~5). For the
locomotion demonstration this is read off the reported protocol: Du et al.'s
shared unit is unit~$4$, driven by a sinusoidal torque and read along with four
other units, which gives a self-loop and four ordinary arcs. For the
supplementary same-node demonstration of Li and Mao it is an assumption and not
a reading, since that source states no counts (see below): the deposited reading treats
it as a single task ($J=1$), one drive read where it is applied, and were that
node in fact instrumented in both directions against a second shared terminal,
the count would not be zero. The locomotion demonstration carries a
caveat stronger than that on Fig.~2 of \citet{Du2026}, treated below. Its drive
is a prescribed sinusoidal actuation rather than an
applied static force, and two on-site stiffnesses are trained negative there to
make the chain multistable, the observed motion being a limit cycle among four
stable shapes, so there is no linear response block to bound at all. Its $p=1$
is a channel count, and its $\delta_{\mathcal{L}}=0$ is the count of the
analogue list and in no reading an instance of Theorem~\ref{thm:taskfloor}.

A compact self-sensing platform asked for the whole block on its accessed
degrees of freedom does have $p$ rising and \eqref{eq:fracgen} turning on as
written; one asked only for a list of tasks pays $\delta_{\mathcal{L}}$
instead. Fig.~2 of \citet{Du2026}, treated next, is the one published list of
demands whose force-driven analogue has a non-zero deficit: its two combined
drive patterns give $q=2$, and
$\delta_{\mathcal{L}}\le\tfrac12q(q-1)=1$ (Corollary~\ref{cor:taskoverlap})
holds with equality on that analogue. The analogue is needed because the
drive itself is an imposed angle.

The chain of \citet{Du2026} is the one place in this literature where the
driven and the read-out sets are exchanged between two targets, and the only
published case in which the obstruction can be tested against a deposited
training trajectory. Fig.~2 of that work trains a six-unit chain on two
targets at once. A reciprocal chain fails at this task
and a non-reciprocal one succeeds, and the authors attribute the failure to
Maxwell--Betti (their supplementary remark, quoted in the introduction).
The source describes the task in two ways. The main text has one driven and
one read unit: ``applying a positive
curvature to unit 2 leads to a positive curvature to unit 5, whereas applying a
positive curvature to unit 5 leads to a negative curvature to unit 2''. The
Supplementary Information, Sec.~8.1, describes the same figure in halves:
``applying a positive curvature to the left part leads to a positive curvature to
the right part, whereas applying a positive curvature to the right part leads to
a negative curvature to the left part''. The deposited script agrees with the
second reading: clamp units
$1,2,3$ at $+30^\circ$ and ask units $4,5,6$ for $+30^\circ$; then clamp units
$4,5,6$ at $+30^\circ$ and ask units $1,2,3$ for $-30^\circ$. We work from the
code, which produced the deposited stiffness matrices analysed below; its
three-in, three-out reading was confirmed by one of the authors (Y.~Du, private
communication). Under the main-text reading the obstruction would instead be
the sign law of \eqref{eq:dcycle} at $p=2$; both readings forbid the task, and
neither obstruction is a parameter count. Angles enter their model, and every
floor we quote below, in radians: $30^\circ=0.5236\,\mathrm{rad}$, so the
squared scale that sets those floors is $v^{2}=0.2742\,\mathrm{rad}^{2}$.

In this task every unit is driven in one target and read in the other, and the
drive is an imposed angle, so Theorem~\ref{thm:main} does not apply. Nor is
the task an instance of Theorem~\ref{thm:dirichlet}, which clamps
a single terminal and leaves the rest of $P$ free; here three units are clamped
at once, so no entry of that $D$ is measured. It is an instance of
Theorem~\ref{thm:multiclamp}, and one line settles it. With $A=\{1,2,3\}$ and
$B=\{4,5,6\}$, and since both conditions are linear in $v$, the two targets read
$D_1\mathbf{1}=\mathbf{1}$ and $D_2\mathbf{1}=-\mathbf{1}$. Then
$D_2D_1\mathbf{1}=-\mathbf{1}$, so $-1$ is an eigenvalue of $D_2D_1$, which
\eqref{eq:mcone} forbids. No rank count does: the ceiling of
Proposition~\ref{prop:multirank} does not bite on this task
(Sec.~\ref{sec:dirichlet}), and no bound on dimension, from symmetry or from
the parameter count of either chain, excludes a particular pair of targets. The argument uses no property of a chain:
it holds for any network on any six terminals, and for any two targets whose
composition has an eigenvalue outside $[0,1)$.

Write $I=A$ and $O=B$ for the input and output units,
$K_{II}=K[I,I]$, $K_{OO}=K[O,O]$, $K_{OI}=K[O,I]$, and let $\mathbf{1}$ be the
all-ones vector on three units. The two free-state responses are
$y_1=-K_{OO}^{-1}K_{OI}(v\mathbf{1})$ and $y_2=-K_{II}^{-1}K_{OI}^\T(v\mathbf{1})$,
with $v=30^\circ=0.5236\,\mathrm{rad}$. For a symmetric $K$, contracting each
response with its own block gives the exact identity
$\mathbf{1}^\T K_{OO}y_1=-v\,\mathbf{1}^\T K_{OI}\mathbf{1}=\mathbf{1}^\T K_{II}y_2$;
symmetry is used here, and at no other step, to write $K[I,O]=K_{OI}^\T$.
Meeting both targets exactly would need $y_1=v\mathbf{1}$ and
$y_2=-v\mathbf{1}$, and the identity would then read
$v\,\mathbf{1}^\T K_{OO}\mathbf{1}=-v\,\mathbf{1}^\T K_{II}\mathbf{1}$, which
requires
\begin{equation}
  \label{eq:duS}
  W \;:=\; z_1^\T K z_1 + z_2^\T K z_2 \;=\; 0 ,
  \qquad z_1=(1,1,1,0,0,0)^\T,\ z_2=(0,0,0,1,1,1)^\T ,
\end{equation}
whereas $W$ is a sum of two quadratic forms of $K$, so $W>0$ for every
positive-definite $K$. We therefore find that, in the linear model as
deposited, \emph{no passive, positive-definite chain can meet those
two targets}, whatever its size, at any stiffness and under any learning rule.
That model is the stable, positive-definite one that their simulation
integrates, under the imposed-displacement protocol that their script
implements.

If symmetry is dropped, the two cross terms $z_1^\T Kz_2$ and $z_2^\T Kz_1$
become independent, and the identity and the contradiction both dissolve. The
obstruction therefore comes from reciprocity, and positive definiteness enters
only to sign $W$. Equations \eqref{eq:duS} and \eqref{eq:mcone} are the same
obstruction seen from two sides: $W>0$ says the two quadratic forms cannot
cancel, and $\spec(D_2D_1)\subset[0,1)$ says the two transmissions cannot
compose to a sign reversal on any direction. The identity also bounds their
training loss from below. Writing $e_1=y_1-v\mathbf{1}$, $e_2=y_2+v\mathbf{1}$,
substituting $y_1=e_1+v\mathbf{1}$, $y_2=e_2-v\mathbf{1}$ into the identity and
applying Cauchy--Schwarz gives
\begin{equation}
  \label{eq:dufloor}
  \mathrm{MSE} \;=\; \frac{\|e_1\|^2+\|e_2\|^2}{6}
  \;\ge\; \frac{v^{2}W^{2}}{6\bigl(\|K_{II}\mathbf{1}\|^{2}+\|K_{OO}\mathbf{1}\|^{2}\bigr)} .
\end{equation}
The floor \eqref{eq:dufloor} is a bound at each $K$ and no universal
number: it is small when $W$ is small relative to $\|K_{II}\mathbf{1}\|$ and
$\|K_{OO}\mathbf{1}\|$, that is, when the network is close to losing positive
definiteness in the $z_1,z_2$ directions.

Their deposited data bear this out (experiment~17). At the initial condition
both runs share, $W=0.770$ and \eqref{eq:dufloor} gives $0.274\,\mathrm{rad}^2$
against a measured $0.275$. Their chain starts uniform, with diagonal $0.115$ and nearest-neighbour
coupling $0.010$, and so transmits almost nothing: the free-state responses are
below $0.05$ against a target of $v=0.524$, and both the measured error and the
bound land within $0.3\%$ of the $v^2$ that a chain transmitting nothing at all
would return. Uniformity also makes the Cauchy--Schwarz step nearly tight,
since a uniform tridiagonal block has $\mathbf{1}$ for a near-eigenvector.
Both sides are near $v^2$ for the same reason, so the agreement is to that
extent forced, though it is not an identity: over $4\times10^{5}$
random positive-definite $K$ the ratio of the bound to $v^2$ has median $0.69$, and
two draws in $4\times10^{5}$ reach the value their initial stiffness takes.
Epoch one, where the reciprocal run comes closest to its floor, sits next to
epoch zero and carries the same caveat.

What carries the argument is \eqref{eq:duS}, which is exact, and what the two
runs do after the start. Over the deposited epochs the two runs give

\begin{center}
\small
\begin{tabular*}{\textwidth}{@{\extracolsep{\fill}}p{0.25\textwidth}p{0.36\textwidth}p{0.32\textwidth}@{}}
\toprule
 & reciprocal run & non-reciprocal run \tabularnewline
\midrule
\raggedright floor \eqref{eq:dufloor}
 & \raggedright respected at all $21$ deposited epochs, most tightly at epoch
   one, at $1.0028$ times the floor
 & \raggedright no longer applies once the run breaks the symmetry \tabularnewline\addlinespace[2pt]
\raggedright identity $\mathbf{1}^\T K_{OO}y_1=\mathbf{1}^\T K_{II}y_2$
 & \raggedright holds to at worst $9\times10^{-17}$ at every epoch
 & \raggedright violated by $9.8\times10^{-2}$ \tabularnewline\addlinespace[2pt]
\raggedright smallest eigenvalue of the symmetric part of $K$
 & \raggedright first negative at epoch five, positive again at epochs six
   and seven, negative from epoch eight onwards; $-0.273$ at the last deposited epoch
 & \raggedright never below $0.011$, at epoch six \tabularnewline\addlinespace[2pt]
\raggedright $W$
 & \raggedright negative at eight of the last twelve epochs
 & \raggedright not an obstruction: with the identity violated, its sign
   no longer decides the task \tabularnewline\addlinespace[2pt]
\raggedright final error
 & \raggedright $1.21$ times its initial error; never falls to its own floor
 & \raggedright $8.7\times10^{-7}$, five orders of magnitude below the value
   \eqref{eq:dufloor} would take at its own $K$, which no longer bounds it \tabularnewline\addlinespace[2pt]
\raggedright epochs failing their Gershgorin monostability test
 & \raggedright thirteen of twenty-one
 & \raggedright seventeen of the twenty-one, from epoch four \tabularnewline
\bottomrule
\end{tabular*}
\end{center}

\noindent
The last row counts the epochs that fail the monostability test of
\citet{Du2026}, which they call a Gershgorin test but which compares each
unit's on-site stiffness with a single coupling rather than with a row of $K$;
we evaluate it on their deposited stiffnesses. It is therefore not the
eigenvalue criterion of the third row: at epoch five it still passes
although the smallest eigenvalue is already negative, so the reciprocal run
counts thirteen epochs here against fourteen there.
The non-reciprocal run breaks the symmetry and keeps $K$ positive definite
throughout. The reciprocal run cannot break the symmetry, and it does what we
did not predict: it breaks positive definiteness. Nothing in their script stops
it, since its Gershgorin monostability guard never acts on the update (next
paragraph). What is unexpected is only that the unconstrained contrastive
update goes there at all. We report the
symmetric part because their deposited eigenvalue array holds the eigenvalues
of $K$ itself, whose smallest value over the non-reciprocal run, $0.0838$, is
about seven times the smallest eigenvalue of the symmetric part, $0.0114$. From epoch eight onwards the reciprocal run therefore sits outside the
class the obstruction covers, and outside the monostable region. The
non-reciprocal run, which succeeds, spends more epochs outside the monostable
region than the reciprocal one, so leaving that region does not mark the
failing run. What separates the two is the symmetric part of $K$, which stays
positive definite in one and does not in the other.

We find the guard inert as deposited, because of the order of operations in
their script: the update increments are formed from the gradient arrays before
the guard runs, and the guard zeroes the gradient arrays and not the
increments. Experiment~17, part~3 reimplements their learning loop from the
equations of \texttt{CL\_LearningNRshape\_\{p,a\}.py} in their deposit
(doi:10.5281/zenodo.18544299) and runs it three ways. With the guard as
written, and with the guard deleted, it reproduces the deposited
\texttt{K\_step.npy} of both runs bit for bit. Only when the test is applied to the increments do the trajectories part, by
$\max|\Delta K|=10.9$ in the reciprocal run and $5.9\times10^{-2}$ in the
non-reciprocal one; the test would have fired $87$ and $166$ times. We locate a
behaviour in the code and read no intention into it. The argument above needs
only that the deposited trajectories are the unconstrained ones, and that is
what the three runs establish.

The obstruction independently supports the central finding of
\citet{Du2026}: two targets of this kind require non-reciprocity in a stable,
linear chain. The two hypotheses behind \eqref{eq:duS} and
Theorem~\ref{thm:multiclamp} are symmetry and positive definiteness, and the
result concerns that model, not the physical chain, which has a bounded
actuator range and a noisier experimental run; nonlinear finite paths,
hysteresis, multistability and a different protocol all remain open to the
chain. What is universal within that model
is \eqref{eq:duS}: the task is unreachable there, and there the only exits
are non-reciprocity or instability.

Five layouts are excluded from the count rather than guessed at, and are
supplied with the data. The first is Fig.~2 of \citet{Du2026}, treated above,
which carries no single $(m_S,m_T,p)$ and is tabulated under neither law.
\citet{Lee2022}
drive two input nodes in a locked equal-magnitude pattern, alternating between
a horizontal and a vertical scenario, and never state a count of independent
input channels. Their $p=0$ is unambiguous, since inputs and outputs are
distinct nodes on opposite sides of the lattice, so the fraction is zero
whatever $m_S$ turns out to be. The supplementary same-node demonstration of
\citet{Li2024} has $p\ge1$ by construction but states no counts, so we cannot
say how large $p$ is there and assert no bound on it.
\citet{Rocks2019} use a single source throughout and do not say whether the
target edges are drawn excluding the source edge, though a target edge coinciding
with the source edge would make their constraint unsatisfiable. The locomotion demonstration of
\citet{Du2026}, discussed above, has no linear response block to bound. The remaining cited works on trained or designed
networks have no separately identifiable driven and read-out sets and fall
outside the table's scope: the globally strained samples
of \citet{Pashine2019}, the parameter-to-feature Jacobian of \citet{Zu2025},
and the theory papers and surveys that report no layout of their own.

\begin{figure}[t]
  \centering
  \includegraphics[width=\textwidth]{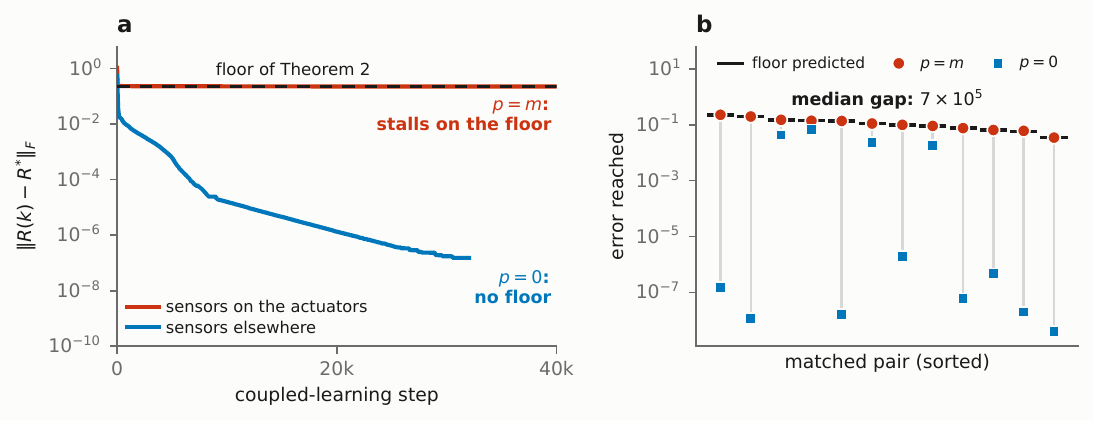}
  \caption{\textbf{The design rule, run.}
  \textbf{(a)}~Coupled learning on one network, differing only in where the
  read-out degrees of freedom sit: the $30$-node network at $m=4$, geometry
  seed $0$, hard-coded in the script after a first run as a pair whose gap is
  close to the median in log terms. It is the only pair whose training curves
  are deposited. Error against coupled-learning step, with sensors on the
  actuators ($p=m$) and moved elsewhere ($p=0$); the plotted curve is the best
  error reached so far, monotone by construction, and the dashed line is the
  floor of Theorem~\ref{thm:floor}. The $p=m$ run comes within $5\%$ of the
  floor by step $438$ and stalls there, ending $1.2\%$ above it after its whole
  budget of $40\,000$ steps; the $p=0$ run, which has no floor, ends at step
  $32\,046$ on the early-stopping rule.
  \textbf{(b)}~All twelve matched pairs, sorted by the error reached at
  $p=m$: that error, the error at $p=0$ joined to it by a grey line, and the
  floor predicted for each pair.}
  \label{fig:design}
\end{figure}

\begin{figure}[t]
  \centering
  \includegraphics[width=\textwidth]{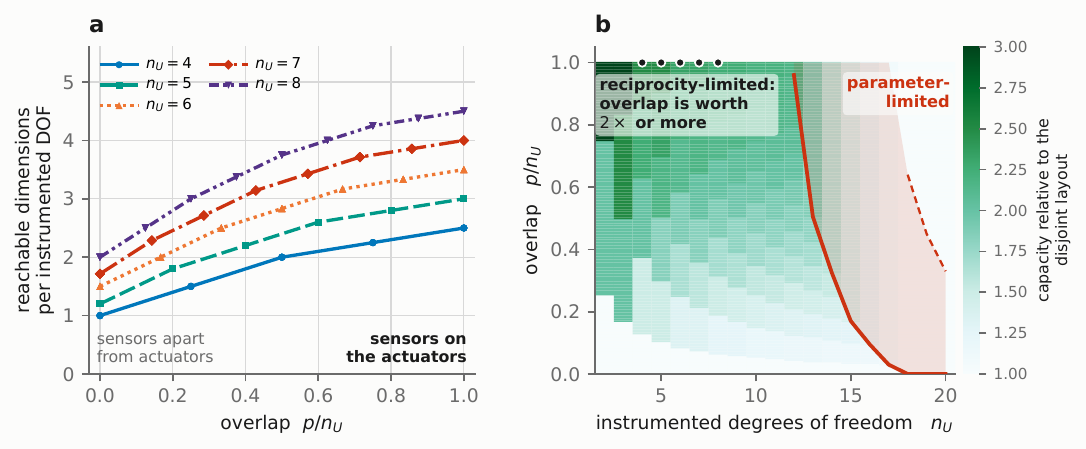}
  \caption{\textbf{The capacity phase diagram.}
  \textbf{(a)}~Reachable dimensions per instrumented degree of freedom against
  overlap, at fixed instrumentation budget $n_U$ (accessed degrees of freedom,
  a shared one charged once). Lines are \eqref{eq:budget};
  points are measured ranks over $315$ layouts on nine networks.
  \textbf{(b)}~Where capacity is set, computed from the closed form rather than
  from the sweep. Colour is the ceiling at overlap $p$ relative to the disjoint
  layout of the same budget, normalised at the smallest $n_\theta$; the
  stepped edges come from the integer part $\lfloor(n_U+p)^2/4\rfloor$ in
  \eqref{eq:budget}. The red band is where
  the two branches of $\min(n_\theta,\Lambda(p))$ cross; inside it, which
  branch binds depends on the network. Its position depends
  on $n_\theta$, so it is drawn over the range of bond counts of the
  nine networks of the sweep, $n_\theta=78$ to $156$: the solid lower edge is
  $n_\theta=78$, where the parameter branch binds soonest, from $n_U=12$;
  the dashed upper edge is $n_\theta=156$, whose crossing does
  not enter the panel until $n_U=18$. Markers on the top edge are the budgets of
  Fig.~\ref{fig:phase}a.}
  \label{fig:phase}
\end{figure}

\paragraph{Part of the residual error of a training run is known in advance.}
Before training, compute \eqref{eq:floorpd} on the shared sub-block of the
target: the antisymmetric part $\tfrac12\|B-B^\T\|_F$ that reciprocity forbids
and the negative eigenvalues of the symmetric part that passivity forbids,
added in quadrature. A run that stalls at that value is being asked for
something no passive reciprocal network can represent. It is therefore neither
under-parametrised nor stuck in a local minimum, and more bonds, more data or a
better optimiser will not help. What does help depends on which term binds.
The reciprocity term is removed by eliminating the overlap or by breaking
reciprocity. The passivity term is of a different kind: it follows from the
shared block being a \emph{principal} submatrix of a positive-definite
reciprocal compliance, which is a property of the layout as much as of the
target. At $p=0$ the visible block is no longer principal, and under an active
or odd constitutive law the hypothesis of Theorem~\ref{thm:floorpd} no longer
holds automatically; in either case the fate of that term has to be assessed
separately, and \eqref{eq:floorpd} on its own does not establish that it
survives. Stalls \emph{above} \eqref{eq:floorpd} are taken up in
Sec.~\ref{sec:learning}, the size of the network
(criterion~\eqref{eq:criterion}) among the candidates.

\paragraph{One reversed pair creates a floor on one network.}
The list form of that computation is Theorem~\ref{thm:taskfloor}.
Fig.~\ref{fig:example} runs it on the sixteen-node network drawn at the start
of the paper (experiment~23). Its terminals $1$, $2$ and $3$ are the
$x$-coordinates of three of its nodes, each driven and read. A task $u\to v$
applies a force at $u$ and reads the displacement at $v$. List~A is the cycle
$1\to2$, $2\to3$, $3\to1$. Its enclosing block has $p=3$ and would forfeit
three dimensions, but the list has no reversed pair, so $\delta_{\mathcal{L}}=0$
by Corollary~\ref{cor:digon}. List~B adds the one reversed task $2\to1$, a
single digon, so $\delta_{\mathcal{L}}=1$. Both lists ask for the values of a
reference passive network, with $1\to2$ raised by $10\%$ and, on list~B,
$2\to1$ lowered by $10\%$.

Before any training, Theorem~\ref{thm:taskfloor} bounds the error below by zero
on list~A and by $0.2334$ on list~B, which is $9.9\%$ of its target norm. The
symmetric part of list~B's target is the reference network's own response, so
on this network the bound is attained by construction, at the reference
stiffnesses. Levenberg--Marquardt training from $20$ independent starts per
list attains the bound on both lists in every run. List~A ends at the
round-off level of the solve, below $3\times10^{-10}$ of its target norm, and
list~B stops on its floor, within a relative $2.4\times10^{-10}$ of it. The
part of the error that training can still remove ends below $10^{-6}$ of the
target norm in every run; it enters the relative distance to the floor only at
second order, whereas round-off enters at first order. The fourth task adds a value but no
dimension:
$\partial y/\partial k$ has rank $3$ on both lists, at the reference
stiffnesses and at the end of every run. A control on list~B that raises both
$1\to2$ and $2\to1$ by $10\%$ also ends below $3\times10^{-10}$ of its target
norm in $20$ of $20$ runs. On this network the extra task raises the error only
when its demand breaks Maxwell--Betti.

One odd bond removes that floor. The bond is chosen before training: its odd
Jacobian column has the largest component in magnitude along list~B's single
relation, the list form of the wedges of Theorem~\ref{thm:recovery}. With that
bond added and every stiffness still free, training ends below
$3\times10^{-10}$ of the target norm in $20$ of $20$ runs. The bond's coupling
$a_b$ ends at a median of $-1.30$, over a range from $-16.3$ to $-0.58$. The
symmetric part of $K_{\!f\!f}$ is positive definite at the end of $18$ of those
$20$ runs; the training does not constrain it, and in the other two every
eigenvalue of $K_{\!f\!f}$ keeps a positive real part. Almost any bond would
have served. At $\delta_{\mathcal{L}}=1$ any bond with a non-zero wedge entry
spans the one missing direction, and here every bond has one: used alone as
the odd bond, $33$ of the $38$ bonds remove the floor in $20$ of $20$ runs, and
every bond in at least $17$. That count ignores stability. We call a trained
network stable when every eigenvalue of $K_{\!f\!f}$ has a positive real
part, which is stability of the overdamped dynamics
$\dot{\bu}=-K_{\!f\!f}\bu$; under inertial dynamics a non-symmetric
$K_{\!f\!f}$ can flutter even then, and that case is not tested. The test is
the sign of the least real part of an eigenvalue of $K_{\!f\!f}$ relative to
its largest eigenvalue modulus (column \texttt{Kff\_min\_re\_eig\_rel} of
\texttt{exp23\_allbonds.csv}). In $69$ of those $760$ runs ($38$ bonds, $20$ starts each) that part is negative, $64$ of them among the $751$ that remove the floor,
and the symmetric part of $K_{\!f\!f}$ is positive definite in only $286$ of the $760$.
Counting only stable runs, $23$ of the $38$ bonds remove the floor in $20$ of
$20$ runs and every bond in at least $8$. The count depends on the margin: in
$22$ runs the relative least real part is positive by less than $10^{-9}$, and
requiring a margin of $10^{-9}$ leaves $15$ bonds at $20$ of $20$, the worst
still at $8$. The selected bond keeps $20$ of $20$ at that margin, its
smallest relative least real part being $8.8\times10^{-9}$. Any bond with a
non-zero wedge entry restores the single missing direction to first order; at
the couplings that training reaches, stability is not guaranteed. On this network the reversed pair predicts the
stall before training, and the wedge rule picks, also before training, one of
the many bonds that remove it.

\begin{figure}[t]
  \centering
  \includegraphics[width=\textwidth]{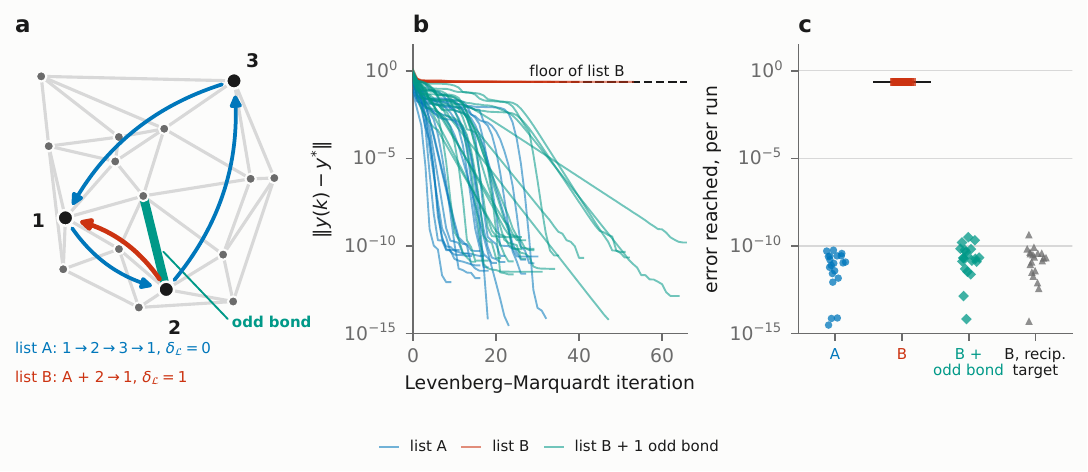}
  \caption{\textbf{Two task lists on one network, with and without one odd
  bond.}
  \textbf{(a)}~The network of Fig.~\ref{fig:setup}a, with two task lists on the
  $x$-coordinates of its three marked nodes. An arrow $u\to v$ is one task:
  force at $u$, displacement read at $v$. List~A is the cycle
  $1\to2\to3\to1$, with $\delta_{\mathcal{L}}=0$; list~B adds $2\to1$, one
  digon, so $\delta_{\mathcal{L}}=1$. Each list asks for the responses of a
  reference passive network with $1\to2$ raised by $10\%$ and, on list~B,
  $2\to1$ lowered by $10\%$. The thick bond is the single odd bond,
  chosen before training.
  \textbf{(b)}~Error $\|y(k)-y^{*}\|$, with $y(k)$ the values the tasks read
  at stiffnesses $k$, against Levenberg--Marquardt iteration
  for all $60$ runs of three arms, $20$ independent starts each: list~A,
  list~B, and list~B with the odd bond. The dashed line is the floor of
  Theorem~\ref{thm:taskfloor} for list~B, computed from the target before
  training.
  \textbf{(c)}~Final absolute error $\|y(k)-y^{*}\|$ of each of the $80$ runs
  of experiment~23, one column per arm (target norms $1.84$ for list~A, $2.37$
  for list~B and $2.59$ for the control). The fourth column is a control on list~B that raises both
  $1\to2$ and $2\to1$ by $10\%$; the short line across the second column marks
  the floor of panel b.}
  \label{fig:example}
\end{figure}

\paragraph{Non-reciprocity restores what parameters cannot.}
The dimensions reciprocity forbids are the antisymmetric ones. Adding bonds
therefore cannot buy them, and adding odd couplings can, near the passive
network, wherever the passive wedges span them. Non-reciprocal metamaterials
\citep{Scheibner2020,Brandenbourger2019,Fruchart2023} can therefore restore
capacity, and not only extend it. The same fact
also bounds the gain in advance: a network that already attained its passive bound at the
symmetry branch of \eqref{eq:bound} can gain at most the $p(p-1)/2$ forbidden
dimensions from breaking reciprocity, and in the odd-coupling construction
tested here it gains exactly those. A network limited instead by its bond
count may gain more, for the unrelated reason that odd couplings also add
parameters. On the sixteen-node network of experiment~23 the restoring
element is one odd bond, which removes a floor of $9.9\%$ of the target norm
in $20$ of $20$ runs.


\section{Limitations}
\label{sec:limitations}

Each item below states what remains open and, where we can, how to test it.

\begin{itemize}\itemsep2pt
\item \emph{Attainment is proved instance by instance, not in general.}
  Theorem~\ref{thm:main} is an upper bound and holds at every admissible
  parameter value, with no genericity condition.
  Theorem~\ref{thm:duality} reduces equality to the transversality of
  $\mathcal{Y}$ and $KV^{\perp}K$, and Proposition~\ref{prop:generic} turns one
  exact certificate into a statement about almost every configuration of that
  graph; we supply such certificates for $370$ instances, $237$ of them inside
  criterion~\eqref{eq:criterion}. No proof covers all graphs at once; it would
  have to establish that transversality. The graph enters through $V$,
  generated by the rigidity
  dyads~\citep{Laman1970,AsimowRoth1978,Jacobs1995}, but the rigidity matroid
  alone does not decide the question. Since $\bw_b=C\bq_b$ with $C$
  invertible, $\{\bw_b\}$ and $\{\bq_b\}$ carry the same linear matroid, whereas
  the deficit is a rank condition on the quadratic images
  $\bw_b[T]\bw_b[S]^\T$, and passing to that lift is not a matroid operation:
  vectors with the same matroid can have outer-product spans of different
  dimension. The deficit also moves with the stiffnesses and with the choice of
  $T$ and $S$, which no generic-rigidity invariant sees. The framework does
  settle the question by itself in the isostatic case
  $n_b=n_{\mathrm{free}}$. There the matrix with rows $\bq_b^\T$, restricted to
  the free degrees of freedom, is square, $\bw_b$ is $1/k_b$ times the
  corresponding column of its inverse, and each passive Jacobian column is
  $1/k_b^{2}$ times a fixed matrix, so their span, and with it attainment, does
  not depend on $k$ at all. We regard the general question as the natural next
  step, which the reduction above makes well-posed.
\item \emph{Two further statements are measured, not proved.} That the passive
  wedges of Theorem~\ref{thm:recovery} span $\Lambda^2\R^p$, on which
  Corollary~\ref{cor:submersion} rests, is verified in $251$ of $252$
  full-overlap configurations (Sec.~\ref{sec:numerics}) and proved for no class
  of graphs. The ceiling of
  Proposition~\ref{prop:multirank} is attained in $264$ of $264$ configurations
  on networks whose bond count is far from binding, and missed in $34$ of $66$
  on the $8$-node networks, where it is close to binding.
\item \emph{Two learning procedures, both in software.} We tested
  Levenberg--Marquardt, with log-stiffnesses clamped at $\pm12$, which the
  winning fits of $134$ of the $144$ forbidden-target runs reach to within
  $10^{-9}$ (Appendix~\ref{app:stats}), and coupled learning, whose bond-local
  update direction a material could plausibly implement; the second finds the
  same floor. Neither is a material, though the
  platforms exist: coupled learning has been run in a physical elastic
  network~\citep{Altman2024} and in analogue electrical
  ones~\citep{Dillavou2022,Dillavou2024}, and reprogrammable disordered spring
  networks with settable bond stiffnesses have been
  fabricated~\citep{Pashine2023}, alongside architected lattices of tunable
  beams~\citep{Lee2022}. Whether a physical realisation, with noise, hysteresis
  and a bounded stiffness range, also settles on the floor is untested.
  Theorems~\ref{thm:floor} and~\ref{thm:floorpd} bound it regardless, since
  both are properties of the reachable set.

  That measurement can also go against us without refuting a theorem. An
  antisymmetric part significantly above the measurement error, in a sample
  believed passive and linear, would refute the modelling hypothesis that the
  sample has a symmetric stiffness operator, rather than
  Theorem~\ref{thm:main}. A measured error \emph{below} the predicted floor
  would indicate that the sample is not the passive, linear, reciprocal object
  the floor is computed for, or that the measurement is not resolving the
  shared block. Only the first of these two explanations is informative about
the physics, and
  the measurement that produced the discrepancy, not the floor, decides which
  applies.
\item \emph{Coupled learning converges slowly.} Two of the $24$
  forbidden-target runs ended more than $1\%$ above the floor within our step
  budget, and eleven of the $24$ reached that budget instead of stopping on
  their own (Sec.~\ref{sec:learning}). We report the two as they stand, without
  running them longer.
\item \emph{Coupled learning assumes a supervisor.} Only its update direction
  is bond-local: the step is rescaled by the largest relative change over all
  bonds, and the rate schedule, the stopping test and the choice of which
  state to return read the global loss (Sec.~\ref{sec:learning}). A material
  with no external observer is not tested.
\item \emph{The forbidden targets were built so that the floor is attainable.}
  $R^{*}=R(k_0)+\varepsilon\Psi$ has a reachable symmetric part by construction,
  and its shared block stays positive definite, so Fig.~\ref{fig:learning}b
  tests that training finds the constrained optimum rather than that
  Theorem~\ref{thm:floor} is tight in general. It is not tight on the generic
  targets of experiment~13 (Sec.~\ref{sec:learning}).
  Theorem~\ref{thm:floorpd} accounts for most of the difference, but it too is
  a lower bound, and it is loose whenever the part of the target inside
  $\mathcal{S}^{+}_P$ is itself unreachable. Both are reported, and the
  diagnostic of Sec.~\ref{sec:consequences} is stated only in the direction the
  bound supports.
\item \emph{The allowed-target control degrades at large perturbations}
  (Sec.~\ref{sec:learning}). We attribute this to the nonlinearity of
  $k\mapsto R(k)$ at such amplitudes; we have not proved that attribution.
\item \emph{Local, not global.} The rank of the Jacobian is the local dimension
  of the reachable set at a generic parameter point. Positive definiteness costs
  no dimensions (Remark~\ref{rem:pd-scope}). Because the count is local, it
  does not characterise the boundary of the reachable set, or whether a
  learning rule can approach a target that lies on it.
\item \emph{Tangent response at any pre-load; the finite map on a branch;
  branch crossings open.} The bound applies to the tangent response at every
  configuration, deformed or not, because the tangent stiffness is a Hessian;
  this is verified at bond strains up to $60.8\%$. Because the network is
  conservative, the symmetry of the tangent response extends to the finite port
  map on any branch along
  which the relaxed energy is a single smooth function of the held port
  displacements (Sec.~\ref{sec:numerics}). That extension is a corollary of the
  Hessian property and not an independent test; the finite map itself remains
  unmeasured. What a path gains by crossing branches, where the relaxed energy
  stops being one function and the argument does not reach, is open, and it is
  the natural next experiment.
\item \emph{One realisation of oddness.} The odd coupling used here is a
  transverse force proportional to bond elongation. Other realisations of
  non-reciprocity may reach a different subspace; we do not claim ours is
  canonical, and Proposition~\ref{prop:odd} shows only that no symmetry is
  forced, not that every construction achieves the full space. Recovery was
  tested only in this realisation; the non-symmetric damping term of
  Sec.~\ref{sec:numerics} was checked for rank only.
\item \emph{The work per cycle, \eqref{eq:cycle}, is a linear, quasi-static
  law for force cycles.}
  In the notation of Sec.~\ref{sec:price}, $\Gamma$ is the signed area in the
  $(c_1,c_2)$ plane, and equals the area in
  force space only when $\mathbf{h}_1,\mathbf{h}_2$ are orthonormal. The
  constitutive law is linear, so \eqref{eq:cycle} is not exact for a
  geometrically nonlinear network, and a quasi-static loop presumes that the
  loaded equilibrium is dynamically stable. A displacement-controlled cycle at
  $P$ is governed by $\operatorname{anti}(C_P^{-1})$, which depends on
  $\operatorname{sym}C_P$ as well as on $B_a$.
\item \emph{Elements, dimension and size.} In our tests, three-dimensional
  networks and angular (bending) springs give the same result as
  two-dimensional bonds (Sec.~\ref{sec:numerics}). The rank-one structure of the stiffness derivatives
  $\partial K/\partial\theta_r=\bq_r\bq_r^\T$ of Sec.~\ref{sec:numerics}, on which the cheap Jacobian and
  the attainment analysis rest, holds for any element whose generalised strain
  is a linear functional of the displacement, which covers every element tested
  here. An element parametrised by a single stiffness acting on several
  independent strains would not be rank one, and we have not tested one. Size is
  untested beyond the networks used here, which run up to $56$ nodes. Every
  computation except the two tests of experiment~18 has a single
  implementation (Sec.~\ref{sec:certificates}).
\item \emph{One frequency at a time.} The frequency-domain result bounds the
  block at each $\omega$ separately. Learning at several frequencies at once
  stacks blocks drawn from different $G(\omega)$; whether the stacked problem
  carries its own constraint, weaker, stronger or none, we have not examined.
\item \emph{The layout law is about dimension, not reachability, and about
  one budget.}
  Proposition~\ref{prop:budget} maximises the \emph{ceiling} at a fixed
  number of accessed degrees of freedom, a shared one charged once; under a
  sensors-plus-actuators budget a layout with $p\le1$, the disjoint one
  included, stays best (see after its proof). Neither it nor Proposition~\ref{prop:count}, which lets the two
  layout rules be compared on one fixed object, says whether a particular
  target inside that ceiling is reachable, and at full overlap the floor of
  Theorem~\ref{thm:floorpd} still forbids the antisymmetric part.
  Experiment~14 (Sec.~\ref{sec:consequences}) tests the comparison only at
  contaminations up to $t=0.20$. How the errors of the full-overlap and
  disjoint layouts compare at larger $t$, and whether they cross, we have not
  measured.
\item \emph{Comparison with published bounds is with the method, not with any
  particular system.} The counting bounds quoted in the literature are stated for
  specific systems, including feedback-controlled robotic ones. We compare with
  the counting \emph{method}; we have not re-derived any particular published
  bound.
\end{itemize}

\section{Conclusion}

That a passive network's response matrix is symmetric is classical; the
results above add three consequences for a fixed network trained in place
under a drive by forces or currents. First, the floor of
Theorem~\ref{thm:floor} is written down from the target and the layout before
training and binds every passive linear network, at any size, stiffness or
learning rule; both rules tested here, in simulation, end on it in nearly
every run when the target's symmetric part is reachable, the contrastive one
within its step budget (Sec.~\ref{sec:learning}). Second, what reciprocity
charges a list of single drive--read tasks is set by its reversed pairs, not
by its shared terminals (Sec.~\ref{sec:tasklist}). Third, odd
couplings~\citep{Scheibner2020,Brandenbourger2019,Fruchart2023} restore the
forbidden directions, which adding bonds cannot, near the passive network
wherever it passes the rank test of Theorem~\ref{thm:recovery}, on bonds
chosen from the passive network before any odd bond is built.

The capacity of a passive mechanical network driven by forces is at most
$\min(n_b,\,m_Tm_S-p(p-1)/2)$, where $p=|T\cap S|$ counts the degrees of
freedom that are both driven and read. When $p\ge2$ and the bond count does not
bind first, this is below the $\min(n_b,m_Tm_S)$ that a parameter count gives. The
shortfall is a consequence of Maxwell--Betti reciprocity alone, and at full
overlap it approaches half the space. What a laboratory usually imposes is a
finite list of demands. For a list, an invariant $\delta_{\mathcal{L}}$ of the
list itself replaces $p(p-1)/2$ and is never larger; on coordinate tasks it
counts the pairs of terminals instrumented in both directions
(Sec.~\ref{sec:tasklist}), and a full block is the list that contains every
drive--read task.

Adding parameters does not recover the forbidden subspace; breaking reciprocity
withdraws the obstruction. In the odd-coupling networks tested here the
directions recovered are exactly that subspace wherever the wedges of
Theorem~\ref{thm:recovery} span, as they do in nearly every full-overlap
configuration measured. Where they span, every sufficiently small
antisymmetric part of a shared block is realised exactly in projection, at
unchanged passive stiffnesses, by odd couplings that can be confined to $p(p-1)/2$ bonds
(Corollary~\ref{cor:submersion}); the network is then active. The symmetric
part of the response is not held fixed with it, and at the largest target
tested the symmetric part of $K_{\!f\!f}$ loses positive definiteness in $261$
of the $336$ solves with the couplings confined to the $p(p-1)/2$ bonds of the
pivoted-$QR$ selection, and in $115$ with every coupling free
(Sec.~\ref{sec:numerics}, Remark~\ref{rem:submersion}). How much
non-reciprocity the network must carry is bounded from below by the target
alone: any realising network whose symmetric part is positive definite is at
least as non-reciprocal as the target's shared block, as measured by the ratio
of odd to even part (Proposition~\ref{prop:ratio}).

Theorem~\ref{thm:floorpd} is the practical form: for any target, one line of
arithmetic gives a lower bound on the error that reciprocity and passivity
leave behind, and a run that reaches that bound has finished
(Sec.~\ref{sec:consequences}). On targets with a reachable symmetric part, a
second-order optimiser and a contrastive rule with a bond-local update
direction both reach the floor in nearly every run, within $0.1\%$
(Sec.~\ref{sec:learning}). Only the reciprocity term comes from a symmetry, but on a
generic target the passivity term is comparable: adding it to the floor cuts
the median gap between the error reached and the floor from $36.9\%$ to
$1.6\%$ (experiment~13).

The reciprocity term applies to any response block of a symmetric operator
(Theorem~\ref{thm:universal}), so it survives three dimensions, bending, the
tangent response about finitely deformed states, and finite frequency. It
reaches the finite port map on a single equilibrium branch as well, but by an
envelope argument that needs the network to be conservative along the branch,
not symmetry alone (Sec.~\ref{sec:numerics}). It holds in resistor
and flow networks as well, whenever the measured block is a block of the
inverse operator, which the imposed-displacement drive that most hardware uses
is not (Sec.~\ref{sec:dirichlet}). The passivity term of
Theorem~\ref{thm:floorpd} is stated for the positive-definite static setting
and is not carried over to the dynamic Green function. Symmetry and positive
definiteness together also rule out
the two-target task of \citet{Du2026}, through the spectral bound of
Theorem~\ref{thm:multiclamp} and the identity \eqref{eq:duS}: within the linear model their deposited
simulation integrates, no symmetric positive-definite chain can meet both
targets (Sec.~\ref{sec:consequences}).

The best layout depends on what binds and on what is counted
(Sec.~\ref{sec:consequences}). Where reciprocity binds, overlapping the sensors
onto the actuators buys about twice the reachable ceiling per accessible degree
of freedom, a shared one counted once (Proposition~\ref{prop:budget}). The gain
collapses where the parameter count binds, and under a fixed count of sensors
plus actuators, with a shared terminal charged twice, the ceiling is instead
maximal at $p\le1$. The overlap recommendation concerns a platform asked for
the whole block: on a list of tasks a shared terminal is by itself free, and
on coordinate tasks only pairs instrumented in both directions cost
(Sec.~\ref{sec:tasklist}). Overlap never buys the antisymmetric part itself; a target that needs it needs disjoint
read-out or broken reciprocity.

At all nineteen published layouts we tabulated, all of them disjoint, the
forfeited norm fraction is zero, and Theorem~\ref{thm:main} corrects no
capacity number in this literature. We found no in-place learning layout
driven by forces or currents that pays; the one that will instruments a pair
of terminals in both directions and asks its two responses to differ. A parameter count, the usual
estimate of capacity in this field, cannot see a symmetry of the response
operator. That symmetry bounds what the material can represent before the
parameter count becomes the binding constraint, and it does so before
training: the floor is written down from the target and the layout alone. A
target whose predicted floor is non-zero is certified unreachable, under a
drive by forces on $S$ read on $T$, by every passive, linear network of any
size, at any stiffness, under any learning rule.

Three of the largest questions are left open. Attainment on all graphs at once
reduces, through Theorem~\ref{thm:duality}, to a transversality that the
rigidity matroid alone does not decide. The rank we compute is local, so the
boundary of the positive-definite reachable set, and whether a learning rule
can approach a target on it, are not characterised. And both learning procedures ran in software; whether a physical
network, with noise and hysteresis, also settles on the floor is a measurement
that existing coupled-learning platforms are equipped to make. The direct test
is such a network driven by forces or currents, read at each shared pair in
both directions and trained on a target whose symmetric part it can realise
and whose predicted floor is non-zero. We predict that training stalls within
measurement error of that floor and not below it. A stall below it would show
that the sample is not the passive, linear, reciprocal object the floor is
computed for, or that the measurement does not resolve the shared block; one
well above it would count against the
claim that learning rules stop on the floor.


\section*{Data availability}
The raw result tables and the certified configurations generated in this study
are archived on Zenodo at \zenodoi{10.5281/zenodo.22952768} and also
included with this preprint as ancillary files, downloadable from its arXiv
abstract page. They comprise $640$ cases from
experiment~01; $540$ from experiment~02; $19737$ Jacobian columns, $120$ subspace comparisons
and $2418$ rank cases from experiment~03; and $1872$ training fits from experiment~04. Those fits are $36$ configurations at $52$ fits each: four amplitudes
$\times$ three arms $\times$ four restarts, which is $48$, plus the four
restarts of the random-target arm, fitted once per configuration outside the
amplitude loop. They are reported as the $144$ rows of the deposited table, one
per configuration and amplitude, and the random-target result of a
configuration is copied unchanged into each of its four rows, not recomputed.
A companion table, \texttt{exp04\_curves.csv}, holds the three convergence
histories drawn in Fig.~\ref{fig:learning}a in $61$ rows, each to its own last
iteration ($57$, $61$ and $28$ recorded iterations). The deposit further holds
$336$ subspace
intersections from experiment~05; $432$
certificate instances from experiment~06, together with the $370$ certified
configurations themselves, each carrying node coordinates, bond list, index sets
and
integer stiffnesses, so that each certificate can be rechecked without rerunning
anything; and $24$ matched coupled-learning pairs from
experiment~07, whose \texttt{exp07\_beta.csv} holds the end points of the three runs of
the nudge-amplitude sweep, one of them repeating a forbidden-target run
already among the pairs, and whose
\texttt{exp07\_curves.csv} holds the two histories of Fig.~\ref{fig:coupled}a,
the allowed one over $15\,411$ steps and the forbidden one to step $8\,221$,
where it stops. It holds $432$ cases in three dimensions and with bending from
experiment~08; $12$ matched layout pairs from experiment~09, with the two
histories of Fig.~\ref{fig:design}a in \texttt{exp09\_curves.csv}, the
overlapping one over all $40\,000$ steps and the disjoint one to step
$32\,046$, where it stops; $59$ finitely deformed
states from experiment~10; and $216$ frequency-domain cases and $18$
non-symmetric
damping controls from experiment~11. It holds $315$ layouts at fixed
instrumentation
budget from experiment~12; and, from experiment~13, $108$ training runs in
three target classes (part~3), $72$
realised shared blocks (part~2) and the five rows of
\texttt{exp13\_fraction.csv}, the Monte Carlo of part~4 over $4000$ draws at
each of $m=3,4,6,8,10$. Two parts of experiment~13 deposit no table and
instead regenerate themselves exactly from their fixed integer seeds when the
script is run: part~1, the $3200$ random blocks at $p=1,\dots,8$ on which the
closed form of \eqref{eq:floorpd} is checked against a direct projection, and
part~5, the $322$ exact-rank
index-set draws on the regular square lattice of which $9$ fail to attain the
bound. The deposit holds $360$ trained layouts at fixed
instrumentation budget on a prescribed task from experiment~14. It holds the
$19$ tabulated layouts, and the $5$ excluded ones, read from the
literature in
experiment~15, and the $14$ reported quantities of experiment~16, over
$4024$ identity checks, $1200$ balance checks, $4000$ sign cases, $20$ rank
configurations, $600$ cycle triples of each kind and $4\times10^{5}$ chain draws.
Then come the
$531$ rows of experiment~17: eleven diagnostics at all twenty-one deposited
epochs of each of the two runs of \citet{Du2026}, Fig.~2 ($462$ rows); the
fifteen quantities of the shared initial condition; their own epoch-zero error
for each run ($2$); twenty-two per-run summaries of each trajectory, of the
Gershgorin guard and of the reproduction of their learning loop ($44$); and the
$8$ rows of the random-chain ensemble that puts the epoch-zero agreement in
context, three recording its size, seed and positivity check and five its
ratio statistics. Experiment~17 needs their deposited data, which we do
not redistribute; it is at \zenodoi{10.5281/zenodo.18544299} and the script
names the directory it expects. Experiment~18 deposits $1008$ wedge-comparison
rows ($945$ distinct configurations, Sec.~\ref{sec:numerics}), $251$
sparse-odd selections and $243$ multi-clamp cases, and
experiment~18b re-derives two of those claims from a second implementation.
Experiment~19 deposits the $4786$ Newton solves behind
Corollary~\ref{cor:submersion}, together with the counts quoted in
Remark~\ref{rem:submersion}, and experiment~20 the $720$ rank rows behind
Proposition~\ref{prop:multirank}. Experiment~21 deposits the material behind
Sec.~\ref{sec:tasklist} in five tables: the $7245$ task lists of its first
three parts, carrying the deficit $\delta_{\mathcal{L}}$, the digon count and
the full-block identity in exact rational arithmetic; the $1637$ floor
instances of its fourth part; the $790$ rows of its fifth and sixth, which read
the $19$ tabulated layouts, the excluded ones, one further imposed list
(Fig.~4 of \citet{Altman2024}) and Li and Mao's measurement list as task lists; the $36$ rows of its seventh, the restart sweep of
experiment~13's part~3; and the $43$ rows of its eighth, which carry all three
arms of that eighth part (the prescribed
antisymmetric blocks realised by three odd bonds with the stiffnesses free, and
the two fixed-$k$ controls, on three and on ten bonds). Experiment~22 deposits
the material behind Sec.~\ref{sec:price} in five tables and a machine-readable
summary: the $40$ cases of its zeroth part, which check the two identities the
proofs rest on; the $3000$ abstract matrices and the $3360$ odd-bond
configurations on which \eqref{eq:ratio} and the passive substitution are
measured; the ten exhaustive support searches behind
Corollary~\ref{cor:price}; and the $336$ configurations of
Proposition~\ref{prop:torquefree}, on the ensemble of experiment~19.
Experiment~23 deposits the $80$ runs of Fig.~\ref{fig:example}, their
training histories ($2042$ recorded points: the $80$ starting points and $1962$
iterations), a summary of the two lists and their floors, the $760$ runs with
each bond in turn as the single odd bond, the $400$ runs of the clamp
re-run, and the pre-registration of the $80$ runs. Every case
excluded from a reported count (Appendix~\ref{app:stats}) is supplied with the
data, save the one large-deformation case that produced no row because its
equilibrium did not converge.

\section*{Code availability}
The five source modules, the twenty-four experiment scripts, and the eight
plotting scripts that produce the figures are included as ancillary files
with this preprint. The scripts produce every number in this paper except
nine groups, produced by one-line variants of the deposited scripts or by
one-off computations outside them and not deposited as separate outputs, the
first six reported in Appendix~\ref{app:stats} (one of them also in
Appendix~\ref{app:params}) and the last three where the text uses them: the clamp and rate-floor sensitivity re-runs (the training
classes made to log saturation, condition numbers and the largest stiffness
reached; the log-stiffness clamp moved from $12$ to $8$ and to $24$ in
experiment~04; the stiffness clamp widened from $[10^{-3},10^{3}]$ to
$[10^{-6},10^{6}]$ in experiments~07 and~09, and the rate floor lowered from
$10^{-5}$ to $10^{-8}$ in experiment~07); the per-draw counts of experiment~03 (the three draws retained
instead of their median); the recount of experiment~10 at the cutoff
$10^{-9}$ (its count of $21$ in $59$, though not its per-step agreement, is
also readable off the deposited gap columns of Appendix~\ref{app:stats}); the natural-frequency
range of experiment~11 (the generalised eigenvalues of $(K,M)$); the
null-column threshold sweep of experiment~18; the genericity counter-example
of Appendix~\ref{app:stats}, that is the Delaunay arm of experiment~08
enlarged from three seeds to eight and, on the twelve-node network at seed~$7$
it produces, the exhaustive sweep of the $5985$ full-overlap index sets of
four free coordinates over five stiffness draws; the exact ranks of the
degree-two vertex hung on the complete graphs $K_{11}$, $K_{15}$ and $K_{19}$
(Sec.~\ref{sec:numerics}); the $1\times1$-block check at $a_b=3.7$
(Remark~\ref{rem:odd-hyp}); and the singular values of the one frequency-sweep
exception of experiment~11 (Sec.~\ref{sec:numerics}). The counts quoted for
release 1.3.0 and for the undeposited three-dimensional pass in
Appendix~\ref{app:stats} describe superseded code and are not reproducible
from this deposit. The deposited scripts are
archived at the version
used here on Zenodo at \zenodoi{10.5281/zenodo.22952768}, which is the citable
snapshot; \zenodoi{10.5281/zenodo.22238334} resolves to the most recent version.
The releases of that record are numbered, and those numbers are what
Appendix~\ref{app:stats} names when it distinguishes one deposited run from
another: the archive's \texttt{README} opens with the release it belongs to and
gives a test on the files themselves, so a reader holding the archive can say
which release it is without going back to the record. The only third-party
dependencies are NumPy, SciPy and Matplotlib; this was verified by running all
thirty-two scripts in a clean environment containing those three packages and
nothing else. Every experiment is deterministic: on a fixed machine and software
stack each reproduces its own result files byte for byte on repeated runs, and
the twenty-four experiment scripts together take about forty-four minutes on one
core.
Across machines, LAPACK builds or BLAS threadings, we claim reproducibility of
the reported quantities, not byte-identical floating-point output; the two
caveats that this entails are stated in the ancillary \texttt{README} and in
Appendix~\ref{app:repro}.

\section*{Acknowledgements}
This work grew out of a remark in the supplementary material of
\citet{Du2026}, whose authors flagged the omission that this paper quantifies.
We thank Yao Du for confirming the reading of the protocol of their Fig.~2
used in Sec.~\ref{sec:consequences}, and Corentin Coulais and Jonas Veenstra
for their help in posting the preprint. All three are authors of
\citet{Du2026}. Both kinds of help were technical, and neither implies that
they endorse the analysis built on that reading or its conclusions.

\section*{Funding}
The authors declare that no specific funding was received for this work.

\section*{Author contributions}
T.-S.V.\ conceived the study, developed the theory, wrote the code and drafted
the manuscript. H.-G.N.\ and B.-V.T.\ supervised the work and contributed to the
mechanical formulation. Q.-B.N.\ and S.K.\ contributed to the numerical design
and verification. All authors discussed the results and revised the
manuscript.

\section*{Competing interests}
The authors declare no competing interests.

\section*{Use of generative AI}
A generative AI assistant (Anthropic Claude) was used for language editing, code
development and numerical checking. The authors formulated the research questions
and mathematical claims, independently verified all proofs, references and
numerical outputs, and take full responsibility for the content of this work.

\bibliographystyle{unsrtnat}
\bibliography{refs}

\appendix
\section{Published layouts}
\label{app:layouts}

This appendix gives the layout-by-layout data behind the analysis of published
layouts in Sec.~\ref{sec:consequences}. Table~\ref{tab:layouts} lists
nineteen layouts cited in this paper for which a source fixes the driven and
read-out counts separately, with the fraction of a generic target that reciprocity puts
out of reach at each. The entry is zero at all nineteen under either law, for
the reason given in Sec.~\ref{sec:consequences}: every one of them drives one
set of degrees of freedom and reads another. That section also names the five
layouts excluded from the count.

\begin{table}[ht]
  \centering
  \small
  \caption{\textbf{The forfeited norm fraction at published layouts.}
  $m_S$ driven degrees of freedom, $m_T$ read out, $p=|T\cap S|$ both, and the
  fraction of a generic target that reciprocity puts out of reach, which is
  \eqref{eq:fracgen} under a force drive. The drive column records how the inputs enter. \emph{Force}: the
  inputs are applied forces and the measured block is $C[T,S]$, the object of
  Theorem~\ref{thm:main}. \emph{Imposed}: the inputs are held at prescribed
  displacements, strains, pressures or voltages and the output is read wherever
  the network puts it, the drive of Sec.~\ref{sec:dirichlet}, whose law
  replaces \eqref{eq:fracgen} and constrains nothing at $p=0$. Every row was read from the paper cited,
  and from its released code where there is one; independent scalar channels
  are counted throughout, a node clamped at a fixed ground or bias voltage
  counting as one. The table is a sample of the cited demonstrations, not a
  census (Sec.~\ref{sec:consequences}). Fig.~2 of \citet{Du2026}, which exchanges the driven
  and read-out sets between its two targets, is treated in
  Sec.~\ref{sec:consequences} instead.}
  \label{tab:layouts}
  \begin{tabular}{lllccc r}
    \toprule
    Layout & Platform & Drive & $m_S$ & $m_T$ & $p$ & fraction \\
    \midrule
    \citet{Rocks2017}, Fig.~1        & tuned spring network      & imposed & 1  & 1  & 0 & $0$ \\
    \citet{Rocks2017}, Fig.~3A       & tuned spring network      & imposed & 1  & 3  & 0 & $0$ \\
    \citet{Rocks2017}, Fig.~3B       & tuned spring network      & imposed & 2  & 2  & 0 & $0$ \\
    \citet{Stern2021}, Fig.~2        & flow network              & imposed & 10 & 10 & 0 & $0$ \\
    \citet{Stern2021}, Fig.~3        & elastic network           & imposed & 10 & 3  & 0 & $0$ \\
    \citet{Stern2021}, Fig.~4        & flow network              & imposed & 25 & 2  & 0 & $0$ \\
    \citet{Altman2024}, Fig.~2       & elastic network (expt.)   & imposed & 1  & 1  & 0 & $0$ \\
    \citet{Altman2024}, Fig.~5       & elastic network (expt.)   & imposed & 2  & 2  & 0 & $0$ \\
    \citet{Dillavou2022}, Fig.~3A    & resistor network          & imposed & 3  & 3  & 0 & $0$ \\
    \citet{Dillavou2022}, Fig.~3B    & resistor network          & imposed & 3  & 2  & 0 & $0$ \\
    \citet{Dillavou2022}, Fig.~3D    & resistor network          & imposed & 5  & 3  & 0 & $0$ \\
    \citet{Dillavou2024}, Fig.~3     & nonlinear analogue net.   & imposed & 4  & 2  & 0 & $0$ \\
    \citet{Dillavou2024}, Fig.~4     & nonlinear analogue net.   & imposed & 3  & 1  & 0 & $0$ \\
    \citet{Li2024}, Fig.~2           & mechanical neural net.    & force   & 1  & 2  & 0 & $0$ \\
    \citet{Li2024}, Fig.~3           & mechanical neural net.    & force   & 1  & 4  & 0 & $0$ \\
    \citet{Li2024}, Fig.~4           & mechanical neural net.    & force   & 4  & 3  & 0 & $0$ \\
    \citet{Pashine2023}, Fig.~2      & allosteric metamaterial   & imposed & 1  & 1  & 0 & $0$ \\
    \citet{Pashine2023}, Fig.~5      & allosteric metamaterial   & imposed & 1  & 2  & 0 & $0$ \\
    \citet{Du2026}, Fig.~1           & robotic metamaterial      & imposed & 1  & 5  & 0 & $0$ \\
    \bottomrule
  \end{tabular}
\end{table}

\section{Assembling the Jacobian}
\label{app:jacobian}

The rank counts of Sec.~\ref{sec:numerics}, outside its two finite-difference
exceptions, are taken from the Jacobian \eqref{eq:jaccol}, which is assembled
for all bonds at once. Let $Q$ be the $n_b\times n_{\mathrm{free}}$ matrix of
compatibility rows restricted to free degrees of freedom and $Q_{t}$ the
corresponding matrix of transverse rows. Three products are needed,
\begin{equation*}
  L_n = C[T,:]\,Q^\T, \qquad
  L_t = C[T,:]\,Q_{t}^\T, \qquad
  \Gamma = Q\,C[:,S] ,
\end{equation*}
after which the columns are the outer products
$-L_n[:,b]\,\Gamma[b,:]$ for $\partial/\partial k_b$ and
$-L_t[:,b]\,\Gamma[b,:]$ for $\partial/\partial a_b$. The cost is
$O\!\left(n_b\,m_Tm_S\right)$ after one inversion. The result is accurate to
machine precision, which is what makes a rank cut at $10^{-9}$ meaningful:
across all $2418$ rank cases the ratio of the last retained to the first
discarded singular value exceeds $10^{9}$ in $99.50\%$ of cases and $10^{3}$ in
$99.88\%$. The twelve marginal cases occur at $m=8$, $9$ and $10$, and every one
of them lies outside criterion \eqref{eq:criterion}, where we do not claim
attainment. These figures are medians over the three stiffness draws at each
configuration; Appendix~\ref{app:stats} gives the per-draw counterpart.


\section{Parameters of the numerical experiments}
\label{app:params}

This appendix collects the settings that the main text uses but does not state,
so that Secs.~\ref{sec:price},~\ref{sec:numerics},~\ref{sec:learning}
and~\ref{sec:consequences} can be
reproduced from the paper alone. None of these values is fitted to anything.

\paragraph{Pinning.}
In two dimensions the three pinned degrees of freedom are both coordinates of
the leftmost node and the $y$-coordinate of the rightmost node
(experiment~18b, written independently, pins both coordinates of its first
node and the $x$-coordinate of its second), leaving $n_{\mathrm{free}}=2N-3$
free coordinates. In
three dimensions six are pinned: one node fully, two coordinates of a second
and one of a third, the three nodes being the extremes along three directions,
leaving $n_{\mathrm{free}}=3N-6$. Every drawn passive configuration satisfies
$K_{\!f\!f}\succ0$; configurations reached with odd couplings or by training are
checked where stated.

\paragraph{Network ensembles.}
Outside the main rank sweep of Sec.~\ref{sec:numerics}, some experiments use
other networks: Delaunay triangulations of ten points in part of
experiments~05, 06 and~19, of nine in part of experiment~22 and of eight in
part of experiment~20; triangular lattices with positional disorder $0.08$ in
experiment~05, and $0$ or $0.1$ on the small lattices of part of
experiment~22; and a regular square lattice with diagonals in part~5 of
experiment~13. The driven and read-out sets are drawn at random from the free
degrees of freedom subject to the prescribed overlap $p$. In the three-dimensional arm of
the generality sweep the points are drawn uniformly in the unit cube before
being tetrahedralised. An angular spring acts on the change of the angle
between two bonds $j\to i$ and $j\to k$ meeting at node $j$, which to first
order is
\begin{equation*}
  \hat{t}_{jk}\cdot(\bu_k-\bu_j)/L_{jk}
  \;-\;
  \hat{t}_{ji}\cdot(\bu_i-\bu_j)/L_{ji} ,
\end{equation*}
where $\hat{t}$ is, as in Sec.~\ref{sec:setup}, the in-plane perpendicular to
the direction it carries and not the tangent along it, so that each term is a
first-order rotation angle and their difference is the first-order change of
the angle. It is
again linear in $\bu$, so that the perturbation it contributes is again
symmetric and rank one. Every pair of bonds meeting at a node gives one such
spring; where a network has more of them than it has bonds, a random subset of
$n_b$ is kept, and their stiffnesses come from the same log-normal law as the bonds.

Experiment~16 departs from the draws of Sec.~\ref{sec:numerics}: it draws
stiffnesses uniformly on $[0.5,1.5]$ (parts 1, 2, 4 and 6) and, in its
sign-law part~3, on $[0.05,3]$; its odd couplings are uniform on $[-1,1]$
in parts 3 and 6 and $\mathcal{N}(0,0.3^2)$ in its rank part~4. Its part~5 is
not a network draw at all but the six-unit chain of \citet{Du2026}, whose
on-site stiffnesses are uniform on $[10^{-3},1]$, whose coupling stiffnesses are
uniform on $[10^{-3},0.5]$ and whose odd couplings, in the control arm, are
uniform on $[-0.5,0.5]$.

\paragraph{Levenberg--Marquardt.}
The log-parametrisation keeps the stiffnesses positive, and $u$ is clipped to
$[-12,12]$ on acceptance of a step, so that the stored state and the state the
Jacobian refers to are always the same vector. Each fit runs at most $600$
iterations and stops when no damping value in an iteration's inner search
improves the residual, when the
residual falls below $10^{-14}$, or when the residual has fallen by less than
$10^{-13}$ of its starting value over six consecutive accepted steps. The
damping starts at $\lambda_0=10^{-3}$, and each iteration tries at most $30$
damping values; $\lambda\leftarrow0.3\lambda$ (floored at $10^{-12}$) on
acceptance and $10\lambda$ on rejection. The
best of four restarts means four starts $u\sim\mathcal{N}(0,0.5^2)$ drawn
independently of the $k_0$ that built the target, of which the one with the
smallest final residual is reported. The forbidden and allowed directions $\Psi$
are built by drawing a $p\times p$ Gaussian matrix $G$, taking $G-G^\T$ or
$G+G^\T$ respectively, embedding the result in the shared block with zeros
elsewhere and normalising. The amplitudes are
$\varepsilon/\|R_0\|_F\in\{10^{-1},10^{-2},10^{-3},10^{-4}\}$, on networks of
$24$, $30$ and $36$ nodes. The third arm, the \emph{odd control}, fits the same
forbidden targets on the same networks from the same starts, with the $n_b$ odd
couplings fitted alongside the log-stiffnesses, started at $a=0$ and left
unclamped; it is the arm Sec.~\ref{sec:learning} reports as the forbidden
target with odd couplings switched on. In the fourth, the random-target arm,
$R^{*}$ is instead a
Gaussian block rescaled to $\|R_0\|_F$, for which nothing guarantees that the
floor is tight.

\paragraph{Coupled learning.}
The constant of proportionality in the bond update is $1/(2\beta)$, which is
what makes the update the $\beta\to0$ contrastive limit. The training pairs are
the $m_S$ unit loads on the driven degrees of freedom, so the free states are
the columns of $C[:,S]$ and the target is the whole block. The step size is set
by a cap on the largest relative change of any stiffness: $\Delta k$ is rescaled so that
$\max_b|\Delta k_b|/k_b$ equals a rate $\chi$, starting at $\chi_0=3\times10^{-2}$ and
halved every $120$ non-improving steps down to $10^{-5}$, and the stiffnesses
are clipped to $[10^{-3},10^{3}]$. Training runs
to a budget of $40\,000$ steps and stops early when the loss has not improved
for $1500$ consecutive steps. The rescaling uses the largest relative change over all
bonds, and the rate schedule, the stopping test and the choice of the returned
state read the global loss. Eleven of the $24$ forbidden-target runs of
Sec.~\ref{sec:learning} exhausted the budget instead, among them the two that end
more than $1\%$ above the floor. The nudge amplitude is $\beta=10^{-3}$
throughout, the sweep over $10^{-2}$, $10^{-3}$ and $10^{-4}$ being a check on
that choice rather than a source of any reported number.

\paragraph{The design-rule experiment.}
From a pool of free degrees of freedom the first $m$ are driven, and the
read-out set is either those same $m$ (overlapping, $p=m$) or $m$ others (disjoint, $p=0$).
Each layout's target is built from \emph{its own} realisable block plus the
$5\%$ generic perturbation, so that neither layout is handed a target defined by
the other's reachable set.

\paragraph{Frequency domain.} The mass is lumped and diagonal with entries drawn
uniformly on $[0.5,1.5]$; the damping is Rayleigh, $D=0.02K+0.01M$. The six
frequencies are $0$, $0.05$, $0.2$, $0.5$, $1.0$ and $2.0$ in units fixed by
two choices: the median bond stiffness is $\bar{k}=1$ and the mean nodal mass is
$\bar{m}=1$, so $\omega$ is in units of $\sqrt{\bar{k}/\bar{m}}=1$. That range
overlaps the networks' own spectrum rather than lying beside it: over the six
networks of the sweep the undamped natural frequencies, the square roots of the
generalised eigenvalues of $(K,M)$, run from $0.150$ to $3.61$ (a range taken by a
one-line variant of the deposited script; Appendix~\ref{app:stats}), so
$\omega=2$ sits just above the middle of that band and $\omega=0.05$ below its
lowest mode. The non-symmetric control adds a full Gaussian matrix of scale
$0.05$ to $D$. The ranks there are ranks of complex matrices, taken at the same
relative cutoff.

\paragraph{Layout at fixed budget.} For each network and each budget $n_U$, one
set of $n_U$ free degrees of freedom is drawn and every overlap $p=0,\dots,n_U$ is
then realised on that same set, splitting the remaining $n_U-p$ as evenly as
possible between exclusive read-out and exclusive drive, which is the split that
maximises $(\alpha+p)(\gamma+p)$. Networks have $30$, $40$ and $56$ nodes, so the bond
count never binds over the range of $n_U$ used.

\paragraph{Large deformation.} The five load amplitudes are $0$, $0.02$, $0.05$,
$0.10$ and $0.20$, applied
through a fixed random load pattern; the stiffness spread is $0.3$ rather
than $0.4$. The central finite differences that stand
in for the closed form there use relative steps $10^{-5}$, $10^{-6}$ and
$10^{-7}$ on $\log k$, and the rank is required to come out identical across all
three. The rank cutoff is $10^{-7}$, not the $10^{-9}$ of the closed-form
sweeps; see Appendix~\ref{app:stats}.

\paragraph{The floor with passivity.} Experiment~13 runs on the six networks of the
training experiment ($24$, $30$ and $36$ nodes, two independently drawn
geometries each) at $m=4,5,6$ and two index-set draws per configuration, at
full overlap, with the same Levenberg--Marquardt settings and the same best of
four restarts. Three target classes are used on each configuration. The
\emph{constructed} target is $R_0$ plus an antisymmetric perturbation of
$1\%$ of $\|R_0\|_F$ on the shared block, the class of
Sec.~\ref{sec:learning}. The \emph{generic} target is a Gaussian block rescaled
to $\|R_0\|_F$. The \emph{indefinite} target is built from the shared block of $R_0$,
symmetrised to remove round-off so that the floor of Theorem~\ref{thm:floor} is
zero up to round-off, with the sign of its smallest eigenvalue then flipped, which makes the block
indefinite and therefore unreachable. The reported floors are evaluated from the
target by the formulae of Theorems~\ref{thm:floor} and~\ref{thm:floorpd}, never
reused from the construction. The separate hypothesis check draws four index
sets per configuration on the same six networks and measures the eigenvalues of
the realised shared block directly.

\paragraph{The layout on a task.} Experiment~14 uses six networks of $30$, $36$
and $44$ nodes, two geometries each, budgets $n_U=4,6,8$ and overlaps
$p=n_U,n_U-2,\dots$ down to $0$, so that the split of $n_U-p$ is always even. The
target $\Xi$ is the compliance $C[U,U]$ of an \emph{independent} stiffness draw
on the
same network, plus $t$ times an antisymmetric Gaussian matrix normalised to the
same Frobenius norm, with $t=0$, $0.02$, $0.05$, $0.10$, $0.20$. Each layout is
trained on its own visible block $\Xi[T,S]$ from three shared restarts, and
the trained stiffnesses are then evaluated on the whole of $C[U,U]$, so that
every layout is scored on one object with one normalisation.

\paragraph{The price of non-reciprocity.} Experiment~22 uses five ensembles,
four of its own and one rebuilt from experiment~19. The first-order lemma behind
Corollary~\ref{cor:price} is checked on $40$ cases: Delaunay networks of $9$ to
$32$ nodes and the small triangular lattices of disorder $0$ or $0.1$, at
$p=2$ to $6$. The abstract check of \eqref{eq:ratio} draws $3000$ operators
$K_{\!f\!f}=K_s+K_a$ of sizes $n=2$ to $8$: $K_s$ has a random orthogonal
eigenbasis and a log-uniform spectrum spread over up to six decades, $K_a$ is
the antisymmetric part of a Gaussian matrix rescaled so that
$\|K_a\|_2/\|K_s\|_2$ is log-uniform from $10^{-3}$ to $10^{3}$, and $P$ is a
uniform subset of size $2$ to $n$, so that the equality clause of
Proposition~\ref{prop:ratio} is sampled as well as the inequality: $1128$ of
the draws have $P$ the whole set and $1872$ a proper subset.

The odd-bond
sweep uses twelve Delaunay networks, of $9$, $16$, $24$ and $32$ nodes with
three geometries each, and on each of them $40$ draws of a log-normal $k$ of
spread $0.4$, of a shared set $P$ of size $2$ to $6$, and of a unit direction
$a_{\mathrm{dir}}$. Each draw is evaluated at seven coupling sizes
$a=\phi\,t_{\mathrm{edge}}\,a_{\mathrm{dir}}$, where $t_{\mathrm{edge}}$ is the
distance from $a=0$ along $a_{\mathrm{dir}}$ to the boundary of
$\{\operatorname{sym}K_{\!f\!f}\succ0\}$, taken from the generalised spectrum
since $\operatorname{sym}K_{\!f\!f}$ is affine in $a$, and $\phi=0.1$, $0.3$,
$0.5$, $0.7$, $0.9$, $0.99$ and $0.999$. Every one of them is below $1$, so
$K_s=\operatorname{sym}K_{\!f\!f}\succ0$ holds by construction and the coupling
size is a stated spread rather than one arbitrary draw: $12\times40\times7=3360$
cases, $480$ at each $\phi$. Every one of the twelve
networks has $n_{\mathrm{free}}\ge15$, so a $P$ of at most six coordinates is
always a small proper sub-block; the whole-free-set comparison of
Sec.~\ref{sec:price} is made on the same $3360$ draws with $P$ the whole free
set.

The brute-force comparison behind Corollary~\ref{cor:price} uses ten
nine-node Delaunay networks, one per seed, each seed fixing the geometry, the
stiffnesses and the three shared coordinates together, so that the ten are ten
networks and not ten stiffness draws on one; $n_b$ is $19$ at six of the seeds
and $18$ at the other four, so the exhaustive search runs over $969$ or $816$
supports of three bonds.

The torque-free construction runs on the $336$
configurations of experiment~19 (the same sweep, the same seeds and the
same targets, rebuilt from the same recipe) at that experiment's first
three target sizes $\varepsilon=10^{-4}$, $10^{-3}$ and $10^{-2}$, a target
being a random antisymmetric shared block of Frobenius norm
$\varepsilon\|R(k_0,0)\|_F$. The solve is experiment~19's Newton with
minimum-norm steps, run on an orthonormal basis of $\mathcal{K}$, with a
halving line search, at most $30$ iterations, and convergence declared at a
relative residual of $10^{-13}$.

\paragraph{The worked example.} Experiment~23 uses one network, the
sixteen-node Delaunay network of Fig.~\ref{fig:setup}a at geometry seed~$3$,
with $38$ bonds and $29$ free coordinates. Its terminals $1$, $2$ and $3$ are
the $x$-coordinates of the three nodes that figure marks, chosen by the formula
of its plotting script. The reference stiffnesses are log-normal with spread
$0.4$, drawn at seed~$2300$. The target of list~B takes the reference values and
multiplies $1\to2$ by $1.1$ and $2\to1$ by $0.9$; the target of list~A omits
$2\to1$. A control arm multiplies both tasks of the pair by $1.1$. Every
arm starts from the same $20$ log-stiffness vectors, Gaussian with spread
$0.5$ at seeds $1000$ to $1019$. Training uses the Levenberg--Marquardt rule of
experiment~04 with its default settings, and every run is reported, none as
the best of several. The odd bond is the first pivot of the pivoted $QR$ of the
row $\langle v,\partial y/\partial a_b\rangle$, taken on the uniform passive
network, $k=1$ and $a=0$, not at the reference stiffnesses, with
$v$ the unit vector spanning $\mathcal{A}^{1}_{\mathcal{L}}$. It joins
the node carrying terminal~$2$ to a neighbour, and its entry is $0.196$ against a median of
$0.036$ over the $38$ bonds.

A run of experiment~23 reaches zero when its error is at most
$10^{-8}$ of the target norm. It reaches the floor when its error lies within
a relative $10^{-6}$ of the floor. Both thresholds, and every parameter above,
were fixed before the first run in a pre-registration deposited with the
script. At that first run the scoring code assigned the control arm, whose
computed floor is round-off at $2.2\times10^{-16}$, to the floor criterion; it
was corrected to the pre-set assignment by arm, and the output of the first run
was not kept.

Most runs end with some log-stiffness within $0.01$ of the clamp of
$\pm12$: $18$, $20$, $20$ and $19$ of the $20$ runs of the four arms.
Experiment~04 saturates in the same way, counted there within $10^{-9}$ of the
clamp.
The criteria do not refer to the clamp. Two sweeps were added after the
pre-registered run and are deposited with it: list~B with each of the $38$
bonds in turn as the single odd bond, from the same $20$ starts, and all four
arms re-run with the clamp at $\pm8$, $\pm10$, $\pm12$, $\pm14$ and $\pm16$. The runs at $\pm12$ reproduce the
pre-registered ones exactly. At $\pm8$,
$\pm10$ and $\pm14$ every arm again meets its criterion in $20$ of $20$; at
$\pm16$ two runs of the odd arm stop at the conditioning guard of the solver.
The error reached by the three arms whose floor is zero is the round-off of the
solve and grows with the clamp. The script runs in about $55$\,s, $2$\,s of it the four arms.


\section{Conventions, cutoffs and exclusions}
\label{app:stats}

This appendix states the conventions, cutoffs and exclusions behind the counts
reported in the main text. One statistical comparison is made in this paper:
the lattice disorder comparison of Sec.~\ref{sec:numerics}, for which a
two-proportion $z$ test and Fisher's exact test are quoted below. Every other
quantity reported is a direct measurement over an enumerated case set, and
counts are given as successes over total; sample sizes are set by the sweep
design and not by a power calculation. Fifteen headed paragraphs follow (five
conventions, two cutoff choices, three records of what particular experiments
check, three limits on where the counts hold, one that corrects two earlier
readings, and one that lists three exclusions).

\paragraph{Median over stiffness draws.}
Where a rank or a ratio is quoted once per configuration it is the median over
the independent stiffness draws at that configuration, of which there are
three in the main
rank sweep of Sec.~\ref{sec:numerics}. Per individual draw rather than per
median, the bound is attained in $2541$ of $2559$ measurements inside
criterion~\eqref{eq:criterion}, and $835$ of the $853$ configurations attain it
at every draw. The $18$ configurations with one failing draw have
singular-value gaps from $8.6\times10^{10}$ to $6.5\times10^{14}$, so those are
genuine rank deficits and not cutoff noise. The bound is violated in zero cases
under either convention: $0$ of $2418$ medians and $0$ of $7254$ draws.

The rank-cut statistics of Appendix~\ref{app:jacobian} are median-scoped in the
same way. Per draw they read $99.01\%$ and $99.70\%$, against $99.50\%$ and
$99.88\%$, with $72$ marginal draws at $m=5$ to $10$; as with the twelve
marginal medians, not one of them lies inside criterion~\eqref{eq:criterion}.

\paragraph{The rank cutoff, and the two places it differs.}
Sec.~\ref{sec:numerics} states the cutoffs: $10^{-9}$ times the largest
singular value, except for the two finite-difference Jacobians, where
experiment~18b cuts its column-normalised projection at $10^{-6}$ and the
large-deformation sweep, whose noise floor comes from truncation error, at
$10^{-7}$. All $59$ large-deformation cases have a
last-retained to first-discarded singular-value ratio, taken as the median over
the three step sizes, between $1.6\times10^{6}$ and $2.1\times10^{8}$, median
$3.3\times10^{7}$, so the
looser cut is the appropriate one and the spectrum has a clear gap where it is
placed. The gap is deposited: the columns
\texttt{sv\_at\_bound}, \texttt{sv\_first\_discarded}, \texttt{sv\_gap},
\texttt{sv\_at\_bound\_rel} and \texttt{sv\_first\_discarded\_rel} of
\texttt{exp10\_large\_deform.csv} give, for each of the $59$ cases, the
singular value at the predicted rank, the first one below it, their ratio, and
the two normalised by the largest. Each is the median over the three step sizes
taken independently, as the rank itself is, so the ratio column is the median
of the per-step ratios and not the quotient of the other two. The first
discarded value lies between $1.7\times10^{-10}$ and $2.1\times10^{-9}$ of the
largest, so $10^{-7}$ clears it by a factor of about $50$, or $1.7$ decades,
even in the worst case.
The choice is outcome-determining and we report what the alternative
gives: at $10^{-9}$ the measured ranks exceed the bound in $21$ of the $59$
cases, and in none of the $59$ do they agree across the three step sizes. That is a
statement about finite differencing, not about Theorem~\ref{thm:main}. The
$21$ are exactly the cases in which \texttt{sv\_first\_discarded\_rel} exceeds
$10^{-9}$, so the recount can be read off the deposited table.

\paragraph{The factor $\tfrac12$ in criterion \eqref{eq:criterion}.}
That factor is a choice too, and it filters data in several places. Replacing
$\tfrac12$ by $c$ in
both clauses of \eqref{eq:criterion} and recounting over the $2418$ cases of the
deposited table gives

\begin{center}
\small
\begin{tabular}{@{}lcc@{}}
\toprule
$c$ & attained inside & attained outside \\
\midrule
$0.4$ & $613/613$ \ ($100\%$)  & $1326/1805$ \ ($73.5\%$) \\
$0.5$ & $853/853$ \ ($100\%$)  & $1086/1565$ \ ($69.4\%$) \\
$0.6$ & $991/993$ \ ($99.8\%$) & $\ 948/1425$ \ ($66.5\%$) \\
\bottomrule
\end{tabular}
\end{center}

\noindent
so the criterion is not balanced on a knife edge, though it is not symmetric
either: it survives being tightened by a fifth, to $c=0.4$ ($613/613$), and
loosened by a tenth, to $c=0.55$ ($895/895$), and $\tfrac12$ is close to the
largest factor at which it holds without exception on this sweep. Loosening it
by a fifth is already too far: the first failures appear at $c=0.6$, and both are
triangular lattices, the less generic of the two ensembles, at
$m=6$, $p=3$ and $m=9$, $p=9$, with singular-value gaps of $2.3\times10^{12}$
and $7.5\times10^{10}$, so they are genuine rank deficits and not cutoff noise.
By $c=0.65$ there are five. Nothing in the paper turns on which of $0.4$, $0.5$
or $0.55$ is used: the counts reported inside the criterion are $100\%$ at all
three, and what changes is only how many cases the filter admits.

The first two bars of Fig.~\ref{fig:robust}c, which Sec.~\ref{sec:numerics}
reads as the cases inside \eqref{eq:criterion}, sum to
the $853$, and that arithmetic is only half an identity: the criterion forces
the abscissa to or below $\tfrac12$, since $\min(n_\theta,d)\le d$, but not
conversely, so it needs checking before it is used.
Their abscissa is neither clause of \eqref{eq:criterion}: it is
$\min(n_\theta,d)/n_{\mathrm{free}}$, and a bound on that implies neither
$d\le\tfrac12 n_{\mathrm{free}}$ nor $d\le\tfrac12 n_\theta$. The legend of
Fig.~\ref{fig:robust}c still labels the bars at or below $\tfrac12$
``inside room criterion'', correctly, because on this sweep
three sets have the same $853$ members: the cases at or below
$\tfrac12$ on the abscissa, the cases satisfying \eqref{eq:criterion}, and the
cases satisfying its first clause $d\le\tfrac12 n_{\mathrm{free}}$ on its own.
On the deposited table all $853$ cases with
$\min(n_\theta,\,m^{2}-\delta_p)/n_{\mathrm{free}}\le\tfrac12$ satisfy
$d\le n_\theta/2$, hence $\min(n_\theta,d)=d\le\tfrac12 n_{\mathrm{free}}$, so
they satisfy \eqref{eq:criterion}; and every case satisfying
\eqref{eq:criterion} falls at or below $\tfrac12$, as it must. The second clause
$d\le\tfrac12 n_\theta$ holds far more widely, in $1451$ of the $2418$, so the
agreement rests on the first clause, and on this sweep. The last bar, labelled
$\le4$, holds everything above $2$; the largest value of the abscissa anywhere
is $2.70$, so there is no $>4$ bin.

\paragraph{Column scaling in experiment~18.}
The wedge matrix of
\eqref{eq:wedge} has column norms spanning many decades, far wider than the
Jacobian itself, because each entry is a product of two compliance entries. With
a relative cutoff of $10^{-9}$ applied to the unnormalised matrix, genuinely
independent directions fall below the cut, so experiment~18 normalises the
columns before taking singular values, which leaves the span unchanged.
Normalisation has a failure mode of its own. A column that is zero in exact
arithmetic but not in floating point, such as the two bonds of a degree-two
corner node of a disordered lattice, which carry no tension under any load
applied elsewhere, is scaled up to a unit vector of round-off and counted as
a direction.

The first deposited run of experiment~18 (release 1.3.0 of the
archive named under Code availability, since superseded) did
this, and reported a passive rank above the bound of Theorem~\ref{thm:main}
in $153$ of the $1008$ deposited rows, and a wedge span of $252$ of $252$ where
the count is $251$. Those three counts belong to three different sets. The
$1008$ rows of experiment~18 (Sec.~\ref{sec:numerics}) are $576$ on Delaunay
networks of $16$ to $40$ nodes and $432$ on disordered triangular lattices,
with $33\le n_b\le108$. They are $945$ distinct configurations: at $m=3$ the
overlap $\lfloor m/2\rfloor$ coincides with $p=1$ and the script runs it twice
with the same seed, so $63$ rows ($36$ Delaunay, $27$ lattice) repeat that
case. The $252$ are the full-overlap rows among the
$1008$, one per configuration and none of them repeated; the sweep has $252$
rows at each of its four overlap settings, $p=0$ among them. The $251$ are the full-overlap
configurations whose wedges span, which is why
\texttt{exp18\_sparse.csv} carries $251$ rows and not $252$.

The deposited script now zeroes every column below $10^{-12}$ of
the largest before normalising, asserts the bound of Theorem~\ref{thm:main}
in every row, and selects the sparse set by a pivoted $QR$ of the
unnormalised matrix; the counts quoted in Sec.~\ref{sec:numerics} are from
that run (release 1.3.1), and every data row of its table is unchanged in
every release since. The threshold is not delicate: re-running the
deposited script with it set to $10^{-8}$, $10^{-10}$, $10^{-11}$, $10^{-13}$
and $10^{-14}$ reproduces every result table byte for byte. Every discarded column has a
norm below $10^{-14}$ of the largest except in the rank-greedy comparison arm,
where such columns reach $7\times10^{-13}$; kept at $10^{-13}$ and $10^{-14}$,
they change no rank and no selection in the tables. No retained column is
smaller than $2\times10^{-8}$, and no rank moves when the threshold does. The three-dimensional count of experiment~18b, $60$
of $60$, is likewise from that release: its own three-dimensional pass,
re-pinned, replaces the $53$ of $53$ of an earlier script that was not
deposited. The sweeps of
Secs.~\ref{sec:numerics} and \ref{sec:consequences} other than experiment~18 do
not normalise, and do not need to, since their gap statistics are those reported
above.

\paragraph{The $8$-node networks of experiment~20.}
On $8$-node networks ($n_b=15$ or $16$, so that $n_b$ binds at $p=6$ and
lies within two of $\tfrac12p(p+1)-1=14$ at $p=5$) the same three families miss
the ceiling of \eqref{eq:multirank} in $34$ of $66$ configurations, at $p=4$,
$5$ and $6$, by one to five, and in every one of them
$\rank\partial\mathbf{D}_{\mathcal{T}}/\partial k=\rank\partial C_P/\partial
k-1$ exactly, so that \eqref{eq:multirank2} is attained where
\eqref{eq:multirank} is not. In $31$ of the $34$ the shortfall is in the map
$k\mapsto C_P$, whose rank falls below $n_b$; in the other three, at $p=6$,
that rank equals $n_b=16$ and the miss is the $-1$ of scale invariance
alone. In the $16$ misses at $p=4$ and $p=5$, where $\tfrac12p(p+1)\le n_b$,
the rank of $k\mapsto C_P$ falls below $\tfrac12p(p+1)$ as well.

The
count $n_b$ overstates the independent columns of $\partial C_P/\partial k$
whenever a node whose degrees of freedom are all hidden from $P$ has degree
three or four. Such a node is seen by $P$ only through its Schur complement
(Kron reduction), and the reduced star of $n_\star$ central-force bonds in
the plane has rank $n_\star-2$, hence at most
$\min\bigl(n_\star,\tfrac12(n_\star-2)(n_\star-1)\bigr)$ independent entries:
$1$, $3$ or $n_\star$ for $n_\star=3$, $n_\star=4$ and $n_\star\ge5$. On the
larger networks the same nodes exist, but $n_b$ is far from binding.

\paragraph{What experiment~21 checks in Sec.~\ref{sec:tasklist}.}
Experiment~21 checks the task-list results of Sec.~\ref{sec:tasklist}
numerically, over the $7245$ task lists of its first three parts and the $1637$
floor instances of its fourth. Its first part draws lists over six families
(coordinate, mixed, self-adjoint, supported and unsupported Gaussian patterns,
and lists read on real Delaunay networks) at every overlap from disjoint driven
and read supports to identical ones. It finds the two readings of
$\delta_{\mathcal{L}}$ in \eqref{eq:deltaL} agreeing with each other and with
$\dim(\mathcal{M}\cap\operatorname{Skew})$ computed as an honest subspace
intersection and cross-checked by principal angles, rather than through the
rank--nullity identity that would make the check circular. On the network
family it also tests the rank bound \eqref{eq:taskbound} and its
non-reciprocal counterpart. Its second confirms the reversed-pair count
\eqref{eq:digon} on $4800$ random coordinate lists over six families. Among
them are one augmented with every self-loop, which must not and does not move
$\delta_{\mathcal{L}}$, and one of self-loops only, where
$\delta_{\mathcal{L}}$ must be zero. A further $786$ paired tests confirm that
deleting and adding self-loops moves nothing, and the named two-task lists of
Sec.~\ref{sec:tasklist} are rows of their own. Its third recovers the
full-block values of Proposition~\ref{prop:taskblock} in exact rational
arithmetic, with no floating point in that part at all. Its fourth checks
the floor \eqref{eq:taskfloor}, the orthogonal splitting \eqref{eq:tasksplit}
and attainment over symmetric $C$ against a brute-force least squares over a
basis of $\Sym(n)$. It closes with three certificates: the two examples of
Sec.~\ref{sec:tasklist} on what the list version does not deliver, and the
single task at $n=1$ used in the proof of
Proposition~\ref{prop:tasklist}(iv), where at $y^{*}=0$ the infimum over
symmetric invertible $C$ is not attained.

\paragraph{The three-bond arms of experiment~21, part~8.}
These two arms support the statement of Sec.~\ref{sec:wedges} that the count of
$p(p-1)/2$ odd bonds holds with the stiffnesses held and not with $k$ free. The
stiffness-free arm lets $k$ move on $40$-node networks (three geometries) at
$m=p=5$, where the count asks for ten bonds. In that arm, three odd bonds
realise a prescribed antisymmetric block of relative size $10^{-2}$ to a
relative residual of at most $10^{-12}$ in nine of nine (network, target)
pairs, with every $k_b>0$ and the symmetric part of $K_{\!f\!f}$ still
positive definite. Which three matters: the first three pivots of the pivoted
$QR$ of the wedge matrix served in six of the nine pairs, and the other three
pairs were solved only by another triple, disjoint from the $QR$ one. Three
bonds suffice, but not every triple does. The same part runs the $QR$ triple
again at fixed $k$ as its negative control, at the smaller target $10^{-4}$
rather than $10^{-2}$. There it solves none of the nine, and it stalls exactly
at the component of the target orthogonal to the span of its three wedges
(median ratio $1.0000$, worst deviation $8.4\times10^{-5}$). The ten-bond
fixed-$k$ arm of the same part is reported in Remark~\ref{rem:submersion}.

\paragraph{The two clauses of \eqref{eq:ratio}, counted apart.}
Proposition~\ref{prop:ratio} asserts an inequality where $P$ is a proper
subset of the free set and an \emph{equality} where $P$ is the whole of it, so
experiment~22 tests the two differently and never adds their counts. Each
abstract draw carries one choice of $P$, $1872$ of them a proper subset
and $1128$ the whole set, whereas every odd-bond case is read at both choices,
once on its small shared set and once on the whole free set; the $5232$
proper-subset readings are those $1872$ together with all $3360$. On a
proper subset a case counts as a failure when the measured
$\eta(C_P)/\eta(K_{\!f\!f})$ exceeds $1+10^{-9}$; over all $5232$ such cases
there are none, the largest ratio being $0.9997$. On the whole free set the
only meaningful test is $|\eta(C_P)/\eta(K_{\!f\!f})-1|\le10^{-3}$, and a
ratio a few parts in $10^{5}$ above $1$ is round-off in an equality and not a
counterexample to it. That tolerance is set by what $\eta$ costs to evaluate,
two inverse square roots on top of an inversion, so that its accuracy degrades
with $\operatorname{cond}K_s$, which reaches $3.8\times10^{5}$ on this
ensemble. Over the $1128$ abstract equality cases the worst deviation is
$2.64\times10^{-5}$ and the median is $4\times10^{-15}$, which is round-off and
is quoted to its order alone; eleven of the
$1128$ lie above $1+10^{-9}$, at $\operatorname{cond}K_s$ between
$3.4\times10^{2}$ and $1.4\times10^{5}$. That lower figure is no threshold:
$249$ of the $1128$ lie above it and $238$ of those stay at or below
$1+10^{-9}$. The deposited table records
them under a column named for the numerical fact, not as violations of the
equality. The maximum over all $6360$ cases
therefore prints as $1.0000264$ rather than as $1$. Inside the
odd-bond model the same equality holds, to within $2.9\times10^{-11}$ in
$3360$ of $3360$. The rest of experiment~22 follows the paper's conventions
unchanged: ranks at the relative cutoff of $10^{-9}$, and a guard that would
discard a case with $\operatorname{cond}K_{\!f\!f}>10^{13}$, which never
binds, so all $3000$ and all $3360$ cases are reported.

\paragraph{The passive substitution in \eqref{eq:ratio}, counted.}
Sec.~\ref{sec:price} shows why the passive $K_{\!f\!f}(k,0)$ cannot stand in
for the active $K_s$ in \eqref{eq:ratio}; these are the counts. With this
passive substitution the inequality fails, and it fails worst on the whole free
set, where \eqref{eq:ratio} is an equality and the substitute is asked to do
exactly the work of $K_s$. There it is violated in $3066$ of the $3360$
odd-bond cases, already in $285$ of the $480$ at the weakest coupling tested,
by factors of up to $46$. Only on a small shared set, where the inequality has
slack of its own, does the substitution mostly survive: there it fails in $65$
of the $3360$ cases, by a factor of up to $3.51$. On that small shared set it
fails in none of the $480$ a tenth of the way to the boundary of
$\{\operatorname{sym}K_{\!f\!f}\succ0\}$ and in $25$ of the $480$ at $0.999$
of the way.

\paragraph{The normalisation behind Corollary~\ref{cor:price}.}
The equivalence with maximising
$\sigma_{\min}(\mathsf{W}_{\mathcal{B}})$ stated in Sec.~\ref{sec:price} holds
at first order, for Frobenius-normalised $B_a$ and Euclidean $\|a\|$. The
factor $\sqrt2$ in \eqref{eq:astar} comes from $\operatorname{vec}_<$,
which keeps one entry of
each antisymmetric pair, so that
$\|\operatorname{vec}_<(B_a)\|_2=\|B_a\|_F/\sqrt2$; the identification of
$\mathcal{A}_P$ with $\Lambda^2\R^p$ is itself a Frobenius isometry. If
$\Lambda^2\R^p$ is instead normalised by $\|e_r\wedge e_c\|=1$, the bounds
read $2\|B_a\|_{\Lambda}/\sigma$. Pricing the coupling as
$\|(a_b/k_b)_b\|$, or by the largest per-bond ratio $\max_b|a_b|/k_b$, or
normalising $B_a$ by $\|B_a\|_2$, changes the criterion, and a different
selection can become optimal.

\paragraph{Spanning, solving, and which price is quoted.}
The counts behind Proposition~\ref{prop:torquefree} are kept apart in the same
way. Whether the torque-free couplings span,
$\mathsf{W}\mathcal{K}=\R^{\delta_p}$, is a property of the configuration
alone, decided at the same cutoff of $10^{-9}$: $316$ of the $336$ span. The
$20$ that do not are reported, not excluded, and every
one has $\dim\ker Q^\T<\delta_p$, so what fails is the hypothesis of
clause~(ii) and not the clause. Whether the target is then \emph{solved} is a
property of the run. The Newton iteration stops at a relative residual of
$10^{-13}$, but a configuration is counted among the $316$, $314$ and $314$
only when its final relative residual is below $10^{-10}$, which is
experiment~19's threshold, so that the two experiments are read on one
footing. The two configurations that miss at $\varepsilon=10^{-3}$ are the two
that miss at $10^{-2}$: the $24$-node geometry at seed~$2$ with $m=p=6$, at
each of its two stiffness draws. Each stops after seven to sixteen iterations,
well short of the cap of $30$, on the iteration's slow-progress guard: the
step is accepted, but four consecutive accepted steps each fail to cut the
residual by so much as a thousandth of itself, and the loop breaks. The
deposited table records the iteration count and the single boolean
\texttt{stalled}, not which of the iteration's three stall exits was taken;
the exit was identified by re-running the deposited script with it logged, a
check that leaves every table unchanged and is not itself deposited. Either
way those are limits of the solver and not statements about existence.

The price quoted in
Sec.~\ref{sec:price}, a median $4.73$ at $\varepsilon=10^{-4}$, is
$\|a_{\mathcal{K}}\|_2/\|a^{*}\|_2$ for the Newton solution
$a_{\mathcal{K}}$ over the $316$ configurations that both span and solve; its
minimum, tenth percentile, ninetieth percentile and maximum there are $1.60$,
$2.55$, $11.6$ and $93.2$. The first-order least-norm element of $\mathcal{K}$
gives the same median to three figures, but it needs no solve and is defined
wherever the configuration spans, so the deposited summary quotes it twice,
over the solved rows and over all spanning rows; at $\varepsilon=10^{-4}$ the
two sets coincide, since all $316$ solve, and at $10^{-3}$ and $10^{-2}$ they
differ by the two configurations above. The denominator $\|a^{*}\|_2$ is the
unconstrained least-norm coupling of Corollary~\ref{cor:price}; it
carries a net moment of its own, a median $9.0\times10^{-6}$ relative at
$\varepsilon=10^{-4}$, against at most $2.8\times10^{-17}$ on the torque-free
solutions. The relative net moment of $a^{*}$ is proportional to
$\varepsilon$ (the medians at
the three target sizes are $9.0\times10^{-6}$, $8.8\times10^{-5}$ and
$9.4\times10^{-4}$), so the deposited summary's single
\texttt{median\_net\_moment\_rel\_astar}, $8.9\times10^{-5}$, is the median
pooled over all three and describes no one of them. The bound
$2.8\times10^{-17}$ needs no such care: it holds at every $\varepsilon$.

\paragraph{Criterion \eqref{eq:criterion} requires genericity.}
On a
perfect $4\times4$ square lattice with both diagonals in every cell and uniform
stiffnesses (integer node coordinates, so that $\kappa_b=k_b/L_b^2$ is
rational and the rank is exact over $\mathbb{Q}$ with no floating-point cutoff
at all, and the same pinning rule as everywhere else), attainment fails
\emph{inside} the criterion. It fails for particular choices of the driven and
read-out coordinates and not for all of them, which is what makes the criterion
empirical rather than simply wrong. At $m=3$, $p=2$ the bound is $8$ and there
are index sets at which the
exact rank is $7$. Over index sets drawn uniformly at random on
that lattice, the exact rank falls below the bound in $9$ of the $322$ draws
that satisfy \eqref{eq:criterion} (experiment~13, part~5). The deposited
sweeps of experiments~03 and~06 contain no such case among their
per-configuration medians (per draw, experiment~03 has the $18$ single-draw
deficits of the paragraph on the median over stiffness draws), and that is a
statement about those draws and not about the ensembles behind them: enlarging the
Delaunay arm of experiment~08, which draws from the same family, from three
seeds to eight produces one. On the twelve-node network at seed~$7$, where
$n_{\mathrm{free}}=21$ and $n_\theta=25$, the draw at $m=3$, $p=1$ has $d=9$
inside \eqref{eq:criterion} and measured rank $7$, with a singular-value gap
of $9.5\times10^{13}$ and the same rank at every stiffness redraw we tried.
The reason is the one given in Sec.~\ref{sec:numerics}: that network carries a
vertex of degree
two, and on it $341$ of the $5985$ full-overlap index sets of four free
coordinates fall short of $d=10$, all but one of them at exactly
$m(m+1)/2-3=7$, with the count unchanged over five independent stiffness
draws. Among the $252$ full-overlap
configurations of experiment~18, one $5\%$-disorder lattice is such a case
(Sec.~\ref{sec:numerics}), and a reader
who builds a regular lattice with equal springs should not expect
\eqref{eq:criterion} to hold.

The perfect triangular lattices of the main rank sweep are the comparison
quoted in Sec.~\ref{sec:numerics}. Within the lattice arm the bound is
attained in $135/186$ cases at zero disorder against $292/372$ and $290/372$ at disorder
$0.05$ and $0.15$, which is $73\%$ against $78\%$ and $78\%$. Together, the two
disordered arms attain in $582$ of $744$ against $135$ of $186$ at zero
disorder, a gap of $5.6$ percentage points. A pooled two-proportion test gives
$z=1.64$ and a two-sided observed significance level of $0.10$; Fisher's exact
test gives $0.12$. At zero disorder the lattice
generator ignores its seed, so that arm has half as many distinct geometries as
the others; the counts above reflect that.

\paragraph{The hard limits in the code, and whether they bind.}
Three bounds in the implementation are numerical rather than physical: the
Levenberg--Marquardt log-stiffnesses are clamped to $|u|\le12$, the
coupled-learning stiffnesses to $k\in[10^{-3},10^{3}]$, and the coupled-learning
rate is floored at $\chi\ge10^{-5}$ (Appendix~\ref{app:params}). A reported run
sitting against one of these would be reporting the bound and not the physics,
so we instrumented every training run, and wherever a limit turned out to be
active we re-ran the experiment with it moved. The instrumentation and the
re-runs are one-line variants of the deposited scripts, not deposited as
separate outputs, and so is the natural-frequency range of
Appendix~\ref{app:params}; Code availability lists them with the other such
variants.

\emph{The Levenberg--Marquardt clamp is active, and most reported errors do not
move with it.} It is active: $1534$ of the $1872$ fits of experiment~04,
$419$ of the $432$ of experiment~13 and $1016$ of the $1080$ of experiment~14
end with at least one log-stiffness within $10^{-9}$ of $\pm12$, and that
includes the winning
restart of $134$ of the $144$ reported forbidden-target runs. What saturates are
bonds the target barely sees. Re-running all of experiment~04 with it \emph{tightened} to $|u|\le8$,
a stiffness range narrower by a factor $e^{4}=55$ at each end, leaves the
median reached-over-floor ratio unchanged to eight decimals, at a median
relative change of $3.4\times10^{-9}$. The forbidden arm is still within $1\%$
of the floor in $138$ of $144$ against $143$, the odd control is still below
$10^{-6}$ in all $144$, and the random-target median, over the $36$ distinct
measurements that arm contains, moves from $1.400$ to $1.389$. Six runs of the
$144$ end more than $1\%$ above the floor at $|u|\le8$. One was already outside
that band at $|u|\le12$ and is genuinely sensitive, ending at $1.755$ instead
of $1.023$; the other five, four of them within $0.1\%$ of the floor at
$|u|\le12$ and the fifth within $0.3\%$, end between $1.0105$ and $1.439$
times the floor. \emph{Widening} it to $|u|\le24$ instead makes everything worse, for
a reason that has nothing to do with capacity. The admissible stiffnesses then
span $21$ orders of magnitude, and at the states the optimiser reaches,
$\operatorname{cond}\,K_{\!f\!f}$ over the $144$ winning restarts is the
following at the three clamp settings, with four runs at $|u|\le24$ pinned at
the code's own admissibility guard of $10^{13}$:

\begin{center}
\small
\begin{tabular}{@{}lccc@{}}
\toprule
clamp & smallest & median & largest \\
\midrule
$|u|\le8$  & $3.1\times10^{4}$  & $4.6\times10^{5}$  & $5.7\times10^{7}$ \\
$|u|\le12$ & $7.7\times10^{6}$  & $2.3\times10^{7}$  & $7.1\times10^{10}$ \\
$|u|\le24$ & $7.1\times10^{11}$ & $3.2\times10^{12}$ & $1.0\times10^{13}$ \\
\bottomrule
\end{tabular}
\end{center}

\noindent
At $|u|\le24$ the residual is then formed with about four significant digits
left. The output carries the
signature: at $|u|\le24$, $96$ of the $144$ runs end \emph{below} the exact
floor, the worst by $2.9\times10^{-2}$ relative, which no statement about a
reachable set permits and which is lost precision instead of a
counterexample. The clamp at $12$ is a conditioning guard placed where the
arithmetic is still trustworthy. The bonds that reach it are ones the target
barely sees, and tightening it to $|u|\le8$ leaves the median ratio unchanged.

\emph{The coupled-learning stiffness clamp is never approached from above, and
from below it helps rather than hurts.} The largest stiffness reached anywhere
along the $51$ trajectories of experiment~07 is $4.88$ (they are $24$
forbidden-target, $24$ allowed-target and the three of the nudge-amplitude
sweep, one of which repeats a forbidden-target trajectory, so $50$ are
distinct), and along the $24$ of
experiment~09 is $3.79$, so the upper clamp of $10^{3}$ is inactive by factors
of $205$ and $264$. The lower clamp is touched: $4$ of the $24$
forbidden-target runs of experiment~07 and $13$ of the $24$ runs of
experiment~09 drive some stiffness onto $10^{-3}$, and the run that ends at
$1.0882$ is one of the four. Widening the clamp to $[10^{-6},10^{6}]$ and
re-running leaves the other $20$ forbidden-target runs of experiment~07
unchanged to four
decimals and makes all four of the touching ones \emph{worse}:
$1.0000\to13.04$, $1.0000\to130.1$, $1.0010\to5.55$ and $1.0882\to54.58$.
Experiment~09 behaves the same way, its median reached-over-floor going from
$1.000000$ to $1.023$ and its median overlapping-to-disjoint error ratio from
$7.0\times10^{5}$ to $2.5\times10^{4}$. The clamp is holding the network away
from singularity, not holding the error up: every reported value is at or below
what the looser clamp gives, so no run's error is being propped above its floor
by it, and no run of either experiment ends below its floor under either
setting.

\emph{The rate floor is reached, by design, and changes nothing.} The schedule
halves $\chi$ every $120$ non-improving steps, so a run that has settled anneals
onto the floor and stays there; $18$ of the $24$ forbidden-target runs of
experiment~07 and $17$ of the $24$ of experiment~09 do. Lowering the floor to
$10^{-8}$ and re-running experiment~07 changes no reached-over-floor ratio by
more than $4.0\times10^{-6}$ relative, leaves $22$ of $24$ within $0.1\%$ of the
floor exactly as before, and leaves the two runs that end more than $1\%$
above the floor, at
$1.0204$ and $1.0882$, unchanged to six decimals. Six runs never reach the rate
floor, all six among the eleven forbidden-target runs of experiment~07 that
exhaust the step budget, and the two runs above $1\%$ are among them. What
stopped those eleven is the budget of $40\,000$ steps.

\paragraph{Two readings of Sec.~\ref{sec:dirichlet} that we first got wrong.}
Both are our own, from the drafting of this paper. We first read the necessary
conditions \eqref{eq:dbalance}
and \eqref{eq:dcycle} as sufficient; they are not, for the reason stated after
Theorem~\ref{thm:dirichlet}. And we first took the rank in \eqref{eq:dbound}
by central finite differences, which put a noise floor at the cutoff and
inflated the count; the reported ranks come from the exact Jacobian.

\paragraph{Exclusions.}
There are three. First, the three room-limited configurations of the subspace
comparison in Sec.~\ref{sec:numerics}, where the passive image does not reach
$\mathcal{S}_P$ and the comparison is therefore undefined. Second, the cases with $n_\theta<m^2$ in the full-overlap deficit law of
Fig.~\ref{fig:law}b, where the parameter count is not comfortably clear of the
target. The rule is wider than its reason, and deliberately so: what would put
the rank in the hands of the bond count rather than the symmetry at full
overlap is $n_\theta<m(m+1)/2$, the predicted rank there, and $12$ of the $21$
excluded cases satisfy $n_\theta<m^2$ without satisfying that, $8$ of them
meeting the law exactly. They are excluded with the rest rather than argued
case by case. In the excluded cases the measured
forfeited \emph{dimension} fraction is never smaller than predicted, and in
thirteen of the twenty-one it is strictly larger: $0.470$, $0.470$ and
$0.490$ at $m=10$ against a predicted $0.450$; up to $0.632$ at $m=12$ against
$0.458$; and up to $0.719$ at $m=14$ against $0.464$. In the remaining eight
(six at $m=12$ and two at $m=14$, the ones named above as meeting the law
exactly) it equals the prediction. Every one is a further
loss or none at all, not a violation; the full sweep is shipped with the data. Third, one of the $60$ planned large-deformation cases, at $12$ nodes and the
smallest non-zero load, where the perturbed finite-difference equilibrium failed
to converge; $59$ are reported.


\section{Reproducing the numbers}
\label{app:repro}

\begin{table}[ht]
  \centering
  \scriptsize
  \setlength{\tabcolsep}{4pt}
  \renewcommand{\arraystretch}{0.9}
  \begin{tabular}{@{}lllr@{}}
    \toprule
    Script & Cases & What it establishes & Time \\
    \midrule
    \texttt{exp01\_rank\_vs\_bound.py} & 640 &
      the two extremes, $T=S$ and $T\cap S=\emptyset$ & 1.7\,s \\
    \texttt{exp02\_overlap\_law.py} & 540 &
      deficit $=p(p-1)/2$ at fixed room; $(m-1)/(2m)$ & 2.9\,s \\
    \texttt{exp03\_theorem\_check.py} & 19737, 120, 2418 &
      mechanism, subspace, bound, criterion & 11.6\,s \\
    \texttt{exp04\_learning.py} & 144 rows, 1872 fits &
      the error floor of Theorem~\ref{thm:floor} & 422\,s \\
    \texttt{exp05\_duality.py} & 336 &
      the identity of Theorem~\ref{thm:duality} & 25\,s \\
    \texttt{exp06\_certificate.py} & 432 &
      certificates over $\mathbb{F}_\ell$, failures over $\mathbb{Q}$ & 90\,s \\
    \texttt{exp07\_coupled.py} & 24 pairs, 3 $\beta$ &
      the floor under coupled learning; nudge sweep & 475\,s \\
    \texttt{exp08\_dimension.py} & 432 &
      3D and bending elements & 1.2\,s \\
    \texttt{exp09\_design\_rule.py} & 12 pairs &
      the design rule, run & 205\,s \\
    \texttt{exp10\_large\_deform.py} & 59 &
      the tangent bound at large strain & 190\,s \\
    \texttt{exp11\_frequency.py} & 216 $+$ 18 &
      the bound at finite frequency & 0.5\,s \\
    \texttt{exp12\_layout\_budget.py} & 315 &
      the layout law at fixed instrumentation & 0.5\,s \\
    \texttt{exp13\_passivity\_floor.py} & 3200, 72, 108, 5, 322$^{\ddagger}$ &
      the floor with passivity; the square lattice & 193\,s \\
    \texttt{exp14\_layout\_task.py} & 360 &
      $\Lambda(p)$ enumerated; the layout on a task & 443\,s \\
            \texttt{exp15\_layouts\_lit.py} & 19 $+$ 5 &
      forfeited norm fraction in the literature & 0.02\,s \\
        \texttt{exp16\_dirichlet.py} & 14 quantities &
      the imposed-displacement law & 21\,s \\
        \texttt{exp17\_du\_replication.py} & 531 &
      the obstruction in a published experiment & 1\,s \\
        \texttt{exp18\_wedge\_recovery.py} & 1008$^{\dagger}$ + 251 + 243 &
      what non-reciprocity recovers; the multi-clamp law & 6\,s \\
        \texttt{exp18b\_independent\_check.py} & 300; 120 $+$ 60 &
      the same two claims, by a second implementation & 8\,s \\
        \texttt{exp19\_submersion\_check.py} & 336, 327, 100 &
      Corollary~\ref{cor:submersion}: Newton solves; negative control & 67\,s \\
        \texttt{exp20\_multiclamp\_rank.py} & 720 &
      Proposition~\ref{prop:multirank}: attainment of the ceiling & 3\,s \\
        \texttt{exp21\_task\_deficit.py} & 7245, 1637, 790, 36, 43 &
      the deficit of a task list; the digon law & 415\,s \\
        \texttt{exp22\_price\_of\_nonreciprocity.py} & 40, 3000, 3360, 10, 336 &
      the ratio bound; the least coupling; torque-free & 17\,s \\
        \texttt{exp23\_running\_example.py} & 80; 760, 400 &
      one network: a reversed pair and an odd bond & 55\,s \\
    \bottomrule
  \end{tabular}
  \caption{The twenty-four experiment scripts, what each establishes, and its
    runtime; $^{\dagger}$rows, of which $945$ are distinct configurations
    (Sec.~\ref{sec:numerics}). Experiment~19's first two entries are
    configurations, at $12$ and $2$ Newton solves each, and its third counts
    rows, for $4786$ solves in all. Experiment~18b's first count is the
    multi-clamp law and its second and third the wedge law, in two dimensions
    and in three. $^{\ddagger}$Experiment~13's five parts in order; its first
    and fifth print their counts and deposit no table. Experiment~23's first count is the runs of its four pre-registered arms, the other two its post-hoc sweeps.}
  \label{tab:scripts}
\end{table}

\noindent
The twenty-four experiment scripts are listed in Table~\ref{tab:scripts}. The
determinism claimed under Code availability rests on one convention: every seed
in every script is derived arithmetically, from integers alone, and
Python's \texttt{hash()} of a string, salted per process, appears nowhere. The
reference environment is Python 3.11 with NumPy 2.4, SciPy 1.17 and OpenBLAS;
the OpenBLAS build is not pinned, because the first of the two caveats below
concerns it,
and Matplotlib is not pinned because no number reported in this paper passes
through it. We
do not claim byte-identical output across machines, LAPACK builds or BLAS
threadings for the floating-point scripts: twelve of the $2418$ rank cases have
a singular-value gap below $10^{9}$, and at those points a different LAPACK build
could place the $10^{-9}$ cut differently; all twelve lie outside criterion
\eqref{eq:criterion}. Experiment~06 is exact throughout, over
$\mathbb{F}_\ell$ for its certificates and over $\mathbb{Q}$ for its failures,
and does not carry the first caveat. It does carry the second: its
\emph{inputs} depend on the random streams of the NumPy version used, which is why the
certified configurations are shipped as data instead of regenerated.

\end{document}